\journal{}
\newtheorem{assumption}{Assumption}
\newcommand{\beq}{\begin{equation}}
\newcommand{\eeq}{\end{equation}}
\renewcommand{\l}{\left}
\renewcommand{\r}{\right}
\numberwithin{equation}{section}
\numberwithin{theorem}{section}
\numberwithin{corollary}{section}
\numberwithin{definition}{section}
\begin{document}

\begin{frontmatter}
\title{Robust Tensor Factor Analysis }
		\author[myfirstaddress]{Matteo Barigozzi}
		\address[myfirstaddress]{Department of Economics, Universita' di Bologna,
Italy}

			\author[mysecondaddress]{Yong He\corref{cor1}}
		\address[mysecondaddress]{Institute for Financial Studies, Shandong University, Jinan, 250100, China}
\cortext[cor1]{Corresponding author. All authors contributed equally to this work.}
		\ead{heyongsdu.edu.cn}
	\author[mythirdaddress]{Lingxiao Li}
		\address[mythirdaddress]{School of Mathematics, Shandong University, Jinan, 250100, China}
		
			\author[myfourthaddress]{Lorenzo Trapani}
		\address[myfourthaddress]{School of Economics, University of Leicester, UK}

\begin{abstract}

We consider (robust) inference in the context of a factor model for
tensor-valued sequences. We study the consistency of the estimated common
factors and loadings space when using estimators based on minimising
quadratic loss functions. Building on the observation that such loss
functions are adequate only if sufficiently many moments exist, we extend
our results to the case of heavy-tailed distributions by considering
estimators based on minimising the Huber loss function, which uses an $L_{1}$%
-norm weight on outliers. We show that such class of estimators is robust to
the presence of heavy tails, even when only the second moment of the data
exists. We also propose a modified
version of the eigenvalue-ratio principle to estimate the dimensions of the core tensor and show the consistency of the resultant estimators without any condition on the relative  rates of
divergence of the sample size and dimensions. Extensive numerical studies are conducted to show the advantages of the proposed methods over the state-of-the-art ones especially under the heavy-tailed cases. An import/export dataset of a variety of commodities across multiple countries is analyzed to show the practical usefulness of the proposed robust estimation procedure. An R package ``RTFA" implementing the proposed methods is available on R CRAN.
\end{abstract}
\begin{keyword}
Tensor data; Factor model; Heavy-tailed data;
Quadratic loss; Huber loss.
		\end{keyword}
	\end{frontmatter}

%\subjclass{Primary 62M10; Secondary 62G20}

\section{Introduction\label{intro}}

In this paper, we study (robust) inference in the context of a factor model
for tensor-valued sequences, viz.
\begin{equation}
\mathcal{X}_{t}=\mathcal{F}_{t}\times _{1}\mathbf{A}_{1}\times _{2}\cdots
\times _{K}\mathbf{A}_{K}+\mathcal{E}_{t}=\mathcal{F}_{t}\times _{k=1}^{K}%
\mathbf{A}_{k}+\mathcal{E}_{t},  \label{tfm}
\end{equation}%
where $\mathcal{X}_{t}$ is a $p_{1}\times p_{2}\times \dots \times p_{K}$%
-dimensional tensor observed at $1\leq t\leq T$. In (\ref{tfm}), $\mathbf{A}%
_{k}$ is a $p_{k}\times r_{k}$ matrix of loadings with $1\leq k\leq K$, $%
\mathcal{F}_{t}$ is a common core tensor of dimensions $r_{1}\times \cdots
\times r_{K}$, and $\mathcal{E}_{t}$ is an idiosyncratic component of
dimensions $p_{1}\times p_{2}\times \dots \times p_{K}$.

The literature on inference for tensor-valued sequences has been growing
very rapidly in the past few years, and has now become one of the most
active research areas in statistics and machine learning. Datasets collected
in the form of high-order, multidimensional arrays arise in virtually all
applied sciences, including \textit{inter alia}: computer vision data (%
\citealp{panagakis2021tensor}); neuroimaging data (see e.g. %
\citealp{zhou2013tensor}, \citealp{Ji2021Brain} and %
\citealp{Chen2021Simultaneous}); macroeconomic indicators (%
\citealp{Chen2021Factor}) and financial data (\citealp{Han2021Rank}, %
\citealp{He2021Matrix}); recommender systems (see e.g. %
\citealp{entezari2021tensor}, and the various references therein); and data
arising in psychometrics (\citealp{carroll1970analysis}, %
\citealp{douglas1980candelinc}) and chemometrics (see %
\citealp{tomasi2005parafac}, and also the references in %
\citealp{acar2011scalable}). We also refer to the papers by %
\citet{kolda2009tensor} and \citet{bi2021tensors} for a review of further
applications. As \citet{lu2011survey} put it: \textquotedblleft Increasingly
large amount of multidimensional data are being generated on a daily basis
in many applications\textquotedblright ~(p. 1540). However, the large
dimensionality involved (and the huge computational power required) in the
analysis of tensor-valued data proves an important challenge, and several
contributions have been developed in order to parsimoniously model a tensor
dataset $\left\{ \mathcal{X}_{t},1\leq t\leq T\right\} $, and in order to
extract the signal it contains. A natural way of modelling tensors is through a low-dimensional projection on a space of common
factors (core tensor), such as the Tucker decomposition type factor model in
(\ref{tfm}). Indeed, factor models have proven very effective in the context
of vector-valued  series, and we refer to \citet{stock2002forecasting}, %
\citet{bai2003inferential} and \citet{fan2013large}, and also to subsequent
representative work on estimating the number of factors including, e.g., %
\citet{bai2002determining}, \citet{onatski2010determining}, %
\citet{ahnhorenstein13}, \citet{Trapani2018A} and \cite{yu2019robust}. In
recent years, the literature has extended the tools developed for the
analysis of vector-valued data to the context of matrix-valued data: in
particular, we refer to the seminal contribution by \cite{wang2019factor},
who proposed Matrix Factor Model (MFM) exploiting the double low-rank
structure of matrix-valued observations (see also \citealp{fan2021}, %
\citealp{Yu2021Projected} and \citealp{hkty}). In contrast with this
plethora of contributions, the statistical analysis of factor models for
tensor-valued data is still in its infancy, with few exceptions: we refer to %
\citet{Chen2021Factor}, \citet{han2020tensor} \citet{chen2022rank}, %
\citet{he2022statistical}, and \citet{zhang2022tucker} for the Tucker
decomposition, and \citet{Han2021CP} and \citet{chang2021modelling} for the
CP decomposition.

One restriction common to virtually all the contributions cited above is that
statistical inference is developed under the assumption of the data
admitting at least four moments. Whilst this is convenient when developing
the limiting theory, it can be argued that such an assumption is unlikely to
be realistic in various contexts: as mentioned above, for example, financial
data lend themselves naturally to being modelled as tensor-valued series,
but the existence of high-order moments in financial data, particularly at
high frequency, is difficult to justify (see e.g. \citealp{cont2001empirical}%
, and \citealp{degiannakis2021superkurtosis}). Similarly, heavy tails are
encountered in income data (\citealp{sarpy}), macroeconomics (%
\citealp{ibragimov2018heavy}), urban studies (\citealp{gabaix1999zipf}), and
in insurance, telecommunication network traffic and meteorology (see e.g.
the book by \citealp{embrechts2013modelling}). Contributions that develop
inference in the context for factor models with heavy tails are rare: the
main attempts to relax moment restrictions are \citet{FLW18} and %
\citet{he2022large} (see also \citealp{yu2019robust} and \citealp{bct} for
estimating the number of factors). However, both \citet{FLW18} and %
\citet{he2022large} impose a shape constraint on the data - namely, that
they follow an elliptical distribution - which, albeit natural, may not be
satisfied by all datasets; an alternative to the use of a specific
distributional shape, one could consider the use of estimators based on $%
L_{1}$-norm loss functions such as the Huber loss function (see the original
paper by \citealp{huber1992robust}, and also \citealp{he2023huber} for an
application to vector-valued  series). In the high-dimensional
regression setting, \cite{zhou2018new} derive a nonasymptotic concentration
results for an adaptive Huber estimator with the tuning parameter adapted to
sample size, dimension, and the variance of the noise, see also \cite%
{sun2020adaptive} and \cite{wang2021new}.

\subsection{Contributions and paper organization}

In this paper, we propose an advance in both areas mentioned above through
two contributions. Firstly, we study estimation of the loadings $\left\{
\mathbf{A}_{k},1\leq k\leq K\right\} $ and of the common factors $\left\{
\mathcal{F}_{t},1\leq t\leq T\right\} $ based on minimising a square loss
function; we show the consistency of the estimates, and derive their rates
under standard moment assumptions. Secondly, in order to be able to analyse
heavy-tailed data, we study estimation of the loadings $\left\{ \mathbf{A}%
_{k},1\leq k\leq K\right\} $ and of the common factors $\left\{ \mathcal{F}%
_{t},1\leq t\leq T\right\} $ using a Huber loss function. To the best of our
knowledge, this is the first time that an estimator for a tensor factor
model is proposed that can be used in the presence of heavy tails, with no
restrictions on the shape of the distribution. As a by-product, we also
offer a novel methodology, based on a modified version of the
eigenvalue-ratio principle, in order to determine the number of common
factors. Although the asymptotic theory is reported in Section \ref{theory},
here we offer a heuristic preview of our findings. We show that, when the
fourth moment exists, the Least Squares (LS) estimator is consistent as can
be expected; estimators based on the Huber loss function are also
consistent, however, and at the same rate as the LS\ estimator, save for
(extreme) cases where one (or more) of the cross-sectional dimension is very
small. Conversely, the robust, Huber loss-based estimator is consistent even
when the fourth moment does not exist, and indeed even when only the second
moent exists; in this case, estimation is slower than it would be if more
moments existed, but consistency can still be guaranteed. From a practical
point of view, our estimators are easy to implement and require minimal
coding. An R package \textquotedblleft RTFA" implementing the proposed
methods is available on R CRAN \footnote{%
\url{https://cran.r-project.org/web/packages/RTFA/index.html}}.

\bigskip

The remainder of the paper is organised as follows. We provide the details
of estimators based on quadratic loss and Huber loss functions in Section %
\ref{method} (see Sections \ref{ls} and \ref{huber} respectively), and
discuss our estimators of the number of common factors in Section \ref%
{sec:2.3}. We report and discuss the main assumptions, and the rates of
convergence and the relevant asymptotic theory, in Section \ref{theory}. We
validate ou results by means of a comprehensive set of simulations in
Section \ref{sec:4}, and we ilustrate our approach through an application to
import/export data in Section \ref{empirics}. Section \ref{conclusion}
concludes the paper and all proofs are relegated to the supplent.

\subsection{Notation}

Throughout the paper, we extensively use the following notation: $%
p=p_{1}p_{2}\cdots p_{K}$, $p_{-k}=p/p_{k}$, $r=r_{1}r_{2}\cdots r_{K}$ and $%
r_{-k}=r/r_{k}$. Given a tensor $\mathcal{X}\in \mathbb{R}^{p_{1}\times
p_{2}\times \dots \times p_{K}}$, and a matrix $\mathbf{A}\in \mathbb{R}%
^{d\times p_{k}}$, we denote the mode-$k$ product as (the tensor of size $%
p_{1}\times \cdots \times p_{k-1}\times d\times p_{k+1}\times \cdots p_{K}$)
$\mathcal{X}\times _{k}\mathbf{A}$, defined element-wise as%
\begin{equation*}
\left( \mathcal{X}\times _{k}\mathbf{A}\right) _{i_{1},\cdots
,i_{k-1},j,i_{k+1},\cdots ,i_{K}}=\sum_{i_{k}=1}^{p_{k}}x_{i_{1},\cdots
,i_{k},\cdots ,i_{K}}a_{j,i_{k}}.
\end{equation*}%
The mode-$k$ unfolding matrix of $\mathcal{X}$ is denoted by mat$_{k}(%
\mathcal{X})$, and it arranges all $p_{-k}$ mode-$k$ fibers of $\mathcal{X}$
to be the columns of a $p_{k}\times p_{-k}$ matrix. As far as matrices and
matrix operations are concerned: given a matrix $\mathbf{A}$, $\mathbf{A}%
^{\top }$ is the transpose of $\mathbf{A}$; Tr$(\mathbf{A})$ is the trace of
$\mathbf{A}$; $\Vert \mathbf{A}\Vert _{F}$ is the Frobenious norm of $%
\mathbf{A}$; $\mathbf{A}\otimes \mathbf{B}$ denotes the Kronecker product
between matrices $\mathbf{A}$ and $\mathbf{B}$; and $\mathbf{I}_{k}$
represents a $k$-order identity matrix. Finite, positive constants whose
value can change from line to line are indicated as $c_{0}$, $c_{1}$,...
Other, relevant notation is introduced later on in the paper.

\section{Methodology\label{method}}

Recall the tensor factor model of (\ref{tfm}), viz.%
\begin{equation*}
\mathcal{X}_{t}=\mathcal{F}_{t}\times _{1}\mathbf{A}_{1}\times _{2}\cdots
\times _{K}\mathbf{A}_{K}+\mathcal{E}_{t},
\end{equation*}%
which can be re-written more compactly as%
\begin{equation*}
\mathcal{X}_{t}=\mathcal{F}_{t}\times _{k=1}^{K}\mathbf{A}_{k}+\mathcal{E}%
_{t}=\mathcal{S}_{t}+\mathcal{E}_{t},
\end{equation*}%
with $\mathcal{S}_{t}$ representing the signal component. In this section,
we discuss estimation of (\ref{tfm}) using two (families of) methodologies:
one based on minimising a quadratic loss function (Section \ref{ls}), and
one based on minimising a Huber-type loss function (Section \ref{huber}).
The emphasis is on applicability/implementation, and all the relevant theory
is relegated to the next section.

In order to ensure identifiability, and without loss of generality, we will
assume the following (standard) restriction
\begin{equation}
\frac{1}{p_{k}}\mathbf{A}_{k}^{\top }\mathbf{A}_{k}=\mathbf{I}_{r_{k}},
\label{identification-ak}
\end{equation}%
for all $1\leq k\leq K$. Henceforth, we will also use extensively the
short-hand notation%
\begin{eqnarray}
\text{mat}_{k}\left( \mathcal{X}_{t}\right) &=&\mathbf{X}_{k,t},  \label{xk}
\\
\text{mat}_{k}\left( \mathcal{F}_{t}\right) &=&\mathbf{F}_{k,t}.  \label{fk}
\end{eqnarray}

\subsection{Estimation based on quadratic loss functions\label{ls}}

Under the identifiability condition (\ref{identification-ak}), $\{\mathbf{A}%
_{k},1\leq k\leq K\}$ and $\left\{ \mathcal{F}_{t},1\leq t\leq T\right\} $\
is estimated by minimizing the following Least Squares loss:
\begin{gather}
\min_{\mathbf{A}_{1},\cdots ,\mathbf{A}_{K},\left\{ \mathcal{F}_{t}\right\}
_{t=1}^{T}}L_{1}\left( \mathbf{A}_{1},\cdots ,\mathbf{A}_{K},\left\{
\mathcal{F}_{t}\right\} _{t=1}^{T}\right)  \label{equ:LL} \\
\text{s.t. }\frac{1}{p_{k}}\mathbf{A}_{k}^{\top }\mathbf{A}_{k}=\mathbf{I}%
_{r_{k}},1\leq k\leq K,  \notag
\end{gather}%
where
\begin{equation*}
L_{1}\left( \mathbf{A}_{1},\cdots ,\mathbf{A}_{K},\left\{ \mathcal{F}%
_{t}\right\} _{t=1}^{T}\right)
=\frac{1}{T}\sum_{t=1}^{T}\left\Vert \frac{\mathcal{%
X}_{t}-\mathcal{F}_{t}\times _{k=1}^{K}\mathbf{A}_{k}}{\sqrt{p}}\right\Vert _{F}^{2}.
\end{equation*}%
Note that
\begin{equation*}
\left\Vert \mathcal{X}_{t}-\mathcal{F}_{t}\times _{k=1}^{K}\mathbf{A}%
_{k}\right\Vert _{F}^{2}=\left\Vert \text{mat}_{k}\left( \mathcal{X}%
_{t}\right) -\text{mat}_{k}\left( \mathcal{F}_{t}\times _{k=1}^{K}\mathbf{A}%
_{k}\right) \right\Vert _{F}^{2}=\left\Vert \text{mat}_{k}\left( \mathcal{X}%
_{t}\right) -\mathbf{A}_{k}\text{mat}_{k}\left( \mathcal{F}_{t}\right)
\left( \otimes _{j\in \lbrack K]\backslash \{k\}}\mathbf{A}_{j}\right)
^{\top }\right\Vert _{F}^{2},
\end{equation*}%
where $\otimes _{j\in \lbrack K]\backslash \{k\}}\mathbf{A}_{j}=\mathbf{A}%
_{K}\otimes \mathbf{A}_{K-1}\otimes \cdots \otimes \mathbf{A}_{k+1}\otimes
\mathbf{A}_{k-1}\otimes \cdots \otimes \mathbf{A}_{1}$. Let
\begin{equation*}
\mathbf{B}_{k}=\otimes _{j\in \lbrack K]\backslash \{k\}}\mathbf{A}_{j};
\end{equation*}%
then $\mathbf{B}_{k}$ satisfies $\mathbf{B}_{k}^{\top }\mathbf{B}_{k}/p_{-k}=%
\mathbf{I}_{r_{-k}}$ for all $1\leq k\leq K$. Recalling that mat$_{k}\left(
\mathcal{X}_{t}\right) =\mathbf{X}_{k,t}$, and mat$_{k}\left( \mathcal{F}%
_{t}\right) =\mathbf{F}_{k,t}$, we can finally write%
\begin{eqnarray*}
L_{1}\left( \mathbf{A}_{1},\cdots ,\mathbf{A}_{K},\left\{ \mathcal{F}%
_{t}\right\} _{t=1}^{T}\right) &=&\frac{1}{Tp}\sum_{t=1}^{T}\left\Vert
\mathbf{X}_{k,t}-\mathbf{A}_{k}\mathbf{F}_{k,t}\mathbf{B}_{k}^{\top
}\right\Vert _{F}^{2} \\
&=&\frac{1}{Tp}\sum_{t=1}^{T}\left( \text{Tr}\left( \mathbf{X}_{k,t}^{\top }%
\mathbf{X}_{k,t}\right) -2\text{Tr}\left( \mathbf{X}_{k,t}^{\top }\mathbf{A}%
_{k}\mathbf{F}_{k,t}\mathbf{B}_{k}^{\top }\right) +p\text{Tr}\left( \mathbf{F%
}_{k,t}^{\top }\mathbf{F}_{k,t}\right) \right) \\
&=&L_{1}\left( \mathbf{A}_{k},\mathbf{B}_{k},\left\{ \mathbf{F}%
_{k,t}\right\} _{t=1}^{T}\right) .
\end{eqnarray*}%
The function $L_{1}\left( \mathbf{A}_{k},\mathbf{B}_{k},\left\{ \mathbf{F}%
_{k,t}\right\} _{t=1}^{T}\right) $\ can be minimised by concentrating out $%
\left\{ \mathbf{F}_{k,t}\right\} _{t=1}^{T}$. Indeed, given $\left\{ \mathbf{%
A}_{k},\mathbf{B}_{k}\right\} $, and minimising with respect to $\left\{
\mathbf{F}_{k,t}\right\} _{t=1}^{T}$, we receive the following first order
conditions, for all $1\leq t\leq T$%
\begin{equation*}
\frac{\partial }{\partial \mathbf{F}_{k,t}}L_{1}\left( \mathbf{A}_{k},%
\mathbf{B}_{k},\left\{ \mathbf{F}_{k,t}\right\} _{t=1}^{T}\right) =0,
\end{equation*}%
whose solutions are%
\begin{equation}
\mathbf{F}_{k,t}=\frac{1}{p}\mathbf{A}_{k}^{\top }\mathbf{X}_{k,t}\mathbf{B}%
_{k},  \label{fk-sol-ls}
\end{equation}%
for all $1\leq t\leq T$. Plugging (\ref{fk-sol-ls}) into $L_{1}\left(
\mathbf{A}_{k},\mathbf{B}_{k},\left\{ \mathbf{F}_{k,t}\right\}
_{t=1}^{T}\right) $, the minimisation problem becomes%
\begin{gather}
\min_{\mathbf{A}_{1},\cdots ,\mathbf{A}_{K}}L_{1}\left( \mathbf{A}_{k},%
\mathbf{B}_{k}\right)  \label{min-l1} \\
\text{s.t. }\frac{1}{p_{k}}\mathbf{A}_{k}^{\top }\mathbf{A}_{k}=\mathbf{I}%
_{r_{k}},\text{ \ and }\frac{1}{p_{-k}}\mathbf{B}_{k}^{\top }\mathbf{B}_{k}=%
\mathbf{I}_{r_{-k}},\text{\ }1\leq k\leq K,  \notag
\end{gather}%
where
\begin{equation*}
L_{1}\left( \mathbf{A}_{k},\mathbf{B}_{k}\right) =\frac{1}{Tp}%
\sum_{t=1}^{T}\left( \text{Tr}\left( \mathbf{X}_{k,t}^{\top }\mathbf{X}%
_{k,t}\right) -2\text{Tr}\left( \mathbf{X}_{k,t}^{\top }\mathbf{A}_{k}%
\mathbf{F}_{k,t}\mathbf{B}_{k}^{\top }\right) +p\text{Tr}\left( \mathbf{F}%
_{k,t}^{\top }\mathbf{F}_{k,t}\right) \right) .
\end{equation*}%
Hence, the minimisation problem ultimately becomes
\begin{equation}
\min_{\{\mathbf{A}_{k},\mathbf{B}_{k}\}}\mathcal{L}_{1},
\label{min-l1-final}
\end{equation}%
where $\mathcal{L}_{1}$ is the Lagrangian associated with (\ref{min-l1}),
given by%
\begin{equation}
\mathcal{L}_{1}=L_{1}\left( \mathbf{A}_{k},\mathbf{B}_{k}\right) +\text{Tr}%
\left( \Lambda _{\mathbf{A}}\left( \frac{1}{p_{k}}\mathbf{A}_{k}^{\top }%
\mathbf{A}_{k}-\mathbf{I}_{r_{k}}\right) \right) +\text{Tr}\left( \Lambda _{%
\mathbf{B}}\left( \frac{1}{p_{-k}}\mathbf{B}_{k}^{\top }\mathbf{B}_{k}-%
\mathbf{I}_{r_{-k}}\right) \right) ,  \label{lagrange-ls}
\end{equation}%
with the Lagrange multipliers $\Lambda _{\mathbf{A}}$ and $\Lambda _{\mathbf{%
B}}$ being two symmetric matrices. Based on (\ref{min-l1-final}) and (\ref%
{lagrange-ls}), we can derive the following KKT conditions%
\begin{equation}
\left\{
\begin{array}{ccc}
\frac{\partial }{\partial \mathbf{A}_{k}}\mathcal{L}_{1}= & -\frac{2}{Tp^2}%
\sum_{t=1}^{T}\mathbf{X}_{k,t}\mathbf{B}_{k}\mathbf{B}_{k}^{\top }\mathbf{X}%
_{k,t}^{\top }\mathbf{A}_{k}+\frac{2}{p_{k}}\mathbf{A}_{k}\Lambda _{\mathbf{A%
}} & =0 \\
\frac{\partial }{\partial \mathbf{B}_{k}}\mathcal{L}_{1}= & -\frac{2}{Tp^2}%
\sum_{t=1}^{T}\mathbf{X}_{k,t}^{\top }\mathbf{A}_{k}\mathbf{A}_{k}^{\top }%
\mathbf{X}_{k,t}\mathbf{B}_{k}+\frac{2}{p_{-k}}\mathbf{B}_{k}\Lambda _{%
\mathbf{B}} & =0%
\end{array}%
\right. .  \label{kkt-ls}
\end{equation}%
Then, the following equations hold%
\begin{equation}
\left\{
\begin{array}{cc}
\left( \frac{1}{Tpp_{-k}}\sum_{t=1}^{T}\mathbf{X}_{k,t}\mathbf{B}_{k}\mathbf{B%
}_{k}^{\top }\mathbf{X}_{k,t}^{\top }\right) \mathbf{A}_{k} & =\mathbf{A}%
_{k}\Lambda _{\mathbf{A}} \\
\left( \frac{1}{Tpp_{k}}\sum_{t=1}^{T}\mathbf{X}_{k,t}^{\top }\mathbf{A}_{k}%
\mathbf{A}_{k}^{\top }\mathbf{X}_{k,t}\right) \mathbf{B}_{k} & =\mathbf{B}%
_{k}\Lambda _{\mathbf{B}}%
\end{array}%
\right. .  \label{kkt-1-1}
\end{equation}%
Defining, for short%
\begin{eqnarray}
\mathbf{M}_{k} &=&\frac{1}{Tpp_{-k}}\sum_{t=1}^{T}\mathbf{X}_{k,t}\mathbf{B}%
_{k}\mathbf{B}_{k}^{\top }\mathbf{X}_{k,t}^{\top },  \label{equ:Mk} \\
\mathbf{M}_{-k} &=&\frac{1}{Tpp_{k}}\sum_{t=1}^{T}\mathbf{X}_{k,t}^{\top }%
\mathbf{A}_{k}\mathbf{A}_{k}^{\top }\mathbf{X}_{k,t},  \notag
\end{eqnarray}%
then (\ref{kkt-1-1}) can alternatively be written as%
\begin{equation}
\left\{
\begin{array}{cc}
\mathbf{M}_{k}\mathbf{A}_{k} & =\mathbf{A}_{k}\Lambda _{\mathbf{A}} \\
\mathbf{M}_{-k}\mathbf{B}_{k} & =\mathbf{B}_{k}\Lambda _{\mathbf{B}}%
\end{array}%
\right. .  \label{kkt-1-2}
\end{equation}%
Define the first $r_{k}$ eigenvalues of $\mathbf{M}_{k}$ as $\left\{ \lambda
_{k,1},\cdots ,\lambda _{k,r_{k}}\right\} $, and the corresponding
eigenvectors as $\left\{ \mathbf{u}_{k,1},\cdots ,\mathbf{u}%
_{k,r_{k}}\right\} $. Then it is easy to see that (\ref{kkt-1-2}) is
satisfied by%
\begin{eqnarray*}
\widehat{\Lambda }_{\mathbf{A}} &=&\text{diag}\left\{ \lambda _{k,1},\cdots
,\lambda _{k,r_{k}}\right\} , \\
\widehat{\mathbf{A}}_{k} &=&p_{k}^{1/2}\left\{ \mathbf{u}_{k,1}|\cdots |%
\mathbf{u}_{k,r_{k}}\right\} ,
\end{eqnarray*}%
and similarly, defining $r_{k}$ eigenvalues of $\mathbf{M}_{-k}$ as $\left\{
\lambda _{-k,1},\cdots ,\lambda _{-k,r_{k}}\right\} $, and the corresponding
eigenvectors as $\left\{ \mathbf{u}_{-k,1},\cdots ,\mathbf{u}%
_{-k,r_{k}}\right\} $, (\ref{kkt-1-2}) is also satisfied by%
\begin{eqnarray*}
\widehat{\Lambda }_{\mathbf{B}} &=&\text{diag}\left\{ \lambda _{-k,1},\cdots
,\lambda _{-k,r_{k}}\right\} , \\
\widehat{\mathbf{B}}_{k} &=&p_{-k}^{1/2}\left\{ \mathbf{u}_{-k,1}|\cdots |%
\mathbf{u}_{-k,r_{k}}\right\} .
\end{eqnarray*}%
With $\widehat{\mathbf{B}}_{k}$, we are able to define the feasible version
of $\mathbf{M}_{k}$ defined in (\ref{equ:Mk}), viz.%
\begin{equation}
\widehat{\mathbf{M}}_{k}=\frac{1}{Tpp_{-k}}\sum_{t=1}^{T}\mathbf{X}_{k,t}%
\widehat{\mathbf{B}}_{k}\widehat{\mathbf{B}}_{k}^{\top }\mathbf{X}%
_{k,t}^{\top },  \label{m_hat_k}
\end{equation}%
and $\widehat{\mathbf{M}}_{-k}$ an be defined along similar lines. Hence, it
is clear that the estimation of $\Ab_{k}$ relies on the unknown $\left\{ \Ab%
_{j}\right\} _{j\neq k}$; in our Algorithm \ref{alg111} below, we propose to
initialise the estimation of $\left\{ \Ab_{j}\right\} _{j=1}^{K}$ by using
the Initial Estimator (IE) of \citet{he2022statistical}, defined as $%
\widehat{\Ab}_{k}^{(0)}=\sqrt{p_{k}}\widehat{\Ub}_{k}$, where $\widehat{\Ub}%
_{k}$ is the $p_{k}\times r_{k}$ matrix having as columns the $r_{k}$
leading normalized eigenvectors of $\sum_{t=1}^{T}\Xb_{k,t}\Xb_{k,t}^{\top
}/(Tp)$. When computing the estimators of $\left\{ \Ab_{j}\right\} _{j\neq
k} $ in order to estimate $\Ab_{k}$, we recommend using a projection-based
method like e.g. the Projection Estimation (PE) of \citet{he2022statistical}%
, or the Iterative Projected mode-wise PCA estimation (IPmoPCA) of \cite%
{zhang2022tucker}.

\begin{algorithm}[htbp]
	\caption{Iterative Projection Estimation Algorithm for Tensor Factor Model}
	\label{alg111}
	\begin{algorithmic}[1]	
		
		\REQUIRE tensor data $\{\cX_t\}_{t=1}^T$, factor numbers $\{r_k\}_{k=1}^K$, initial estimation of loading matrices $\{\widehat\Ab_k^{(0)}\}_{k=1}^K$, maximum iterative step $m$
		
		\ENSURE loading matrices $\{\widehat{\Ab}_k\}_{k=1}^K$, factor tensors $\{\widehat\cF_t\}_{t=1}^T$
		
		%\STATE obtain the initial estimators  $\{\widehat{\Ab}_k^{(0)}=\widehat{\Ab}_k,k=1,\cdots,K\}$;
		
		\STATE compute  $\widehat{\Bb}_k^{(s)}=\otimes_{j=k+1}^{K}\widehat{\Ab}_j^{(s-1)}\otimes_{j=1}^{k-1}\widehat{\Ab}_j^{(s)}$;

		\STATE compute $\widehat{\Mb}_k^{(s)}$ based on (\ref{m_hat_k}), viz. $\widehat{\Mb}_k^{(s)}=\sum_{t=1}^{T}%
		\mathbf{X}_{k,t}\widehat{\mathbf{B}}_{k}^{(s)}
		\widehat{\mathbf{B}}_{k}^{(s)\top}\mathbf{X}_{k,t}^{\top }/(Tp_{-k})$,
		renew $\widehat{\Ab}_k^{(s)}$ as $\sqrt{p_k}$ times the matrix with columns being the  first $r_k$ eigenvectors  of $\widehat{\Mb}_k^{(s)}$;
		
		\STATE repeat steps 1 to 2 until convergence, or up to the maximum number of iterations, output the last step estimators as $\{\widehat{\Ab}_k\}_{k=1}^K$, and the corresponding factor tensors  $\{\widehat\cF_t=\cX_t\times_{k=1}^K\l(\widehat\Ab_k\r)^\top/p\}_{t=1}^T$.
	\end{algorithmic} 	
\end{algorithm}

\subsection{Estimation based on the Huber loss functions\label{huber}}

As mentioned in the introduction, in the presence of heavy tails it may be
more appropriate to consider a loss function which dampens outliers. Here,
we propose the following, $L_{1}$-norm based loss function, known as \textit{%
Huber loss function}%
\begin{equation}
H_{\tau }\left( x\right) =%
\begin{cases}
\frac{1}{2}x^{2} & \text{if } \left\vert x\right\vert \leq \tau, \\
\tau \left\vert x\right\vert -\frac{1}{2}\tau ^{2} & \text{if } \left\vert x\right\vert
>\tau.
\end{cases}%
  \label{h-tau}
\end{equation}%

The penalty imposed on an error $x$ is quadratic up to a threshold $\tau $,
and \textquotedblleft only\textquotedblright\ linear thereafter. Based on (%
\ref{h-tau}), we define the following minimisation problem%
\begin{gather}
\min_{\mathbf{A}_{1},\cdots ,\mathbf{A}_{K},\left\{ \mathcal{F}_{t}\right\}
_{t=1}^{T}}L_{2}\left( \mathbf{A}_{1},\cdots ,\mathbf{A}_{K},\left\{
\mathcal{F}_{t}\right\} _{t=1}^{T}\right)  \label{min-l2} \\
\text{s.t. }\frac{1}{p_{k}}\mathbf{A}_{k}^{\top }\mathbf{A}_{k}=\mathbf{I}%
_{r_{k}},\text{ \ and }\frac{1}{p_{-k}}\mathbf{B}_{k}^{\top }\mathbf{B}_{k}=%
\mathbf{I}_{r_{-k}},\text{\ }1\leq k\leq K,  \notag
\end{gather}%
where%
\begin{eqnarray*}
L_{2}\left( \mathbf{A}_{1},\cdots ,\mathbf{A}_{K},\left\{ \mathcal{F}%
_{t}\right\} _{t=1}^{T}\right) &=&\frac{1}{T}\sum_{t=1}^{T}H_{\tau }\left(
\left\Vert \frac{\mathcal{X}_{t}-\mathcal{F}_{t}\times _{k=1}^{K}\mathbf{A}%
_{k}}{\sqrt{p}}\right\Vert _{F}\right) \\
&=&\frac{1}{T}\sum_{t=1}^{T}H_{\tau }\left( \left\Vert \frac{\mathbf{X}_{k,t}-%
\mathbf{A}_{k}\mathbf{F}_{k,t}\mathbf{B}_{k}^{\top }}{\sqrt{p}}\right\Vert _{F}\right)
\\
&=&L_{2}\left( \mathbf{A}_{k},\mathbf{B}_{k},\left\{ \mathbf{F}%
_{k,t}\right\} _{t=1}^{T}\right) .
\end{eqnarray*}%
Using (\ref{h-tau}), it is clear that
\begin{align}
H_{\tau }& \left( \left\Vert\frac{ \mathbf{X}_{k,t}-\mathbf{A}_{k}\mathbf{F}_{k,t}%
\mathbf{B}_{k}^{\top }}{\sqrt{p}}\right\Vert _{F}\right)  \notag  \label{hb} \\
& =\left\{
\begin{array}{cc}
\frac{1}{2p}\left( \text{Tr}\left( \mathbf{X}_{k,t}^{\top }\mathbf{X}%
_{k,t}\right) -2\text{Tr}\left( \mathbf{X}_{k,t}^{\top }\mathbf{A}_{k}%
\mathbf{F}_{k,t}\mathbf{B}_{k}^{\top }\right) +p\text{Tr}\left( \mathbf{F}%
_{k,t}^{\top }\mathbf{F}_{k,t}\right) \right) &\text{if } \left\Vert \frac{\mathbf{X}_{k,t}-\mathbf{A}_{k}\mathbf{F}_{k,t}%
	\mathbf{B}_{k}^{\top }}{\sqrt{p}}\right\Vert _{F}\leq
\tau, \nonumber \\
\tau \sqrt{\frac{1}{p}\left(\text{Tr}\left( \mathbf{X}_{k,t}^{\top }\mathbf{X}_{k,t}\right) -2%
\text{Tr}\left( \mathbf{X}_{k,t}^{\top }\mathbf{A}_{k}\mathbf{F}_{k,t}%
\mathbf{B}_{k}^{\top }\right) +p\text{Tr}\left( \mathbf{F}_{k,t}^{\top }%
\mathbf{F}_{k,t}\right) \right)}-\frac{1}{2}\tau ^{2} &  \text{if } \left\Vert \frac{\mathbf{X}_{k,t}-\mathbf{A}_{k}\mathbf{F}_{k,t}%
\mathbf{B}_{k}^{\top }}{\sqrt{p}}\right\Vert _{F}>\tau.%
\end{array}%
\right.  \\
&
\end{align}%
Similarly to the results in the previous section, the first order conditions
\begin{equation*}
\frac{\partial }{\partial \mathbf{F}_{k,t}}L_{2}\left( \mathbf{A}_{k},%
\mathbf{B}_{k},\left\{ \mathbf{F}_{k,t}\right\} _{t=1}^{T}\right) =0,
\end{equation*}%
yield%
\begin{equation}
\mathbf{F}_{k,t}=\frac{1}{p}\mathbf{A}_{k}^{\top }\mathbf{X}_{k,t}\mathbf{B}%
_{k},  \label{f-hub}
\end{equation}%
whence, substituting (\ref{f-hub}) into (\ref{hb}), it follows that the
concentrated Huber loss at each $t$ can be expressed as%
\begin{align*}
& H_{\tau }\left( \left\Vert \frac{\mathbf{X}_{k,t}-\mathbf{A}_{k}\mathbf{F}_{k,t}%
\mathbf{B}_{k}^{\top }}{\sqrt{p}}\right\Vert _{F}\right) \\
& =\left\{
\begin{array}{cc}
\frac{1}{2p}\left( \text{Tr}\left( \mathbf{X}_{k,t}^{\top }\mathbf{X}%
_{k,t}\right) -\frac{1}{p}\text{Tr}\left( \mathbf{X}_{k,t}^{\top }\mathbf{A}%
_{k}\mathbf{A}_{k}^{\top }\mathbf{X}_{k,t}\mathbf{B}_{k}\mathbf{B}_{k}^{\top
}\right) \right) &\text{if } \left\Vert \frac{\mathbf{X}_{k,t}-\mathbf{A}_{k}\mathbf{F}_{k,t}%
\mathbf{B}_{k}^{\top }}{\sqrt{p}}\right\Vert _{F}\leq \tau, \\
\tau \sqrt{\frac{1}{p}\left(\text{Tr}\left( \mathbf{X}_{k,t}^{\top }\mathbf{X}_{k,t}\right) -%
\frac{1}{p}\text{Tr}\left( \mathbf{X}_{k,t}^{\top }\mathbf{A}_{k}\mathbf{A}%
_{k}^{\top }\mathbf{X}_{k,t}\mathbf{B}_{k}\mathbf{B}_{k}^{\top }\right)\right) }-%
\frac{1}{2}\tau ^{2} &\text{if } \left\Vert \frac{\mathbf{X}_{k,t}-\mathbf{A}_{k}\mathbf{F}_{k,t}%
	\mathbf{B}_{k}^{\top }}{\sqrt{p}}\right\Vert _{F}>\tau.%
\end{array}%
\right.
\end{align*}%
When $\left\Vert \frac{\mathbf{X}_{k,t}-\mathbf{A}_{k}\mathbf{F}_{k,t}%
	\mathbf{B}_{k}^{\top }}{\sqrt{p}}\right\Vert _{F}\leq \tau $, it holds that%
\begin{equation}
\left\{
\begin{array}{cc}
\frac{\partial }{\partial \mathbf{A}_{k}}H_{\tau }\left( \left\Vert \frac{\mathbf{X%
}_{k,t}-\mathbf{A}_{k}\mathbf{F}_{k,t}\mathbf{B}_{k}^{\top }}{\sqrt{p}}\right\Vert
_{F}\right) = & -\frac{1}{p^2}\mathbf{X}_{k,t}\mathbf{B}_{k}\mathbf{B}%
_{k}^{\top }\mathbf{X}_{k,t}^{\top }\mathbf{A}_{k} \\
\frac{\partial }{\partial \mathbf{B}_{k}}H_{\tau }\left( \left\Vert \frac{\mathbf{X%
}_{k,t}-\mathbf{A}_{k}\mathbf{F}_{k,t}\mathbf{B}_{k}^{\top }}{\sqrt{p}}\right\Vert
_{F}\right) = & -\frac{1}{p^2}\mathbf{X}_{k,t}^{\top }\mathbf{A}_{k}\mathbf{A}%
_{k}^{\top }\mathbf{X}_{k,t}\mathbf{B}_{k}%
\end{array}%
\right. .  \label{huber-low}
\end{equation}%
Conversely, when $\left\Vert \frac{\mathbf{X}_{k,t}-\mathbf{A}_{k}\mathbf{F}_{k,t}%
	\mathbf{B}_{k}^{\top }}{\sqrt{p}}\right\Vert _{F}>\tau $, it holds that
\begin{equation}
\left\{
\begin{array}{cc}
\frac{\partial }{\partial \mathbf{A}_{k}}H_{\tau }\left( \left\Vert \frac{\mathbf{X%
	}_{k,t}-\mathbf{A}_{k}\mathbf{F}_{k,t}\mathbf{B}_{k}^{\top }}{\sqrt{p}}\right\Vert
_{F}\right) = & -\frac{\tau }{p^2}\frac{\mathbf{X}_{k,t}\mathbf{B}_{k}\mathbf{B%
}_{k}^{\top }\mathbf{X}_{k,t}^{\top }\mathbf{A}_{k}}{\sqrt{\left(\text{Tr}\left(
\mathbf{X}_{k,t}^{\top }\mathbf{X}_{k,t}\right) -\frac{1}{p}\text{Tr}\left(
\mathbf{X}_{k,t}^{\top }\mathbf{A}_{k}\mathbf{A}_{k}^{\top }\mathbf{X}_{k,t}%
\mathbf{B}_{k}\mathbf{B}_{k}^{\top }\right)\right)/p }} \\
\frac{\partial }{\partial \mathbf{B}_{k}}H_{\tau }\left( \left\Vert \frac{\mathbf{X%
	}_{k,t}-\mathbf{A}_{k}\mathbf{F}_{k,t}\mathbf{B}_{k}^{\top }}{\sqrt{p}}\right\Vert
_{F}\right) = & -\frac{\tau }{p^2}\frac{\mathbf{X}_{k,t}^{\top }\mathbf{A}_{k}%
\mathbf{A}_{k}^{\top }\mathbf{X}_{k,t}\mathbf{B}_{k}}{\sqrt{\left(\text{Tr}\left(
\mathbf{X}_{k,t}^{\top }\mathbf{X}_{k,t}\right) -\frac{1}{p}\text{Tr}\left(
\mathbf{X}_{k,t}^{\top }\mathbf{A}_{k}\mathbf{A}_{k}^{\top }\mathbf{X}_{k,t}%
\mathbf{B}_{k}\mathbf{B}_{k}^{\top }\right)\right)/p }}%
\end{array}%
\right. .  \label{huber-high}
\end{equation}%
By the above, we write the Lagrangian of the concentrated version of (\ref%
{min-l2}) as follows%
\begin{equation}
\min_{\{\mathbf{A}_{k},\mathbf{B}_{k}\}}\mathcal{L}_{2},  \label{l2}
\end{equation}%
where $\mathcal{L}_{2}$ is the Lagrangian associated with (\ref{min-l1}),
given by%
\begin{equation}
\mathcal{L}_{2}=L_{2}\left( \mathbf{A}_{k},\mathbf{B}_{k}\right) +\text{Tr}%
\left( \Lambda _{\mathbf{A}}^{H}\left( \frac{1}{p_{k}}\mathbf{A}_{k}^{\top }%
\mathbf{A}_{k}-\mathbf{I}_{r_{k}}\right) \right) +\text{Tr}\left( \Lambda _{%
\mathbf{B}}^{H}\left( \frac{1}{p_{-k}}\mathbf{B}_{k}^{\top }\mathbf{B}_{k}-%
\mathbf{I}_{r_{-k}}\right) \right) ,  \label{l2-def-1}
\end{equation}%
where the Lagrange multipliers $\Lambda _{\mathbf{A}}^{H}$ and $\Lambda _{%
\mathbf{B}}^{H}$ are two symmetric matrices, and%
\begin{equation}
L_{2}\left( \mathbf{A}_{k},\mathbf{B}_{k}\right) =\frac{1}{T}%
\sum_{t=1}^{T}H_{\tau }\left( \left\Vert \frac{\mathbf{X%
	}_{k,t}-\mathbf{A}_{k}\mathbf{F}_{k,t}\mathbf{B}_{k}^{\top }}{\sqrt{p}}\right\Vert
_{F}\right) .
\label{l2-def}
\end{equation}%
Hence, we have the following KKT conditions%
\begin{equation}
\left\{
\begin{array}{ccc}
\frac{\partial }{\partial \mathbf{A}_{k}}\mathcal{L}_{2}= & -\left( \frac{1}{%
T}\sum_{t=1}^{T}w_{k,t}^{h}\mathbf{X}_{k,t}\mathbf{B}_{k}\mathbf{B}%
_{k}^{\top }\mathbf{X}_{k,t}^{\top }\mathbf{A}_{k}\right) \mathbf{A}_{k}+%
\frac{2}{p_{k}}\mathbf{A}_{k}\Lambda _{\mathbf{A}}^{H} & =0 \\
\frac{\partial }{\partial \mathbf{B}_{k}}\mathcal{L}_{2}= & -\left( \frac{1}{%
T}\sum_{t=1}^{T}w_{k,t}^{h}\mathbf{X}_{k,t}^{\top }\mathbf{A}_{k}\mathbf{A}%
_{k}^{\top }\mathbf{X}_{k,t}\mathbf{B}_{k}\right) \mathbf{B}_{k}+\frac{2}{%
p_{-k}}\mathbf{B}_{k}\Lambda _{\mathbf{B}}^{H} & =0%
\end{array}%
\right. ,  \label{kkt-huber}
\end{equation}%
where we have defined the weights $w_{k,t}^{h}$ as
\begin{equation*}
w_{k,t}^{h}=\left\{
\begin{array}{cc}
\frac{1}{p^2} &\text{if } \left\Vert \frac{\mathbf{X%
	}_{k,t}-\mathbf{A}_{k}\mathbf{F}_{k,t}\mathbf{B}_{k}^{\top }}{\sqrt{p}}\right\Vert
_{F}\leq \tau, \\
\frac{\tau }{p^2}\frac{1}{\sqrt{\left(\text{Tr}\left( \mathbf{X}_{k,t}^{\top }%
\mathbf{X}_{k,t}\right) -\frac{1}{p}\text{Tr}\left( \mathbf{X}_{k,t}^{\top }%
\mathbf{A}_{k}\mathbf{A}_{k}^{\top }\mathbf{X}_{k,t}\mathbf{B}_{k}\mathbf{B}%
_{k}^{\top }\right)\right)/p }} &\text{if } \left\Vert \frac{\mathbf{X%
}_{k,t}-\mathbf{A}_{k}\mathbf{F}_{k,t}\mathbf{B}_{k}^{\top }}{\sqrt{p}}\right\Vert
_{F}>\tau.
\end{array}%
\right.
\end{equation*}%
Letting $\widetilde{w}_{k,t}^{H}=\left( p^2/2\right) w_{k,t}^{h}$ and
\begin{eqnarray}
\mathbf{M}_{k}^{H} &=&\frac{1}{Tpp_{-k}}\sum_{t=1}^{T}\widetilde{w}_{k,t}^{H}%
\mathbf{X}_{k,t}\mathbf{B}_{k}\mathbf{B}_{k}^{\top }\mathbf{X}_{k,t}^{\top },
\label{equ:MkH} \\
\mathbf{M}_{-k}^{H} &=&\frac{1}{Tpp_{k}}\sum_{t=1}^{T}\widetilde{w}_{k,t}^{H}%
\mathbf{X}_{k,t}^{\top }\mathbf{A}_{k}\mathbf{A}_{k}^{\top }\mathbf{X}_{k,t},
\notag
\end{eqnarray}%
the minimisation problem can be re-written equivalently as%
\begin{equation}
\left\{
\begin{array}{cc}
\mathbf{M}_{k}^{H}\mathbf{A}_{k} & =\mathbf{A}_{k}\Lambda _{\mathbf{A}}^{H}
\\
\mathbf{M}_{-k}^{H}\mathbf{B}_{k} & =\mathbf{B}_{k}\Lambda _{\mathbf{B}}^{H}%
\end{array}%
\right. .  \label{kkt-2-2}
\end{equation}%
Hence, defining the first $r_{k}$ eignvalues of $\mathbf{M}_{k}^{H}$ as $%
\{\lambda _{k,1}^{H},\cdots ,\lambda _{k,r_{k}}^{H}\}$ and the corresponding
eigenvectors as $\{\mathbf{u}_{k,1}^{H},\cdots ,\mathbf{u}_{k,r_{k}}^{H}\}$,
(\ref{kkt-2-2}) is satisfied by%
\begin{eqnarray*}
\widehat{\Lambda }_{\mathbf{A}}^{H} &=&\text{diag}\left\{ \lambda
_{k,1}^{H},\cdots ,\lambda _{k,r_{k}}^{H}\right\} , \\
\widehat{\mathbf{A}}_{k}^{H} &=&p_{k}^{1/2}\left\{ \mathbf{u}%
_{k,1}^{H}|\cdots |\mathbf{u}_{k,r_{k}}^{H}\right\} ,
\end{eqnarray*}%
and similarly, defining $r_{k}$ eigenvalues of $\mathbf{M}_{-k}^{H}$ as $%
\left\{ \lambda _{-k,1}^{H},\cdots ,\lambda _{-k,r_{k}}^{H}\right\} $, and
the corresponding eigenvectors as $\left\{ \mathbf{u}_{-k,1}^{H},\cdots ,%
\mathbf{u}_{-k,r_{k}}^{H}\right\} $
\begin{eqnarray*}
\widehat{\Lambda }_{\mathbf{B}}^{H} &=&\text{diag}\left\{ \lambda
_{-k,1}^{H},\cdots ,\lambda _{-k,r_{k}}^{H}\right\} , \\
\widehat{\mathbf{B}}_{k}^{H} &=&p_{-k}^{1/2}\left\{ \mathbf{u}%
_{-k,1}^{H}|\cdots |\mathbf{u}_{-k,r_{k}}^{H}\right\} .
\end{eqnarray*}%
With $\widehat{\mathbf{B}}_{k}^{H}$, we are able to define the feasible
version of $\mathbf{M}_{k}^{H}$ defined in (\ref{equ:MkH}), viz.%
\begin{equation}
\widehat{\mathbf{M}}_{k}^{H}=\frac{1}{Tpp_{-k}}\sum_{t=1}^{T}\widehat{w}%
_{k,t}^{H}\mathbf{X}_{k,t}\widehat{\mathbf{B}}_{k}^{H}\left( \widehat{%
\mathbf{B}}_{k}^{H}\right) ^{\top }\mathbf{X}_{k,t}^{\top },
\label{m_hat_kH}
\end{equation}%
based on the feasible weights%
\begin{equation}
\widehat{w}_{k,t}^{H}=\left\{
\begin{array}{cc}
\frac{1}{2} &\text{if } \left\Vert\frac{ \mathbf{X}_{k,t}-\widehat{\mathbf{A}}_{k}^{H}%
\widehat{\mathbf{F}}_{k,t}^{H}\left( \widehat{\mathbf{B}}_{k}^{H}\right) ^{\top
}}{\sqrt{p}}\right\Vert _{F}\leq \tau, \\
\frac{\tau }{2}\frac{1}{\sqrt{\left(\text{Tr}\left( \mathbf{X}_{k,t}^{\top }%
\mathbf{X}_{k,t}\right) -\frac{1}{p}\text{Tr}\left( \mathbf{X}_{k,t}^{\top }%
\widehat{\mathbf{A}}_{k}^H(\widehat{\mathbf{A}}_{k}^H)^{\top }\mathbf{X}_{k,t}\widehat{\mathbf{B}}_{k}^H(\widehat{\mathbf{B}}%
_{k}^H)^{\top }\right)\right)/p }} &\text{if } \left\Vert \frac{\mathbf{X}_{k,t}-\widehat{\mathbf{A}}%
_{k}^{H}\widehat{\mathbf{F}}_{k,t}^{H}\left( \widehat{\mathbf{B}}_{k}^{H}\right)
^{\top }}{\sqrt{p}}\right\Vert _{F}>\tau;%
\end{array}%
\right.   \label{weights_H}
\end{equation}%
$\widehat{\mathbf{M}}_{-k}^{H}$ also can be defined along similar lines.
Similarly to the case of Least Squares loss, the estimation of the loading
matrices $\Ab_{k}$ relies on the unobservable $\left\{ \Ab_{j}\right\}
_{j=1}^{K}$ (note that an initial estimate of $\Ab_{k}$ itself is required
in the computation of $\widehat{w}_{k,t}^{H}$). In our Algorithm \ref{alg2}
below, we propose an iterative weighted projection-based procedure to obtain
the robust estimation of loading matrices and factor tensor based on Huber
loss function, which we call Robust Tensor Factor Analysis (RTFA). Similarly
to Algorithm \ref{alg111}, we recommend to use the Initial Estimator (IE) of %
\citet{he2022statistical} as initialisation, and subsequently the Projection
Estimation (PE) of \citet{he2022statistical}, or the Iterative Projected
mode-wise PCA estimation (IPmoPCA) of \cite{zhang2022tucker}.

\begin{algorithm}[htbp]
	\caption{Robust Tensor Factor Analysis (RTFA)}
	\label{alg2}
	\begin{algorithmic}[1]	
		
		\REQUIRE tensor data $\{\cX_t\}_{t=1}^T$, factor numbers $\{r_k\}_{k=1}^K$, initial estimation of loading matrices $\{\widehat\Ab_k^{(0)}\}_{k=1}^K$, maximum iterative step $m$
		
		\ENSURE loading matrices $\{\widehat{\Ab}_k^H\}_{k=1}^K$, factor tensors $\{\widehat\cF_t^H\}_{t=1}^T$
		
		%\STATE obtain the initial estimators  $\{\widehat{\Ab}_k^{(0)}=\widehat{\Ab}_k,k=1,\cdots,K\}$;
		
		\STATE compute  $\widehat{\Bb}_k^{(s)}=\otimes_{j=k+1}^{K}\widehat{\Ab}_j^{(s-1)}\otimes_{j=1}^{k-1}\widehat{\Ab}_j^{(s)}$;
		
		\STATE calculate the feasible weights $\widehat{w}_{k,t}^{H(s)}$ defined in (\ref{weights_H}) using $\widehat{\Ab}_k^{(s-1)}$, $\widehat{\Bb}_k^{(s)}$;
		
		\STATE compute $\widehat{\Mb}_k^{H(s)}$ based on (\ref{m_hat_kH}), viz. $\widehat{\Mb}_k^{H(s)}=\sum_{t=1}^{T}\widehat{w}_{k,t}^{H(s)}%
		\mathbf{X}_{k,t}\widehat{\mathbf{B}}_{k}^{(s)}
		\widehat{\mathbf{B}}_{k}^{(s)\top}\mathbf{X}_{k,t}^{\top }/(Tp_{-k})$,
		renew $\widehat{\Ab}_k^{(s)}$ as $\sqrt{p_k}$ times the matrix with columns being the  first $r_k$ eigenvectors  of $\widehat{\Mb}_k^{H(s)}$;
		
		\STATE repeat steps 1 to 3 until convergence, or up to the maximum number of iterations, output the last step estimators as $\{\widehat{\Ab}_k^H\}_{k=1}^K$, and the corresponding factor tensors  $\{\widehat\cF_t^H=\cX_t\times_{k=1}^K\l(\widehat\Ab_k^{H}\r)^\top/p\}_{t=1}^T$.
	\end{algorithmic} 	
\end{algorithm}

%\end{comment}

\subsection{Estimation of the numbers of factors\label{sec:2.3}}

Arguably, the first step in the estimation of the core tensor is the
estimation of its dimensions $r_{k}$, $1\leq k\leq K$. Several methodologies
can be proposed to this end, which are, in essence, extensions of existing
results derived in the case of vector-valued time series. Here, we propose
an estimator of $r_{k}$, $1\leq k\leq K$, based on a modified version of the
eigenvalue-ratio principle (\citealp{lam2012factor}; %
\citealp{ahnhorenstein13}). Specifically, we propose the following family of
(infeasible) criteria
\begin{equation}
	ER\left( r_{k}\right) =\operatorname{argmax}_{j\leq r_{\max }}\frac{\lambda _{j}(\mathbf{M}%
		_{k}^{w})}{\lambda _{j+1}(\mathbf{M}_{k}^{w})+c\widetilde{\omega }_{k}},
	\label{equ:2}
\end{equation}%
where: $\mathbf{M}_{k}^{w}=\sum_{t=1}^{T}w_{k,t}\mathbf{X}_{k,t}{\mathbf{B}}_{k}%
{\mathbf{B}}_{k}^{\top }\mathbf{X}_{k,t}^{\top }/\left(
Tpp_{-k}\right) $ is a weighted projection covariance matrix  of $\{\mathcal{X}_t\}$ with some weights $w_{k,t}$'s, $r_{\max }$ is a predetermined positive constant greater
than $\{r_{k},k=1,\cdots ,K\}$, $0<c<\infty $ is a user-chosen constant, and
$\widetilde{\omega }_{k}$ is a sequence such that, as $\min \left\{
T,p_{1},...,p_{K}\right\} \rightarrow \infty $, $\widetilde{\omega }%
_{k}=o\left( 1\right) $ (we refer to Section \ref{nfactors} for specific
examples and discussion of these tuning parameters).

The intuition underpinning (\ref{equ:2}) is the same as in %
\citet{ahnhorenstein13}: if the common factors are strong enough, the first $%
r_{k}$ eigenvalues of $\mathbf{M}_{k}^w$ will be significantly larger than the
rest. However, the denominator of $ER\left( r_{k}\right) $ contains the
further term $c\widetilde{w}_{k}$. This is because there is no theoretical
guarantee that, for some $j>r_{k}$, $\lambda _{j+1}\left( \mathbf{M}%
_{k}^w\right) $, will not be very small, thus artificially inflating the ratio
$\lambda _{j}\left( \mathbf{M}_{k}^w\right) /\lambda _{j+1}\left( \mathbf{M}^w%
_{k}\right) $. The presence of $c\widetilde{w}_{k}$ (which can be chosen to
be of the same order of magnitude as the upper bound for $\lambda
_{j+1}\left( \mathbf{M}_{k}^w\right) $ when $j\geq r_{k}$) serves the purpose
of \textquotedblleft weighing down\textquotedblright\ the eigenvalue ratio
and avoid such degeneracy - see also \citet{chang2021modelling}. As we show
in Section \ref{nfactors}, no restrictions are needed on the relative rates
of divergence of $T$ and $p_{1},...,p_{K}$ as they pass to infinity,
contrary to \citet{ahnhorenstein13}: hence our estimator can be applied for
all values of $\left\{ T,p_{1},...,p_{K}\right\} $, as long as $\min \left\{
T,p_{1},...,p_{K}\right\} \rightarrow \infty $.

As far as implementation is concerned, based on the discussion in the
previous sections, we propose two methods to estimate the number of factors.
The first one is to replace $\mathbf{M}_{k}^w$ in (\ref{equ:2}) with $\widehat{%
\mathbf{M}}_{k}$ defined in (\ref{m_hat_k}), based on the Least Squares loss
estimate; the second one is to use $\widehat{\mathbf{M}}_{k}^{H}$ defined in
(\ref{m_hat_kH}), based on the Huber loss estimate. Since the computation of
$\widehat{\mathbf{M}}_{k}$ and $\widehat{\mathbf{M}}_{k}^{H}$ requires $\{%
\widehat{\mathcal{F}}_{t}\}_{t=1}^{T}$ and $\{\widehat{\mathcal{F}}%
_{t}^{H}\}_{t=1}^{T}$, and therefore it requires an initial estimate of the
dimensions of the tensor factor, we propose two iterative methods to obtain,
respectively, the projection-based estimators $\{\widehat{r}_{k},k=1,\cdots
,K\}$, and the robust, Huber loss based estimator $\{\widehat{r}%
_{k}^{H},k=1,\cdots ,K\}$. Details are in Algorithms \ref{alg3} and \ref%
{alg4} below, respectively. Similarly to the iterative estimation of loading
matrices and tensor factors, we use the initial estimator IE of %
\citet{he2022statistical} as the initial estimation in the numerical
simulations. One novelty of the algorithms is that, at each iteration $s+1$,
we re-estimate the loadings $\mathbf{A}_{k}$\ using a deliberately inflated
dimension given by $\widehat{r}_{k}^{\left( s\right) }+2$, in order to avoid
the risk of underestimation. We refer to Section \ref{sec:4.4} for
guidelines on the choice of $c$ in (\ref{equ:2}).%
%corresponding to least square loss and Huber loss respectively.
%As the estimation of $\widehat r_k$ relies on the unknown $\Mb_k$, actually $\{\Ab_j, j\neq k\}$, and the estimation of $\widehat r_k^H$ relies on the unknown $\Mb_k^H$, actually $\{\Ab_j, j\neq k\}$ and $\tilde w_{k,t}^h$, we  propose  iterative methods to obtain the projection estimators $\{\hat{r}_k,k=1,\cdots,K\}$ and robust estimation $\{\hat{r}_k^{H},k=1,\cdots,K\}$ with details in the following Algorithm \ref{alg3} and Algorithm \ref{alg4} respectively. Similarly to the iterative estimation of loading matrices and tensor factors, we also choose IE (see \citealp{he2022statistical}) as the initial estimation in the numerical simulations.

\begin{algorithm}[htbp]
	\caption{Iterative Projected Estimation of factor numbers based on Eigenvalue-Ratio (iPE-ER)}
	\label{alg3}
	\begin{algorithmic}[1]	
		
		\REQUIRE tensor data $\{\cX_t\}_{t=1}^T$, the initial estimators  $\{\widehat{\Ab}_k^{(0)}\}_{k=1}^K$, maximum number $r_{\max}$, maximum iterative step $m$
		
		\ENSURE factor numbers $\{\hat{r}_k\}_{k=1}^K$
		
		\STATE initialize: $r_k^{(0)}=r_{\max},k=1,\cdots,K$;
		
		%\STATE obtain the initial estimators  $\{\widehat{\Ab}_k^{(0)},k=1,\cdots,K\}$;
		
		\STATE compute $\widehat{\Bb}_k^{(s)}=\otimes_{j=k+1}^{K}\widehat{\Ab}_j^{(s-1)}\otimes_{j=1}^{k-1}\widehat{\Ab}_j^{(s)}$;
		
		\STATE compute $\widehat{\Mb}_k^{(s)}$ as per (\ref{m_hat_k}), obtain $\hat{r}_k^{(s)}$;
		
		\STATE renew $\widehat{\Ab}_k^{(s)}$ as $\sqrt{p_k}$ times the matrix with columns being the  first $\hat{r}_k^{(s)}+2$ eigenvectors  of $\widehat{\Mb}_k^{(s)}$;
		
		\STATE repeat steps 2 to 4 until convergence, or up to the maximum number of iterations, output the last step estimator as $\{\hat{r}_k^{\text{iPE-ER}}\}_{k=1}^K$.
	\end{algorithmic} 	
\end{algorithm}

%\begin{comment}
\begin{algorithm}[htbp]
	\caption{Robust Tensor Factor Analysis of factor numbers based on Eigenvalue-Ratio (RTFA-ER)}
	\label{alg4}
	\begin{algorithmic}[1]	
		
		\REQUIRE tensor data $\{\cX_t,t=1,\cdots,T\}$, the initial estimators  $\{\widehat{\Ab}_k^{(0)},k=1,\cdots,K\}$, maximum number $r_{\max}$, maximum iterative step $m$
		
		\ENSURE factor numbers $\{\hat{r}_k^{\text{RTFA-ER}},k=1,\cdots,K\}$
		
		\STATE initialize: $r_k^{(0)}=r_{\max},k=1,\cdots,K$;
		
		%\STATE obtain the initial estimators  $\{\widehat{\Ab}_k^{(0)},k=1,\cdots,K\}$;
		
		\STATE compute $\widehat{\Bb}_k^{(s)}=\otimes_{j=k+1}^{K}\widehat{\Ab}_j^{(s-1)}\otimes_{j=1}^{k-1}\widehat{\Ab}_j^{(s)}$;
		
		\STATE use $\widehat{\Ab}_k^{(s-1)}$ and $\widehat{\Bb}_k^{(s)}$ to calculate weights $\widehat{w}_{k,t}^{H(s)}$;
		
		\STATE compute $\widehat{\Mb}_k^{H(s)}$ as per (\ref{m_hat_kH}), obtain $\hat{r}_k^{(s)}$;
		
		\STATE renew $\widehat{\Ab}_k^{(s)}$ as $\sqrt{p_k}$ times the matrix with columns being the  first $\hat{r}_k^{(s)}+2$ eigenvectors  of $\widehat{\Mb}_k^{H(s)}$;
		
		\STATE repeat steps 2 to 5 until convergence, or up to the maximum number of iterations, output the last step estimator as $\{\hat{r}_k^{\text{RTFA-ER}}\}_{k=1}^K$.
	\end{algorithmic} 	
\end{algorithm}
%\end{comment}

\section{Assumptions and rates of convergence\label{theory}}

In this section, we study the asymptotic theory of the proposed estimators.
Throughout the section, we use the following notation. We define, for short,
a generic value on the parameter space as%
\begin{equation*}
\theta =\left\{ \mathbf{A}_{1},\ldots,\mathbf{A}_{K},\mathcal{F}_{1},\ldots,%
\mathcal{F}_{T}\right\} ,
\end{equation*}%
with $\mathbf{A}_{k}=\left( \mathbf{a}_{k,1}|\ldots|\mathbf{a}%
_{k,p_{k}}\right) ^{\top}$, and we denote the parameter space as%
\begin{align*}
\Theta =& \,\bigg\{ \theta :\mathbf{a}_{k,j}\in \mathcal{A}_{k}\subset
\mathbb{R}^{r_{k}}\text{, }1\leq j\leq p_k, 1\leq k\leq K,\text{ such that }%
\frac{1}{p_{k}}\mathbf{A}_{k}^{\top }\mathbf{A}_{k}=\mathbf{I}_{r_{k}}\text{%
, }1\leq k\leq K, \text{ and }\\
& \left. \mathcal{F}_{t}\in \mathcal{F}\subset \mathbb{R}%
^{r_{1}\times ...\times r_{K}}\text{, such that }\frac{1}{T}\sum_{t=1}^{T}\Fb%
_{k,t}\Fb_{k,t}^{\top }\text{ is a }r_{k}\times r_{k}\text{ positive diagonal matrix with }\right. \\
& \text{non-increasing diagonal elements, }%
1\leq t\leq T\bigg\} .
\end{align*}%
Finally, we denote the true value of a set of parameters using the subscript
\textquotedblleft $_{0}$\textquotedblright , e.g.
\begin{equation*}
\theta _{0}=\left\{ \mathbf{A}_{01},...,\mathbf{A}_{0K},\mathcal{F}_{01},...,%
\mathcal{F}_{0T}\right\} .
\end{equation*}

\subsection{Assumptions\label{assumptions}}

We now introduce our main assumptions; from the outset we note that we will
consider two sets of assumptions - one set for the case of sufficiently
light-tailed data, and another set for the case of datasets with heavier
tails.

\begin{assumption}
\label{as-1}\textit{It holds that: \textit{(i) }$\left\{ \mathcal{A}_{k}%
\text{, }1\leq k\leq K\right\} $ and $\mathcal{F}$ are closed sets and $%
\theta _{0}\in \Theta $; \textit{(ii)} there exists constants $%
0<c_{k}<\infty $ such that, for all $1\leq k\leq K$, $\left\Vert \mathbf{A}%
_{0k}\right\Vert _{F}=c_{k}p_{k}^{1/2}$; \textit{(iii)} $\left\{ \mathbf{F}%
_{0k,t},1\leq t\leq T\right\} $ is a nonrandom sequence such that (a) $%
\left\Vert \mathbf{F}_{0k,t}\right\Vert _{F}<\infty $, and (b) for $1\leq
k\leq K$
\begin{equation*}
\frac{1}{T}\sum_{t=1}^{T}\mathbf{F}_{0k,t}\mathbf{F}_{0k,t}^{\top}=\bSigma%
_{0T,k} \rightarrow \mathbf{\Sigma }_{0k} \text{ as }T\rightarrow\infty,
\end{equation*}%
where $\mathbf{\Sigma }_{0k}$ is an $r_{k}\times r_{k}$ positive definite,
diagonal matrix with non-increasing diagonal elements $\sigma_{k,01},\cdots,%
\sigma_{0,kr_{k}}$, where $\mathbf{F}_{0k,t}=$mat$_{k}\left( \mathcal{F}%
_{0t}\right) $.}
\end{assumption}

\begin{assumption}
\label{as-2}\textit{It holds that $e_{t,i_{1},...,i_{K}}|\left\{ \mathcal{F}%
_{0,t},1\leq t\leq T\right\} $ is independent across $t$ and $%
i_{1},...,i_{K} $.}
\end{assumption}

Assumption \ref{as-1} is standard in this literature. Part \textit{(ii)}, in
essence, requires that the common factors be \textquotedblleft
strong\textquotedblright\ or \textquotedblleft pervasive\textquotedblright\
across the dimension $k$, by requiring that the sum of the squares of the
loadings, $\mathbf{A}_{0k}$, grow as fast as the dimension $p_{k}$. This is
a typical requirement in the context of factor models for vector-valued (see
e.g. \citealp{stock2002forecasting} and \citealp{bai2003inferential}) and
matrix-valued (see e.g. \citealp{wang2019factor}) time series, and it could
potentially be relaxed to consider \textquotedblleft
weak(er)\textquotedblright\ common factor structures - see, e.g. %
\citet{uematsu2022inference}, \textit{inter alia}, for the vector-valued
case, and \citet{hkty} for the matrix-valued case. Part \textit{(iii)} of
the assumption stipulates that the common factors are fixed; whilst this is
natural, since in our context we need to estimate them, we would like to
point out that in the context of vector-valued and matrix-valued time
series, usually it is assumed that the common factors are random variables.
In our context, this could also be done, at the price of more complicated
algebra, and we spell out all our assumptions in terms of conditioning upon $%
\left\{ \mathcal{F}_{0,t},1\leq t\leq T\right\} $. The assumption of serial and cross-sectional independence of the idiosyncratic components is required for some technical results in our proofs (e.g., we need a Hoeffding-type bound for sum of sub-Gaussian variables), and in principle it could be relaxed at the price of more complicated high-level assumptions (for example, in the case of the aforementioned Hoeffding-type bound, see \citealp{van2002hoeffding}).

We now consider two alternative assumptions to characterise the tails of $%
e_{t,i_{1},...,i_{K}}$. Let $\Gamma _{e}$ denote the support of $%
e_{t,i_{1},...,i_{K}}|\left\{ \mathcal{F}_{0,t},1\leq t\leq T\right\} $,
i.e.
\begin{equation*}
\Gamma _{e}=\big\{ x:P\left( e_{t,i_{1},...,i_{K}}|\left\{ \mathcal{F}%
_{0,t},1\leq t\leq T\right\} \in \left( x\pm \epsilon \right) ,\text{ for
all }\epsilon >0\right) \big\} ,
\end{equation*}%
and denote the half lines as $\mathbb{R}^{+}=\left\{ x:x>0\right\} $ and $%
\mathbb{R}^{-}=\left\{ x:x<0\right\} $. The number of elements contained in $%
\Gamma _{e}$ is denoted as Card$\left( \Gamma _{e}\right) $, with the
convention that Card$\left( \Gamma _{e}\right) =\infty $ if $\Gamma _{e}$\
does not contain a finite number of elements.

\begin{assumption}
\label{as-3}\textit{It holds that: \textit{(i) }$\mathbb{E}\left(
e_{t,i_{1},...,i_{K}}|\mathcal{F}_{0,t}\right) =0$ for all $t$ and $%
i_{1},...,i_{K}$; \textit{(ii)} there exists a $\nu >0$ such that, for all $%
d=1,2,...$%
\begin{equation*}
\mathbb{E}\left( \left\vert e_{t,i_{1},...,i_{K}}\right\vert ^{2d}|\left\{
\mathcal{F}_{0,t},1\leq t\leq T\right\} \right) \leq \frac{\left( 2d\right) !%
}{2^{d}d!}\nu ^{2d};
\end{equation*}%
\textit{(iii)} $\left\Vert \mathcal{E}_{t}\right\Vert _{F}>0$ a.s., for all $%
1\leq t\leq T$.}
\end{assumption}

\begin{assumption}
\label{as-4}\textit{It holds that: \textit{(i) }$\Gamma _{e}\cap \mathbb{R}%
^{+}\neq \varnothing $, and $\Gamma _{e}\cap \mathbb{R}^{-}\neq \varnothing $%
, with Card$\left( \Gamma _{e}\right) \geq 3$; \textit{(ii)} $\mathbb{E}%
\left( e_{t,i_{1},...,i_{K}}|\mathcal{F}_{0,t}\right) =0$ for all $t$ and $%
i_{1},...,i_{K}$; \textit{(iii)}
\begin{equation*}
\mathbb{E}\left( \left\vert e_{t,i_{1},...,i_{K}}\right\vert ^{2+\epsilon
}|\left\{ \mathcal{F}_{0,t},1\leq t\leq T\right\} \right) <\infty ,
\end{equation*}%
for some $\epsilon >0$.}
\end{assumption}

Assumption \ref{as-3} contains a weak exogeneity requirement that the
idiosyncratic component $e_{t,i_{1},...,i_{K}}$ has zero mean conditional
upon $\mathcal{F}_{0,t}$, and states that the conditional distribution of $%
e_{t,i_{1},...,i_{K}}$ has sub-Gaussian tails (see e.g. \citealp{wainwright}%
). Assumption \ref{as-4} relaxes Assumption \ref{as-3}, requiring the
existence of fewer moments - indeed, the assumption requires, essentially,
only the finiteness of the second moments. Part \textit{(i)} of the
assumption is\ a technical condition which we require to be able to use the
results in \citet{jing2008towards}, and it essentially requires that the
distribution function of $e_{t,i_{1},...,i_{K}}|\left\{ \mathcal{F}%
_{0,t},1\leq t\leq T\right\} $ has a common support covering an open
neighbourhood of the origin.

\subsection{Rates of convergence\label{rates}}

We now report the main results of this paper, namely the consistency (up to
a linear transformation) of the estimated common factors and loadings
spaces. We will consider, in particular, estimators based on minimising
quadratic loss functions discussed in Section \ref{ls}, defined as%
\begin{equation*}
\widehat{\theta }=\arg \min_{\theta \in \Theta }\frac{1}{T}%
\sum_{t=1}^{T}\left\Vert \mathcal{X}_{t}-\mathcal{F}_{t}\times _{k=1}^{K}%
\mathbf{A}_{k}\right\Vert _{F}^{2},
\end{equation*}%
and estimators based on minimising the Huber loss function discussed in
Section \ref{huber}, viz.%
\begin{equation*}
\widehat{\theta }^{H}=\arg \min_{\theta \in \Theta }\frac{1}{T}%
\sum_{t=1}^{T}H_{\tau }\left( \left\Vert \mathcal{X}_{t}-\mathcal{F}%
_{t}\times _{k=1}^{K}\mathbf{A}_{k}\right\Vert _{F}\right) .
\end{equation*}%
In addition to the factors $\left\{ \mathcal{F}_{t}\right\} _{t=1}^{T}$ and
the loadings $\left\{ \mathbf{A}_{k},1\leq k\leq K\right\} $, we also
estimate the common components $\mathcal{S}_{0t}=\mathcal{F}_{0t}\times
_{k=1}^{K}\mathbf{A}_{0k}$.

We begin with the rates of convergence of $\widehat{\theta }$. Define the
following quantities%
\begin{equation*}
\widehat{\mathbf{S}}_{k}=\text{sgn}\left( \Ab_{0k}^\top\widehat\Ab%
_k/p_k\right) ,
\end{equation*}%
and
\begin{equation}
L=\min \left\{ p,Tp_{-1},...,Tp_{-K}\right\} .  \label{elle}
\end{equation}

\begin{theorem}
\label{ls-theorem}{\it We assume that Assumptions \ref{as-1}-\ref{as-3} are
satisfied. Then, for all $1 \leq k \leq K$, it holds that%
\begin{eqnarray*}
\frac{1}{p_{k}}\left\Vert \widehat{\mathbf{A}}_{k}-\mathbf{A}_{0k}\widehat{%
\mathbf{S}}_{k}\right\Vert _{F}^{2} &=&O_{P}\left( \frac{1}{L}\right) , \\
\frac{1}{T}\sum_{t=1}^{T}\left\Vert \widehat{\mathcal{F}}_{t}-%
\mathcal{F}_{0t}\times _{k=1}^{K}\widehat{\mathbf{S}}_{k} \right\Vert
_{F}^{2} &=&O_{P}\left( \frac{1}{L}\right) .
\end{eqnarray*}%
}
\end{theorem}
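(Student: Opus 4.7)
\textbf{Proof plan for Theorem \ref{ls-theorem}.}

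The plan is to follow the classical Bai--Ng style argument, adapted to the tensor setting and the mode-$k$ eigenvalue equation in (\ref{kkt-1-2}). First, I would start from the KKT identity
\[
\widehat{\mathbf{A}}_{k}\widehat{\Lambda}_{\mathbf{A}}=\widehat{\mathbf{M}}_{k}\widehat{\mathbf{A}}_{k}=\frac{1}{Tpp_{-k}}\sum_{t=1}^{T}\mathbf{X}_{k,t}\widehat{\mathbf{B}}_{k}\widehat{\mathbf{B}}_{k}^{\top }\mathbf{X}_{k,t}^{\top }\widehat{\mathbf{A}}_{k},
\]
substitute $\mathbf{X}_{k,t}=\mathbf{A}_{0k}\mathbf{F}_{0k,t}\mathbf{B}_{0k}^{\top }+\mathbf{E}_{k,t}$ with $\mathbf{E}_{k,t}=\mathrm{mat}_{k}(\mathcal{E}_{t})$, and expand the product. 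This produces the decomposition $\widehat{\mathbf{A}}_{k}=\mathbf{A}_{0k}\mathbf{H}_{k}+\mathrm{I}_{k}+\mathrm{II}_{k}+\mathrm{III}_{k}$, where
\[
\mathbf{H}_{k}=\frac{1}{Tpp_{-k}}\sum_{t=1}^{T}\mathbf{F}_{0k,t}\mathbf{B}_{0k}^{\top }\widehat{\mathbf{B}}_{k}\widehat{\mathbf{B}}_{k}^{\top }\mathbf{B}_{0k}\mathbf{F}_{0k,t}^{\top }\mathbf{A}_{0k}^{\top }\widehat{\mathbf{A}}_{k}\widehat{\Lambda}_{\mathbf{A}}^{-1},
\]
and $\mathrm{I}_{k},\mathrm{II}_{k},\mathrm{III}_{k}$ are the two signal--noise cross terms and the pure noise term $(Tpp_{-k})^{-1}\sum_{t}\mathbf{E}_{k,t}\widehat{\mathbf{B}}_{k}\widehat{\mathbf{B}}_{k}^{\top }\mathbf{E}_{k,t}^{\top }\widehat{\mathbf{A}}_{k}\widehat{\Lambda}_{\mathbf{A}}^{-1}$, respectively. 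A preliminary step is to show that the $r_{k}$ leading eigenvalues of $\widehat{\mathbf{M}}_{k}$ stay bounded away from zero, which follows from Assumption \ref{as-1}(ii)--(iii) and the pervasiveness it encodes, so that $\widehat{\Lambda}_{\mathbf{A}}^{-1}=O_{P}(1)$.

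Next I would bound $p_{k}^{-1}\|\mathrm{I}_{k}\|_{F}^{2}+p_{k}^{-1}\|\mathrm{II}_{k}\|_{F}^{2}+p_{k}^{-1}\|\mathrm{III}_{k}\|_{F}^{2}=O_{P}(L^{-1})$ term by term. The two cross terms are of the form $(Tpp_{-k})^{-1}\sum_{t}\mathbf{A}_{0k}\mathbf{F}_{0k,t}\mathbf{B}_{0k}^{\top }\widehat{\mathbf{B}}_{k}\widehat{\mathbf{B}}_{k}^{\top }\mathbf{E}_{k,t}^{\top }\widehat{\mathbf{A}}_{k}$; using $\|\mathbf{A}_{0k}\|_{F}=O(\sqrt{p_{k}})$, $\|\widehat{\mathbf{B}}_{k}\|_{F}^{2}=O(p_{-k})$, and a Hoeffding-type bound (Assumption \ref{as-2} together with the sub-Gaussian tail in Assumption \ref{as-3}) applied to the inner product $\sum_{t}\mathbf{F}_{0k,t}\cdot\mathbf{E}_{k,t}^{\top }$, these terms contribute $O_{P}(1/(Tp_{-k}))$ after dividing by $p_{k}$. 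The main obstacle is the pure noise term $\mathrm{III}_{k}$: because $\widehat{\mathbf{B}}_{k}$ depends on the full sample, I would first replace $\widehat{\mathbf{B}}_{k}$ by an oracle basis $\mathbf{B}_{0k}$ plus an error controlled through an induction/bootstrap across modes (this is where the mode-by-mode structure is essential, and is the key technical hurdle), and then split $(pp_{-k})^{-1}\mathbf{E}_{k,t}\mathbf{B}_{0k}\mathbf{B}_{0k}^{\top }\mathbf{E}_{k,t}^{\top }$ into its conditional mean (contributing $O(1/p)$ after normalization, via the diagonal structure the independence in Assumption \ref{as-2} gives) and a mean-zero part (contributing $O_{P}(1/\sqrt{Tp_{-k}})$ by a Hoeffding/Bernstein bound for sub-Gaussian sums). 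Collecting the three bounds yields $p_{k}^{-1}\|\widehat{\mathbf{A}}_{k}-\mathbf{A}_{0k}\mathbf{H}_{k}\|_{F}^{2}=O_{P}(L^{-1})$ with $L$ as in (\ref{elle}).

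To replace $\mathbf{H}_{k}$ by $\widehat{\mathbf{S}}_{k}=\mathrm{sgn}(\mathbf{A}_{0k}^{\top }\widehat{\mathbf{A}}_{k}/p_{k})$, I would multiply $\widehat{\mathbf{A}}_{k}=\mathbf{A}_{0k}\mathbf{H}_{k}+o_{P}(\sqrt{p_{k}})$ by $p_{k}^{-1}\mathbf{A}_{0k}^{\top }$, use $p_{k}^{-1}\mathbf{A}_{0k}^{\top }\mathbf{A}_{0k}=\mathbf{I}_{r_{k}}$ under (\ref{identification-ak}), and show that $\mathbf{H}_{k}^{\top }\mathbf{H}_{k}\to \mathbf{I}_{r_{k}}$ in probability (exploiting $p_{k}^{-1}\widehat{\mathbf{A}}_{k}^{\top }\widehat{\mathbf{A}}_{k}=\mathbf{I}_{r_{k}}$, together with the diagonality of $\boldsymbol{\Sigma}_{0k}$ in Assumption \ref{as-1}(iii) to rule out arbitrary orthogonal rotations and force $\mathbf{H}_{k}$ to be asymptotically $\widehat{\mathbf{S}}_{k}$ up to $O_{P}(L^{-1/2})$). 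Substituting back gives the stated loading bound.

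Finally, for the factor rate, since $\widehat{\mathcal{F}}_{t}=\mathcal{X}_{t}\times _{k=1}^{K}\widehat{\mathbf{A}}_{k}^{\top }/p$, I would write
\[
\widehat{\mathcal{F}}_{t}-\mathcal{F}_{0t}\times _{k=1}^{K}\widehat{\mathbf{S}}_{k}=\mathcal{F}_{0t}\times _{k=1}^{K}\bigl(p_{k}^{-1}\widehat{\mathbf{A}}_{k}^{\top }\mathbf{A}_{0k}-\widehat{\mathbf{S}}_{k}\bigr)+p^{-1}\mathcal{E}_{t}\times _{k=1}^{K}\widehat{\mathbf{A}}_{k}^{\top },
\]
after a telescoping expansion. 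The first piece is controlled by the loading rate just derived (giving $O_{P}(L^{-1})$ after averaging in $t$ by Assumption \ref{as-1}(iii)), while the second piece, averaged over $t$, is bounded by $p^{-2}T^{-1}\sum_{t}\|\mathcal{E}_{t}\times _{k=1}^{K}\widehat{\mathbf{A}}_{k}^{\top }\|_{F}^{2}=O_{P}(1/p)$ using $p_{k}^{-1}\widehat{\mathbf{A}}_{k}^{\top }\widehat{\mathbf{A}}_{k}=\mathbf{I}_{r_{k}}$ and the sub-Gaussian moment bound of Assumption \ref{as-3}(ii). This completes the proof at rate $O_{P}(L^{-1})$.
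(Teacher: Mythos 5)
Your plan takes a genuinely different route from the paper: you propose the classical Bai--Ng eigen-expansion, starting from the KKT identity $\widehat{\mathbf{M}}_{k}\widehat{\mathbf{A}}_{k}=\widehat{\mathbf{A}}_{k}\widehat{\Lambda}_{\mathbf{A}}$ and decomposing $\widehat{\mathbf{A}}_{k}=\mathbf{A}_{0k}\mathbf{H}_{k}+\text{(remainders)}$, whereas the paper treats $\widehat{\theta}$ as an M-estimator of the \emph{joint} least-squares criterion and proves the rate by empirical-process methods: a uniform law of large numbers for the centred process $\mathbb{W}(\theta)$ giving preliminary consistency $d(\widehat{\theta},\theta_{0})=o_{P}(1)$ (Lemma \ref{semi-1}), a maximal inequality $\mathbb{E}\sup_{\Theta(\delta)}|\mathbb{W}(\theta)|\lesssim\delta/\sqrt{L}$ via covering numbers and Orlicz norms (Lemma \ref{exp-sup}), a peeling argument over shells $S_{j}$ to get $\sqrt{L}\,d(\widehat{\theta},\theta_{0})=O_{P}(1)$, and finally Lemma \ref{lemma2} to convert the semimetric bound into the stated mode-wise bounds on $\widehat{\mathbf{A}}_{k}$ and $\widehat{\mathcal{F}}_{t}$. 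That architecture is what makes the result a joint convergence statement (as the paper stresses in contrasting itself with \citet{han2020tensor} and \citet{he2022statistical}), and it is also what is recycled almost verbatim for the Huber estimator.

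The genuine gap in your plan is exactly at the point you flag as "the key technical hurdle" and then do not resolve: the circular dependence across modes. In your expansion, controlling $\mathrm{I}_{k},\mathrm{II}_{k},\mathrm{III}_{k}$ requires $\widehat{\mathbf{B}}_{k}=\otimes_{j\neq k}\widehat{\mathbf{A}}_{j}$ to already be consistent for $\mathbf{B}_{0k}$ up to signs, but consistency of each $\widehat{\mathbf{A}}_{j}$ is precisely what you are trying to prove; unlike the vector case there is no mode to start the "induction" from, and for the exact joint minimiser there is no algorithmic iteration whose steps you can analyse one at a time. You would need an independent preliminary consistency result for the whole parameter $\widehat{\theta}$ before the eigen-expansion delivers rates -- this is the role of Lemma \ref{semi-1} in the paper, and your plan contains no substitute for it. Two smaller points: your identification step ($\mathbf{H}_{k}^{\top}\mathbf{H}_{k}\to\mathbf{I}_{r_{k}}$ plus diagonality of $\boldsymbol{\Sigma}_{0k}$ forcing $\mathbf{H}_{k}\approx\widehat{\mathbf{S}}_{k}$) implicitly needs the identification restrictions built into $\Theta$ (the diagonality and ordering of $T^{-1}\sum_{t}\mathbf{F}_{k,t}\mathbf{F}_{k,t}^{\top}$) and a more careful perturbation argument than "rule out arbitrary rotations"; and the factor-rate step must also account for the cross terms $\mathcal{F}_{0t}\times(\widehat{\mathbf{A}}_{k}-\mathbf{A}_{0k}\widehat{\mathbf{S}}_{k})^{\top}$ interacting across modes in the telescoping expansion, not only the two pieces you display. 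As written, the proposal is a plausible programme for the iterative projection estimator initialised at a consistent point, but it does not prove the theorem as stated for the joint minimiser.
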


The following result states the consistency of the estimator of $\cS_{0t}$
defined as $\widehat\cS_t=\widehat\cF_t\times_{k=1}^K\widehat\Ab_k$.

\begin{corollary}
	\label{ls-cc}{\it We assume that Assumptions \ref{as-1}-\ref{as-3} are
	satisfied. Then, for all $1 \leq k \leq K$, it holds that $\sum_{t=1}^{T}\left\Vert\widehat{\mathcal{S}}_{t}-%
		\mathcal{S}_{0t}\right\Vert
		_{F}^{2}/(Tp)=O_P(1/L)$.%
%	\begin{equation*}
%		\frac{1}{Tp}\sum_{t=1}^{T}\left\Vert\widehat{\mathcal{S}}_{t}-%
%		\mathcal{S}_{0t}\right\Vert
%		_{F}^{2}=O_{P}\left( \frac{1}{L}\right) .
%	\end{equation*}%
	}
\end{corollary}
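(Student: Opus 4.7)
The plan is to decompose the error $\widehat{\mathcal{S}}_t - \mathcal{S}_{0t}$ through a telescoping argument that peels off one mode at a time, and then bound each piece by combining the loading and factor consistency rates already established in Theorem \ref{ls-theorem} with the identification/pervasiveness conditions from Assumption \ref{as-1}. Throughout, I will make repeated use of the submultiplicative inequality $\|\mathbf{U}\mathbf{V}\mathbf{W}\|_F^2 \leq \|\mathbf{U}\|_{op}^2 \|\mathbf{V}\|_F^2 \|\mathbf{W}\|_{op}^2$ together with the identification constraints $\|\widehat{\mathbf{A}}_k\|_{op}^2 = p_k$ and $\|\mathbf{A}_{0k}\|_{op}^2 \leq c_k^2 p_k$.

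First, I introduce the rotated true factor $\widetilde{\mathcal{F}}_{0t} = \mathcal{F}_{0t}\times_{k=1}^{K}\widehat{\mathbf{S}}_k$. Since the sign-alignment matrix $\widehat{\mathbf{S}}_k$ is orthogonal, with $\widehat{\mathbf{S}}_k^{\top}\widehat{\mathbf{S}}_k = \mathbf{I}_{r_k}$, one has $\mathcal{S}_{0t} = \widetilde{\mathcal{F}}_{0t}\times_{k=1}^{K}(\mathbf{A}_{0k}\widehat{\mathbf{S}}_k)$. Hence the difference admits the telescoping decomposition
\begin{align*}
\widehat{\mathcal{S}}_t - \mathcal{S}_{0t}
&= \bigl(\widehat{\mathcal{F}}_t - \widetilde{\mathcal{F}}_{0t}\bigr)\times_{k=1}^{K}\widehat{\mathbf{A}}_k \\
&\quad + \sum_{k=1}^{K}\widetilde{\mathcal{F}}_{0t}\times_{j<k}\bigl(\mathbf{A}_{0j}\widehat{\mathbf{S}}_j\bigr)\times_{k}\bigl(\widehat{\mathbf{A}}_k - \mathbf{A}_{0k}\widehat{\mathbf{S}}_k\bigr)\times_{j>k}\widehat{\mathbf{A}}_j ,
\end{align*}
so that, by the triangle inequality, it suffices to control the (averaged, $p$-normalised) Frobenius norm of each of these $K+1$ pieces.

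For the first piece, mode-$1$ matricization together with the identification $\|\widehat{\mathbf{A}}_k\|_{op}^2 = p_k$ for every $k$ gives $\|(\widehat{\mathcal{F}}_t - \widetilde{\mathcal{F}}_{0t})\times_{k=1}^{K}\widehat{\mathbf{A}}_k\|_F^2 \leq p\,\|\widehat{\mathcal{F}}_t - \widetilde{\mathcal{F}}_{0t}\|_F^2$, whose $t$-average divided by $p$ is exactly $T^{-1}\sum_{t}\|\widehat{\mathcal{F}}_t - \widetilde{\mathcal{F}}_{0t}\|_F^2 = O_P(1/L)$ by Theorem \ref{ls-theorem}. For the $k$-th loading-error piece, mode-$k$ matricization yields the product $(\widehat{\mathbf{A}}_k - \mathbf{A}_{0k}\widehat{\mathbf{S}}_k)\widetilde{\mathbf{F}}_{0k,t}\mathbf{C}_k^{\top}$, where $\mathbf{C}_k$ is a Kronecker product of loading matrices satisfying $\|\mathbf{C}_k\|_{op}^2 = O(p_{-k})$. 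The submultiplicative inequality, combined with $\|\widetilde{\mathbf{F}}_{0k,t}\|_F = \|\mathbf{F}_{0k,t}\|_F = O(1)$ (Assumption \ref{as-1}(iii)(a), using that mode products by orthogonal matrices preserve the Frobenius norm) and $\|\widehat{\mathbf{A}}_k - \mathbf{A}_{0k}\widehat{\mathbf{S}}_k\|_F^2 = O_P(p_k/L)$ (Theorem \ref{ls-theorem}), bounds the Frobenius norm squared of this piece, uniformly in $t$, by $O_P(p_k/L) \cdot O(1) \cdot O(p_{-k}) = O_P(p/L)$. Dividing by $Tp$, summing over $t$, and collecting the $K+1$ contributions (with $K$ fixed) yields $T^{-1}p^{-1}\sum_{t}\|\widehat{\mathcal{S}}_t - \mathcal{S}_{0t}\|_F^2 = O_P(1/L)$, as claimed.

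The main obstacle is purely bookkeeping: one must insert the orthogonal sign matrices $\widehat{\mathbf{S}}_k$ consistently across all modes so that $\mathcal{S}_{0t}$ is correctly rewritten in terms of $\widetilde{\mathcal{F}}_{0t}$, and verify that the uniform bound on $\|\mathbf{F}_{0k,t}\|_F$ from Assumption \ref{as-1}(iii)(a) propagates through the mode-$k$ matricizations without accumulating stray powers of $p$ or $T$. Because $K$ is fixed and the telescoping contains only finitely many terms, no further rate restrictions or cross-dimensional conditions are incurred, and the final rate $O_P(1/L)$ matches that of the loading and factor estimators themselves.
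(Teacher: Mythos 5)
Your proof is correct and follows essentially the same route as the paper's: a telescoping decomposition of $\widehat{\mathcal{S}}_{t}-\mathcal{S}_{0t}$ into $K$ loading-error terms plus one factor-error term, each bounded by the rates of Theorem \ref{ls-theorem} together with the normalisation $\widehat{\mathbf{A}}_{k}^{\top}\widehat{\mathbf{A}}_{k}=p_{k}\mathbf{I}_{r_{k}}$ and the pervasiveness bound $\Vert\mathbf{A}_{0k}\Vert_{F}^{2}=O(p_{k})$. The only (immaterial) difference is the order of the telescoping — your loading-error terms carry the rotated true factor $\widetilde{\mathcal{F}}_{0t}$, which is bounded directly by Assumption \ref{as-1}\textit{(iii)}, whereas the paper's carry $\widehat{\mathcal{F}}_{t}$ — and this if anything makes your bookkeeping slightly cleaner.
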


The results in Theorem \ref{ls-theorem} and Corollary \ref{ls-cc} state the $%
L_{2}$-norm consistency of the Least Squares estimator; under sufficiently
many moments, the estimators are consistent at rate $O_{P}\left(
L^{-1}\right) $, modulo the sign indeterminacy indicated by the presence of $%
\widehat{\mathbf{S}}_{k}$. We would like to point out that this result is
different - albeit obviously related - compared to the theory in %
\citet{han2020tensor} and \citet{he2022statistical}: the rates in Theorem %
\ref{ls-theorem} are a joint convergence result, whereas the estimators in %
\citet{han2020tensor} and \citet{he2022statistical} focus on individual
convergence rates.

We now turn to studying $\widehat{\theta }^{H}$. Let%
\begin{equation*}
\widehat{\mathbf{S}}_{k}^{H}=\text{sgn}\left( \Ab_{0k}^{\top }\widehat{\Ab}%
_{k}^{H}/p_{k}\right) ,
\end{equation*}%
and further define%
\begin{eqnarray}
L^{\ast } &=&\min \left\{
p,p_{-1}^{2},...,p_{-K}^{2},Tp_{-1},...,Tp_{-K}\right\} ,  \label{elle-star}
\\
L^{\ast \ast } &=&\min \left\{ p_{-1},...,p_{-K}\right\} ,
\label{elle-star2}
\end{eqnarray}%
and the common component estimator $\widehat{\cS}_{t}^{H}=\widehat{\cF}%
_{t}^{H}\times _{k=1}^{K}\widehat{\Ab}_{k}^{H}$.

\begin{theorem}
	\label{huber-theorem-1}{\it We assume that Assumptions \ref{as-1}-\ref{as-3} are
satisfied. Then, for all $1 \leq k \leq K$, it holds that%
\begin{eqnarray*}
\frac{1}{p_{k}}\left\Vert \widehat{\mathbf{A}}_{k}^{H}-\mathbf{A}_{0k}%
\widehat{\mathbf{S}}_{k}^{H}\right\Vert _{F}^{2} &=&O_{P}\left( \frac{1}{%
L^{\ast }}\right) , \\
\frac{1}{T}\sum_{t=1}^{T}\left\Vert \widehat{\mathcal{F}}_{t}^{H}-%
\mathcal{F}_{0t}\times _{k=1}^{K}\widehat{\mathbf{S}}_{k}^{H}
\right\Vert _{F}^{2} &=&O_{P}\left( \frac{1}{L^{\ast }}\right) .
\end{eqnarray*}%
}
\end{theorem}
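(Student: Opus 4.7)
\noindent\textbf{Proof proposal for Theorem \ref{huber-theorem-1}.}

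The plan is to mirror the strategy used for the quadratic loss case (Theorem \ref{ls-theorem}), exploiting the KKT eigenvalue identity $\widehat{\mathbf{M}}_{k}^{H}\widehat{\mathbf{A}}_{k}^{H}=\widehat{\mathbf{A}}_{k}^{H}\widehat{\Lambda}_{\mathbf{A}}^{H}$ derived in (\ref{kkt-2-2})--(\ref{m_hat_kH}), together with the identification $(\widehat{\mathbf{A}}_{k}^{H})^{\top}\widehat{\mathbf{A}}_{k}^{H}/p_{k}=\mathbf{I}_{r_{k}}$. The standard manipulation yields the representation
\begin{equation*}
\widehat{\mathbf{A}}_{k}^{H}-\mathbf{A}_{0k}\widehat{\mathbf{S}}_{k}^{H} = \bigl(\widehat{\mathbf{M}}_{k}^{H}\mathbf{A}_{0k}\widehat{\mathbf{S}}_{k}^{H}-\mathbf{A}_{0k}\widehat{\mathbf{S}}_{k}^{H}\widehat{\Lambda}_{\mathbf{A}}^{H}\bigr)(\widehat{\Lambda}_{\mathbf{A}}^{H})^{-1} + \widehat{\mathbf{M}}_{k}^{H}\bigl(\widehat{\mathbf{A}}_{k}^{H}-\mathbf{A}_{0k}\widehat{\mathbf{S}}_{k}^{H}\bigr)(\widehat{\Lambda}_{\mathbf{A}}^{H})^{-1}.
\end{equation*}
Under Assumption \ref{as-1}, the pervasiveness of the factors guarantees that the first $r_{k}$ eigenvalues of $\widehat{\mathbf{M}}_{k}^{H}$ are bounded away from zero in probability, so that $(\widehat{\Lambda}_{\mathbf{A}}^{H})^{-1}=O_{P}(1)$. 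Consequently, the rate for $\|\widehat{\mathbf{A}}_{k}^{H}-\mathbf{A}_{0k}\widehat{\mathbf{S}}_{k}^{H}\|_{F}^{2}/p_{k}$ is controlled by the operator-norm discrepancy between $\widehat{\mathbf{M}}_{k}^{H}$ and its idealised counterpart built only from the signal, exactly as in the LS derivation.

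The core of the work is the expansion of $\widehat{\mathbf{M}}_{k}^{H}$. Substituting the model decomposition $\mathbf{X}_{k,t}=\mathbf{A}_{0k}\mathbf{F}_{0k,t}\mathbf{B}_{0k}^{\top}+\mathbf{E}_{k,t}$ into (\ref{m_hat_kH}) produces the usual four pieces: a signal-signal term, two signal-noise cross terms, and a noise-noise term, each carrying the extra weights $\widehat{w}_{k,t}^{H}$. I would bound them in turn: \textit{(i)} the noise-noise piece $(Tpp_{-k})^{-1}\sum_{t}\widehat{w}_{k,t}^{H}\mathbf{E}_{k,t}\widehat{\mathbf{B}}_{k}^{H}(\widehat{\mathbf{B}}_{k}^{H})^{\top}\mathbf{E}_{k,t}^{\top}$ is handled by a Hoeffding-type concentration bound enabled by Assumption \ref{as-2} (cross-sectional/serial independence) and Assumption \ref{as-3}\textit{(ii)} (sub-Gaussian tails), using that $\widehat{w}_{k,t}^{H}\le 1/2$ deterministically; \textit{(ii)} the cross terms are bounded by Cauchy--Schwarz against the signal-signal term, producing factors of $T^{-1/2}p_{-k}^{-1/2}$; \textit{(iii)} the signal-signal term, after replacing $\widehat{\mathbf{B}}_{k}^{H}$ by its Frobenius-close counterpart $\mathbf{B}_{0k}\widehat{\mathbf{S}}_{-k}^{H}$ (invoking the induction across modes, as in Algorithm \ref{alg2}), provides the dominant spectral structure.

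The genuinely new difficulty is the data-dependent weight $\widehat{w}_{k,t}^{H}$, which depends on $\widehat{\mathbf{A}}_{k}^{H},\widehat{\mathbf{B}}_{k}^{H},\widehat{\mathbf{F}}_{k,t}^{H}$ through the residual norm. My plan is to handle this through a coupling/linearisation argument: define an oracle weight $w_{k,t}^{H,\ast}$ formed with the true $\mathbf{A}_{0k},\mathbf{B}_{0k},\mathbf{F}_{0k,t}$ and show, via a Lipschitz-type expansion of the map $x\mapsto\tau/\sqrt{x}$ around the oracle residual norm, that $|\widehat{w}_{k,t}^{H}-w_{k,t}^{H,\ast}|$ is controlled by the estimation error of the triple $(\widehat{\mathbf{A}}_{k}^{H},\widehat{\mathbf{B}}_{k}^{H},\widehat{\mathbf{F}}_{k,t}^{H})$ already at hand from the previous iteration of the algorithm (or, at the initial step, by the IE consistency of \citet{he2022statistical}). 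This yields a self-referential bound that can be closed by induction on the iteration counter, eventually showing that replacing $\widehat{w}_{k,t}^{H}$ by $w_{k,t}^{H,\ast}$ costs only a higher-order term.

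Finally, the appearance of the extra $p_{-k}^{2}$ components inside $L^{\ast}$ (compared to $L$ in the LS case) comes from a bias term intrinsic to the Huber truncation. Even if all the weights were replaced by the oracle weights, the expectation of $w_{k,t}^{H,\ast}\mathbf{E}_{k,t}\mathbf{B}_{0k}\mathbf{B}_{0k}^{\top}\mathbf{E}_{k,t}^{\top}$ is not exactly proportional to the identity, but carries a deterministic drift of order $p_{-k}^{-1}$ induced by the nonlinearity at the kink $\tau$; this drift, after squaring in the consistency rate for $\widehat{\mathbf{A}}_{k}^{H}$, produces the $1/p_{-k}^{2}$ contribution. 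The expected main obstacle is precisely the careful bookkeeping of this bias, together with closing the self-referential weight argument; once these two points are in place, the factor-side statement for $T^{-1}\sum_{t}\|\widehat{\mathcal{F}}_{t}^{H}-\mathcal{F}_{0t}\times_{k=1}^{K}\widehat{\mathbf{S}}_{k}^{H}\|_{F}^{2}$ follows from the closed-form expression (\ref{f-hub}) combined with the loading rate, by the same chain of inequalities used in Theorem \ref{ls-theorem}.
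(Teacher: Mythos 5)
Your proposal heads in a genuinely different direction from the paper, and as it stands it has two gaps that I do not see how to close. First, the object the theorem is about is the \emph{global minimiser} $\widehat{\theta}^{H}=\arg\min_{\theta\in\Theta}T^{-1}\sum_{t}H_{\tau}(\cdot)$ defined in Section \ref{rates}, not the output of Algorithm \ref{alg2}; the paper explicitly states in the conclusion that guarantees for the iterative solution path remain open. Your plan to handle the data-dependent weights by "induction on the iteration counter", seeded by the IE initialiser, therefore proves (at best) a statement about the algorithm's iterates, not the stated theorem, and in any case the induction needs a contraction property you never establish. Second, your key algebraic identity is self-referential in a way that cannot be solved for the error: the term $\widehat{\mathbf{M}}_{k}^{H}\bigl(\widehat{\mathbf{A}}_{k}^{H}-\mathbf{A}_{0k}\widehat{\mathbf{S}}_{k}^{H}\bigr)(\widehat{\Lambda}_{\mathbf{A}}^{H})^{-1}$ carries an $O_{P}(1)$ multiplier on the leading eigenspace (since $\widehat{\Lambda}_{\mathbf{A}}^{H}$ collects exactly the top eigenvalues of $\widehat{\mathbf{M}}_{k}^{H}$), so the right-hand side is of the same order as the left-hand side and no rate follows without an additional spectral-gap/Davis--Kahan type argument that you do not supply.

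The paper instead runs a classical M-estimation argument: Lemma \ref{distance-huber} linearises the Huber loss via the Mean Value Theorem into $L_{3}(\theta)=I(\theta)+II(\theta)$ with $II(\theta)\asymp d^{2}(\theta,\theta_{0})$, splits $I(\theta)$ into a part $I_{1}$ with bounded \emph{oracle} weights $\zeta_{t}^{0}$ plus remainders $I_{2,1},I_{2,2}$ coming from $D_{t}=\zeta_{t}-\zeta_{t}^{0}$, bounds each uniformly over shells $\Theta(\delta)$ via Orlicz-norm maximal inequalities, and then applies the peeling argument of Theorem \ref{ls-theorem} with $L$ replaced by $L^{\ast}$, finishing with Lemma \ref{lemma2} to convert the $d$-rate into the loading and factor rates. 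In particular the $p_{-k}^{2}$ components of $L^{\ast}$ arise because $\mathbb{E}\sup_{\Theta(\delta)}\sqrt{|I_{2,1}(\theta)|}=O\bigl(\delta/\sqrt{\min\{p_{-1},\dots,p_{-K}\}}\bigr)$ --- i.e.\ from the \emph{stochastic} quadratic term generated by linearising the weights --- not from a deterministic $O(p_{-k}^{-1})$ bias in $\mathbb{E}\bigl[w_{k,t}^{H,\ast}\mathbf{E}_{k,t}\mathbf{B}_{0k}\mathbf{B}_{0k}^{\top}\mathbf{E}_{k,t}^{\top}\bigr]$ as you conjecture. Your uniform control of the weight perturbation is also only gestured at: a pointwise Lipschitz expansion of $x\mapsto\tau/\sqrt{x}$ is not enough; one needs the supremum over the whole parameter set $\Theta$ (or over each shell), which is exactly what the paper's covering-number and sub-Gaussian maximal-inequality machinery delivers.
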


\begin{corollary}
	\label{huber-cc-1}{\it We assume that Assumptions \ref{as-1}-\ref{as-3} are
	satisfied. Then, for all $1 \leq k \leq K$, it holds that $\sum_{t=1}^{T}\left\Vert\widehat{\mathcal{S}}_{t}^H-%
		\mathcal{S}_{0t}\right\Vert
		_{F}^{2}/(Tp)=O_{P}\left( 1/L^{\ast }\right)$.%
%	\begin{equation*}
%		\frac{1}{Tp}\sum_{t=1}^{T}\left\Vert\widehat{\mathcal{S}}_{t}^H-%
%		\mathcal{S}_{0t}\right\Vert
%		_{F}^{2}=O_{P}\left( \frac{1}{L^{\ast }}\right) ,
%	\end{equation*}%
%	for $1\leq k\leq K$,	where $\widehat\cS_t^H=\widehat\cF_t^H\times_{k=1}^K\widehat\Ab_k^H$ and $\cS_{0t}=\cF_{0t}\times\Ab_{0k}$.
}
\end{corollary}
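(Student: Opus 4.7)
The plan is to reduce Corollary \ref{huber-cc-1} to Theorem \ref{huber-theorem-1} via a standard three-term decomposition on mode-$k$ unfoldings, exploiting the sign invariance of the tensor product through the diagonal sign matrix $\widehat{\mathbf{S}}_{k}^{H}$. Since $\widehat{\mathbf{S}}_{k}^{H}$ is a diagonal sign matrix, it is orthogonal and its own inverse, so we can write $\mathcal{S}_{0t}=\widetilde{\mathcal{F}}_{0t}\times _{k=1}^{K}\widetilde{\mathbf{A}}_{0k}$, where $\widetilde{\mathcal{F}}_{0t}=\mathcal{F}_{0t}\times _{k=1}^{K}\widehat{\mathbf{S}}_{k}^{H}$ and $\widetilde{\mathbf{A}}_{0k}=\mathbf{A}_{0k}\widehat{\mathbf{S}}_{k}^{H}$. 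Fixing any mode $k$ and using that the Frobenius norm is invariant under mode-$k$ unfolding, I work with $\text{mat}_{k}(\widehat{\mathcal{S}}_{t}^{H}-\mathcal{S}_{0t})=\widehat{\mathbf{A}}_{k}^{H}\widehat{\mathbf{F}}_{k,t}^{H}\widehat{\mathbf{B}}_{k}^{H\top }-\widetilde{\mathbf{A}}_{0k}\widetilde{\mathbf{F}}_{0k,t}\widetilde{\mathbf{B}}_{0k}^{\top }$, where $\widetilde{\mathbf{B}}_{0k}=\otimes _{j\neq k}\widetilde{\mathbf{A}}_{0j}$.

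The key step is to write this difference as a telescoping sum of three terms: one in $(\widehat{\mathbf{A}}_{k}^{H}-\widetilde{\mathbf{A}}_{0k})$, one in $(\widehat{\mathbf{F}}_{k,t}^{H}-\widetilde{\mathbf{F}}_{0k,t})$, and one in $(\widehat{\mathbf{B}}_{k}^{H}-\widetilde{\mathbf{B}}_{0k})^{\top }$. Each term is then bounded by the submultiplicative inequality $\Vert UVW\Vert _{F}\leq \Vert U\Vert _{F}\Vert V\Vert _{F}\Vert W\Vert $, exploiting that $\Vert \widehat{\mathbf{B}}_{k}^{H}\Vert =\sqrt{p_{-k}}$ and $\Vert \widetilde{\mathbf{A}}_{0k}\Vert =O(\sqrt{p_{k}})$ by the normalisation in \eqref{identification-ak} and Assumption \ref{as-1}(ii). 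Dividing by $Tp$ and summing over $t$, the first term becomes $p_{k}^{-1}\Vert \widehat{\mathbf{A}}_{k}^{H}-\widetilde{\mathbf{A}}_{0k}\Vert _{F}^{2}\cdot T^{-1}\sum_{t}\Vert \widehat{\mathbf{F}}_{k,t}^{H}\Vert _{F}^{2}$, and the second term becomes $T^{-1}\sum_{t}\Vert \widehat{\mathbf{F}}_{k,t}^{H}-\widetilde{\mathbf{F}}_{0k,t}\Vert _{F}^{2}$; both are $O_{P}(1/L^{\ast })$ directly from Theorem \ref{huber-theorem-1}, once I check that $T^{-1}\sum_{t}\Vert \widehat{\mathbf{F}}_{k,t}^{H}\Vert _{F}^{2}$ is $O_{P}(1)$ (which follows by the triangle inequality from the same factor bound plus Assumption \ref{as-1}(iii)(a)).

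The main technical obstacle is the third term, which requires a Kronecker-product perturbation bound. I would use the identity $\otimes _{j\neq k}\widehat{\mathbf{A}}_{j}^{H}-\otimes _{j\neq k}\widetilde{\mathbf{A}}_{0j}=\sum _{i\neq k}\bigl(\otimes _{j<i,j\neq k}\widetilde{\mathbf{A}}_{0j}\bigr)\otimes (\widehat{\mathbf{A}}_{i}^{H}-\widetilde{\mathbf{A}}_{0i})\otimes \bigl(\otimes _{j>i,j\neq k}\widehat{\mathbf{A}}_{j}^{H}\bigr)$, and then combine $\Vert A\otimes B\Vert _{F}=\Vert A\Vert _{F}\Vert B\Vert _{F}$ with the fact that every $\Vert \widehat{\mathbf{A}}_{j}^{H}\Vert _{F}$ and $\Vert \widetilde{\mathbf{A}}_{0j}\Vert _{F}$ is $O(\sqrt{p_{j}})$. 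This yields $p_{-k}^{-1}\Vert \widehat{\mathbf{B}}_{k}^{H}-\widetilde{\mathbf{B}}_{0k}\Vert _{F}^{2}\leq C\sum _{i\neq k}p_{i}^{-1}\Vert \widehat{\mathbf{A}}_{i}^{H}-\widetilde{\mathbf{A}}_{0i}\Vert _{F}^{2}=O_{P}(1/L^{\ast })$, and the third term becomes $p_{-k}^{-1}\Vert \widehat{\mathbf{B}}_{k}^{H}-\widetilde{\mathbf{B}}_{0k}\Vert _{F}^{2}\cdot T^{-1}\sum _{t}\Vert \widetilde{\mathbf{F}}_{0k,t}\Vert _{F}^{2}=O_{P}(1/L^{\ast })$ after using Assumption \ref{as-1}(iii).

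Summing the three $O_{P}(1/L^{\ast })$ bounds yields the claim. The hardest piece is keeping the book-keeping of Frobenius versus spectral norms right so that powers of $p_{k}$ and $p_{-k}$ cancel cleanly with the $1/(Tp)$ prefactor; everything else is a direct application of Theorem \ref{huber-theorem-1}.
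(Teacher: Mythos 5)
Your proposal is correct and is essentially the argument the paper uses: the paper proves the least-squares analogue (Corollary \ref{ls-cc}) by exactly this telescoping decomposition — inserting the sign matrices via $\widehat{\mathbf{S}}_{k}^{H}\widehat{\mathbf{S}}_{k}^{H}=\mathbf{I}_{r_k}$, peeling off one loading difference per mode plus one factor difference, and bounding each term by $p_{-k}\Vert\widehat{\mathbf{A}}_{k}^{H}-\mathbf{A}_{0k}\widehat{\mathbf{S}}_{k}^{H}\Vert_F^2$ or $p\Vert\widehat{\mathcal{F}}_{t}^{H}-\mathcal{F}_{0t}\times_{k=1}^K\widehat{\mathbf{S}}_{k}^{H}\Vert_F^2$ — and then states that the Huber case is identical with Theorem \ref{huber-theorem-1} in place of Theorem \ref{ls-theorem}. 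Your mode-$k$ unfolding with the Kronecker-product perturbation identity is just the matricised form of the paper's tensor-level telescoping, so the two proofs coincide.
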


Theorem \ref{huber-theorem-1} and Corollary \ref{huber-theorem-1} are the
counterparts to Theorem \ref{ls-theorem} and Corollary \ref{ls-cc}
respectively, and are based exactly on the same assumptions. The rates are
also virtually the same, at least upon excluding the case of
\textquotedblleft very small\textquotedblright\ cross-sectional dimensions,
where the Least Squares estimator has a faster rate of convergence.

\begin{theorem}
	\label{huber-theorem-2}{\it We assume that Assumptions \ref{as-1}, \ref{as-2} and %
	\ref{as-4} are satisfied. Then, for all $1\leq k\leq K$, it holds that
	
	- if $\epsilon $ in Assumption \ref{as-4} satisfies $\epsilon \geq 2$%
	\begin{eqnarray*}
		\frac{1}{p_{k}}\left\Vert \widehat{\mathbf{A}}_{k}^{H}-\mathbf{A}_{0k}%
		\widehat{\mathbf{S}}_{k}^{H}\right\Vert _{F}^{2} &=&O_{P}\left( \frac{1}{%
			L^{\ast }}\right) , \\
		\frac{1}{T}\sum_{t=1}^{T}\left\Vert \widehat{\mathcal{F}}_{t}^{H}-%
		\mathcal{F}_{0t}\times _{k=1}^{K}\widehat{\mathbf{S}}_{k}^{H}
		\right\Vert _{F}^{2} &=&O_{P}\left( \frac{1}{L^{\ast }}\right) ;
	\end{eqnarray*}%
%	for $1\leq k\leq K$;
	
	- if $\epsilon $ in Assumption \ref{as-4} satisfies $0\leq \epsilon <2$%
	\begin{eqnarray*}
		\frac{1}{p_{k}}\left\Vert \widehat{\mathbf{A}}_{k}^{H}-\mathbf{A}_{0k}%
		\widehat{\mathbf{S}}_{k}^{H}\right\Vert _{F}^{2} &=&O_{P}\left( \frac{1}{%
			L^{\ast \ast }}\right) , \\
		\frac{1}{T}\sum_{t=1}^{T}\left\Vert \widehat{\mathcal{F}}_{t}^{H}-%
		\mathcal{F}_{0t}\times _{k=1}^{K}\widehat{\mathbf{S}}_{k}^{H}
		\right\Vert _{F}^{2} &=&O_{P}\left( \frac{1}{L^{\ast \ast }}\right) .
	\end{eqnarray*}%
%	for $1\leq k\leq K$.
}
\end{theorem}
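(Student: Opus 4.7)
The plan is to follow the architecture of the proof of Theorem \ref{huber-theorem-1}: linearise the KKT identity $\widehat{\mathbf{M}}_k^H \widehat{\mathbf{A}}_k^H = \widehat{\mathbf{A}}_k^H \widehat{\Lambda}_{\mathbf{A}}^H$ around $\mathbf{A}_{0k}\widehat{\mathbf{S}}_k^H$, decompose the perturbation using $\mathcal{X}_t = \mathcal{S}_{0t} + \mathcal{E}_t$ into a signal part, signal--noise cross terms, and a noise--noise term, and bound each contribution. The signal-part and cross-term arguments depend only on moment bounds of $e_{t,i_1,\ldots,i_K}$ and on Assumption \ref{as-1}, so they carry over from Theorem \ref{huber-theorem-1} verbatim; the novelty lies in controlling the weighted noise--noise sum entering $\widehat{\mathbf{M}}_k^H$ under Assumption \ref{as-4}, where only $2+\epsilon$ moments of the idiosyncratic entries are available and the sub-Gaussian bounds from Assumption \ref{as-3} are no longer at our disposal. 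The decisive structural fact is that the feasible Huber weight satisfies $\widehat{w}_{k,t}^H\leq 1/2$ everywhere and, on abnormal periods, is inversely proportional to the residual norm, so each weighted per-period contribution is effectively truncated at order $\tau$: this built-in truncation replaces the tail control that Assumption \ref{as-3} would otherwise provide.

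For the sub-case $\epsilon\geq 2$, Assumption \ref{as-4} grants $\mathbb{E}|e_{t,i_1,\ldots,i_K}|^4<\infty$. This is exactly what is needed to bound the second moment of the noise--noise term in $\widehat{\mathbf{M}}_k^H$ by Chebyshev, reproducing the bounds used in Theorem \ref{huber-theorem-1}. Choosing $\tau$ to grow slowly so that $P(\|\mathcal{E}_t\|_F/\sqrt{p}>\tau)\to 0$, the Huber truncation affects only a vanishing fraction of time points, and its contribution is of strictly smaller order than $1/L^{\ast}$; hence the final rate matches $L^{\ast}$.

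For the sub-case $0\leq\epsilon<2$, the variance of $e_{t,\cdot}^2$ may fail to exist and a direct Chebyshev argument on the untruncated noise sum breaks down. The strategy is to exploit the weight bound to recast each weighted summand as a quantity with finite absolute $(2+\epsilon)$-th moment, and then apply Markov's inequality together with the self-normalised moderate-deviation results of \citet{jing2008towards} (whose applicability is precisely the motivation for Part \textit{(i)} of Assumption \ref{as-4}). Optimising $\tau$ to balance the stochastic deviation of the weighted sum against the truncation-induced bias yields the slower rate $L^{\ast\ast}=\min_k p_{-k}$. The main obstacle in both sub-cases is the coupling between the feasible weight $\widehat{w}_{k,t}^H$ and the estimates $(\widehat{\mathbf{A}}_k^H,\widehat{\mathbf{B}}_k^H)$ themselves: I would resolve this by a two-step argument in which one first employs infeasible weights evaluated at $(\mathbf{A}_{0k},\mathbf{B}_{0k})$ to obtain preliminary consistency, and then shows that replacing these by the feasible weights perturbs $\widehat{\mathbf{M}}_k^H$ by a term of smaller order than the ambient rate, leaving the final rate unaffected. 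The bound on $\widehat{\mathcal{F}}_t^H$ is then a direct consequence of (\ref{f-hub}) once consistency of $\widehat{\mathbf{A}}_k^H$ and $\widehat{\mathbf{B}}_k^H$ at the stated rate is in place.
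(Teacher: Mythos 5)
Your proposal rests on a premise that does not match the paper: Theorem \ref{huber-theorem-1} is \emph{not} proved by linearising the KKT identity $\widehat{\mathbf{M}}_k^H\widehat{\mathbf{A}}_k^H=\widehat{\mathbf{A}}_k^H\widehat{\Lambda}_{\mathbf{A}}^H$ and bounding signal, cross and noise--noise blocks of $\widehat{\mathbf{M}}_k^H$, so there are no "signal-part and cross-term arguments" to carry over verbatim. Both Huber theorems are proved as M-estimation results on the loss itself: a Mean Value Theorem expansion writes the centred loss as $L_3(\theta)=I(\theta)+II(\theta)$ with $II(\theta)\asymp d^2(\theta,\theta_0)$, the stochastic part is split as $I_1+I_{2,1}+I_{2,2}$, maximal inequalities over $\Theta(\delta)$ are established for each piece, and a peeling argument plus Lemma \ref{lemma2} delivers the rates. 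The eigen-perturbation decomposition you describe appears only in Lemma \ref{er-h} for the number-of-factors result, and that lemma \emph{presupposes} the rates of Theorem \ref{huber-theorem-2}, so routing the proof through it would be circular. Relatedly, the feasible-weight coupling you flag as the "main obstacle" is not an obstacle in the paper's framework at all, since $\widehat{\theta}^H$ is defined as the global minimiser of the Huber loss and no weights enter the theoretical analysis of the rates.

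The genuine gap is in the heavy-tailed core of the argument. The paper's key step under Assumption \ref{as-4} is to rewrite the linear cross term as $I_1(\theta)\le c_0\, I_1^{\ast}(\theta)\, d(\theta,\theta_0)$, where $I_1^{\ast}(\theta)$ is an explicit \emph{self-normalised} sum (the cross term divided by its own empirical standard deviation); the exponential tail bound for $I_1^{\ast}$ from \citet{shao1997self} and \citet{jing2008towards} then gives $\Vert\sqrt{Tp}\,|I_1(\theta^{\ast})|\Vert_{\psi_2}\lesssim d(\theta^{\ast},\theta_0)$ and, via the Orlicz-norm maximal inequality over a packing of $\Theta(\delta)$, the same $O(\delta/\sqrt{L})$ bound as in the sub-Gaussian case. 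Your sketch cites Jing et al.\ but never constructs the self-normalised statistic nor explains how uniformity over $\theta$ is obtained, which is precisely the difficulty that two moments alone cannot overcome by Markov or Chebyshev. Moreover, your attribution of the $L^{\ast}$ versus $L^{\ast\ast}$ dichotomy to tuning $\tau$ is wrong: $\tau$ is fixed throughout, and the rate degradation for $0\le\epsilon<2$ comes solely from the quadratic remainder $I_{2,2}$, whose supremum is $O(\delta p^{-1/2})$ when fourth moments exist but only $O\bigl(\delta/\sqrt{\min\{p_{-1},\ldots,p_{-K}\}}\bigr)$ otherwise; feeding these two bounds into the same peeling argument is what produces $L^{\ast}$ and $L^{\ast\ast}$ respectively.
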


\begin{corollary}
\label{huber-cc-2}{\it We assume that Assumptions \ref{as-1}, \ref{as-2} and %
\ref{as-4} are satisfied. Then it holds that

- if $\epsilon $ in Assumption \ref{as-4} satisfies $\epsilon \geq 2$, $\sum_{t=1}^{T}\left\Vert\widehat{\mathcal{S}}_{t}^H-%
	\mathcal{S}_{0t}\right\Vert
	_{F}^{2}/(Tp)=O_{P}(1/L^{\ast })$;%
%\begin{equation*}
%	\frac{1}{Tp}\sum_{t=1}^{T}\left\Vert\widehat{\mathcal{S}}_{t}^H-%
%	\mathcal{S}_{0t}\right\Vert
%	_{F}^{2}=O_{P}\left( \frac{1}{L^{\ast }}\right) ;
%\end{equation*}
%for $1\leq k\leq K$;

- if $\epsilon $ in Assumption \ref{as-4} satisfies $0\leq \epsilon <2$, $\sum_{t=1}^{T}\left\Vert\widehat{\mathcal{S}}_{t}^H-%
	\mathcal{S}_{0t}\right\Vert
	_{F}^{2}/(Tp)=O_{P}(1/L^{\ast \ast })$.%
%\begin{equation*}
%	\frac{1}{Tp}\sum_{t=1}^{T}\left\Vert\widehat{\mathcal{S}}_{t}^H-%
%	\mathcal{S}_{0t}\right\Vert
%	_{F}^{2}=O_{P}\left( \frac{1}{L^{\ast \ast }}\right) ;
%\end{equation*}
%for $1\leq k\leq K$, where $\widehat\cS_t^H=\widehat\cF_t^H\times_{k=1}^K\widehat\Ab_k^H$ and $\cS_{0t}=\cF_{0t}\times\Ab_{0k}$.
}
\end{corollary}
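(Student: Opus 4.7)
The plan is to reduce Corollary \ref{huber-cc-2} to Theorem \ref{huber-theorem-2} through the same telescoping device that turns Theorem \ref{ls-theorem} into Corollary \ref{ls-cc} and Theorem \ref{huber-theorem-1} into Corollary \ref{huber-cc-1}. I introduce the sign-corrected quantities $\widetilde{\mathbf{A}}_{0k}=\mathbf{A}_{0k}\widehat{\mathbf{S}}_{k}^{H}$ and $\widetilde{\mathcal{F}}_{0t}=\mathcal{F}_{0t}\times_{k=1}^{K}\widehat{\mathbf{S}}_{k}^{H}$. Because $\widehat{\mathbf{S}}_{k}^{H}$ is a diagonal matrix of $\pm 1$ entries with $(\widehat{\mathbf{S}}_{k}^{H})^{2}=\mathbf{I}_{r_{k}}$, the composition rule $\mathcal{X}\times_{k}\mathbf{A}\times_{k}\mathbf{B}=\mathcal{X}\times_{k}(\mathbf{B}\mathbf{A})$ shows that $\widetilde{\mathcal{F}}_{0t}\times_{k=1}^{K}\widetilde{\mathbf{A}}_{0k}=\mathcal{S}_{0t}$, so I can write the $(K{+}1)$-term telescoping identity
\[
\widehat{\mathcal{S}}_{t}^{H}-\mathcal{S}_{0t}=(\widehat{\mathcal{F}}_{t}^{H}-\widetilde{\mathcal{F}}_{0t})\times_{k=1}^{K}\widehat{\mathbf{A}}_{k}^{H}+\sum_{k=1}^{K}\widetilde{\mathcal{F}}_{0t}\times_{j<k}\widetilde{\mathbf{A}}_{0j}\times_{k}(\widehat{\mathbf{A}}_{k}^{H}-\widetilde{\mathbf{A}}_{0k})\times_{j>k}\widehat{\mathbf{A}}_{j}^{H},
\]
separating the factor error from each of the $K$ loading errors.

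Each piece will be controlled by the tensor-mode inequality $\|\mathcal{X}\times_{j}\mathbf{M}\|_{F}\le\|\mathbf{M}\|_{\mathrm{op}}\|\mathcal{X}\|_{F}$. The identifiability condition (\ref{identification-ak}), in force on both the true and the estimated loadings, gives $\|\widehat{\mathbf{A}}_{j}^{H}\|_{\mathrm{op}}^{2}=\|\widetilde{\mathbf{A}}_{0j}\|_{\mathrm{op}}^{2}=p_{j}$, and orthogonality of $\widehat{\mathbf{S}}_{k}^{H}$ yields $\|\widetilde{\mathcal{F}}_{0t}\|_{F}=\|\mathcal{F}_{0t}\|_{F}$. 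Using $\|\cdot\|_{\mathrm{op}}\le\|\cdot\|_{F}$ on the loading differences, the factor piece is at most $p\,\|\widehat{\mathcal{F}}_{t}^{H}-\widetilde{\mathcal{F}}_{0t}\|_{F}^{2}$, and the $k$-th loading piece is at most $(p/p_{k})\,\|\widehat{\mathbf{A}}_{k}^{H}-\mathbf{A}_{0k}\widehat{\mathbf{S}}_{k}^{H}\|_{F}^{2}\,\|\mathcal{F}_{0t}\|_{F}^{2}$. Summing over $t$, dividing by $Tp$, and applying $\|\sum_{i=0}^{K}x_{i}\|_{F}^{2}\le(K{+}1)\sum_{i=0}^{K}\|x_{i}\|_{F}^{2}$ with $K$ fixed, I obtain the bound
\[
\frac{1}{Tp}\sum_{t=1}^{T}\|\widehat{\mathcal{S}}_{t}^{H}-\mathcal{S}_{0t}\|_{F}^{2}\le C\left[\frac{1}{T}\sum_{t=1}^{T}\|\widehat{\mathcal{F}}_{t}^{H}-\widetilde{\mathcal{F}}_{0t}\|_{F}^{2}+\sum_{k=1}^{K}\frac{\|\widehat{\mathbf{A}}_{k}^{H}-\mathbf{A}_{0k}\widehat{\mathbf{S}}_{k}^{H}\|_{F}^{2}}{p_{k}}\cdot\frac{1}{T}\sum_{t=1}^{T}\|\mathcal{F}_{0t}\|_{F}^{2}\right].
\]

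The average $T^{-1}\sum_{t}\|\mathcal{F}_{0t}\|_{F}^{2}=\mathrm{Tr}(\mathbf{\Sigma}_{0T,k})$ is $O(1)$ by Assumption \ref{as-1}(iii)(b), and Theorem \ref{huber-theorem-2} then contributes $O_{P}(1/L^{\ast})$ on every remaining summand when $\epsilon\ge 2$, and $O_{P}(1/L^{\ast\ast})$ when $0\le\epsilon<2$, delivering the two claims of the corollary. Since the decomposition and all operator-norm bounds are purely deterministic and the stochastic rates are handed over by Theorem \ref{huber-theorem-2}, there is no substantive difficulty: the only point demanding care is ensuring that the $p_{k}$ and $p_{-k}$ factors generated by the mode-product bounds recombine cleanly with the $p_{k}$ appearing in the identifiability normalisation inside $\widehat{\mathbf{A}}_{k}^{H}$, so that the divisor in the final rate is exactly $Tp$ and no spurious cross-sectional factor survives.
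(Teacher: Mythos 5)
Your proof is correct and follows essentially the same route as the paper, which proves this corollary by declaring it "similar to the proof of Corollary \ref{ls-cc}": the same telescoping decomposition of $\widehat{\mathcal{S}}_{t}^{H}-\mathcal{S}_{0t}$ into one factor-error term plus $K$ loading-error terms, bounded mode by mode and fed the rates from Theorem \ref{huber-theorem-2}. The only (immaterial) differences are that you attach the true factor to the loading-error terms where the paper attaches the estimated one, and you use operator norms of the loading matrices where the paper uses Frobenius norms — both yield the same $p_{-k}$ and $p$ factors.
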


According to Theorem \ref{huber-theorem-2} and Corollary \ref{huber-cc-2},
the rates of convergence of $\widehat{\theta }^{H}$ and the corresponding
common components in the presence of heavy tails are the same as under
sub-Gaussian tails if at least the first four moments exist. If only fewer
moments exist, the Huber loss estimator has slower rates of convergence
because the signal contained in $\left\{ \mathcal{F}_{t}\right\} _{t=1}^{T}$
is drowned out by the heavy tails of the noise $\mathcal{E}_{t}$, but it is
still consistent, which indicates the robustness of the Huber estimator.

\subsection{Asymptotic properties of the estimators of the numbers of
factors \label{nfactors}}

In order to make the infeasible estimator defined in (\ref{equ:2}) feasible,
define $\omega _{k}=\min \left\{ p_{k}T,p_{-k}^{2},L\right\} $ and
\begin{equation*}
\widehat{r}_{k}^{\text{iPE-ER}}=\argmax_{j\leq r_{\max }}\frac{\lambda
_{j}\left( \widehat{\mathbf{M}}_{k}\right) }{\lambda _{j+1}\left( \widehat{%
\mathbf{M}}_{k}\right) +c\omega _{k}^{-1/2}},
\end{equation*}%
and%
\begin{equation*}
\widehat{r}_{k}^{\text{RTFA-ER}}=\argmax_{j\leq r_{\max }}\frac{\lambda
_{j}\left( \widehat{\mathbf{M}}_{k}^{H}\right) }{\lambda _{j+1}\left(
\widehat{\mathbf{M}}_{k}^{H}\right) +c\widetilde{L}^{-1/2}},
\end{equation*}%
where $\widetilde{L}=L^{\ast }I\left( \epsilon \geq 2\right) +L^{\ast \ast
}I\left( 0<\epsilon <2\right) $ - i.e., $\widetilde{L}=L^{\ast }$ or $%
L^{\ast \ast }$ according as $\epsilon \geq 2$\ or not, and $0<c<\infty $ is
a user-chosen quantity whose tuning we defer to Section \ref{sec:4.4}. The
extra terms in $\widehat{r}_{k}^{\text{iPE-ER}}$ and $\widehat{r}_{k}^{\text{%
RTFA-ER}}$\ are based on Lemmas \ref{er-ls} and \ref{er-h}; indeed, based on
those lemmas, they may not be the sharpest bounds, but they suffice for our
purposes.

The following theorems stipulate that $\widehat{r}_{k}^{\text{iPE-ER}}$ and $%
\widehat{r}_{k}^{\text{RTFA-ER}}$ are consistent estimators of $r_{k}$.

\begin{theorem}
\label{ls-fn}\it{We assume that Assumptions \ref{as-1}-\ref{as-3} hold. Then, for all $1\leq k\leq K$, it
holds that%
\begin{equation*}
\lim_{\min \left\{ T,p_{1},...,p_{K}\right\} \rightarrow \infty }\mathbb{P}%
\left( \widehat{r}_{k}^{\text{iPE-ER}}=r_{k}\right) =1,
\end{equation*}%
 }
\end{theorem}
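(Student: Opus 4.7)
The strategy is the eigenvalue-ratio argument of \citet{lam2012factor} and \citet{ahnhorenstein13}, adapted to the projection matrix $\widehat{\mathbf{M}}_k$ and to the iterative nature of Algorithm \ref{alg3}. The backbone consists of two eigenvalue bounds on $\widehat{\mathbf{M}}_k$, which are precisely those prepared in Lemma \ref{er-ls}: first, $\lambda_{r_k}(\widehat{\mathbf{M}}_k) \geq c_0$ with probability tending to one for some constant $c_0 > 0$; second, $\lambda_j(\widehat{\mathbf{M}}_k) = O_P(\omega_k^{-1/2})$ for every $j > r_k$, i.e.\ the trailing eigenvalues are negligible on the scale set by $\omega_k^{-1/2}$. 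Both bounds are obtained by writing the mode-$k$ unfolding as $\mathbf{X}_{k,t} = \mathbf{A}_{0k}\mathbf{F}_{0k,t}\mathbf{B}_{0k}^\top + \mathbf{E}_{k,t}$ and decomposing $\widehat{\mathbf{M}}_k$ into a signal piece (whose top $r_k$ eigenvalues are of order one by Assumption \ref{as-1}(ii)--(iii), and whose remaining eigenvalues vanish) plus three residual cross-terms. Assumptions \ref{as-2} and \ref{as-3}, together with Theorem \ref{ls-theorem} applied to pass from the oracle $\mathbf{B}_k$ to its feasible counterpart $\widehat{\mathbf{B}}_k$, provide the required concentration for these cross-terms.

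Armed with these bounds, the quotient $\lambda_j(\widehat{\mathbf{M}}_k)/(\lambda_{j+1}(\widehat{\mathbf{M}}_k) + c\omega_k^{-1/2})$ is analysed over three regimes. For $j < r_k$, both numerator and denominator are of order one and the denominator is bounded below in probability, whence the ratio is $O_P(1)$. For $j = r_k$, the numerator is bounded below by $c_0 + o_P(1)$, while the denominator equals $O_P(\omega_k^{-1/2}) + c\omega_k^{-1/2} \asymp \omega_k^{-1/2}$, so the ratio diverges at rate $\omega_k^{1/2}$. For $j > r_k$, the numerator is $O_P(\omega_k^{-1/2})$ and the denominator is at least $c\omega_k^{-1/2}$, so the ratio is $O_P(1)$. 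Since $\omega_k \to \infty$ as $\min\{T, p_1, \ldots, p_K\} \to \infty$, the ratio at $j = r_k$ strictly dominates the ratios at every other $j \leq r_{\max}$ with probability tending to one, so the $\operatorname{argmax}$ equals $r_k$ with probability tending to one.

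\textbf{Main obstacle.} The principal delicacy lies in verifying the two eigenvalue bounds for the \emph{feasible} $\widehat{\mathbf{M}}_k$ produced by Algorithm \ref{alg3}, rather than for its oracle counterpart built from the unknown $\mathbf{A}_{0j}$'s. At each iteration $s$, $\widehat{\mathbf{B}}_k^{(s)}$ is assembled from loading estimates of \emph{inflated} dimension $\widehat{r}_k^{(s)} + 2$, so its column space contains but does not equal the oracle span. I would handle this by arguing in three steps: (i) the two spurious extra columns contribute only residual terms of order $O_P(\omega_k^{-1/2})$ to the eigenvalues indexed by $j > r_k$, which is still compatible with bound (ii); (ii) warm-starting with the Initial Estimator of \citet{he2022statistical} gives a consistent $\widehat{\mathbf{A}}_k^{(0)}$, so Theorem \ref{ls-theorem} applies uniformly along the iteration; and (iii) an induction on the iteration index $s$, exploiting the fact that once $\widehat{r}_k^{(s)} = r_k$ the feasible $\widehat{\mathbf{B}}_k$ converges at the rate given by Theorem \ref{ls-theorem}, yields the stated consistency of the final output $\widehat{r}_k^{\text{iPE-ER}}$.
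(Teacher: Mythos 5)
Your proposal is correct and follows essentially the same route as the paper: the proof is exactly the three-regime comparison of the ratio $\lambda_j(\widehat{\mathbf{M}}_k)/(\lambda_{j+1}(\widehat{\mathbf{M}}_k)+c\,\omega_k^{-1/2})$, with the two eigenvalue bounds supplied by Lemma \ref{er-ls} (which already handles the feasible, overdimensioned $\widehat{\mathbf{B}}_k^{M}$ via Theorem \ref{ls-theorem}). The extra machinery you sketch for the iterative algorithm is not needed for the theorem as stated, since the paper proves consistency for the estimator built directly from $\widehat{\mathbf{M}}_k$ rather than for the full solution path of Algorithm \ref{alg3}.
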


\begin{theorem}
\label{huber-fn}\it{We assume that Assumptions \ref{as-1}, \ref{as-2} and \ref%
{as-4} hold, and that $\left\Vert \mathbf{E}_{k,0}\right\Vert
_{F}$ is identically distributed across $t$ with bounded density, with $P\left( \left\Vert \mathbf{E}_{k,0}\right\Vert
_{F}/\sqrt{p}\leq \tau \right) >0$. Then, for all $1\leq k\leq K$, it holds that%
\begin{equation*}
\lim_{\min \left\{ T,p_{1},...,p_{K}\right\} \rightarrow \infty }\mathbb{P}%
\left( \widehat{r}_{k}^{\text{RTFA-ER}}=r_{k}\right) =1.
\end{equation*}%
}
\end{theorem}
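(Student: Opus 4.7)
The plan is to follow the same broad scheme as for Theorem \ref{ls-fn}, but with the Huber-weighted moment matrix $\widehat{\mathbf{M}}_k^H$ replacing $\widehat{\mathbf{M}}_k$ and the bounds of Lemma \ref{er-h} (rather than Lemma \ref{er-ls}) supplying the rates. The overall target is the two-sided eigenvalue dichotomy: (i) $\lambda_j(\widehat{\mathbf{M}}_k^H) = \Omega_P(1)$ for $j \leq r_k$, and (ii) $\lambda_j(\widehat{\mathbf{M}}_k^H) = O_P(\widetilde{L}^{-1/2})$ for $j > r_k$. Given these two facts, the penalised ratio at $j = r_k$ is of order $1/(o_P(1) + c\widetilde{L}^{-1/2}) \to \infty$, the ratio at any $j < r_k$ is $O_P(1)$ (both numerator and denominator are of order one), and the ratio at any $r_k < j \leq r_{\max}$ is of order $O_P(\widetilde{L}^{-1/2})/\Theta(\widetilde{L}^{-1/2}) = O_P(1)$, so the argmax concentrates on $r_k$ with probability tending to one.

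First I would invoke Theorem \ref{huber-theorem-2} to conclude that $\widehat{\mathbf{A}}_k^H$ (and hence $\widehat{\mathbf{B}}_k^H$ and $\widehat{\mathcal{F}}_t^H$) approximate the true loadings up to the rotation $\widehat{\mathbf{S}}_k^H$ at rate $\widetilde{L}^{-1/2}$, under the appropriate case of Assumption \ref{as-4}. This allows me to replace, up to negligible terms, the feasible matrix $\widehat{\mathbf{M}}_k^H$ with its oracle analogue built from $\mathbf{A}_{0k}, \mathbf{B}_{0k}$ and the true residuals $\mathbf{E}_{k,0t}$, and similarly to replace the feasible weights $\widehat{w}_{k,t}^H$ by their infeasible counterparts that depend on $\|\mathbf{E}_{k,0t}\|_F/\sqrt{p}$ only. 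The bounded density hypothesis on $\|\mathbf{E}_{k,0t}\|_F$ is exactly what is needed to show that small perturbations of the argument of the weight do not move a positive mass of observations across the threshold $\tau$.

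Next I would decompose
\begin{equation*}
\widehat{\mathbf{M}}_k^H = \mathbf{S}_k^H + \mathbf{R}_k^H,
\end{equation*}
where $\mathbf{S}_k^H = (Tpp_{-k})^{-1}\sum_{t=1}^T \widehat{w}_{k,t}^H \mathbf{A}_{0k}\mathbf{F}_{0k,t}\mathbf{B}_{0k}^\top \widehat{\mathbf{B}}_k^H (\widehat{\mathbf{B}}_k^H)^\top \mathbf{B}_{0k}\mathbf{F}_{0k,t}^\top \mathbf{A}_{0k}^\top$ is the signal piece and $\mathbf{R}_k^H$ collects the remaining idiosyncratic and cross terms. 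The signal piece has rank exactly $r_k$ once we combine Assumption \ref{as-1}(ii)-(iii) (giving $p_k^{-1}\mathbf{A}_{0k}^\top \mathbf{A}_{0k} \to \mathbf{I}_{r_k}$ and a positive definite factor covariance) with the condition $P(\|\mathbf{E}_{k,0t}\|_F/\sqrt{p} \leq \tau) > 0$, which ensures $T^{-1}\sum_t \widehat{w}_{k,t}^H$ is bounded away from zero asymptotically because a positive fraction of $t$ contribute the unweighted value $1/2$. This pins down (i). For the noise part $\mathbf{R}_k^H$, Lemma \ref{er-h} together with the rates of Theorem \ref{huber-theorem-2} gives $\|\mathbf{R}_k^H\|_\text{op} = O_P(\widetilde{L}^{-1/2})$, and Weyl's inequality then delivers (ii).

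The hardest step will be step two: establishing a \emph{uniform} lower bound on $T^{-1}\sum_t \widehat{w}_{k,t}^H$ when the weights depend on the estimated loadings and factors, particularly under the weak moment regime $0 \leq \epsilon < 2$ of Assumption \ref{as-4} where the rate $\widetilde{L}^{-1/2} = L^{\ast\ast -1/2}$ is slowest and the perturbation of the weights is most delicate. The bounded-density and positive-probability assumptions on $\|\mathbf{E}_{k,0}\|_F/\sqrt{p}$ are precisely what let me pass from the true weights to the feasible ones: a Lipschitz-type argument shows that the fraction of indices $t$ with $\widehat{w}_{k,t}^H = 1/2$ is, up to an $o_P(1)$ term, the same as the population fraction $P(\|\mathbf{E}_{k,0}\|_F/\sqrt{p} \leq \tau) > 0$. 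Once this uniform control is in hand, the remainder of the argument is a straightforward application of Weyl's inequality and the lemmas cited.
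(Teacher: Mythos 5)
Your proposal is correct and follows essentially the same route as the paper: the paper proves Theorem \ref{huber-fn} by running the three-case eigenvalue-ratio argument of Theorem \ref{ls-fn} with Lemma \ref{er-h} in place of Lemma \ref{er-ls}, and Lemma \ref{er-h} itself is established exactly as you sketch — a signal-plus-remainder decomposition of $\widehat{\mathbf{M}}_{k}^{H}$ built on the (overdimensioned) Huber estimator's consistency, with the bounded-density and $P\left( \left\Vert \mathbf{E}_{k,0}\right\Vert _{F}/\sqrt{p}\leq \tau \right) >0$ hypotheses used precisely to show that $\lambda _{\min }\left( T^{-1}\sum_{t}\widehat{w}_{k,t}^{H}\mathbf{F}_{0k,t}\mathbf{F}_{0k,t}^{\top }\right)$ stays bounded away from zero. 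The only detail you gloss over is that the loadings entering the criterion are estimated with a deliberately inflated rank, which the paper handles by zero-padding $\theta _{0}$ to $\theta _{0}^{M}$ before invoking the consistency rates; this is a presentational rather than a substantive difference.
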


The two theorems state that both $\widehat{r}_{k}^{\text{iPE-ER}}$ and $%
\widehat{r}_{k}^{\text{RTFA-ER}}$ are consistent estimators of $r_{k}$;
consistency holds as long as $\min \left\{ T,p_{1},\ldots ,p_{K}\right\}
\rightarrow \infty $, with no restrictions needed on the relative rates of
divergence of the sample sizes. As far as Theorem \ref{huber-fn} is concerned,
the assumption that the idiosyncratic component is identically distributed
is done merely for simplicity, and extensions to the case(s) of
heterogeneity and serial dependence are possible - in essence, by making
appeal to a suitable Law of Large Numbers.

\section{Simulations\label{sec:4}}

In this section, we investigate the finite sample performance of the
estimators discussed in the previous sections. In particular, we focus on
the Robust Tensor Factor Analysis (RTFA), and evaluate its performance at
estimating the loading matrices, the common components and the number of
factors, against several alternative estimators: the Initial Estimator (IE)
and the Projection Estimator (PE) of \citet{he2022statistical}, the
Iterative Projected mode-wise PCA estimation (IPmoPCA) of \cite%
{zhang2022tucker}, and the Time series Outer-Product Unfolding Procedure
(TOPUP) and Time series Inner-Product Unfolding Procedure (TIPUP) with their
iteration versions (iTOPUP and iTIPUP) proposed by \cite{Chen2021Factor}. In
the following, all iterative procedures are initialised using the the
Initial Estimator (IE) of \citet{he2022statistical}.

\subsection{Data generation\label{sec:4.1}}

The tensor observations are generated following the 3-order tensor factor
model:
\begin{equation*}
\cX_{t}=\cF_{t}\times _{1}\Ab_{1}\times _{2}\Ab_{2}\times _{3}\Ab_{3}+\cE%
_{t}.
\end{equation*}%
We set $r_{1}=r_{2}=r_{3}=3$, draw the entries of $\Ab_{1}$, $\Ab_{2}$ and $%
\Ab_{3}$ independently from uniform distribution $\cU(-1,1)$, and let
\begin{equation}
\begin{gathered} \operatorname{Vec}\left(\cF_{t}\right)=\phi \times
\operatorname{Vec}\left(\cF_{t-1}\right)+\sqrt{1-\phi^{2}} \times
\epsilon_{t}, \epsilon_{t} \stackrel{i . i . d}{\sim} \cN\left(\mathbf{0},
\Ib_{r_1r_2r_3}\right), \\ \operatorname{Vec}\left(\cE_{t}\right)=\psi
\times \operatorname{Vec}\left(\cE_{t-1}\right)+\sqrt{1-\psi^{2}} \times
\operatorname{Vec}\left(\cU_{t}\right), \end{gathered}  \label{dgp}
\end{equation}%
where we consider two scenarios: in the first one, $\cU_{t}$ is drawn from a
tensor normal distribution, and in the second one it is drawn from a tensor-$%
t$ distribution. If $\cU_{t}$ is from a tensor normal distribution $\cT\cN(%
\cM,\bSigma_{1},\bSigma_{2},\bSigma_{3})$, then $\func{Vec}(\cU_{t})~{\sim }~%
\cN(\func{Vec}(\cM),\bSigma_{3}\otimes \bSigma_{2}\otimes \bSigma_{1})$; on
the other hand, if $\cU_{t}$ is from a tensor-$t$ distribution $t_{\nu }(\cM,%
\bSigma_{1},\bSigma_{2},\bSigma_{3})$, then $\func{Vec}(\cU_{t})$ is drawn
from a multivariate $t$ distribution $t_{\nu }(\cM,\bSigma_{3}\otimes \bSigma%
_{2}\otimes \bSigma_{1})$. We further set: $\cM=0$; $\bSigma_{k}$ to be a
matrix with ones on its main diagonal, and $1/p_{k}$ in the off-diagonal
entries, for $k=1,2,3$. The parameters $\phi $ and $\psi $ in (\ref{dgp})
control for temporal correlations of $\cF_{t}$ and $\cE_{t}$. By setting $%
\phi $ and $\psi $ unequal to zero, common factors have cross-correlations,
and idiosyncratic noises have both cross-correlations and weak
autocorrelations. For RTFA, the Huber loss threshold parameter $\tau $ is
set as the median of $\Vert \cX_{t}-\cF_{t}\times _{1}\widehat{\Ab}%
_{1}\times _{2}\widehat{\Ab}_{2}\times _{3}\widehat{\Ab}_{3}\Vert _{F}$,
where $\widehat{\Ab}_{1}$, $\widehat{\Ab}_{2}$ and $\widehat{\Ab}_{3}$ are
the initial estimators of $\Ab_{1}$, $\Ab_{2}$ and $\Ab_{3}$.

Finally, we note that in Sections \ref{sec:4.2} and \ref{sec:4.3} it is
assumed that factor numbers are known; the performance of our estimators for
the number of factors is investigated in Section \ref{sec:4.4}. All the
following simulation results are based on $1,000$ repetitions.

\begin{table}[!h]
\caption{Averaged estimation error and standard deviations of loading spaces
for Settings A, B and C under Tensor Normal distribution over 1000
replications. }
\label{tab:1}\renewcommand{\arraystretch}{0.5} \centering
\scalebox{1}{\begin{tabular}{ccccccccc}
			\toprule[2pt]
			Evaluation	&	$p_1$	&	$p_2$	&	$p_3$	&	$T$	&	RTFA	&	IPmoPCA	&	iTOPUP	&	iTIPUP	\\
			\midrule
			$\cD(\widehat{\Ab}_1,\Ab_1)$	&	10	&	10	&	10	&	20	&	0.0444(0.01714)	&	0.0444(0.01716)	&	0.0677(0.02622)	&	0.3564(0.12453)	\\
			&		&		&		&	50	&	0.0288(0.01154)	&	0.0288(0.01155)	&	0.0544(0.01850)	&	0.3039(0.12303)	\\
			&		&		&		&	100	&	0.0220(0.00902)	&	0.0219(0.00899)	&	0.0502(0.01870)	&	0.2357(0.10816)	\\
			&		&		&		&	200	&	0.0176(0.00840)	&	0.0175(0.00838)	&	0.0477(0.01866)	&	0.1519(0.06752)	\\
			&	100	&	10	&	10	&	20	&	0.0404(0.00606)	&	0.0404(0.00606)	&	0.0584(0.00876)	&	0.3495(0.09297)	\\
			&		&		&		&	50	&	0.0253(0.00369)	&	0.0253(0.00369)	&	0.0476(0.00694)	&	0.2941(0.09262)	\\
			&		&		&		&	100	&	0.0126(0.00179)	&	0.0126(0.00179)	&	0.0435(0.00626)	&	0.2350(0.08630)	\\
			&		&		&		&	200	&	0.0126(0.00188)	&	0.0126(0.00188)	&	0.0405(0.00611)	&	0.1548(0.05279)	\\
			&	20	&	20	&	20	&	20	&	0.0199(0.00328)	&	0.0199(0.00328)	&	0.0294(0.00546)	&	0.2551(0.11100)	\\
			&		&		&		&	50	&	0.0125(0.00206)	&	0.0125(0.00206)	&	0.0237(0.00425)	&	0.1927(0.09755)	\\
			&		&		&		&	100	&	0.0088(0.00145)	&	0.0088(0.00145)	&	0.0218(0.00391)	&	0.1363(0.08070)	\\
			&		&		&		&	200	&	0.0063(0.00108)	&	0.0063(0.00108)	&	0.0204(0.00364)	&	0.0758(0.03423)	\\
			$\cD(\widehat{\Ab}_2,\Ab_2)$	&	10	&	10	&	10	&	20	&	0.0430(0.01496)	&	0.0430(0.01497)	&	0.0647(0.02183)	&	0.3456(0.12170)	\\
			&		&		&		&	50	&	0.0289(0.01233)	&	0.0289(0.01234)	&	0.0554(0.02402)	&	0.3066(0.12251)	\\
			&		&		&		&	100	&	0.0221(0.01050)	&	0.0221(0.01049)	&	0.0505(0.01959)	&	0.2344(0.11219)	\\
			&		&		&		&	200	&	0.0177(0.00959)	&	0.0177(0.00958)	&	0.0483(0.01974)	&	0.1530(0.06836)	\\
			&	100	&	10	&	10	&	20	&	0.0124(0.00344)	&	0.0124(0.00344)	&	0.0192(0.00572)	&	0.2039(0.11585)	\\
			&		&		&		&	50	&	0.0079(0.00217)	&	0.0079(0.00217)	&	0.0159(0.00494)	&	0.1510(0.10617)	\\
			&		&		&		&	100	&	0.0040(0.00110)	&	0.0040(0.00110)	&	0.0148(0.00461)	&	0.0958(0.07905)	\\
			&		&		&		&	200	&	0.0041(0.00122)	&	0.0041(0.00122)	&	0.0134(0.00393)	&	0.0492(0.02894)	\\
			&	20	&	20	&	20	&	20	&	0.0198(0.00334)	&	0.0198(0.00334)	&	0.0293(0.00537)	&	0.2551(0.11190)	\\
			&		&		&		&	50	&	0.0125(0.00210)	&	0.0125(0.00210)	&	0.0237(0.00420)	&	0.2007(0.10736)	\\
			&		&		&		&	100	&	0.0088(0.00140)	&	0.0088(0.00140)	&	0.0220(0.00399)	&	0.1375(0.08115)	\\
			&		&		&		&	200	&	0.0063(0.00105)	&	0.0063(0.00106)	&	0.0205(0.00384)	&	0.0783(0.03367)	\\
			$\cD(\widehat{\Ab}_3,\Ab_3)$	&	10	&	10	&	10	&	20	&	0.0440(0.01639)	&	0.0440(0.01639)	&	0.0654(0.02285)	&	0.3539(0.12488)	\\
			&		&		&		&	50	&	0.0293(0.01573)	&	0.0293(0.01572)	&	0.0550(0.01986)	&	0.3049(0.12164)	\\
			&		&		&		&	100	&	0.0222(0.01003)	&	0.0222(0.01005)	&	0.0507(0.01943)	&	0.2350(0.10515)	\\
			&		&		&		&	200	&	0.0176(0.01166)	&	0.0175(0.01168)	&	0.0478(0.02054)	&	0.1515(0.07578)	\\
			&	100	&	10	&	10	&	20	&	0.0124(0.00341)	&	0.0124(0.00341)	&	0.0189(0.00568)	&	0.2050(0.12253)	\\
			&		&		&		&	50	&	0.0079(0.00230)	&	0.0079(0.00230)	&	0.0158(0.00468)	&	0.1500(0.10853)	\\
			&		&		&		&	100	&	0.0041(0.00119)	&	0.0041(0.00119)	&	0.0147(0.00457)	&	0.0944(0.07445)	\\
			&		&		&		&	200	&	0.0040(0.00131)	&	0.0040(0.00131)	&	0.0138(0.00451)	&	0.0492(0.02638)	\\
			&	20	&	20	&	20	&	20	&	0.0197(0.00319)	&	0.0197(0.00319)	&	0.0295(0.00564)	&	0.2617(0.11524)	\\
			&		&		&		&	50	&	0.0125(0.00200)	&	0.0125(0.00200)	&	0.0238(0.00452)	&	0.1945(0.10286)	\\
			&		&		&		&	100	&	0.0088(0.00144)	&	0.0088(0.00144)	&	0.0219(0.00401)	&	0.1328(0.07698)	\\
			&		&		&		&	200	&	0.0063(0.00104)	&	0.0063(0.00104)	&	0.0205(0.00383)	&	0.0772(0.03516)	\\
			
			\bottomrule[2pt]
	\end{tabular}}
\end{table}

\begin{table}[!h]
\caption{Averaged estimation errors and standard deviations of loading
spaces for Settings A, B and C under Tensor $t_3$ distribution over 1000
replications. }
\label{tab:2}\renewcommand{\arraystretch}{0.5} \centering
\scalebox{1}{\begin{tabular}{ccccccccc}
			\toprule[2pt]
			Evaluation	&	$p_1$	&	$p_2$	&	$p_3$	&	$T$	&	RTFA	&	IPmoPCA	&	iTOPUP	&	iTIPUP	\\
			\midrule
			$\cD(\widehat{\Ab}_1,\Ab_1)$	&	10	&	10	&	10	&	20	&	0.0834(0.11570)	&	0.1663(0.19281)	&	0.2019(0.19822)	&	0.4492(0.15666)	\\
			&		&		&		&	50	&	0.0480(0.05786)	&	0.1317(0.17159)	&	0.1583(0.17561)	&	0.4185(0.16319)	\\
			&		&		&		&	100	&	0.0373(0.03886)	&	0.1080(0.15089)	&	0.1458(0.17116)	&	0.3591(0.17313)	\\
			&		&		&		&	200	&	0.0323(0.02847)	&	0.0910(0.13705)	&	0.1376(0.16456)	&	0.2633(0.16956)	\\
			&	100	&	10	&	10	&	20	&	0.0509(0.09346)	&	0.1043(0.17912)	&	0.1263(0.18374)	&	0.4186(0.15705)	\\
			&		&		&		&	50	&	0.0254(0.03927)	&	0.0690(0.15127)	&	0.1023(0.16260)	&	0.3703(0.15725)	\\
			&		&		&		&	100	&	0.0164(0.01805)	&	0.0540(0.13712)	&	0.0836(0.14229)	&	0.2853(0.14941)	\\
			&		&		&		&	200	&	0.0112(0.00225)	&	0.0388(0.11309)	&	0.0720(0.13155)	&	0.1842(0.11782)	\\
			&	20	&	20	&	20	&	20	&	0.0199(0.04163)	&	0.0455(0.11617)	&	0.0503(0.10430)	&	0.2859(0.14501)	\\
			&		&		&		&	50	&	0.0124(0.03068)	&	0.0354(0.11309)	&	0.0423(0.10085)	&	0.2349(0.14240)	\\
			&		&		&		&	100	&	0.0079(0.00187)	&	0.0204(0.07970)	&	0.0358(0.08455)	&	0.1576(0.11309)	\\
			&		&		&		&	200	&	0.0058(0.00139)	&	0.0156(0.06583)	&	0.0367(0.09113)	&	0.0975(0.08937)	\\
			$\cD(\widehat{\Ab}_2,\Ab_2)$	&	10	&	10	&	10	&	20	&	0.0831(0.12082)	&	0.1663(0.20363)	&	0.2010(0.20489)	&	0.4514(0.16433)	\\
			&		&		&		&	50	&	0.0487(0.06591)	&	0.1294(0.17971)	&	0.1562(0.17988)	&	0.4135(0.16926)	\\
			&		&		&		&	100	&	0.0380(0.04225)	&	0.1080(0.16396)	&	0.1424(0.17216)	&	0.3600(0.17577)	\\
			&		&		&		&	200	&	0.0344(0.03886)	&	0.0891(0.14623)	&	0.1401(0.17474)	&	0.2618(0.17082)	\\
			&	100	&	10	&	10	&	20	&	0.0257(0.09058)	&	0.0760(0.18092)	&	0.0820(0.17510)	&	0.2982(0.18181)	\\
			&		&		&		&	50	&	0.0106(0.04496)	&	0.0509(0.15435)	&	0.0665(0.15780)	&	0.2489(0.19057)	\\
			&		&		&		&	100	&	0.0062(0.01828)	&	0.0406(0.13901)	&	0.0533(0.14272)	&	0.1600(0.16268)	\\
			&		&		&		&	200	&	0.0043(0.00182)	&	0.0297(0.11620)	&	0.0441(0.13454)	&	0.0877(0.12974)	\\
			&	20	&	20	&	20	&	20	&	0.0201(0.04920)	&	0.0481(0.13532)	&	0.0497(0.10947)	&	0.2821(0.14975)	\\
			&		&		&		&	50	&	0.0124(0.03446)	&	0.0358(0.12025)	&	0.0435(0.11028)	&	0.2293(0.14320)	\\
			&		&		&		&	100	&	0.0078(0.00171)	&	0.0216(0.09336)	&	0.0373(0.09692)	&	0.1574(0.11931)	\\
			&		&		&		&	200	&	0.0057(0.00125)	&	0.0162(0.07606)	&	0.0378(0.10205)	&	0.0987(0.09844)	\\
			$\cD(\widehat{\Ab}_3,\Ab_3)$	&	10	&	10	&	10	&	20	&	0.0817(0.11664)	&	0.1667(0.20234)	&	0.1975(0.20149)	&	0.4490(0.16158)	\\
			&		&		&		&	50	&	0.0498(0.06719)	&	0.1288(0.17687)	&	0.1539(0.17948)	&	0.4233(0.16859)	\\
			&		&		&		&	100	&	0.0384(0.04332)	&	0.1040(0.15592)	&	0.1431(0.17293)	&	0.3598(0.17328)	\\
			&		&		&		&	200	&	0.0339(0.03448)	&	0.0907(0.14822)	&	0.1353(0.16943)	&	0.2593(0.16919)	\\
			&	100	&	10	&	10	&	20	&	0.0260(0.09286)	&	0.0738(0.17765)	&	0.0811(0.17356)	&	0.3002(0.17988)	\\
			&		&		&		&	50	&	0.0106(0.04682)	&	0.0488(0.14885)	&	0.0666(0.15908)	&	0.2387(0.18475)	\\
			&		&		&		&	100	&	0.0060(0.01838)	&	0.0385(0.13252)	&	0.0536(0.14379)	&	0.1658(0.16711)	\\
			&		&		&		&	200	&	0.0042(0.00201)	&	0.0294(0.11584)	&	0.0441(0.13170)	&	0.0881(0.12675)	\\
			&	20	&	20	&	20	&	20	&	0.0200(0.04897)	&	0.0478(0.13359)	&	0.0500(0.10906)	&	0.2904(0.1496)	\\
			&		&		&		&	50	&	0.0124(0.03369)	&	0.0362(0.12095)	&	0.0431(0.10735)	&	0.2391(0.14844)	\\
			&		&		&		&	100	&	0.0078(0.00168)	&	0.0212(0.09063)	&	0.0369(0.09449)	&	0.1583(0.12042)	\\
			&		&		&		&	200	&	0.0056(0.00123)	&	0.0157(0.07138)	&	0.0380(0.10159)	&	0.0988(0.09740)	\\
			
			\bottomrule[2pt]
	\end{tabular}}
\end{table}

\subsection{Verifying the convergence rates for loading spaces\label{sec:4.2}%
}

In this section, we compare the performance of RTFA, PE, IE, IPmoPCA, TOPUP,
TIPUP, iTOPUP and iTIPUP (with $h_{0}=1$), as far as estimating loading
spaces is concerned. We would like to note that preliminary, unreported
simulations indicate that, in all setting considered, the iterative
estimation methods perform better than the non-iterative ones; hence, in
order to save space, we only report results concerning RTFA, IPmoPCA, iTOPUP
and iTIPUP.

We consider the following three settings:

\vspace{0.5em}

\textbf{Setting A:~} $p_1=p_2=p_3=10,~\phi=0.1,~\psi=0.1,~T \in
(20,50,100,200)$

\vspace{0.5em}

\textbf{Setting B:~} $p_1=100,~p_2=p_3=10,~\phi=0.1,~\psi=0.1,~T \in
(20,50,100,200)$

\vspace{0.5em}

\textbf{Setting C:~} $p_1=p_2=p_3=20,~\phi=0.1,~\psi=0.1,~T \in
(20,50,100,200)$

%\vspace{0.5em}

%\textbf{Setting F:~}  $p_1=40,~p_2=p_3=10,~\phi=0.1,~\psi=0.1,~T \in (16,32,64,128,256,512,1024)$

\vspace{0.5em}

Due to the identifiability issue of factor model, the performance of the
candidates methods is evaluated by comparing the distance between the
estimated loading space and the true loading space, which is
\begin{equation*}
\cD(\widehat{\Ab}_k, \Ab_k)=\left(1-\frac{1}{r_k} \mathrm{{Tr}\left(\widehat{%
\Qb}_k\widehat{\Qb}_k^{\top} \Qb_k \Qb_k^{\top}\right)}\right)^{1 /
2},~k=1,2,3,
\end{equation*}
where $\Qb_k$ and $\widehat{\Qb}_k$ are the left singular-vector matrices of
the true loading matrix $\Ab_k$ and its estimator $\widehat{\Ab}_k$. The
distance is always in the interval $[0,1]$. When $\Ab_k$ and $\widehat{\Ab}%
_k $ span the same space, the distance $\cD(\widehat{\Ab}_k, \Ab_k)$ is
equal to 0, while is equal to 1 when the two spaces are orthogonal.

\bigskip

Table \ref{tab:1} shows the averaged estimation errors and standard
deviations under Settings A, B and C with idiosyncratic components following
tensor normal distribution. As expected, all these methods benefit from the
increase in the dimensions $p_{1}$, ..., $p_{K}$ and $T$. Compared with
iTOPUP and iTIPUP, the RTFA and the IPmoPCA estimators deliver a better
performance especially when the idiosyncratic components have (weak)
cross-sectional and serial correlation. We note that the RTFA and IPmoPCA
estimators behave in a similar way in the case of Gaussian idiosyncratic
components. This result is not surprising: the RTFA estimator is based on
the eigenstructure of the weighted sample covariance matrix of the projected
data, while IPmoPCA is based on the unweighted sample covariance matrix.
However, in the case of Gaussian idiosyncratic components, the weights used
in the RTFA estimator tend to be equal; hence, the RTFA and IPmoPCA
estimators have virtually the same performance (incidentally, both share the
benefits brought about from iterative projection). Conversely, when
considering the case where the idiosyncratic components follow the tensor $%
t_{3}$ distribution, Table \ref{tab:2} indicates that, although all methods
improve as $p_{1}$, ..., $p_{K}$ and $T$ increase, the RTFA estimator
clearly outperforms all the other methods, across all settings. In summary,
the distilled essence of our simulations is that when data have light tail,
the RTFA estimator and the IPmoPCA estimator perform similarly; conversely,
the RTFA\ estimator offers the best performance for data with heavy tails,
which indicates that RTFA is more robust and is adaptive to the tail
properties of data compared with other non-weighted projection procedures.

\begin{table}[!h]
\caption{Averaged estimation errors and and standard deviations of common
components for Settings A, B and C over 1000 replications. }
\label{tab:3}\renewcommand{\arraystretch}{0.5} \centering
\scalebox{1}{\begin{tabular}{ccccccccc}
			\toprule[2pt]
			Distribution & $p_1$	&	$p_2$	&	$p_3$	&	$T$  &  RTFA &	IPmoPCA	&	iTOPUP	&	iTIPUP	\\
			\midrule
			normal	&	10	&	10	&	10	&	20	&	0.0314(0.00417)	&	0.0314(0.00417)	&	0.0362(0.00499)	&	0.2877(0.11770)	\\
			&		&		&		&	50	&	0.0290(0.00345)	&	0.0290(0.00345)	&	0.0335(0.00417)	&	0.2399(0.12044)	\\
			&		&		&		&	100	&	0.0286(0.00353)	&	0.0286(0.00353)	&	0.0328(0.00412)	&	0.1611(0.09365)	\\
			&		&		&		&	200	&	0.0281(0.00332)	&	0.0281(0.00332)	&	0.0322(0.00392)	&	0.0835(0.04398)	\\
			&	100	&	10	&	10	&	20	&	0.0044(0.00047)	&	0.0044(0.00047)	&	0.0064(0.00079)	&	0.1902(0.09065)	\\
			&		&		&		&	50	&	0.0034(0.00033)	&	0.0034(0.00033)	&	0.0052(0.00063)	&	0.1339(0.08519)	\\
			&		&		&		&	100	&	0.0030(0.00028)	&	0.0030(0.00028)	&	0.0048(0.00056)	&	0.0791(0.06332)	\\
			&		&		&		&	200	&	0.0029(0.00028)	&	0.0029(0.00028)	&	0.0045(0.00050)	&	0.0316(0.02330)	\\
			&	20	&	20	&	20	&	20	&	0.0044(0.00034)	&	0.0044(0.00034)	&	0.0055(0.00049)	&	0.1826(0.09781)	\\
			&		&		&		&	50	&	0.0038(0.00028)	&	0.0038(0.00028)	&	0.0048(0.00040)	&	0.1195(0.08365)	\\
			&		&		&		&	100	&	0.0036(0.00025)	&	0.0036(0.00025)	&	0.0046(0.00034)	&	0.0626(0.05947)	\\
			&		&		&		&	200	&	0.0035(0.00023)	&	0.0035(0.00023)	&	0.0044(0.00033)	&	0.0207(0.01746)	\\
			
			\midrule
			$t_3$	&	10	&	10	&	10	&	20	&	0.2617(4.86213)	&	0.4230(4.88769)	&	0.4146(4.84075)	&	0.6991(4.82388)	\\
			&		&		&		&	50	&	0.0973(0.73454)	&	0.2186(0.93345)	&	0.2202(0.78275)	&	0.5236(0.87847)	\\
			&		&		&		&	100	&	0.0426(0.04327)	&	0.1315(0.38312)	&	0.1507(0.35565)	&	0.3754(0.37653)	\\
			&		&		&		&	200	&	0.0412(0.02450)	&	0.1251(0.36703)	&	0.1565(0.39378)	&	0.2541(0.38040)	\\
			&	100	&	10	&	10	&	20	&	0.0770(0.71436)	&	0.1902(0.86077)	&	0.1526(0.62744)	&	0.3926(0.78267)	\\
			&		&		&		&	50	&	0.0213(0.29108)	&	0.1270(0.58298)	&	0.1263(0.54295)	&	0.3202(0.53778)	\\
			&		&		&		&	100	&	0.0099(0.20305)	&	0.0856(0.49088)	&	0.0939(0.49068)	&	0.2050(0.49981)	\\
			&		&		&		&	200	&	0.0034(0.00294)	&	0.0769(0.56020)	&	0.0902(0.56730)	&	0.1234(0.56034)	\\
			&	20	&	20	&	20	&	20	&	0.0213(0.34443)	&	0.0740(0.46448)	&	0.0535(0.35076)	&	0.2612(0.41123)	\\
			&		&		&		&	50	&	0.0044(0.00415)	&	0.0742(0.69496)	&	0.0719(0.68009)	&	0.2208(0.67815)	\\
			&		&		&		&	100	&	0.0040(0.00185)	&	0.0349(0.26854)	&	0.0353(0.23290)	&	0.1081(0.25400)	\\
			&		&		&		&	200	&	0.0041(0.00245)	&	0.0311(0.25765)	&	0.0387(0.26898)	&	0.0581(0.26084)	\\
			
			\bottomrule[2pt]
	\end{tabular}}
\end{table}

\subsection{Estimation error for common components\label{sec:4.3}}

In this section, we compare the performance of several estimators at
estimating the common component $\cS_{t}$, under Setting A, B, C defined in
the previous set of simulations. As in the previous section, we have tried
the following estimators: RTFA, PE, IE, TOPUP, TIPUP, iTOPUP and iTIPUP
(with $h_{0}=1$); however, preliminary simulations clearly showed that
non-iterative methods perform much worse than iterative ones, and therefore
we report only the results with the RTFA, IPmoPCA, iTOPUP and iTIPUP
estimators. We use the metric of Mean Squared Error (MSE) to evaluate the
performance of different procedures, i.e.,
\begin{equation*}
\mathrm{MSE}=\frac{1}{Tp}\sum_{t=1}^{T}\left\Vert \widehat{\cS}_{t}-\cS%
_{t}\right\Vert _{F}^{2}.
\end{equation*}

Table \ref{tab:3} shows the averaged estimation errors and standard
deviations of estimating the common components under Settings A, B and C.
Similarly to the conclusions in the previous section, all methods benefit
from increasing $p_{1}$, ..., $p_{K}$ and $T$\ as expected, and the RTFA and
IPmoPCA estimators outperform the iTOPUP and iTIPUP ones. In particular,
RTFA performs comparably with IPmoPCA under Gaussian idiosyncratic
components, whereas it delivers a significantly better performance than
other methods when the idiosyncratic components follow a $t_{3}$
distribution.

\begin{table}[!h]
\caption{ The frequencies of exact estimation of the numbers of factors
under Settings A, C and D over 1000 replications.}
\label{tab:4}\renewcommand{\arraystretch}{0.5} \centering
\begin{tabular}{cccccccc}
\toprule[2pt] Distribution & $p_1=p_2=p_3$ & $T$ & RTFA-ER & iPE-ER & TCorTh
& iTOP-ER & iTIP-ER \\
\midrule normal & 10 & 20 & 0.826 & 0.829 & 0.000 & 0.776 & 0.046 \\
&  & 50 & 0.854 & 0.853 & 0.034 & 0.789 & 0.097 \\
&  & 100 & 0.858 & 0.859 & 0.139 & 0.811 & 0.302 \\
&  & 200 & 0.848 & 0.851 & 0.318 & 0.796 & 0.577 \\
& 20 & 20 & 0.997 & 0.997 & 0.628 & 1.000 & 0.215 \\
&  & 50 & 0.999 & 0.999 & 0.941 & 1.000 & 0.434 \\
&  & 100 & 1.000 & 1.000 & 0.990 & 1.000 & 0.766 \\
&  & 200 & 1.000 & 1.000 & 0.994 & 1.000 & 0.975 \\
& 30 & 20 & 1.000 & 1.000 & 0.988 & $\backslash$ & 0.321 \\
&  & 50 & 1.000 & 1.000 & 1.000 & $\backslash$ & 0.573 \\
&  & 100 & 1.000 & 1.000 & 1.000 & $\backslash$ & 0.839 \\
&  & 200 & 1.000 & 1.000 & 1.000 & $\backslash$ & 0.991 \\
\midrule $t_3$ & 10 & 20 & 0.520 & 0.381 & 0.003 & 0.332 & 0.030 \\
&  & 50 & 0.565 & 0.367 & 0.138 & 0.341 & 0.060 \\
&  & 100 & 0.557 & 0.394 & 0.340 & 0.361 & 0.106 \\
&  & 200 & 0.568 & 0.398 & 0.462 & 0.347 & 0.216 \\
& 20 & 20 & 0.967 & 0.841 & 0.523 & 0.825 & 0.184 \\
&  & 50 & 0.968 & 0.856 & 0.135 & 0.867 & 0.385 \\
&  & 100 & 0.986 & 0.918 & 0.026 & 0.903 & 0.688 \\
&  & 200 & 0.988 & 0.924 & 0.002 & 0.915 & 0.897 \\
& 30 & 20 & 0.941 & 0.890 & 0.273 & $\backslash$ & 0.333 \\
&  & 50 & 0.952 & 0.934 & 0.068 & $\backslash$ & 0.502 \\
&  & 100 & 0.987 & 0.942 & 0.006 & $\backslash$ & 0.811 \\
&  & 200 & 0.999 & 0.967 & 0.001 & $\backslash$ & 0.954 \\
\bottomrule[2pt] &  &  &  &  &  &  &
\end{tabular}%
\end{table}

\subsection{Estimating the numbers of factors\label{sec:4.4}}

We investigate the empirical performances of several methodologies to
estimate the number of common factors. In this set of simulations, we also
consider the setting

\vspace{0.5em}

\textbf{Setting D:~} $p_{1}=p_{2}=p_{3}=30,~\phi =0.1,~\psi =0.1,~T\in
(20,50,100,200)$

\vspace{0.5em}

We compare the performance of our iterative Projected Estimator based on the
eigenvalue ratio (iPE-ER), alongside the robust version (RTFA-ER), against
seveal competitors including: the Total mode-k Correlation Thresholding
(TCorTh) of \cite{lam2021rank}, and the methods based on both information
criteria and the Eigenvalue Ratio principle based on the TOPUP, TIPUP,
iTOPUP and iTIPUP estimators of \cite{Han2021Rank}, which we abbreviate as
TOP-IC, TIP-IC, iTOP-IC, iTIP-IC, TOP-ER, TIP-ER, iTOP-ER, and iTIP-ER
respectively with fixed $h_{0}=1$. Preliminary results showed that the
accuracy of methods based on information criteria is almost zero under all
settings, which indicates that information criteria in general are not
suitable to estimate the number of factors in our setting. Similarly, TOP-ER
and TIP-ER are uniformly dominated by their iterative counterparts, and
therefore we only present the results obtained using iTOP-ER and iTIP-ER. We
set $r_{\max }=8$ for all estimation procedures. As far as both iPE-ER and
RTFA-ER are concerned, inspired by \citet{hallinliska07} and \citet{ABC10}
we explored the stability of the estimated number of factors as the tuning
coefficient $c$ in (\ref{equ:2}) changes from $c=0$ to $c$ set equal to the
largest eigenvalues of $\widehat{\mathbf{M}}_{k}$ and $\widehat{\mathbf{M}}%
_{k}^{H}$. Estimates are entirely insensitive to $c$, which indicates that
our methodology is completely robust to $c$; hence, hereafter we report
results using $c=0$.

\vspace{0.5em}

Table \ref{tab:4} presents the frequencies of exact estimation over $1,000$
replications under Settings A, C and D. When the idiosyncratic component
follows a Gaussian distribution, the accuracy of all five methods improve as
$p_{1}$, ..., $p_{K}$ and $T$\ increase, as expected. Among them, iTIP-ER
delivers the worst performance, essentially due to the fact that the TIPUP
method is not suitable in the presence of serial correlation in the
idiosyncratic component. Similarly, TCorTh does not perform well when $p_{1}$%
, ..., $p_{K}$ and $T$ are small, but its performance becomes comparable
with that of the other methodologies as $p_{1}$, ..., $p_{K}$ and $T$
increase. The performances of RTFA-ER and iPE-ER are comparable, and they
are both better than those of iTIP-ER and TCorTh. On the other hand, the
accuracy of iTOP-ER is comparable with that of iPE-ER and RTFA-ER; this can
be expected, since iTOP-ER is based on the tensor outer product, and it uses
as much information as possible. It is worth noting, though, that iTOP-ER
requires much longer run time and larger storage space, since the TOPUP
procedure needs to deal with a large high-order tensor, such as $\mathbb{R}%
^{p_{1}\times p_{2}\times p_{3}\times p_{1}\times p_{2}\times p_{3}\times 1}$
while $h_{0}=1$. Therefore, the use of iTOP-ER may present some difficulties
in practice. Finally, as can be expected, when the idiosyncratic component
follows a heavy-tailed, $t_{3}$ distribution, the accuracy of RTFA-ER is
significantly higher than that of all other methods, which indicates that
RTFA-ER is a robust method for estimating the numbers of factors, especially
in the cases of heavy-tailed data.

\section{Empirical analysis\label{empirics}}

In this section, we study import/export data of a variety of commodities
across multiple countries, which is also analyzed in \cite{Chen2021Factor}.
This set of data is made up of the monthly total value (in US dollars) of
imports and exports of 15 kinds of commodities in 22 European and American
countries from January 2014 to December 2022, thus extending the sample used
by \cite[January 2010 to December 2016]{Chen2021Factor} to cover also the
period of COVID-19 pandemic. The 15 commodities are Animal \& Animal
Products (HS code 01-05), Vegetable Products (06-15), Foodstuffs (16-24),
Mineral Products (25-27), Chemicals \& Allied Industries (28-38), Plastics
\& Rubbers (39-40), Raw Hides, Skins, Leather \& Furs (41-43), Wood \& Wood
Products (44-49), Textiles (50-63), Footwear \& Headgear (64-67), Stone \&
Glass (68-71), Metals (72-83), Machinery \& Electrical (84-85),
Transportation (86-89), and Miscellaneous (90-97), and the 22 countries are
Belgium (BE), Bulgaria (BG), Canada (CA), Denmark (DK), Finland (FI), France
(FR), Germany (DE), Greece (GR), Hungary (HU), Iceland (IS), Ireland (IR),
Italy (IT), Mexico (MX), Norway (NO), Poland (PO), Portugal (PT), Spain
(ES), Sweden (SE), Switzerland (CH), Turkey (TR), United States of America
(US) and United Kingdom (GB). Notice that, by construction, each country's
import and export with itself is zero. We compute a three-month moving average to
reduce the impact of occasional transactions for large traded or unusual
shipping delays.

Hence, we ultimately get a $22\times 22\times 15$ three-way tensor time
series $\{\cX_{t}\}_{t=1}^{T}$ with length $T=106$, where the $(i,j,k)$-th
element of $\cX_{t}$ represents the three-month average of exports of the $i$%
-th country to the $j$-th country for the $k$-th good at the $t$-th month.
By absorbing the time dimension, we may stack $\cX_{t}$ into an four-way
tensor $\cY\in \RR^{p_{1}\times p_{2}\times p_{3}\times T}$, with time $t$
as the fourth mode, referred to as the time-mode. We assume that, the
tensors $\cX_{t}\in \RR^{p_{1}\times p_{2}\times p_{3}}$ have the following
factor structure
\begin{equation*}
\cX_{t}=\cF_{t}\times _{1}\Ab_{1}\times _{2}\Ab_{2}\times _{3}\Ab_{3}+\cE%
_{t},\quad t=1,\cdots ,106,\quad p_{1}=p_{2}=22,\quad p_{3}=15,
\end{equation*}%
where $\Ab_{k}\in \RR^{p_{k}\times r_{k}}$ are the loading matrices, $\cF%
_{t}\in \RR^{r_{1}\times r_{2}\times r_{3}}$ is the factor tensor with
factor numbers $r_{k}<p_{k}$, $k=1,2,3$, to be determined. Equivalently we
write $\cY=\cG\times _{1}\Ab_{1}\times _{2}\Ab_{2}\times _{3}\Ab_{3}+\cR$,
where $\cG$ and $\cR$ are obtained by stacking $\cF_{t}$ and $\cE_{t}$ with
time $t$ as the fourth mode.

\begin{figure}[t!]
\subfigure[]{
		\includegraphics[width=8cm,height=7cm]{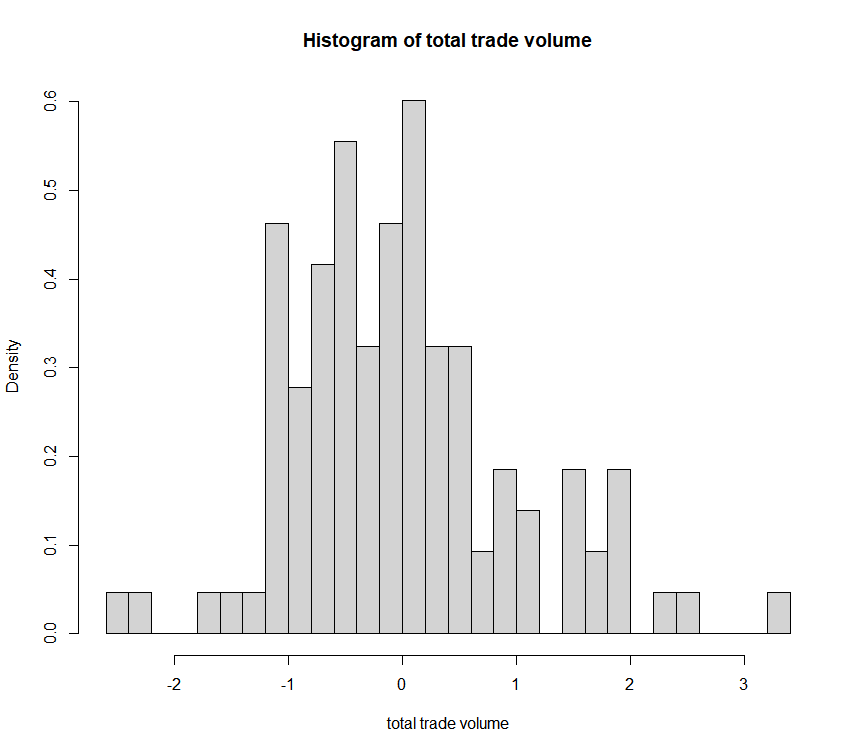} \label{fig:1(a)}
	} \hspace{2mm}
\subfigure[]{
		\includegraphics[width=8cm,height=7cm]{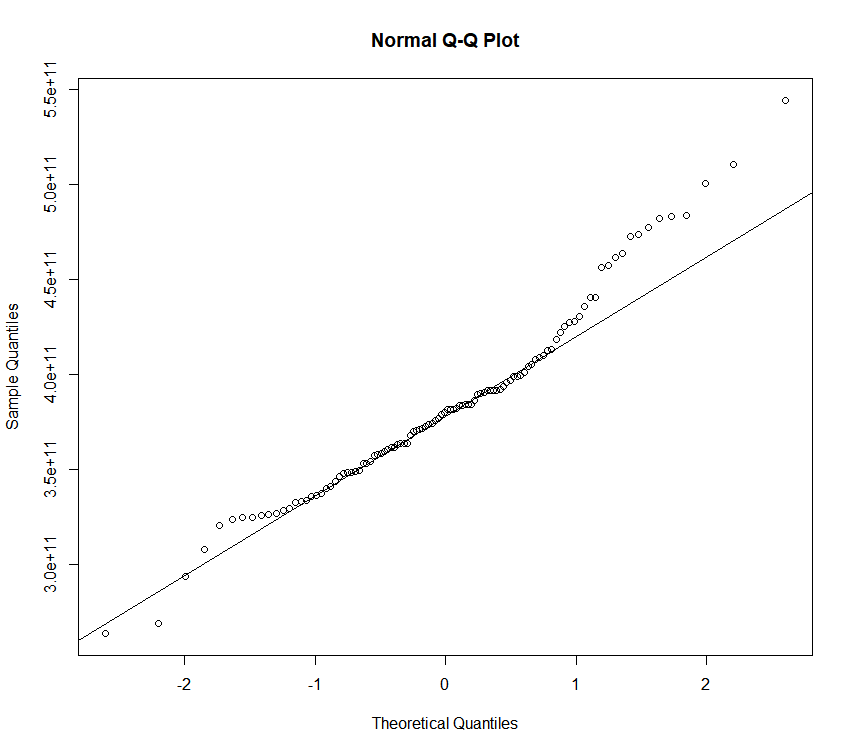}
	    \label{fig:1(b)}
    }
\caption{Histogram (a) and normal Q-Q plot (b) of total trade volume of 15
commodities between 22 countries at each time point.}
\end{figure}

%\begin{figure}[htbp]
%%ÍŒ,[htbp]ÊÇž¡¶¯žñÊœ
%	\centering
%	\includegraphics[height=7cm,width=12.5cm]{qq.png}
%	\caption{Normal Q-Q polt of total trade volume of 15 commodities between 22 countries at each moment.}
%	\label{fig:2}
%\end{figure}

\begin{figure}[t!]
\centering
\includegraphics[height=7cm,width=12.5cm]{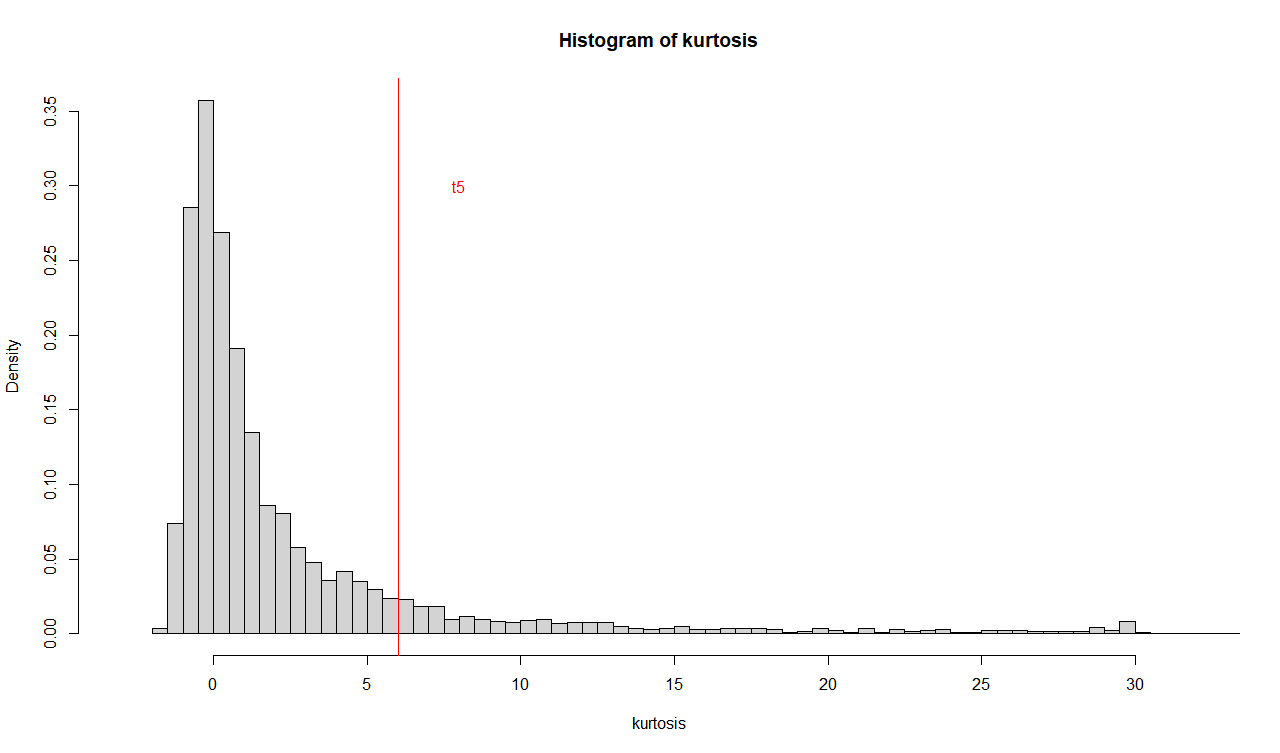}
\caption{Histogram of the sample kurtosis for $22\times22\times15$ trade
volumes and the red dashed line is the theoretical kurtosis of $t_5$
distribution.}
\label{fig:2}
\end{figure}

Figure \ref{fig:1(a)} and \ref{fig:1(b)} show the histogram and normal Q-Q
plot of the total trade volume at each time point $t$ by adding the volume
of trade between 22 countries for 15 commodities, respectively. It can be
seen that the total trade volume deviates from the normal distribution and
is skewed to the right. Figure \ref{fig:2} shows the histogram of sample
kurtosis of $22\times22\times15$ variables with respect to the trade volume
of 15 commodities among 22 countries, where the red vertical line represents
the theoretical kurtosis of the $t_5$ distribution. It suggests that this
dataset has heavy tails.

In order to assess the goodness of fit, we use the relative Mean Squared
Error (MSE) defined as
\begin{equation*}
\mathrm{MSE}=\frac{\sum_{t=1}^{T}\Vert \cX_{t}-\widehat{\cS}_{t}\Vert
_{F}^{2}}{\sum_{t=1}^{T}\Vert \cX_{t}\Vert _{F}^{2}},
\end{equation*}
where $\widehat{\cS}_{t}=\widehat\cF_t\times_1\widehat\Ab_1\times_2\widehat\Ab_2\times_3\widehat\Ab_3$
%$=\cX_t\times_1(\widehat\Ab_1\widehat\Ab_1^\top)\times_2(\widehat\Ab_2\widehat\Ab_2^\top)\times_3(\widehat\Ab_3\widehat\Ab_3^\top)$
 is the estimated three-month average import-export factor driven tensor component of the data
tensor at $t$-th moment. When the MSE is small, this means that there is
strong evidence of comovements among the variables, thus indicating that the
factor model is an adequate representation of the data.
%Similarly, we again omit the result of the non-iterative method here and only show the result of IPmoPCA, RTFA, iTOPUP and iTIPUP.
Table \ref{tab:5} shows the MSE of the three-month moving average estimated
by different methods given different combinations of factor numbers.
Specifically, we compare our RTFA method with the following estimators:
Initial Estimator (IE) and the
	Projection Estimator (PE) by \citet{he2022statistical}, Iterative Projected
	mode-wise PCA estimation (IPmoPCA) by \cite{zhang2022tucker}, and the Time series
	Inner-Product Unfolding Procedure (TIPUP) with their iteration versions
	(iTIPUP) by \cite{Chen2021Factor}.  It can be seen that the estimation
accuracy of the non-iterative method is inferior to that of the
corresponding iterative method, and the best one is the proposed RTFA method.

\begin{table}[t!]
\caption{ The MSE for the three-month moving average data. $r_1=r_2=r_3=r$
is the number of factors. }
\label{tab:5}\renewcommand{\arraystretch}{0.5} \centering
\begin{tabular}{ccccccc}
\toprule[2pt] r & IE & PE & IPmoPCA & RTFA & TIPUP & iTIPUP \\
\midrule 4 & 0.005052 & 0.003243 & 0.002973 & \textbf{0.002958} & 0.005055 &
0.002991 \\
5 & 0.004675 & 0.003061 & 0.002446 & \textbf{0.002403} & 0.004676 & 0.002445
\\
6 & 0.003885 & 0.002028 & 0.002025 & \textbf{0.001999} & 0.003734 & 0.002022
\\
\bottomrule[2pt] &  &  &  &  &  &
\end{tabular}%
\end{table}

\begin{table}[t!]
\caption{ The averaged MSE of rolling validation for the three-month moving
average data. $12n$ is the sample size of the training set. $r_1=r_2=r_3=r$
is the number of factors. }
\label{tab:6}\renewcommand{\arraystretch}{0.5} \centering
\begin{tabular}{cccccccc}
\toprule[2pt] $n$ & $r$ & IE & PE & IPmoPCA & RTFA & TIPUP & iTIPUP \\
\midrule 1 & 4 & 0.305648 & 0.234050 & 0.207419 & \textbf{0.207050} &
0.306730 & 0.218064 \\
2 & 4 & 0.306979 & 0.232142 & 0.190624 & \textbf{0.190185} & 0.307584 &
0.192854 \\
3 & 4 & 0.307025 & 0.217422 & 0.186855 & \textbf{0.186556} & 0.307332 &
0.187705 \\
1 & 5 & 0.261384 & 0.176364 & 0.172087 & \textbf{0.172043} & 0.258143 &
0.172437 \\
2 & 5 & 0.260488 & 0.177568 & 0.167676 & 0.167616 & 0.258174 & \textbf{%
0.166238} \\
3 & 5 & 0.259934 & 0.166453 & 0.149322 & \textbf{0.149246} & 0.258230 &
0.149851 \\
1 & 6 & 0.164834 & 0.114599 & 0.114138 & \textbf{0.113976} & 0.159747 &
0.114077 \\
2 & 6 & 0.148967 & 0.111139 & \textbf{0.110729 } & 0.110743 & 0.145592 &
0.110896 \\
3 & 6 & 0.163283 & \textbf{0.108324} & 0.108366 & 0.108390 & 0.159581 &
0.108561 \\
\bottomrule[2pt] &  &  &  &  &  &  &
\end{tabular}%
\end{table}

Then we use a rolling-validation
procedure as in \cite{wang2019factor} to futher compare these methods.
%\textcolor{red}{Matteo: remove the next sentence (in blue now) it is not necessary details can be seen if they ask replication codes.\\}
%\textcolor{blue}{We added the data of November 2013 and December 2013 before doing the three-month sliding average to achieve rolling-validation procedure. }
For
each year $t$ from 2017 to 2022, we repeatedly use the $n$ (bandwidth) year
observations prior to $t$ to fit the tensor factor model and estimate the
loading matrices. The estimated loading matrices are then used to estimate
the factor tensors and the corresponding residuals of the 12 months in the
current year. Specifically, let $\cX_{t}^{i}$ be the observed import-export tensor of month $i$ in year $t$ and $\widehat{\cS}_{t}^{i}=\widehat\cF_t^{i}\times_1\widehat\Ab_1^{i,n}\times_2\widehat\Ab_2^{i,n}\times_3\widehat\Ab_3^{i,n}$,
%=\cX_t^{i}\times_1(\widehat\Ab_1^{i,n}\widehat\Ab_1^{i,n \top})\times_2(\widehat\Ab_2^{i,n}\widehat\Ab_2^{i,n \top})\times_3(\widehat\Ab_3^{i,n}\widehat\Ab_3^{i,n \top})$,
where $\{\Ab_k^{i,n}\}_{k=1}^K$ are the loading matrices estimated based on $n$ years of observations prior to month $i$ in year $t$, and further
define
\begin{equation*}
\mathrm{MSE}_{t}=\frac{\sum_{i=1}^{12}\left\Vert \cX_{t}^{i}-\widehat{\cS}_{t}^{i}\right\Vert _{F}^{2}}{\sum_{i=1}^{12}\left\Vert \cX_{t}^{i}\right\Vert _{F}^{2}}
\end{equation*}as the mean squared error.  Table \ref{tab:6} compares the
mean values of MSE of various estimation methods for different combinations
of bandwidth $n$ and factor number $r_{1}=r_{2}=r_{3}=r$. The estimation
errors of IPmoPCA, RTFA and iTIPUP methods are very close together, and the
results in Table \ref{tab:6} show that these three methods perform better
than the other methods. It is worth noting that our proposed RTFA performs
better than the other methods in almost all considered settings. This
indicates that RTFA is particularly suitable for capturing comovements in
this heavy-tailed data.

\begin{table}[!h]
\caption{ Estimated loading matrix $\widehat\Ab_3^\top$ for the commodity
mode.}
\label{tab:7}\renewcommand{\arraystretch}{0.5}
\resizebox{\textwidth}{!}{
		\centering
		\begin{tabular}{c|ccccccccccccccc}
			\toprule[2pt]
		Factor	&	Animal	&	Vegetable	&	Food	&	Mineral	&	Chemicals	&	Plastics	&	Leather	&	Wood	&	Textiles	&	Footwear	&	Stone	&	Metals	&	Machinery	&	Transport	&	Misc	\\
			\midrule
			1	&	0	&	3	&	-1	&	0	&	-1	&	1	&	0	&	-3	&	1	&	0	&	1	&	0	&	\textbf{29}	&	0	&	6	\\
			2	&	0	&	0	&	0	&	\textbf{30}	&	1	&	-1	&	0	&	3	&	-1	&	0	&	0	&	1	&	0	&	0	&	2	\\
			3	&	0	&	-3	&	1	&	-2	&	\textbf{29}	&	3	&	1	&	0	&	0	&	0	&	-2	&	3	&	0	&	0	&	6	\\
			4	&	0	&	-1	&	4	&	0	&	0	&	-2	&	0	&	2	&	0	&	0	&	-1	&	1	&	0	&	\textbf{30}	&	2	\\
			5	&	6	&	6	&	6	&	0	&	-2	&	\textbf{19}	&	0	&	6	&	4	&	1	&	0	&	17	&	1	&	0	&	-9	\\
			6	&	1	&	4	&	4	&	-1	&	2	&	-4	&	0	&	5	&	1	&	1	&	\textbf{29}	&	1	&	-1	&	0	&	2	\\
			\bottomrule[2pt]
	\end{tabular}}
\end{table}

%For processed $22 \times 22 \times 15 \times 70$ fourth-order tensor data, our PE-ER method suggests $r_1=1,~r_2=3,~r_3=1$, while TCorTh suggests $r_1=3,~r_2=1,~r_3=3$, iTOP-ER suggests $r_1=3,~r_2=4,~r_3=1$, iTIP-ER suggests $r_1=2,~r_2=2,~r_3=1$ and other methods all suggest $r_1=r_2=r_3=1$. For better illustration, we take $r_1=r_2=4,~r_3=6$ as same as in \cite{Chen2021Factor}.

\begin{figure}[htbp]
\centering
\includegraphics[width=8cm,height=7cm]{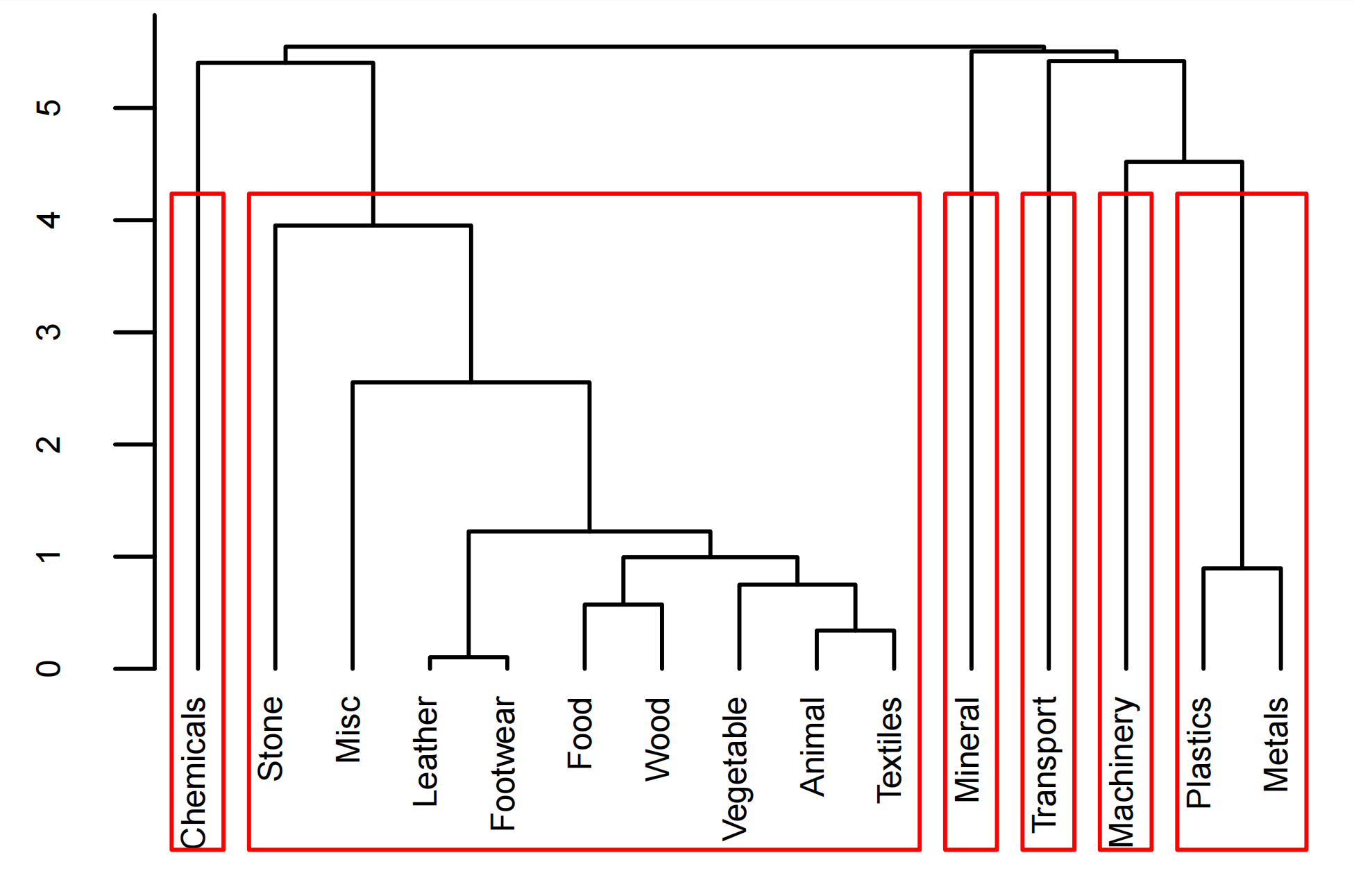}
\caption{Clustering of product categories by their loading coefficients.}
\label{fig:3}
\end{figure}

We then  use RTFA to further analyze the
export-import tensor time series data and, hereafter, we set $r_1=r_2=4,~r_3=6$ as  in \cite%
{Chen2021Factor}.
%\textcolor{purple}{We tried $c$ values 0 to the maximum eigenvalue of
%$\widehat\Mb_k^H$, and the partial results for $c$ taken from 0 to
%$0.1\cdot\lambda_1(\widehat\Mb_k^H)$ is shown in Figure \ref{fig:fn}. It can
%be seen that the estimation of factor numbers is usually 1 or 2 with
%different $c$, except for $c=0$ where we get (4,4,1). I think it is
%unreasonable to take the factor number as 1 or 2. So can we keep the number
%of factors (4,4,6) we use now unchanged without explaining it? Chen et
%al.(2022b) also did not explain the reason for the number of factors.}
%\begin{figure}[htbp]
%\centering
%\includegraphics[width=15cm,height=4cm]{}
%\caption{
%\textcolor{purple}{Estimates
%of
%the
%factor
%numbers
%with
%different
%$c$.}
%}
%\label{fig:fn}
%\end{figure}

Table \ref{tab:7} shows the estimated loadings $\widehat{\Ab}_{3}^\top$ for
the commodity mode obtained using the RTFA procedure. For better
interpretation, the loading matrix is rotated using the varimax procedure
(see e.g. \citealp[Chapter 9.6]{mardia1979multivariate}) and all numbers are
multiplied by 30 and then truncated to integers for clearer viewing %
%\textcolor{red}{can we also change the sign?}.\textcolor{blue}{Yes, we can}
The sign of each row of $\widehat{\Ab}_{3}^\top$ (column of $\widehat{\Ab}_{3}$) is set in such a way that the largest entry (in boldface) is positive.
It can be seen from Table \ref{tab:7} that there is a group structure of
these six category factors. Hence, each factor is represents a
\textquotedblleft condensed product group" in \cite{Chen2021Factor}. Factor
1, 2, 3, 4 and 6 can be interpreted as Machinery \& Electrical factor,
Mineral Products factor, Chemicals \& Allied Industries factor,
Transportation factor and Stone \& Glass factor, because these factors are
mainly loaded on these commodity categories. Factor 5 can be viewed as a
mixing factor, with Plastics \& Rubbers and Metals as main load.
The clustering of the product categories according to
their loading vectors, the 15 columns of $\widehat\Ab_3^\top$ each of dimension 6, is shown in Figure \ref{fig:3}. The distance between two product categories $i$ and $j$ is defined by Euclidean distance, which is $d_{ij}^2=\sum_{k=1}^6\l((\widehat\Ab_3)_{ki}-(\widehat\Ab_3)_{kj}\r)^2/\widehat{\operatorname{Var}}(\widehat\Ab_3)_{k\cdot}=d_{ji}^2$, where $\widehat{\operatorname{Var}}(\widehat\Ab_3)_{k\cdot}$ is the sample variance of the $k$th row of $\widehat\Ab_3^\top$. In this study, we adopt  the ``complete linkage method"  {(see e.g. \citealp{Lawrence1974})} for clustering in this study.
%\textcolor{red}{what are the loadings
%vectors, the 6 columns of $\widehat\Ab_1$ each of dimension $p_1$  ? Please
%clarify. Also to build this figure we need a distance measure, and we need to say which we
%one use. Is it $\sqrt{1-\rho_{ij}^2}$ where $\rho_{ij}$ is correlation
%between column $i$ and $j$ of $\widehat\Ab_1$?}

As pointed out by \cite{Chen2021Factor}, each frontal slice of the factor $%
\cF_{t}$, which is equal to $\cG_{\cdot,\cdot,i_3,t}$, can be regarded as
the trade of $i_3$ commodity types between several virtual export hubs and
virtual import hubs. Each commodity transaction can be seen as first the
product is transported from the exporting country to an export hub, then
exported from the export hub to an import hub, and finally reaches the
importing country from the import hub. Each row of $\cG_{\cdot,\cdot,i_3,t}$
represents an exit hub and each column represents an import hub. The
corresponding estimated loading matrices $\widehat{\Ab}_1^\top$ and $%
\widehat{\Ab}_2^\top$ computed by the RTFA method, can reflect the trading
activities of each country through each export hub and import hub,
respectively. They are reported in Tables \ref{tab:8} and \ref{tab:9},
respectively. The scale and sign of the entries of this tables is set in the same way as for Table \ref{tab:7} above.

\begin{table}[!h]
\caption{ Estimated loading matrix $\widehat\Ab_1^\top$ for the export mode.}
\label{tab:8}\renewcommand{\arraystretch}{0.5}
\resizebox{\textwidth}{!}{
		\centering
		\begin{tabular}{c|cccccccccccccccccccccc}
			\toprule[2pt]
		Factor	&	BE	&	BU	&	CA	&	DK	&	FI	&	FR	&	DE	&	GR	&	HU	&	IS	&	IR	&	IT	&	MX	&	NO	&	PO	&	PT	&	ES	&	SE	&	CH	&	ER	&	US	&	GB	\\
			\midrule
			1	&	-1	&	0	&	0	&	0	&	0	&	1	&	3	&	0	&	0	&	0	&	-3	&	1	&	\textbf{30}	&	0	&	0	&	0	&	-1	&	0	&	-1	&	0	&	0	&	1	\\
			2	&	0	&	0	&	0	&	0	&	0	&	0	&	-2	&	0	&	1	&	0	&	-2	&	1	&	0	&	0	&	1	&	0	&	0	&	0	&	0	&	0	&	\textbf{30}	&	0	\\
			3	&	1	&	0	&	\textbf{30}	&	0	&	0	&	0	&	-1	&	0	&	0	&	0	&	2	&	0	&	0	&	1	&	0	&	0	&	0	&	0	&	1	&	0	&	0	&	1	\\
			4	&	7	&	0	&	0	&	2	&	1	&	9	&	\textbf{22}	&	0	&	2	&	0	&	9	&	8	&	-2	&	1	&	4	&	1	&	6	&	2	&	7	&	2	&	1	&	6	\\
			\bottomrule[2pt]
	\end{tabular}}
\end{table}

\begin{table}[!h]
\caption{ Estimated loading matrix $\widehat\Ab_2^\top$ for the import
fiber. }
\label{tab:9}\renewcommand{\arraystretch}{0.5}
\resizebox{\textwidth}{!}{
		\centering
		\begin{tabular}{c|cccccccccccccccccccccc}
			\toprule[2pt]
		Factor	&	BE	&	BU	&	CA	&	DK	&	FI	&	FR	&	DE	&	GR	&	HU	&	IS	&	IR	&	IT	&	MX	&	NO	&	PO	&	PT	&	ES	&	SE	&	CH	&	ER	&	US	&	GB	\\
			\midrule
			1	&	0	&	0	&	0	&	0	&	0	&	0	&	0	&	0	&	0	&	0	&	0	&	0	&	0	&	0	&	0	&	0	&	0	&	0	&	0	&	0	&	\textbf{30}	&	0	\\
			2	&	2	&	0	&	\textbf{28}	&	0	&	0	&	3	&	7	&	0	&	0	&	0	&	4	&	1	&	-2	&	1	&	1	&	0	&	1	&	0	&	5	&	1	&	0	&	1	\\
			3	&	7	&	1	&	-7	&	2	&	1	&	14	&	10	&	1	&	4	&	0	&	1	&	11	&	1	&	1	&	6	&	2	&	7	&	4	&	6	&	3	&	0	&	\textbf{14}	\\
			4	&	-2	&	0	&	3	&	0	&	0	&	-1	&	1	&	0	&	0	&	0	&	-2	&	-2	&	\textbf{29}	&	0	&	-1	&	0	&	0	&	0	&	-5	&	0	&	0	&	4	\\
			\bottomrule[2pt]
	\end{tabular}}
\end{table}

\begin{figure}[htbp]
\subfigure[]{
		\includegraphics[width=8cm,height=7cm]{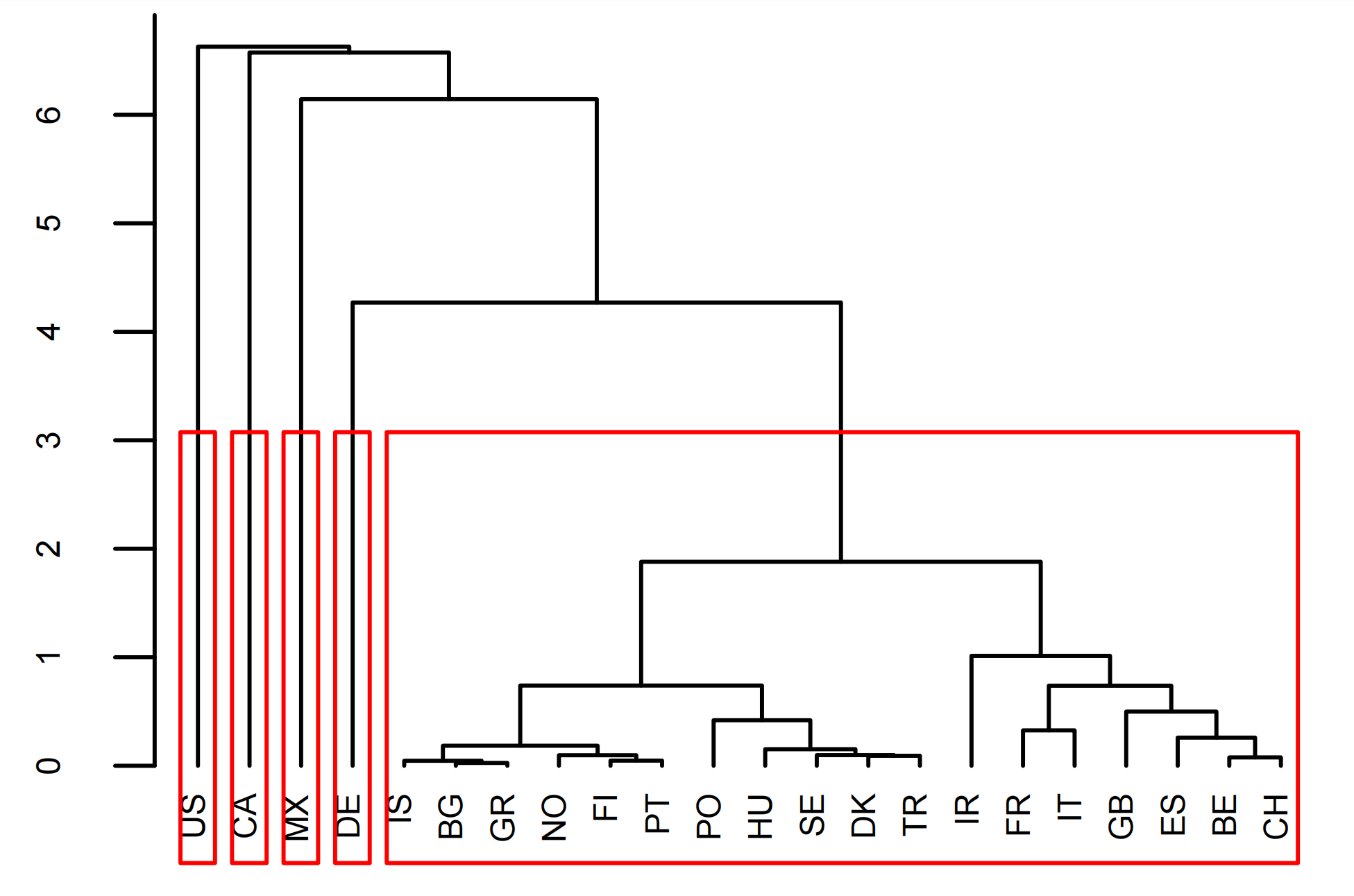} \label{fig:4(a)}
	} \hspace{2mm}
\subfigure[]{
		\includegraphics[width=8cm,height=7cm]{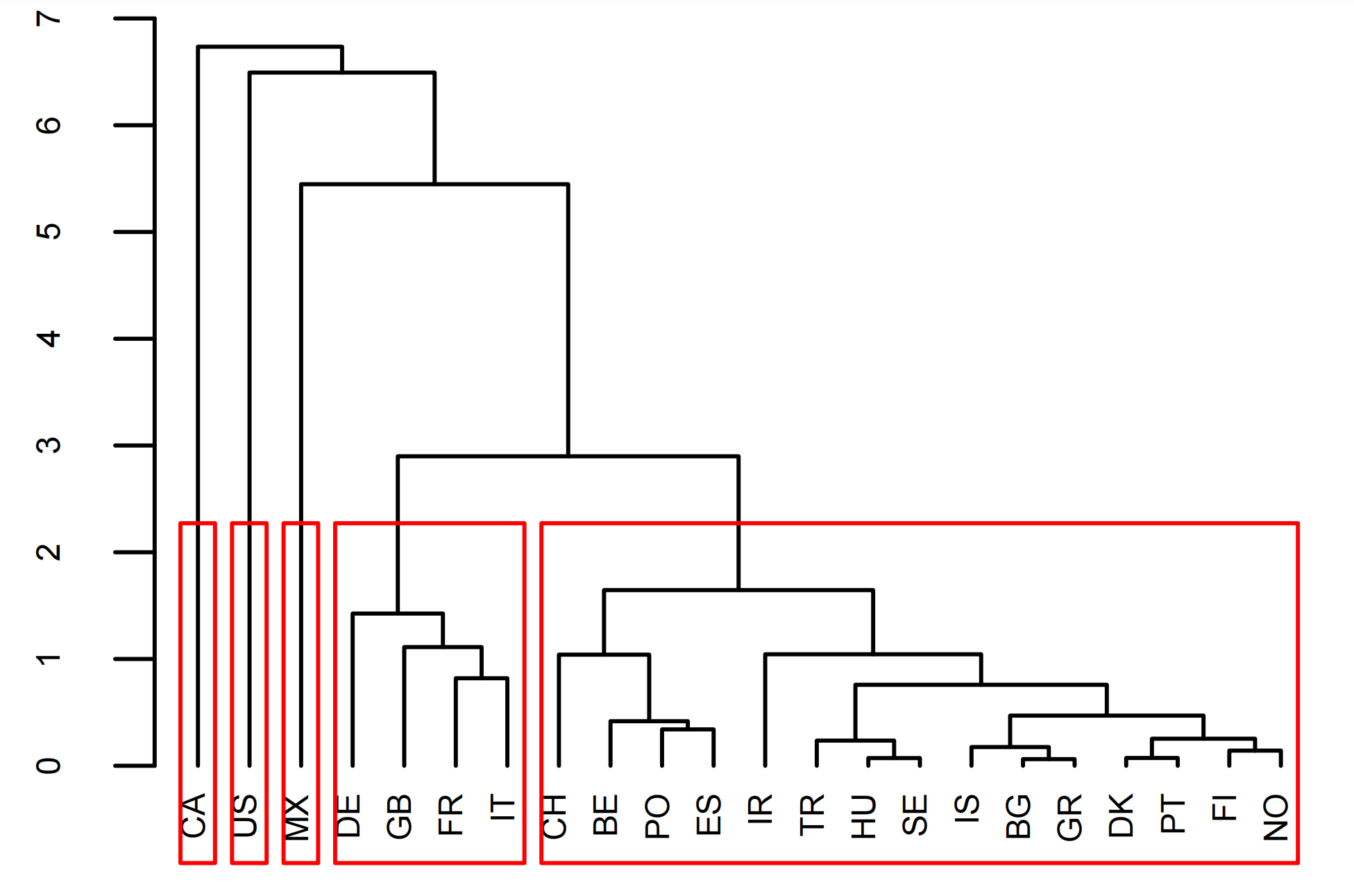}
		\label{fig:4(b)}
	}
\caption{Clustering of countries by their export (a) and import (b) loading
coefficients.}
\end{figure}

The loading matrices are also rotated via varimax procedure.%
Mexico, the United States of America and Canada heavily load on virtual
export hubs E1, E2 and E3, respectively. European countries mainly load on
export hub E4, and Germany occupies an important position. The United States
of America, Mexico and Canda heavily load on virtual import hubs I1, I2 and
I4. European countries mainly load on import hub I3, led by France and
Britain. Figures \ref{fig:4(a)} and \ref{fig:4(b)} show the clustering of
loadings for $\widehat\Ab_1$ and $\widehat\Ab_2$ respectively.

For exporting activities, the three countries in the America are very
different from the European countries, and Germany is divided into a
separate group and other European countries into another. For importing
activities, the three North American countries still differ markedly from
Europe. Germany, the United Kingdom, France and Italy are grouped into one
cluster, and the other countries are grouped into another.

\begin{figure}[htbp]
\centering
\includegraphics[height=5.5cm,width=16.5cm]{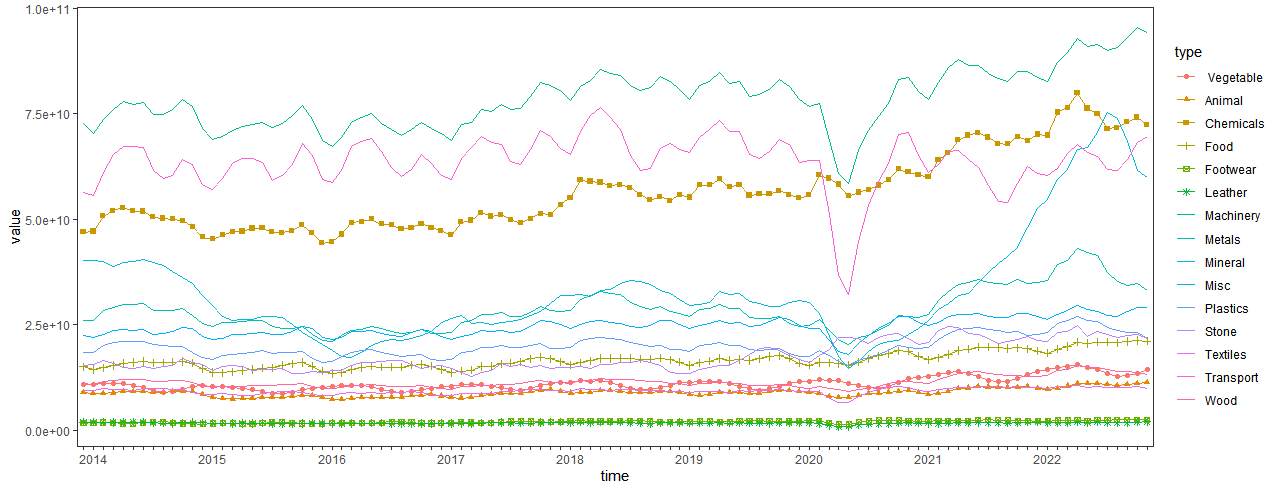}
\caption{Total trade volume of each commodity between 22 countries at each
moment.}
\label{fig:5}
\end{figure}
\begin{figure}[htbp]
\centering
\includegraphics[height=5.5cm,width=16.5cm]{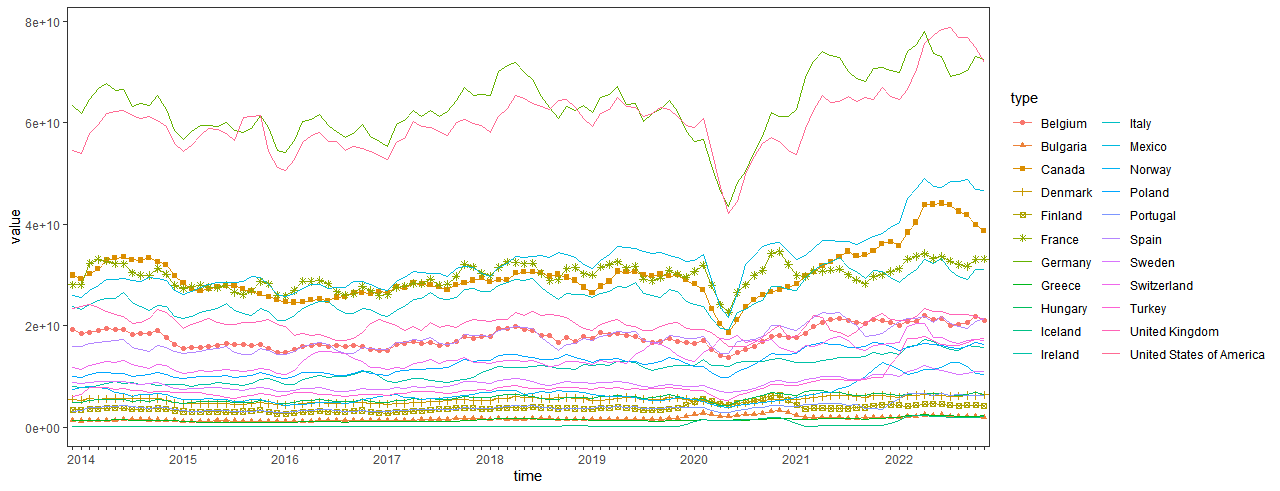}
\caption{Each country's total export trade with 21 other countries for 15
commodities at each moment.}
\label{fig:6}
\end{figure}
\begin{figure}[htbp]
\centering
\includegraphics[height=5.5cm,width=16.5cm]{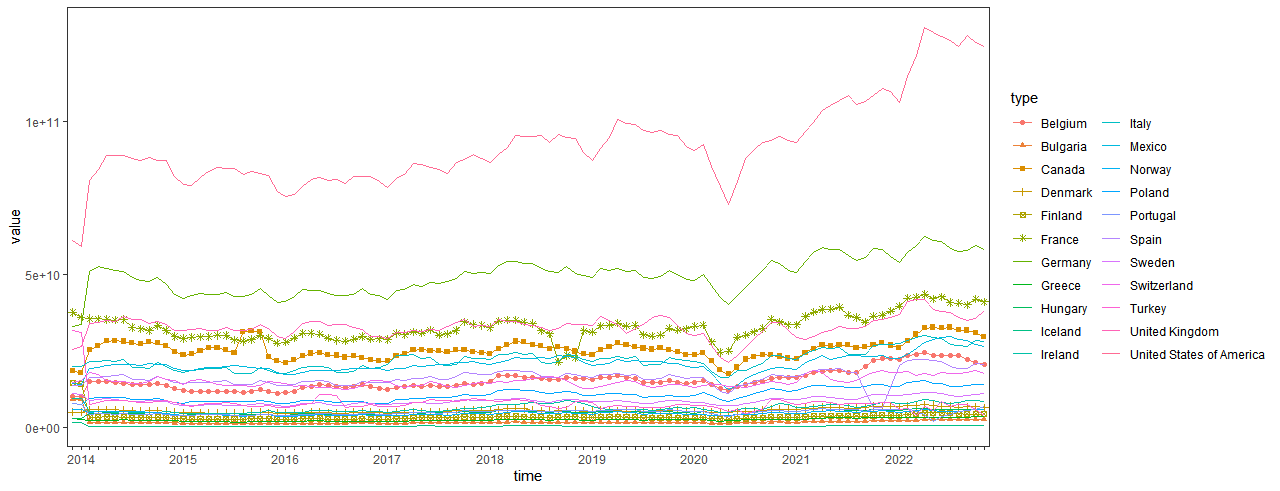}
\caption{Each country's total import trade with 21 other countries for 15
commodities at each moment.}
\label{fig:7}
\end{figure}

Figure \ref{fig:5} shows the total volume of trade between the 22 countries
for each commodity at each time. We can see that the product categories with
the largest volume of import-export are Machinery \& Electrical,
Transportation and Chemicals \& Allied Industries. Each country's total
export trade volumes and import trade volumes with 21 other countries for 15
commodities at each moment are shown in Figures \ref{fig:6} and \ref{fig:7}.
The largest importers are Germany, the United States of America, followed by
Mexico, Canada, France and Italy. The largest exporters are the United States
of America, Germany, followed by France and the United Kingdom. All three
charts show larger fluctuations in 2020, which is understandable because of
COVID-19.
And around 2021, the potential factor model of the data may
change. We will analyze this change by dividing the data into two groups for
2014-2019 and 2020-2022, written as period $\cI$ and $\cI\cI$. Although period $\cI\cI$ is relatively short we recall that when estimating the
loadings the effective sample size is $Tp_{-k}$ which is instead quite large.

\begin{table}[!h]
\caption{ Estimated loadings matrix $\widehat\Ab_3^\top$ for the commodity
mode in periods $\cI$ and $\cI\cI$.}
\label{tab:10}\renewcommand{\arraystretch}{0.5}
\resizebox{\textwidth}{!}{
		\centering
		\begin{tabular}{c|c|ccccccccccccccc}
			\toprule[2pt]
			Period  &      Factor    &	Animal  	&	Vegetable	&	Food	&	Mineral	&	Chemicals	&	Plastics	&	Leather	&	Wood	&	Textiles	&	Footwear	&	Stone	&	Metals	&	Machinery	&	Transport	&	Misc	\\
			\midrule
			$\cI$	&	1	&	0	&	2	&	-1	&	0	&	-1	&	1	&	0	&	-2	&	1	&	0	&	0	&	0	&	\textbf{29}	&	1	&	6	\\
			&	2	&	0	&	1	&	-1	&	\textbf{30}	&	1	&	-1	&	0	&	3	&	-1	&	0	&	1	&	1	&	0	&	0	&	1	\\
			&	3	&	0	&	-3	&	1	&	-1	&	\textbf{29}	&	2	&	0	&	1	&	0	&	0	&	0	&	2	&	0	&	0	&	7	\\
			&	4	&	0	&	0	&	4	&	0	&	0	&	-2	&	0	&	2	&	0	&	0	&	-1	&	1	&	0	&	\textbf{29}	&	2	\\
			&	5	&	6	&	6	&	6	&	0	&	0	&	\textbf{20}	&	0	&	6	&	5	&	1	&	1	&	16	&	1	&	0	&	-9	\\
			&	6	&	2	&	4	&	1	&	-1	&	0	&	-4	&	0	&	3	&	1	&	1	&	\textbf{29}	&	0	&	0	&	0	&	1	\\
			\midrule
			$\cI\cI$	&	1	&	-1	&	3	&	0	&	0	&	1	&	2	&	0	&	6	&	0	&	0	&	0	&	-1	&	\textbf{29}	&	2	&	5	\\
			&	2	&	0	&	1	&	0	&	\textbf{30}	&	2	&	-1	&	0	&	2	&	-1	&	0	&	0	&	1	&	0	&	-1	&	1	\\
			&	3	&	0	&	-2	&	1	&	-3	&	\textbf{29}	&	1	&	1	&	1	&	1	&	0	&	0	&	1	&	-1	&	3	&	6	\\
			&	4	&	6	&	6	&	6	&	0	&	0	&	\textbf{19}	&	0	&	7	&	4	&	0	&	0	&	17	&	2	&	-1	&	-8	\\
			&	5	&	1	&	0	&	4	&	0	&	-4	&	-4	&	0	&	7	&	0	&	0	&	0	&	2	&	0	&	\textbf{28}	&	3	\\
			&	6	&	1	&	-1	&	1	&	0	&	0	&	-1	&	0	&	5	&	1	&	0	&	\textbf{29}	&	0	&	1	&	-2	&	2	\\
			\bottomrule[2pt]
	\end{tabular}}
\end{table}

\begin{table}[!h]
\caption{ Estimated loadings matrix $\widehat\Ab_1^\top$ for the export mode
in periods $\cI$ and $\cI\cI$.}
\label{tab:11}\renewcommand{\arraystretch}{0.5}
\resizebox{\textwidth}{!}{
		\centering
		\begin{tabular}{c|c|cccccccccccccccccccccc}
			\toprule[2pt]
			Period  &   Factor   &	BE	&	BU	&	CA	&	DK	&	FI	&	FR	&	DE	&	GR	&	HU	&	IS	&	IR	&	IT	&	MX	&	NO	&	PO	&	PT	&	ES	&	SE	&	CH	&	ER	&	US	&	GB	\\
			\midrule
			$\cI$	&	1	&	-2	&	0	&	1	&	0	&	0	&	0	&	3	&	0	&	0	&	0	&	-2	&	1	&	\textbf{30}	&	0	&	0	&	0	&	-1	&	0	&	-1	&	0	&	0	&	1	\\
			&	2	&	0	&	0	&	-1	&	0	&	0	&	0	&	-2	&	0	&	1	&	0	&	-1	&	1	&	0	&	0	&	1	&	0	&	0	&	0	&	0	&	0	&	\textbf{30}	&	0	\\
			&	3	&	1	&	0	&	\textbf{30}	&	0	&	0	&	0	&	0	&	0	&	0	&	0	&	2	&	0	&	-1	&	1	&	0	&	0	&	0	&	0	&	1	&	0	&	1	&	1	\\
			&	4	&	6	&	0	&	-1	&	2	&	1	&	9	&	\textbf{23}	&	0	&	2	&	0	&	7	&	7	&	-1	&	1	&	4	&	1	&	6	&	2	&	6	&	2	&	1	&	6	\\
			\midrule
			$\cI\cI$	&	-1	&	-1	&	0	&	0	&	0	&	0	&	0	&	3	&	0	&	0	&	0	&	-3	&	1	&	\textbf{30}	&	0	&	0	&	0	&	0	&	0	&	-1	&	0	&	0	&	1	\\
			&	2	&	0	&	0	&	0	&	0	&	0	&	1	&	-1	&	0	&	1	&	0	&	-2	&	1	&	-1	&	1	&	1	&	0	&	1	&	0	&	0	&	0	&	\textbf{30}	&	1	\\
			&	-3	&	0	&	0	&	\textbf{30}	&	0	&	0	&	0	&	-1	&	0	&	0	&	0	&	1	&	0	&	0	&	0	&	-1	&	0	&	0	&	0	&	0	&	0	&	0	&	1	\\
			&	-4	&	7	&	0	&	0	&	2	&	1	&	8	&	\textbf{19}	&	0	&	2	&	0	&	13	&	8	&	-1	&	1	&	4	&	1	&	5	&	2	&	9	&	2	&	1	&	5	\\
			\bottomrule[2pt]
	\end{tabular}}
\end{table}

\begin{table}[!h]
\caption{ Estimated loadings matrix $\widehat\Ab_2^\top$ for the import mode
in periods $\cI$ and $\cI\cI$.}
\label{tab:12}\renewcommand{\arraystretch}{0.5}
\resizebox{\textwidth}{!}{
		\centering
		\begin{tabular}{c|c|cccccccccccccccccccccc}
			\toprule[2pt]
			Period  &   Factor   &	BE	&	BU	&	CA	&	DK	&	FI	&	FR	&	DE	&	GR	&	HU	&	IS	&	IR	&	IT	&	MX	&	NO	&	PO	&	PT	&	ES	&	SE	&	CH	&	ER	&	US	&	GB	\\
			\midrule
			$\cI$	&	1	&	0	&	0	&	0	&	0	&	0	&	0	&	0	&	0	&	0	&	0	&	0	&	0	&	0	&	0	&	0	&	0	&	0	&	0	&	0	&	0	&	\textbf{30}	&	0	\\
			&	2	&	-1	&	0	&	3	&	0	&	0	&	1	&	3	&	0	&	0	&	0	&	-3	&	-1	&	\textbf{29}	&	0	&	-1	&	0	&	-1	&	0	&	-4	&	0	&	0	&	4	\\
			&	3	&	7	&	1	&	-5	&	2	&	1	&	\textbf{15}	&	10	&	1	&	4	&	0	&	1	&	10	&	0	&	1	&	6	&	2	&	8	&	4	&	7	&	4	&	0	&	\textbf{15}	\\
			&	4	&	1	&	0	&	\textbf{29}	&	0	&	0	&	2	&	5	&	0	&	-1	&	0	&	5	&	1	&	-2	&	1	&	0	&	0	&	1	&	0	&	3	&	1	&	0	&	1	\\
			\midrule
			$\cI\cI$	&	1	&	0	&	0	&	1	&	0	&	0	&	0	&	0	&	0	&	0	&	0	&	0	&	0	&	0	&	0	&	0	&	0	&	0	&	0	&	0	&	0	&	\textbf{30}	&	0	\\
			&	2	&	-1	&	0	&	\textbf{21}	&	0	&	0	&	0	&	6	&	0	&	-1	&	0	&	2	&	-1	&	20	&	0	&	-1	&	0	&	1	&	0	&	-1	&	0	&	0	&	0	\\
			&	3	&	11	&	1	&	-1	&	3	&	1	&	\textbf{16}	&	14	&	1	&	4	&	0	&	2	&	12	&	-2	&	1	&	7	&	2	&	7	&	4	&	8	&	3	&	0	&	1	\\
			&	4	&	-2	&	0	&	-1	&	0	&	0	&	-1	&	-1	&	0	&	0	&	0	&	-1	&	-1	&	2	&	0	&	0	&	0	&	0	&	0	&	2	&	0	&	0	&	\textbf{30}	\\
			\bottomrule[2pt]
	\end{tabular}}
\end{table}

\begin{figure}[htbp]
\label{fig:8}
\subfigure[]{
		\includegraphics[width=8cm,height=7cm]{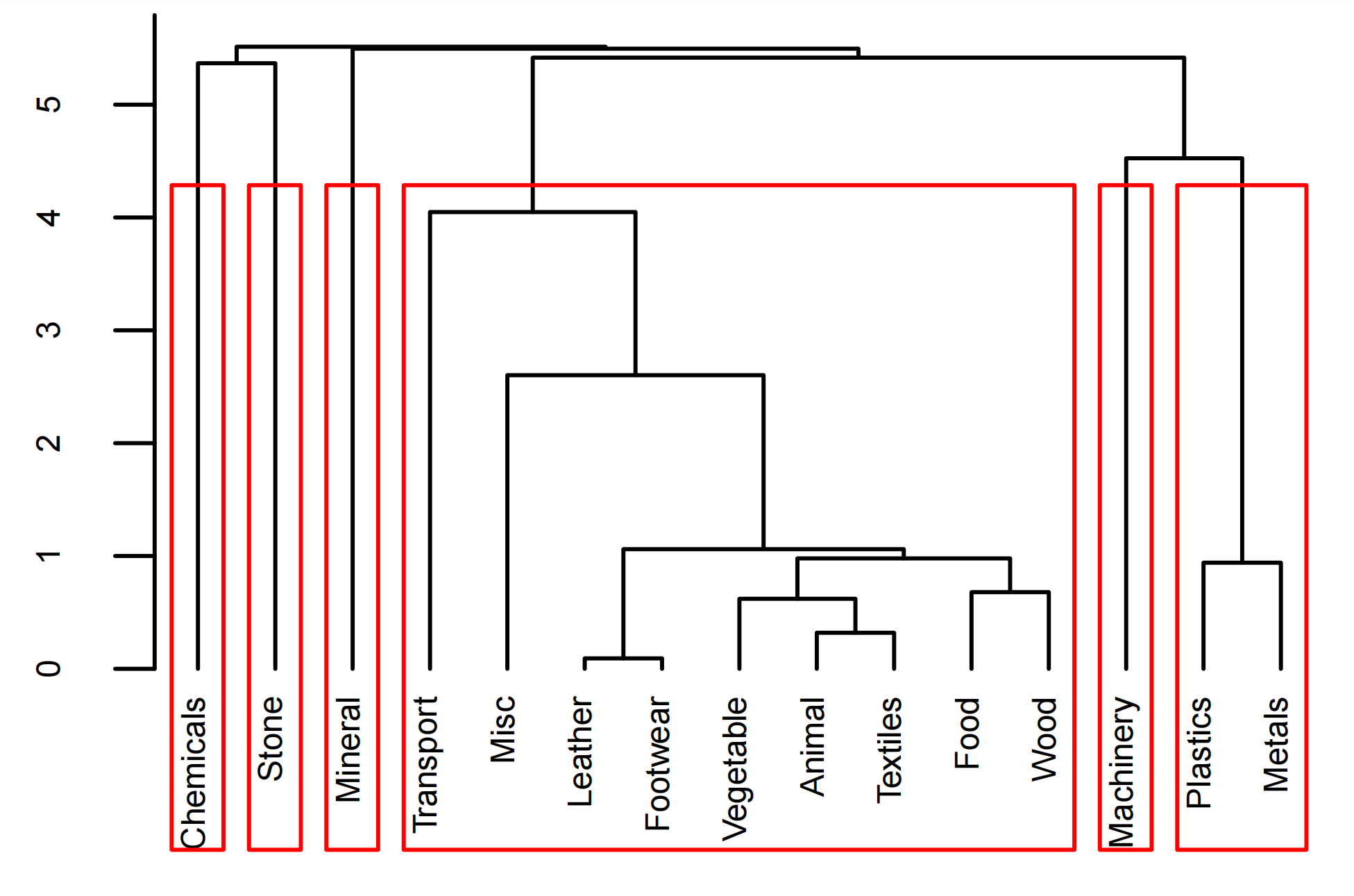} \label{fig:8(a)}
	} \hspace{2mm}
\subfigure[]{
		\includegraphics[width=8cm,height=7cm]{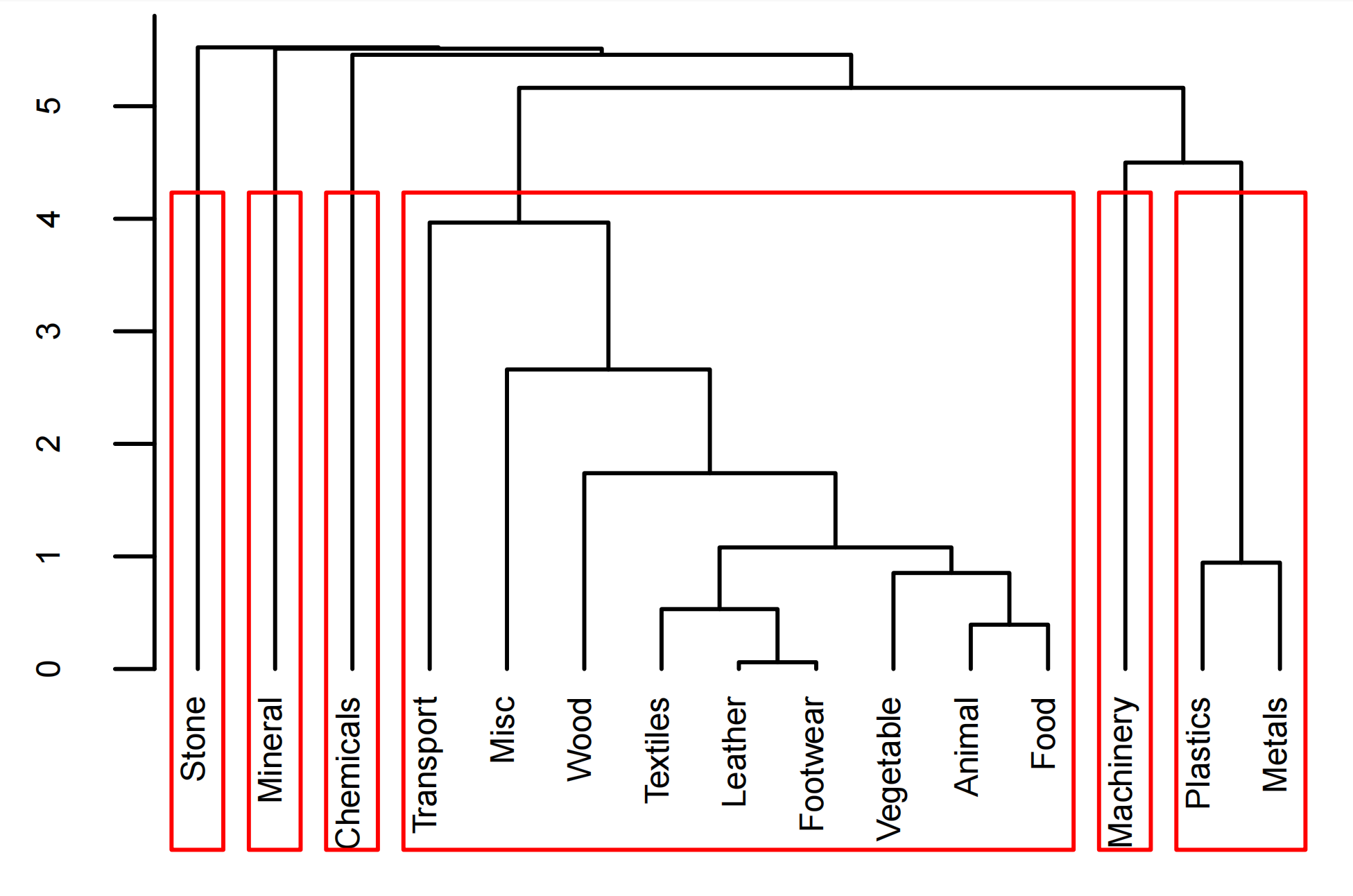}
		\label{fig:8(b)}
	}
\caption{Clustering of product categories by their loading coefficients in
periods $\cI$ (a) and $\cI\cI$ (b).}
\end{figure}

\begin{figure}[htbp]
\label{fig:9}
\subfigure[]{
		\includegraphics[width=8cm,height=7cm]{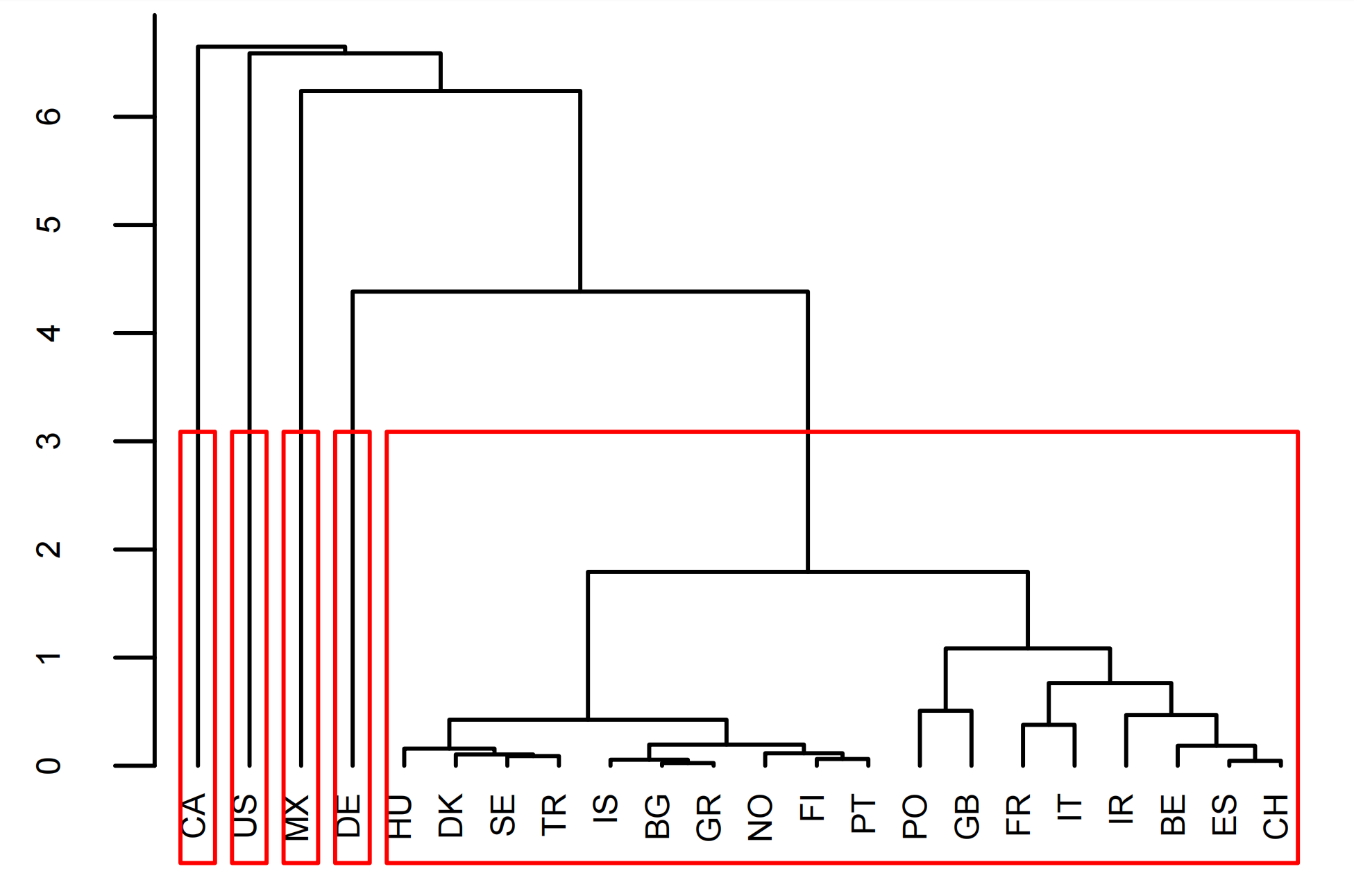} \label{fig:9(a)}
	} \hspace{2mm}
\subfigure[]{
		\includegraphics[width=8cm,height=7cm]{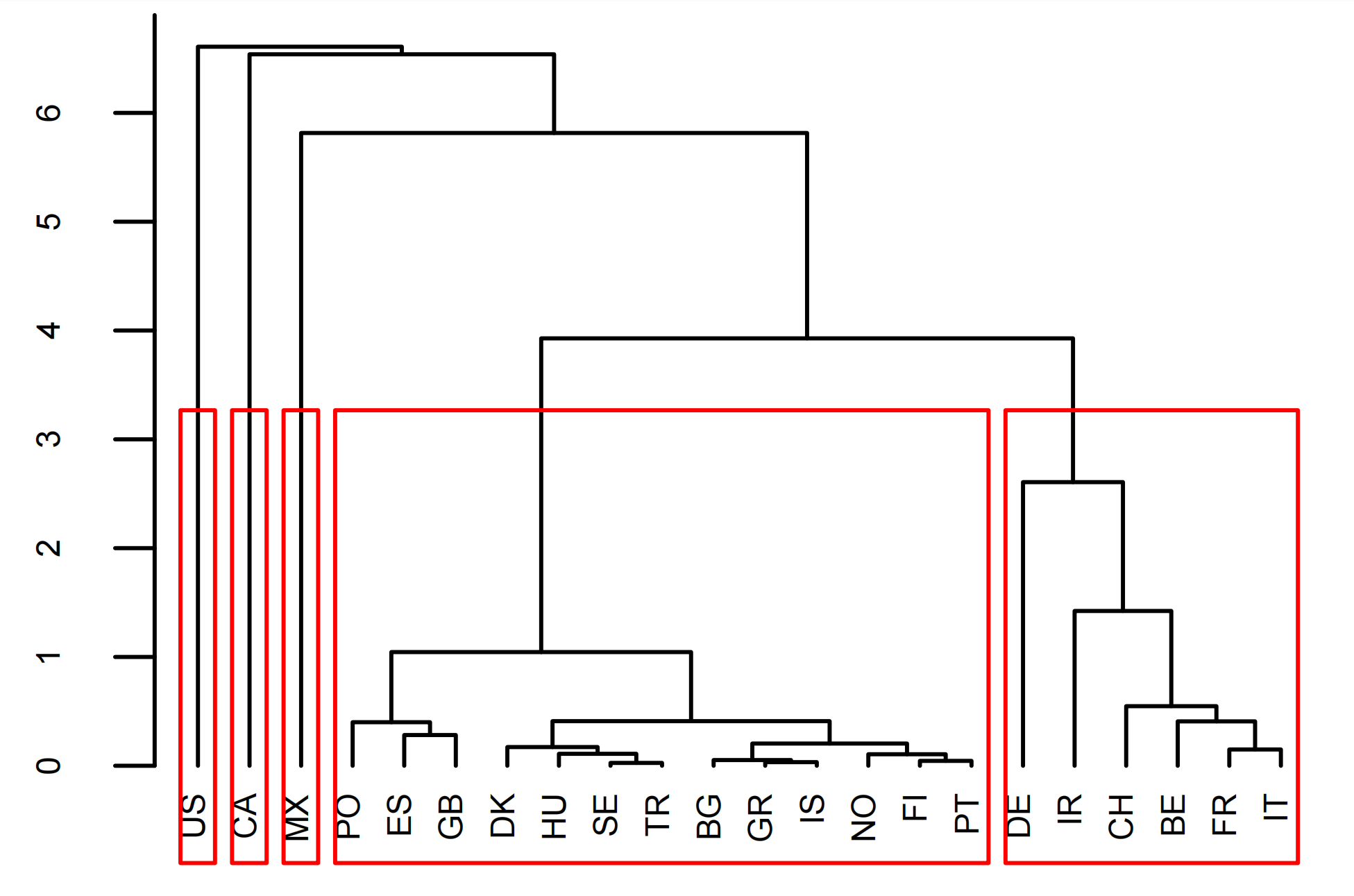}
		\label{fig:9(b)}
	}
\caption{Clustering of countries by their export loading coefficients in
periods $\cI$ (a) and $\cI\cI$ (b).}
\end{figure}
\begin{figure}[htbp]
\label{fig:10}
\subfigure[]{
		\includegraphics[width=8cm,height=7cm]{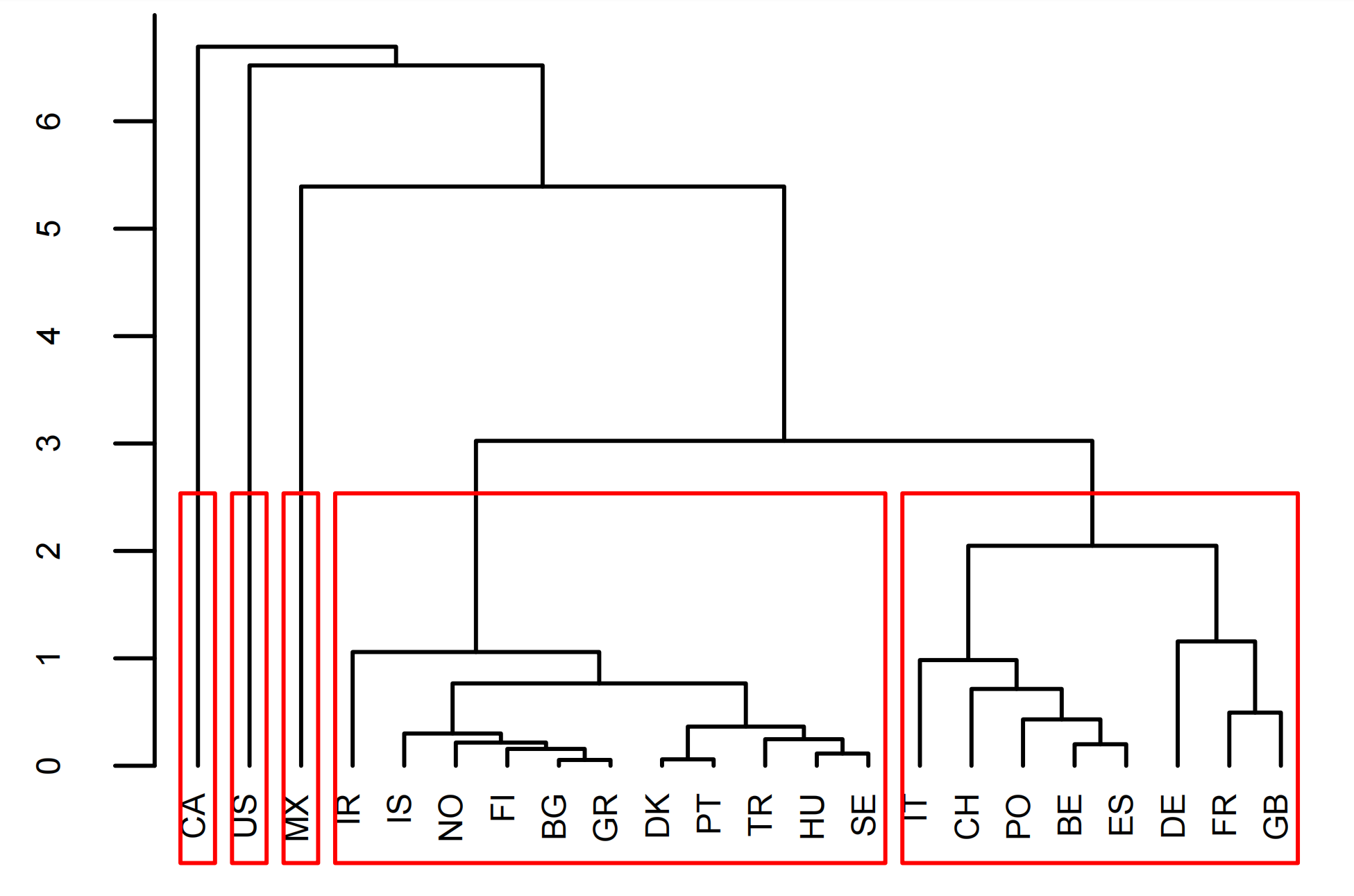} \label{fig:10(a)}
	} \hspace{2mm}
\subfigure[]{
		\includegraphics[width=8cm,height=7cm]{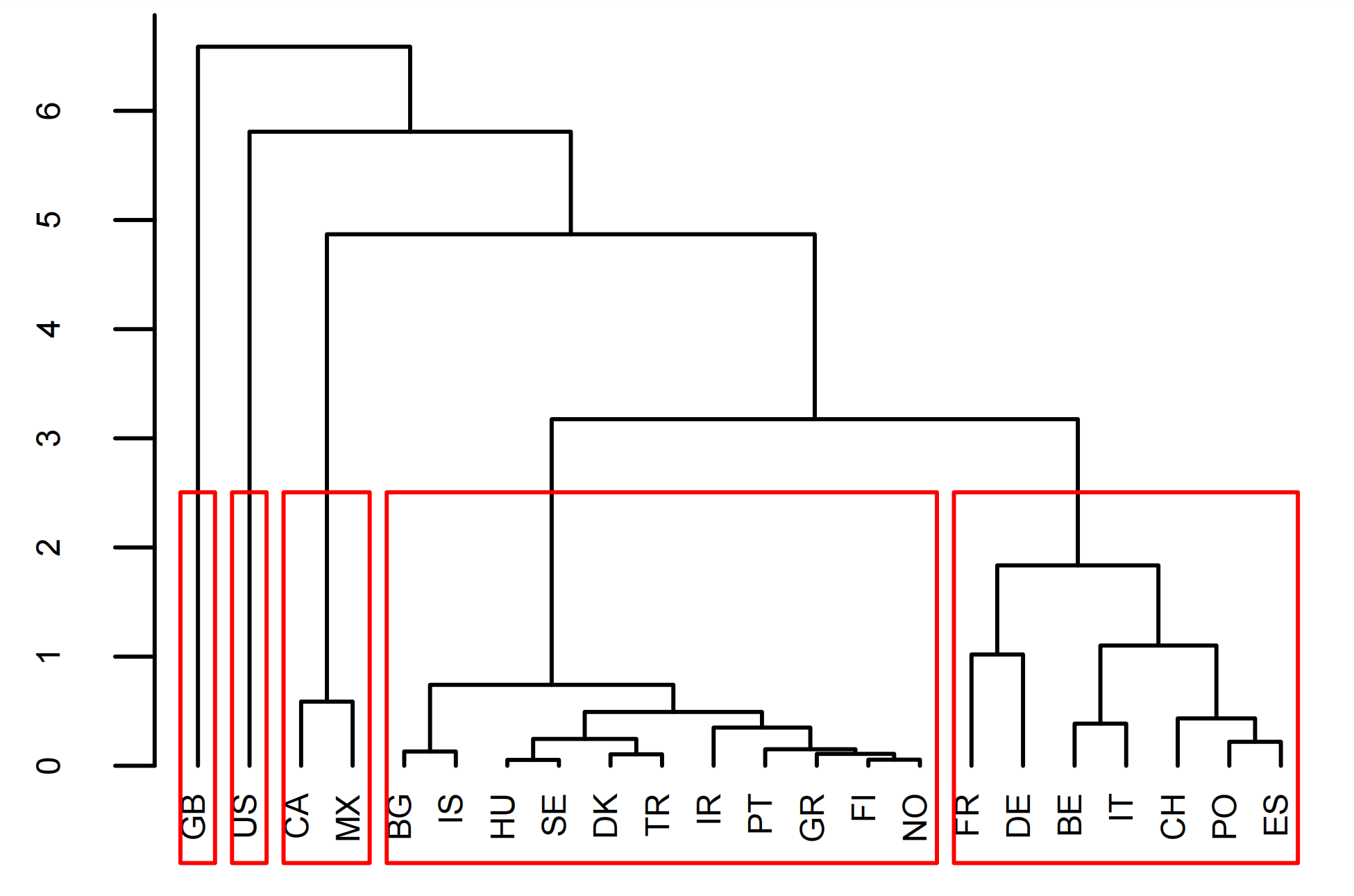}
		\label{fig:10(b)}
	}
\caption{Clustering of countries by their import loading coefficients in
periods $\cI$ (a) and $\cI\cI$ (b).}
\end{figure}

Tables \ref{tab:10}, \ref{tab:11} and \ref{tab:12} show the estimated
loading matrices $\widehat\Ab_3$, $\widehat\Ab_1$, and $\widehat\Ab_2$,
respectively, of import-export data for two periods, and their corresponding
clustering results is shown in Figures \ref{fig:8(a)} and \ref{fig:8(b)}, %
\ref{fig:9(a)} and \ref{fig:9(b)}, and \ref{fig:10(a)} and \ref{fig:10(b)},
respectively.

For product categories and exporting countries, the estimation of loading
matrices in the two periods are very similar. The clustering results also
illustrate this point. It is suggested that there are some difference in
estimation of $\Ab_2$. In period $\cI$, the United States of America, Mexico
and Canada heavily load on virtual import hubs I1, I2 and I3, European
countries mainly load on import hub I3, dominated by France and the United
Kingdom. In period $\cI\cI$, the United States of America and the United
Kingdom heavily load on virtual import hubs I1 and I4, American countries
Canada and Mexico mainly load on import hub I2 while hub I3 is mainly loaded by
European countries France and Germany.

\section{Conclusion\label{conclusion}}

In this paper, we study inference in the context of a factor model for
tensor-valued time series, from the perspective of both Least Squares and
Huber Loss. As far as the former is concerned, we investigate the
consistency of the estimated common factors and loadings space when using
estimators based on minimising \textit{quadratic} loss functions. Building
on the observation that such loss functions are adequate only if
sufficiently many moments exist, we extend our results to the case of
heavy-tailed distributions by considering estimators based on minimising the
\textit{Huber} loss function, which uses an $L_{1}$-norm weight on outliers.
We show that such class of estimators is robust to the presence of heavy
tails, even when only the second moment of the data exists. We also propose
a modified version of the eigenvalue-ratio principle to estimate the
dimensions of the factors tensor, and show the consistency of such
estimators without any condition on the relative rates of divergence of the
dimensions $p_{1}$, ..., $p_{K}$ and $T$. Extensive numerical results show
the proposed methods performs better than the state-of-the-art ones, and
that our proposed methodology is particularly effective when data exhibit
heavy tails. In the paper, we also show that the iterative version of our
estimators performs very well; deriving theoretical guarantees for the
estimators in the solution path of the iterative algorithm is a very
interesting, and challenging, topic, which is currently under investigation
by the authors.

\bibliographystyle{plain}
\bibliography{Ref}

\newpage

\setcounter{footnote}{0} \clearpage
\appendix

\section{Preliminary lemmas\label{lemmas}}

Henceforth, we use the following notation. Recall that, by (\ref{elle})
\begin{equation*}
L=\min \left\{ p,Tp_{-1},...,Tp_{-K}\right\} .
\end{equation*}%
Further, let $~r=r_{1}\cdots r_{k}$; we define
\begin{equation}
M=\sum_{k=1}^{K}p_{k}r_{k}+Tr,  \label{def-m}
\end{equation}%
as the total number of parameters to be estimated. We define the function
\begin{equation}
\mathbb{M}(\theta )=\frac{1}{Tp}\sum_{t=1}^{T}\sum_{i_{1}=1}^{p_{1}}\cdots
\sum_{i_{K}=1}^{p_{K}}\left( x_{t,i_{1},\cdots ,i_{K}}-\mathcal{F}_{t}\times
_{k=1}^{K}\boldsymbol{a}_{k,i_{k}}^{\top }\right) ^{2}=\frac{1}{Tp}%
\sum_{t=1}^{T}\left\Vert \mathcal{X}_{t}-\mathcal{F}_{t}\times _{k=1}^{K}%
\mathbf{A}_{k}\right\Vert _{F}^{2};  \label{m-theta}
\end{equation}%
it can be readily seen that minimizing $\mathbb{M}(\theta )$ is equivalent
to minimizing the Least Squares loss $L_{1}\left( \mathbf{A}_{1},\cdots ,%
\mathbf{A}_{K},\mathcal{F}_{t}\right) $, whence $\widehat{\theta }$ is
(alternatively) defined as
\begin{equation*}
\widehat{\theta }=\left( \widehat{\mathbf{A}}_{1},\cdots ,\widehat{\mathbf{A}%
}_{K},\widehat{\mathcal{F}}_{1},\cdots ,\widehat{\mathcal{F}}_{T},\right) =%
\underset{\theta \in \Theta }{\arg \min }\mathbb{M}(\theta ).
\end{equation*}%
In addition to $\mathbb{M}(\theta )$, we introduce the quantities%
\begin{eqnarray}
\mathbb{M}^{\ast }\left( \theta \right) &=&\frac{1}{Tp}\sum_{t=1}^{T}%
\sum_{i_{1}=1}^{p_{1}}...\sum_{i_{K}=1}^{p_{K}}w\left( \boldsymbol{a}%
_{1,i_{1}},,...,\boldsymbol{a}_{K,i_{K}},\mathcal{F}_{t}\right) ,
\label{m-star} \\
\overline{\mathbb{M}}^{\ast }\left( \theta \right) &=&\mathbb{E}\left(
\mathbb{M}^{\ast }\left( \theta \right) \right) =\frac{1}{Tp}%
\sum_{t=1}^{T}\sum_{i_{1}=1}^{p_{1}}...\sum_{i_{K}=1}^{p_{K}}\mathbb{E}\left[
w\left( \boldsymbol{a}_{1,i_{1}},,...,\boldsymbol{a}_{K,i_{K}},\mathcal{F}%
_{t}\right) \right] ,  \label{m-bar} \\
\mathbb{W}\left( \theta \right) &=&\mathbb{M}^{\ast }\left( \theta \right) -%
\overline{\mathbb{M}}^{\ast }\left( \theta \right) ,  \label{w}
\end{eqnarray}%
where we have defined%
\begin{equation}
w\left( \boldsymbol{a}_{1,i_{1}},\cdots ,\mathbf{b}_{K,i_{K}},\mathcal{F}%
_{t}\right) =\left( x_{t,i_{1},\cdots ,i_{K}}-\mathcal{F}_{t}\times
_{k=1}^{K}\boldsymbol{a}_{k,i_{k}}^{\top }\right) ^{2}-\left(
x_{t,i_{1},\cdots ,i_{K}}-\mathcal{F}_{0t}\times _{k=1}^{K}\boldsymbol{a}%
_{0k,i_{k}}^{\top }\right) ^{2}.  \label{w-function}
\end{equation}

We will make extensive use of the semimetric%
\begin{eqnarray}
d\left( \theta _{a},\theta _{b}\right) &=&\sqrt{\frac{1}{Tp}%
\sum_{t=1}^{T}\sum_{i_{1}=1}^{p_{1}}...\sum_{i_{K}=1}^{p_{K}}\left( \mathcal{%
F}_{at}\times _{k=1}^{K}\boldsymbol{a}_{ak,i_{k}}^{\top }-\mathcal{F}%
_{bt}\times _{k=1}^{K}\boldsymbol{a}_{bk,i_{k}}^{\top }\right) ^{2}}
\label{d-metric} \\
&=&\sqrt{\frac{1}{Tp}\sum_{t=1}^{T}\left\Vert \mathcal{F}_{at}\times
_{k=1}^{K}\boldsymbol{A}_{ak}-\mathcal{F}_{bt}\times _{k=1}^{K}\boldsymbol{A}%
_{bk}\right\Vert _{F}^{2}.}  \notag
\end{eqnarray}

Further, we let $C(\cdot ,g,\mathcal{G})$ and $D(\cdot ,g,\mathcal{G})$ be
the covering number and the packing number, respectively, of space $\mathcal{%
G}$ equipped with the semimetric $g$. Throughout the proof, we denote
positive, finite constants as $c_{0}$, $c_{1}$, ... and their value may
change from line to line.

For a random variable $X$ and a non-decreasing, convex function $\psi $ such
that $\psi \left( 0\right) =0$, we define the Orlicz norm as $\Vert X\Vert
_{\psi }=\inf \{C>0:\mathbb{E}\psi (|X|/C)\leq 1\}$. In particular, based on
the function $y=\psi ^{2}\left( x\right) $, we use the notation%
\begin{equation*}
\Vert X\Vert _{\psi ^{2}}=\inf \{C>0:\mathbb{E}\exp \left( \frac{\left\vert
X\right\vert }{C}\right) ^{2}\leq 1\}.
\end{equation*}

\begin{lemma}\label{semi-1}{\it
We assume that Assumptions \ref{as-1}-\ref{as-3} hold. Then,
as $\min \left\{ p_{1},...,p_{K},T\right\} \rightarrow \infty $, it holds
that%
\begin{equation*}
d\left( \widehat{\theta },\theta _{0}\right) =o_{P}\left( 1\right) .
\end{equation*}}
\end{lemma}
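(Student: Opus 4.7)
My plan is a standard M-estimator consistency argument, combined with a maximal inequality on the centered loss process $\mathbb{W}(\theta)$. The starting point is to substitute the true model $x_{t,i_1,\ldots,i_K} = \mathcal{F}_{0t}\times_{k=1}^K \boldsymbol{a}_{0k,i_k}^\top + e_{t,i_1,\ldots,i_K}$ into the definition of $w$ in (\ref{w-function}), giving
$$w(\boldsymbol{a}_{1,i_1},\ldots,\mathcal{F}_t) = \Delta_{t,i_1,\ldots,i_K}^2(\theta) - 2 e_{t,i_1,\ldots,i_K}\, \Delta_{t,i_1,\ldots,i_K}(\theta),$$
where $\Delta_{t,i_1,\ldots,i_K}(\theta) := \mathcal{F}_t\times_{k=1}^K \boldsymbol{a}_{k,i_k}^\top - \mathcal{F}_{0t}\times_{k=1}^K \boldsymbol{a}_{0k,i_k}^\top$. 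Conditioning on $\{\mathcal{F}_{0t}\}$, Assumption \ref{as-3}(i) yields $\overline{\mathbb{M}}^{\ast}(\theta) = d^2(\theta, \theta_0)$ and hence
$$\mathbb{W}(\theta) = -\frac{2}{Tp}\sum_{t=1}^{T}\sum_{i_1,\ldots,i_K} e_{t,i_1,\ldots,i_K}\, \Delta_{t,i_1,\ldots,i_K}(\theta).$$
Since $\widehat{\theta}$ minimizes $\mathbb{M}(\cdot)$, it also minimizes $\mathbb{M}^{\ast}$, so $\mathbb{M}^{\ast}(\widehat{\theta}) \leq 0$ produces the basic inequality $d^2(\widehat{\theta},\theta_0) \leq -\mathbb{W}(\widehat{\theta}) \leq \sup_{\theta\in\Theta}|\mathbb{W}(\theta)|$.

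The core step is then to control $\sup_\theta |\mathbb{W}(\theta)|$ by a covering/chaining argument. For each fixed $\theta$, Assumption \ref{as-2} (independence of $e_{t,i_1,\ldots,i_K}$ across $t, i_1,\ldots,i_K$) combined with Assumption \ref{as-3}(ii) and a sub-Gaussian Hoeffding inequality give $\|\mathbb{W}(\theta)\|_{\psi^2} \lesssim \nu (Tp)^{-1/2}\, d(\theta,\theta_0)$. The effective parameter count is $M = \sum_{k=1}^K p_k r_k + Tr$, and the Stiefel-type constraint $A_k^\top A_k/p_k = I_{r_k}$ makes the loading portion of $\Theta$ compact; the factor portion can be reduced to a bounded ball by a preliminary argument showing that any $\theta$ with $T^{-1}\sum_t \|\mathcal{F}_t\|_F^2$ above a fixed constant violates the basic inequality with probability tending to one. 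On this bounded piece of $\Theta$, an $\epsilon$-cover in the semimetric $d$ has cardinality $\lesssim (C/\epsilon)^M$, and a Dudley-type chaining argument yields
$$\sup_{\theta \in \Theta}\frac{|\mathbb{W}(\theta)|}{d(\theta,\theta_0) \vee \eta_n} = O_P\!\left(\sqrt{M/(Tp)}\right)$$
up to logarithmic factors, for some vanishing sequence $\eta_n$.

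Next, observe that $M/(Tp) = \sum_k r_k/(T p_{-k}) + r/p \lesssim 1/L \to 0$ under the hypothesis $\min\{T,p_1,\ldots,p_K\}\to\infty$. Plugging $\widehat{\theta}$ into the basic inequality and combining with the maximal bound via a peeling argument over shells $\{\theta : 2^{j-1}\delta < d(\theta,\theta_0) \leq 2^j\delta\}$ then gives $d^2(\widehat{\theta},\theta_0) \leq o_P(1)\cdot (d(\widehat{\theta},\theta_0) \vee \eta_n)$, whence $d(\widehat{\theta},\theta_0) = o_P(1)$.

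The main obstacle is the uniform control of $\mathbb{W}(\theta)$ over a parameter space whose dimensionality $M$ grows with $T$ and $p_1,\ldots,p_K$: the metric entropy in the chaining integral must be balanced carefully against the sub-Gaussian concentration rate $1/\sqrt{Tp}$, and this is precisely what the choice of $L$ in (\ref{elle}) is designed to accommodate. A secondary subtlety is that the factor portion of $\Theta$ is not compact a priori, so a preliminary truncation step is needed; this is standard but requires care to ensure that the truncation does not exclude $\widehat{\theta}$ asymptotically.
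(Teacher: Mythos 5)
Your argument is correct and follows essentially the same route as the paper's proof: the identity $\overline{\mathbb{M}}^{\ast }(\theta )=d^{2}(\theta ,\theta _{0})$ obtained from the weak exogeneity condition, the basic inequality $d^{2}(\widehat{\theta },\theta _{0})\leq \sup_{\theta \in \Theta }|\mathbb{W}(\theta )|$, and a covering-number plus sub-Gaussian maximal-inequality bound of order $\sqrt{M/(Tp)}\lesssim L^{-1/2}$ with $M=\sum_{k}p_{k}r_{k}+Tr$. The only differences are cosmetic: the paper uses a single-scale discretization $\theta ^{\ast }$ together with a Lipschitz bound on $\mathbb{W}(\theta )-\mathbb{W}(\theta ^{\ast })$ rather than full chaining with peeling (which it reserves for the rate result in Theorem \ref{ls-theorem}), and it treats the sets $\mathcal{A}_{k}$ and $\mathcal{F}$ as bounded outright (choosing $C_{1}$ to dominate $\Vert \boldsymbol{a}_{k,i_{k}}\Vert _{2}$ and $\Vert \mathcal{F}_{t}\Vert _{F}$) instead of your preliminary truncation of the factor space.
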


\begin{proof}
We begin by noting that, for any $\mathbf{A}_{k}$, $1\leq k\leq K$, and for
any $\mathcal{F}_{t}\in \mathcal{F}$, it holds that, using Assumptions \ref%
{as-2} and \ref{as-3}%
\begin{eqnarray}
&&\mathbb{E}\left[ w\left( \boldsymbol{a}_{1,i_{1}},,...,\boldsymbol{a}%
_{K,i_{K}},\mathcal{F}_{t}\right) \right]   \label{exp-w} \\
&=&\mathbb{E}\left[ \left( x_{t,i_{1},...,i_{K}}-\mathcal{F}_{t}\times
_{k=1}^{K}\boldsymbol{a}_{k,i_{k}}^{\top }\right) ^{2}-\left(
x_{t,i_{1},...,i_{K}}-\mathcal{F}_{0t}\times _{k=1}^{K}\boldsymbol{a}%
_{0k,i_{k}}^{\top }\right) ^{2}\right]   \notag \\
&=&\mathbb{E}\left( \mathcal{F}_{t}\times _{k=1}^{K}\boldsymbol{a}%
_{k,i_{k}}^{\top }-\mathcal{F}_{0t}\times _{k=1}^{K}\boldsymbol{a}%
_{0k,i_{k}}^{\top }\right) \left( \mathcal{F}_{t}\times _{k=1}^{K}%
\boldsymbol{a}_{k,i_{k}}^{\top }-\mathcal{F}_{0t}\times _{k=1}^{K}%
\boldsymbol{a}_{0k,i_{k}}^{\top }+2e_{t,i_{1},...,i_{K}}\right)   \notag \\
&=&\left( \mathcal{F}_{t}\times _{k=1}^{K}\boldsymbol{a}_{k,i_{k}}^{\top }-%
\mathcal{F}_{0t}\times _{k=1}^{K}\boldsymbol{a}_{0k,i_{k}}^{\top }\right)
^{2}.  \notag
\end{eqnarray}%
Then, recalling (\ref{m-bar}), for any $\theta \in \Theta $ it follows that%
\begin{equation*}
\overline{\mathbb{M}}^{\ast }(\theta )=d^{2}\left( \theta ,\theta
_{0}\right) .
\end{equation*}%
By definition, it holds that
\begin{equation*}
\mathbb{M}^{\ast }(\widehat{\theta })=\mathbb{M}(\widehat{\theta })-\mathbb{M%
}\left( \theta _{0}\right) \leq 0,
\end{equation*}%
and therefore, using the definition of (\ref{w})
\begin{equation*}
\mathbb{W}(\widehat{\theta })+\overline{\mathbb{M}}^{\ast }(\widehat{\theta }%
)=\mathbb{M}^{\ast }(\widehat{\theta })\leq 0.
\end{equation*}%
Hence, it ultimately follows that
\begin{equation*}
0\leq d^{2}\left( \widehat{\theta },\theta _{0}\right) =\overline{\mathbb{M}}%
^{\ast }(\widehat{\theta })\leq \sup_{\theta \in \Theta }|\mathbb{W}(\theta
)|,
\end{equation*}%
which entails that it is sufficient to prove that
\begin{equation}
\sup_{\theta \in \Theta }|\mathbb{W}(\theta )|=o_{P}(1).  \label{eqW}
\end{equation}%
For any $\theta \in \Theta $, define $\theta ^{\ast }=\left( \mathbf{A}%
_{1}^{\ast },\cdots ,\mathbf{A}_{K}^{\ast },\mathcal{F}_{1}^{\ast },\cdots ,%
\mathcal{F}_{T}^{\ast }\right) $, where, for all $1\leq k\leq K$
\begin{equation}
\boldsymbol{a}_{k,i_{k}}^{\ast }=\left\{ \boldsymbol{a}_{k,(i)}:i\leq
J_{k},\left\Vert \boldsymbol{a}_{k,(i)}-\boldsymbol{a}_{k,i_{k}}\right\Vert
_{2}\leq \epsilon /C_{1}\right\} ,  \label{ball-1}
\end{equation}%
and%
\begin{equation}
\mathcal{F}_{t}^{\ast }=\left\{ \mathcal{F}_{(h)}:h\leq J_{H},\left\Vert
\mathcal{F}_{(h)}-\mathcal{F}_{t}\right\Vert _{F}\leq \epsilon
/C_{1}\right\} .  \label{ball-2}
\end{equation}%
Hence, using the fact that $\mathbb{W}(\theta )=\mathbb{W}\left( \theta
^{\ast }\right) +\mathbb{W}(\theta )-\mathbb{W}\left( \theta ^{\ast }\right)
$, (\ref{eqW}) follows if we show that%
\begin{eqnarray}
\sup_{\theta \in \Theta }|\mathbb{W}(\theta )-\mathbb{W}\left( \theta ^{\ast
}\right) | &=&o_{P}(1),  \label{eqWa} \\
\sup_{\theta \in \Theta }|\mathbb{W}(\theta ^{\ast })| &=&o_{P}(1).
\label{eqWb}
\end{eqnarray}%
We begin with (\ref{eqWa}). Choose $C_{1}$ large enough such that $%
\left\Vert \boldsymbol{a}_{0k,i_{k}}\right\Vert _{2}$, $\left\Vert \mathcal{F%
}_{0t}\right\Vert _{F}$, $\left\Vert \boldsymbol{a}_{k,i_{k}}\right\Vert _{2}
$, and $\left\Vert \mathcal{F}_{t}\right\Vert _{F}$ are all $\leq C_{1}$ for
all $i_{1},\cdots ,i_{K},t$ and $1\leq k\leq K$. Let $B_{i}(C_{1})$ denote a
Euclidean ball in $\mathbb{R}^{i}$ with radius $K$, for all $i=r_{1},\cdots
,r_{K}$; and let $B_{r_{1}\times r_{2}\times \cdots \times r_{K}}(C_{1})$
denote a Euclidean ball in $\mathbb{R}^{r_{1}\times \cdots \times r_{K}}$
with radius $C_{1}$. For any $\epsilon >0$, let $\mathbf{a}_{k,(1)},\cdots ,%
\mathbf{a}_{k,\left( J_{k}\right) }$ be the maximal set of points in $%
B_{r_{k}}(C_{1})$ such that $\left\Vert \mathbf{a}_{k,(i)}-\mathbf{a}%
_{k,(h)}\right\Vert _{2}>\epsilon /C_{1}$, $\forall i\neq h$; and let $%
\mathcal{F}_{(1)},\cdots ,\mathcal{F}_{\left( H\right) }$ be the maximal set
of points in $B_{r_{1}\times \cdots \times r_{K}}(C_{1})$ such that $%
\left\Vert \mathcal{F}_{(t)}-\mathcal{F}_{(h)}\right\Vert _{F}>\epsilon /C$,
for $\forall t\neq h$. Then the packing numbers of $B_{r_{k}}(C_{1})$, with $%
1\leq k\leq K$, and of $B_{r_{1}\times \cdots \times r_{K}}(C_{1})$ are $%
D_{k}(C_{1}/\epsilon )^{r_{k}}$, and $D_{F}(C_{1}/\epsilon )^{r_{1}\times
\cdots \times r_{K}}$, respectively - note that we denote the constants as $%
D_{k}$, $1\leq k\leq K$, and $D_{F}$ respectively. Note that%
\begin{eqnarray}
&&\mathbb{W}(\theta )-\mathbb{W}\left( \theta ^{\ast }\right)
\label{eqWa-dec} \\
&=&\frac{1}{Tp}\sum_{t=1}^{T}\sum_{i_{1}=1}^{p_{1}}...\sum_{i_{K}=1}^{p_{K}}%
\left[ w\left( \boldsymbol{a}_{1,i_{1}},,...,\boldsymbol{a}_{K,i_{K}},%
\mathcal{F}_{t}\right) -w\left( \boldsymbol{a}_{1,i_{1}}^{\ast },,...,%
\boldsymbol{a}_{K,i_{K}}^{\ast },\mathcal{F}_{t}^{\ast }\right) \right]
\notag \\
&&-\frac{1}{Tp}\sum_{t=1}^{T}\sum_{i_{1}=1}^{p_{1}}...\sum_{i_{K}=1}^{p_{K}}%
\left( \mathbb{E}\left[ w\left( \boldsymbol{a}_{1,i_{1}},,...,\boldsymbol{a}%
_{K,i_{K}},\mathcal{F}_{t}\right) \right] -\mathbb{E}\left[ w\left(
\boldsymbol{a}_{1,i_{1}}^{\ast },,...,\boldsymbol{a}_{K,i_{K}}^{\ast },%
\mathcal{F}_{t}^{\ast }\right) \right] \right) .  \notag
\end{eqnarray}%
By standard algebra, it holds that
\begin{eqnarray}
&&\mathbb{E}\left\vert w\left( \boldsymbol{a}_{1,i_{1}},,...,\boldsymbol{a}%
_{K,i_{K}},\mathcal{F}_{t}\right) -w\left( \boldsymbol{a}_{1,i_{1}}^{\ast
},,...,\boldsymbol{a}_{K,i_{K}}^{\ast },\mathcal{F}_{t}^{\ast }\right)
\right\vert   \label{bound-exp} \\
&=&2\mathbb{E}\left( \left\vert \mathcal{F}_{t}^{\ast }\times _{k=1}^{K}%
\boldsymbol{a}_{k,i_{k}}^{\ast \top }-\mathcal{F}_{t}\times _{k=1}^{K}%
\boldsymbol{a}_{k,i_{k}}^{\top }\right\vert \left\vert x_{t,i_{1},...,i_{K}}-%
\mathcal{F}_{0t}\times _{k=1}^{K}\boldsymbol{a}_{0k,i_{k}}^{\top
}\right\vert \right)   \notag \\
&=&2\mathbb{E}\left( \left\vert \mathcal{F}_{t}^{\ast }\times _{k=1}^{K}%
\boldsymbol{a}_{k,i_{k}}^{\ast \top }-\mathcal{F}_{t}\times _{k=1}^{K}%
\boldsymbol{a}_{k,i_{k}}^{\top }\right\vert \left\vert
e_{t,i_{1},...,i_{K}}\right\vert \right)   \notag \\
&\leq &c_{0}\left\vert \mathcal{F}_{t}^{\ast }\times _{k=1}^{K}\boldsymbol{a}%
_{k,i_{k}}^{\ast \top }-\mathcal{F}_{t}\times _{k=1}^{K}\boldsymbol{a}%
_{k,i_{k}}^{\top }\right\vert ,  \notag
\end{eqnarray}%
having used Assumptions \ref{as-2} and \ref{as-3}. Note now that%
\begin{eqnarray}
&&\left\vert \mathcal{F}_{t}^{\ast }\times _{k=1}^{K}\boldsymbol{a}%
_{k,i_{k}}^{\ast \top }-\mathcal{F}_{t}\times _{k=1}^{K}\boldsymbol{a}%
_{k,i_{k}}^{\top }\right\vert   \label{bound-factors} \\
&=&\left\vert \mathcal{F}_{t}\times _{k=1}^{K}\left( \boldsymbol{a}%
_{k,i_{k}}^{\top }-\boldsymbol{a}_{k,i_{k}}^{\ast \top }\right) +\left(
\mathcal{F}_{t}-\mathcal{F}_{t}^{\ast }\right) \times _{k=1}^{K}\boldsymbol{a%
}_{k,i_{k}}^{\ast \top }\right\vert   \notag \\
&=&\left\vert \sum_{l=1}^{K}\mathcal{F}_{t}\times _{k=1}^{l-1}\left(
\boldsymbol{a}_{k,i_{k}}^{\top }-\boldsymbol{a}_{k,i_{k}}^{\ast \top
}\right) \times _{l}\left( \boldsymbol{a}_{l,i_{l}}^{\top }-\boldsymbol{a}%
_{l,i_{l}}^{\ast \top }\right) \times _{k=l+1}^{K}\left( \boldsymbol{a}%
_{k,i_{k}}^{\top }-\boldsymbol{a}_{k,i_{k}}^{\ast \top }\right) +\left(
\mathcal{F}_{t}-\mathcal{F}_{t}^{\ast }\right) \times _{k=1}^{K}\boldsymbol{a%
}_{k,i_{k}}^{\ast \top }\right\vert   \notag \\
&\leq &\sum_{l=1}^{K}\left\Vert \mathcal{F}_{t}\right\Vert
_{F}\prod\limits_{k=1,k\neq l}^{K}\left\Vert \boldsymbol{a}%
_{k,i_{k}}\right\Vert _{2}\left\Vert \boldsymbol{a}_{l,i_{l}}-\boldsymbol{a}%
_{l,i_{l}}^{\ast }\right\Vert _{2}+\prod\limits_{k=1}^{K}\left\Vert
\boldsymbol{a}_{k,i_{k}}^{\ast }\right\Vert _{2}\left\Vert \mathcal{F}_{t}-%
\mathcal{F}_{t}^{\ast }\right\Vert _{F}  \notag \\
&\leq &\left( K+1\right) C_{1}^{K-1}\epsilon ,  \notag
\end{eqnarray}%
where we have used, repeatedly, (\ref{ball-1}) and (\ref{ball-2}), and the
fact that, by definition of $C_{1}$, $\left\Vert \mathcal{F}_{t}\right\Vert
_{F}\leq C_{1}$, $\left\Vert \boldsymbol{a}_{k,i_{k}}\right\Vert _{2}\leq
C_{1}$ and $\left\Vert \boldsymbol{a}_{k,i_{k}}^{\ast }\right\Vert _{2}\leq
C_{1}$. Considering now (\ref{eqWa-dec}), the first term on the right-hand
side is bounded by%
\begin{eqnarray*}
&&\frac{1}{Tp}\sum_{t=1}^{T}\sum_{i_{1}=1}^{p_{1}}...\sum_{i_{K}=1}^{p_{K}}%
\mathbb{E}\left\vert w\left( \boldsymbol{a}_{1,i_{1}},,...,\boldsymbol{a}%
_{K,i_{K}},\mathcal{F}_{t}\right) -w\left( \boldsymbol{a}_{1,i_{1}}^{\ast
},,...,\boldsymbol{a}_{K,i_{K}}^{\ast },\mathcal{F}_{t}^{\ast }\right)
\right\vert  \\
&\leq &c_{0}\left\vert \mathcal{F}_{t}^{\ast }\times _{k=1}^{K}\boldsymbol{a}%
_{k,i_{k}}^{\ast \top }-\mathcal{F}_{t}\times _{k=1}^{K}\boldsymbol{a}%
_{k,i_{k}}^{\top }\right\vert \leq c_{0}\left( K+1\right)
C_{1}^{K-1}\epsilon ,
\end{eqnarray*}%
where the last passage follows from (\ref{bound-factors}); the same applies
to the second term on the right-hand side of (\ref{eqWa-dec}); putting
everything together, it finally follows that
\begin{equation}
\sup_{\theta \in \Theta }\left\vert \mathbb{W}\left( \theta \right) -\mathbb{%
W}\left( \theta ^{\ast }\right) \right\vert =\epsilon O_{P}\left( 1\right) .
\label{eqW1}
\end{equation}%
We now turn to (\ref{eqWb}). Recall that the sub-Gaussian factor is defined
in Assumption \ref{as-3} as $\nu ^{2}$; then, for any $\lambda \in \mathbb{R}
$, it holds that%
\begin{eqnarray*}
&&\mathbb{E}\left[ \exp \left( \lambda \left( w\left( \boldsymbol{a}%
_{1,i_{1}}^{\ast },,...,\boldsymbol{a}_{K,i_{K}}^{\ast },\mathcal{F}%
_{t}^{\ast }\right) -\mathbb{E}w\left( \boldsymbol{a}_{1,i_{1}}^{\ast },,...,%
\boldsymbol{a}_{K,i_{K}}^{\ast },\mathcal{F}_{t}^{\ast }\right) \right)
\right) \right]  \\
&=&\mathbb{E}\left[ \exp \left( 2\lambda e_{t,i_{1},...,i_{K}}\left(
\mathcal{F}_{0t}\times _{k=1}^{K}\boldsymbol{a}_{0k,i_{k}}^{\top }-\mathcal{F%
}_{t}^{\ast }\times _{k=1}^{K}\boldsymbol{a}_{k,i_{k}}^{\ast \top }\right)
\right) \right]  \\
&\leq &c_{1}\exp \left( \frac{4\lambda ^{2}\left( \mathcal{F}_{0t}\times
_{k=1}^{K}\boldsymbol{a}_{0k,i_{k}}^{\top }-\mathcal{F}_{t}^{\ast }\times
_{k=1}^{K}\boldsymbol{a}_{k,i_{k}}^{\ast \top }\right) ^{2}}{2}\nu
^{2}\right) ,
\end{eqnarray*}%
having used Assumption \ref{as-3} in the last passage. Thus, $w\left(
\boldsymbol{a}_{1,i_{1}}^{\ast },,...,\boldsymbol{a}_{K,i_{K}}^{\ast },%
\mathcal{F}_{t}^{\ast }\right) -\mathbb{E}w\left( \boldsymbol{a}%
_{1,i_{1}}^{\ast },,...,\boldsymbol{a}_{K,i_{K}}^{\ast },\mathcal{F}%
_{t}^{\ast }\right) $ is a sub-Gaussian random variable with variance factor
equal to $4\nu ^{2}\left( \mathcal{F}_{0t}\times _{k=1}^{K}\boldsymbol{a}%
_{0k,i_{k}}^{\top }-\mathcal{F}_{t}^{\ast }\times _{k=1}^{K}\boldsymbol{a}%
_{k,i_{k}}^{\ast \top }\right) ^{2}$. Using the Hoeffding's bound for sums
of sub-Gaussian variables, it follows that
\begin{eqnarray*}
&&\mathbb{P}\left( \left\vert \mathbb{W}\left( \theta ^{\ast }\right)
\right\vert >c\right)  \\
&\leq &2\exp \left( -\frac{c^{2}\left( Tp\right) ^{2}}{8\nu
^{2}\sum_{t=1}^{T}\sum_{i_{1}=1}^{p_{1}}...\sum_{i_{K}=1}^{p_{K}}\left(
\mathcal{F}_{0t}\times _{k=1}^{K}\boldsymbol{a}_{0k,i_{k}}^{\top }-\mathcal{F%
}_{t}^{\ast }\times _{k=1}^{K}\boldsymbol{a}_{k,i_{k}}^{\ast \top }\right)
^{2}}\right)  \\
&=&2\exp \left( -\frac{c^{2}Tp}{8\nu ^{2}d^{2}\left( \theta ^{\ast },\theta
_{0}\right) }\right)
\end{eqnarray*}%
Hence, by Lemma 2.2.1 in \citet{vanderVaart1996}%
\begin{equation*}
\left\Vert \mathbb{W}\left( \theta ^{\ast }\right) \right\Vert _{\psi
_{2}}\leq \sqrt{24\nu ^{2}}\left( \frac{d^{2}\left( \theta ^{\ast },\theta
_{0}\right) }{Tp}\right) ^{1/2},
\end{equation*}%
where $\theta ^{\ast }$ can take at most $%
\prod_{k=1}^{K}J_{k}^{p_{k}}J_{H}^{T}\leq
c_{0}\prod_{k=1}^{K}(C_{1}/\epsilon )^{p_{k}r_{k}}(C_{1}/\epsilon
)^{Tr}=(C_{1}/\epsilon )^{\sum_{k=1}^{K}p_{k}r_{k}+Tr}$ different values,
and $d\left( \theta ^{\ast },\theta _{0}\right) \leq c_{0}K$ by (\ref%
{bound-factors}).Hence, using the maximal inequality for Orlicz norms (see
Lemma 2.2.2 in \citealp{vanderVaart1996}) we have
\begin{equation}
\mathbb{E}\sup_{\theta \in \Theta }\left\vert \mathbb{W}\left( \theta ^{\ast
}\right) \right\vert \leq \left\Vert \sup_{\theta \in \Theta }\left\vert
\mathbb{W}\left( \theta ^{\ast }\right) \right\vert \right\Vert _{\psi
_{2}}\lesssim \sqrt{M}\sqrt{\log (C_{1}/\epsilon )}/\sqrt{Tp}\lesssim \sqrt{%
\log (C_{1}/\epsilon )}/\sqrt{L}.  \label{eqW2}
\end{equation}%
Finally, for any $\delta >0$%
\begin{eqnarray*}
&&\mathbb{P}\left( \sup_{\theta \in \Theta }\left\vert \mathbb{W}\left(
\theta \right) \right\vert >\delta \right)  \\
&\leq &\mathbb{P}\left( \sup_{\theta \in \Theta }\left\vert \mathbb{W}\left(
\theta ^{\ast }\right) \right\vert >\delta /2\right) +\mathbb{P}\left(
\sup_{\theta \in \Theta }\left\vert \mathbb{W}\left( \theta \right) -\mathbb{%
W}\left( \theta ^{\ast }\right) \right\vert >\delta /2\right)  \\
&\leq &\frac{2}{\delta }\mathbb{E}\sup_{\theta \in \Theta }\left\vert
\mathbb{W}\left( \theta ^{\ast }\right) \right\vert +\mathbb{P}\left(
\sup_{\theta \in \Theta }\left\vert \mathbb{W}\left( \theta \right) -\mathbb{%
W}\left( \theta ^{\ast }\right) \right\vert >\delta /2\right) ,
\end{eqnarray*}%
having used Markov's inequality in the last line. Combining (\ref{eqW1}) and
(\ref{eqW2}), and recalling that $\epsilon $ in (\ref{eqW1}) is arbitrary,
the desired result follows.
\end{proof}

\begin{lemma}
\label{lemma2}{\it We assume that Assumptions \ref{as-1}-\ref{as-3} hold. Define
the set $\Theta \left( \delta \right) =\left\{ \theta \in \Theta :d\left(
\theta ,\theta _{0}\right) \leq \delta \right\} $. Then, for all $\theta \in
\Theta \left( \delta \right) $, it holds that%
\begin{equation*}
\sum_{k=1}^{K}\frac{1}{p_{k}}\left\Vert \mathbf{A}_{k}-\mathbf{A}_{0k}%
\mathbf{S}_{k}\right\Vert _{F}^{2}+\frac{1}{T}\sum_{t=1}^{T}\left\Vert
\mathcal{F}_{t}-\mathcal{F}_{0t}\times _{k=1}^{K}%
\mathbf{S}_{k} \right\Vert _{F}^{2}\leq C_{3}\delta ^{2},
\end{equation*}%
where
\begin{equation*}
\mathbf{S}_{k}=\text{\upshape sgn}\left( \Ab_{0k}^\top\Ab_k/p_k\right) .
\end{equation*}
}
\end{lemma}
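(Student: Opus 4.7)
The plan is to identify the columns of $\mathbf{A}_k/\sqrt{p_k}$ with the ordered top $r_k$ eigenvectors of a $p_k\times p_k$ projected moment matrix, and then to derive the loading bound via a Davis--Kahan perturbation argument; the factor-tensor bound will follow from a mode-wise projection and a tensor triangle inequality. I denote below $\boldsymbol{\Sigma}_{T,k}:=\tfrac{1}{T}\sum_t\mathbf{F}_{k,t}\mathbf{F}_{k,t}^{\top}$ (diagonal, non-increasing, positive by the parameter-space constraint in $\Theta$) and recall the analogous $\boldsymbol{\Sigma}_{0T,k}$ from Assumption~\ref{as-1}(iii).

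Set $\mathbf{Y}_{k,t}:=\mathrm{mat}_k(\mathcal{S}_t)=\mathbf{A}_k\mathbf{F}_{k,t}\mathbf{B}_k^{\top}$. Since $\mathbf{B}_k^{\top}\mathbf{B}_k/p_{-k}=\mathbf{I}_{r_{-k}}$, the matrix $\mathbf{\Phi}_k:=\tfrac{1}{Tp_{-k}}\sum_t\mathbf{Y}_{k,t}\mathbf{Y}_{k,t}^{\top}$ equals $\mathbf{A}_k\boldsymbol{\Sigma}_{T,k}\mathbf{A}_k^{\top}$ and, analogously, $\mathbf{\Phi}_{0k}=\mathbf{A}_{0k}\boldsymbol{\Sigma}_{0T,k}\mathbf{A}_{0k}^{\top}$; the columns of $\mathbf{A}_k/\sqrt{p_k}$ and $\mathbf{A}_{0k}/\sqrt{p_k}$ are the unit-norm, ordered top $r_k$ eigenvectors with eigenvalues $p_k(\boldsymbol{\Sigma}_{T,k})_{ii}$ and $p_k(\boldsymbol{\Sigma}_{0T,k})_{ii}$, so that all relevant eigengaps are of order $p_k$. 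Writing $\mathbf{Y}\mathbf{Y}^{\top}-\mathbf{Y}_0\mathbf{Y}_0^{\top}=(\mathbf{Y}-\mathbf{Y}_0)\mathbf{Y}^{\top}+\mathbf{Y}_0(\mathbf{Y}-\mathbf{Y}_0)^{\top}$, using the uniform bound $\|\mathbf{Y}_{k,t}\|_2\le\sqrt{p}\|\mathcal{F}_t\|_F\le C\sqrt{p}$ (from the compactness of $\mathcal{F}$ in Assumption~\ref{as-1}(i) and the bound on $\|\mathbf{F}_{0k,t}\|_F$ in Assumption~\ref{as-1}(iii)(a)), Cauchy--Schwarz applied to $\tfrac{1}{T}\sum_t\|\mathbf{Y}_{k,t}-\mathbf{Y}_{0k,t}\|_F$, and the identity $\|\mathbf{Y}_{k,t}-\mathbf{Y}_{0k,t}\|_F=\|\mathcal{S}_t-\mathcal{S}_{0t}\|_F$, I obtain $\|\mathbf{\Phi}_k-\mathbf{\Phi}_{0k}\|_F\le Cp_k\delta$. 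A per-column Davis--Kahan bound, with eigengap of order $p_k$, then gives, after the column-by-column sign alignment encoded by $\mathbf{S}_k=\mathrm{sgn}(\mathbf{A}_{0k}^{\top}\mathbf{A}_k/p_k)$,
\[
\frac{1}{p_k}\|\mathbf{A}_k-\mathbf{A}_{0k}\mathbf{S}_k\|_F^2 \le C\,\frac{\|\mathbf{\Phi}_k-\mathbf{\Phi}_{0k}\|_F^2}{p_k^2}\le C'\delta^2.
\]

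For the factor-tensor bound, set $\mathbf{H}_k:=\mathbf{A}_{0k}^{\top}\mathbf{A}_k/p_k$ (so $\|\mathbf{H}_k\|_2\le 1$ and $\|\mathbf{H}_k\|_F^2\le r_k$). The polarization identity $\tfrac{1}{p_k}\|\mathbf{A}_k-\mathbf{A}_{0k}\mathbf{S}_k\|_F^2=2(r_k-\mathrm{tr}(\mathbf{H}_k^{\top}\mathbf{S}_k))\ge\|\mathbf{H}_k-\mathbf{S}_k\|_F^2$ yields $\|\mathbf{H}_k-\mathbf{S}_k\|_F^2\le C'\delta^2$. Projecting $\mathcal{S}_t-\mathcal{S}_{0t}$ through $\mathbf{A}_k^{\top}/p_k$ in every mode and using $\|\mathcal{X}\times_k\mathbf{M}\|_F\le\|\mathbf{M}\|_2\|\mathcal{X}\|_F$ with $\|\mathbf{A}_k^{\top}/p_k\|_2=p_k^{-1/2}$ gives $\tfrac{1}{T}\sum_t\|\mathcal{F}_t-\mathcal{F}_{0t}\times_{k=1}^K\mathbf{H}_k^{\top}\|_F^2\le d^2(\theta,\theta_0)\le\delta^2$. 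The tensor triangle inequality $\|\mathcal{F}_{0t}\times_{k=1}^K\mathbf{H}_k^{\top}-\mathcal{F}_{0t}\times_{k=1}^K\mathbf{S}_k\|_F\le C\|\mathcal{F}_{0t}\|_F\sum_k\|\mathbf{H}_k-\mathbf{S}_k\|_F$, combined with $\sup_t\|\mathcal{F}_{0t}\|_F\le C$, then delivers $\tfrac{1}{T}\sum_t\|\mathcal{F}_t-\mathcal{F}_{0t}\times_{k=1}^K\mathbf{S}_k\|_F^2\le C''\delta^2$. Summing the loading bound over the finite index set $k=1,\ldots,K$ completes the proof. The main obstacle is the Davis--Kahan step: the per-column bound (as opposed to a merely subspace bound) requires a strictly positive eigengap within the top block of $\boldsymbol{\Sigma}_{0T,k}$ together with a strictly positive lower bound on its smallest nonzero eigenvalue, which must be read into the ``non-increasing positive diagonal'' structure of Assumption~\ref{as-1}(iii) for identification up to sign to hold; moreover the uniform operator-norm control $\|\mathbf{Y}_{k,t}\|_2\le C\sqrt{p}$ uses the tacit compactness of $\mathcal{F}$ across $\Theta$ inherited from Assumption~\ref{as-1}(i).
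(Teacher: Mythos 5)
Your route is genuinely different from the paper's. The paper never forms the mode-$k$ signal covariance $\mathbf{A}_k\boldsymbol{\Sigma}_{T,k}\mathbf{A}_k^{\top}$ or invokes eigen-perturbation; instead it bounds $p_k^{-1}\|\mathbf{A}_k-\mathbf{A}_{0k}\mathbf{U}_k\|_F^2$ for a generic sign matrix $\mathbf{U}_k$ by splitting it into a \emph{subspace} part $2p_k^{-1}\|\mathbf{M}_{\mathbf{A}_k}\mathbf{A}_{0k}\|_F^2$ (the residual of $\mathbf{A}_{0k}$ after projection onto the column space of $\mathbf{A}_k$) and an \emph{alignment} part $2\|p_k^{-1}\mathbf{A}_k^{\top}\mathbf{A}_{0k}-\mathbf{U}_k\|_F^2$. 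The subspace part is controlled directly by $d^2(\theta,\theta_0)/\lambda_{\min}(T^{-1}\sum_t\mathbf{F}_{0k,t}\mathbf{F}_{0k,t}^{\top})$, which only requires the \emph{smallest} factor eigenvalue to be bounded away from zero (i.e., the gap to the zero eigenvalue), and the alignment part is handled through the row-normalization matrix $\mathbf{V}_k$ together with Lemma \ref{l-min}. Your treatment of the factor-tensor bound (project $\mathcal{S}_t-\mathcal{S}_{0t}$ through $\mathbf{A}_k^{\top}/p_k$ in each mode, then pass from $\mathbf{H}_k$ to $\mathbf{S}_k$ via the polarization inequality) is correct and is essentially a cleaner rearrangement of what the paper does.

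The genuine gap is exactly the point you flag and then wave through: the per-column Davis--Kahan bound needs a strictly positive eigengap \emph{between consecutive diagonal entries} of $\boldsymbol{\Sigma}_{0T,k}$, and Assumption \ref{as-1}(iii) only guarantees a positive-definite diagonal matrix with \emph{non-increasing} entries --- ties are allowed. If $(\boldsymbol{\Sigma}_{0k})_{jj}=(\boldsymbol{\Sigma}_{0k})_{j+1,j+1}$, the corresponding eigenvectors of $\boldsymbol{\Phi}_{0k}$ are determined only up to a rotation inside the tied eigenspace, the per-column perturbation constant $C/\mathrm{gap}_j$ blows up, and no diagonal sign matrix $\mathbf{S}_k$ can align the columns; so the step
\[
\frac{1}{p_k}\bigl\Vert \mathbf{A}_k-\mathbf{A}_{0k}\mathbf{S}_k\bigr\Vert_F^2\le C\,\frac{\Vert\boldsymbol{\Phi}_k-\boldsymbol{\Phi}_{0k}\Vert_F^2}{p_k^2}
\]
does not follow from the stated assumptions. ``Reading'' a within-block eigengap into the assumption is adding a hypothesis, not proving the lemma. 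To close the gap under the assumptions as written you should replace the per-column eigenvector comparison by the paper's subspace-plus-alignment decomposition: a Davis--Kahan (or direct projection) argument for the $r_k$-dimensional column \emph{spaces}, which only needs $\lambda_{\min}(\boldsymbol{\Sigma}_{0T,k})\ge c>0$, followed by a separate argument showing that $\mathbf{Z}_k^{\top}=p_k^{-1}\mathbf{A}_k^{\top}\mathbf{A}_{0k}$ is within $O(\delta)$ of the sign matrix $\mathbf{S}_k$, which is where the diagonality constraint on $T^{-1}\sum_t\mathbf{F}_{k,t}\mathbf{F}_{k,t}^{\top}$ imposed in the definition of $\Theta$ (rather than an eigengap) does the identification work.
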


\begin{proof}
Let $\mathbf{U}_{k}\in \mathbb{R}^{r_{k}\times r_{k}}$ - where $1\leq k\leq
K $ - is a diagonal matrix whose diagonal elements are either $1$ or $-1$,
and recall that $\mathbf{A}_{k}^{\top }\mathbf{A}_{k}=\mathbf{A}_{0k}^{\top }%
\mathbf{A}_{0k}=\mathbf{I}_{k}$; further, Assumption \ref{as-1} entails that
$\left\Vert \mathcal{F}_{0t}\right\Vert _{F}^{2}\leq C_{4}$ for every $t$.
We will use the following facts
\begin{equation}
\left\Vert \mathcal{F}_{t}-\mathcal{F}_{0t}\times _{k=1}^{K}\mathbf{U}%
_{k}\right\Vert _{F}^{2}=\frac{1}{p}\left\Vert \left( \mathcal{F}_{t}-%
\mathcal{F}_{0t}\times _{k=1}^{K}\mathbf{U}_{k}\right) \times _{k=1}^{K}%
\mathbf{A}_{k}\right\Vert _{F}^{2}.  \label{fact-2}
\end{equation}%
It holds that%
\begin{eqnarray*}
&&\sum_{t=1}^{T}\left\Vert \left( \mathcal{F}_{t}-\mathcal{F}_{0t}\times
_{k=1}^{K}\mathbf{U}_{k}\right) \right\Vert _{F}^{2} \\
&\leq &\frac{1}{p}\sum_{t=1}^{T}\left\Vert \left( \mathcal{F}_{t}-\mathcal{F}%
_{0t}\times _{k=1}^{K}\mathbf{U}_{k}\right) \times _{k=1}^{K}\mathbf{A}%
_{k}\right\Vert _{F}^{2} \\
&=&\frac{1}{p}\sum_{t=1}^{T}\left\Vert \mathcal{F}_{t}\times _{k=1}^{K}%
\mathbf{A}_{k}-\mathcal{F}_{0t}\times _{k=1}^{K}\mathbf{A}_{0k}+\mathcal{F}%
_{0t}\times _{k=1}^{K}\mathbf{A}_{0k}-\mathcal{F}_{0t}\times
_{k=1}^{K}\left( \mathbf{A}_{k}\mathbf{U}_{k}\right) \right\Vert _{F}^{2} \\
&\leq &\frac{2}{p}\sum_{t=1}^{T}\left\Vert \mathcal{F}_{t}\times _{k=1}^{K}%
\mathbf{A}_{k}-\mathcal{F}_{0t}\times _{k=1}^{K}\mathbf{A}_{0k}\right\Vert
_{F}^{2}+\frac{2}{p}\sum_{t=1}^{T}\left\Vert \mathcal{F}_{0t}\times
_{k=1}^{K}\mathbf{A}_{0k}-\mathcal{F}_{0t}\times _{k=1}^{K}\left( \mathbf{A}%
_{k}\mathbf{U}_{k}\right) \right\Vert _{F}^{2} \\
&=&2Td^{2}\left( \theta ,\theta _{0}\right) +\frac{2}{p}%
\sum_{t=1}^{T}\left\Vert \mathcal{F}_{0t}\times _{k=1}^{K}\mathbf{A}_{0k}-%
\mathcal{F}_{0t}\times _{k=1}^{K}\left( \mathbf{A}_{k}\mathbf{U}_{k}\right)
\right\Vert _{F}^{2}.
\end{eqnarray*}%
Further note that%
\begin{eqnarray*}
&&\sum_{t=1}^{T}\left\Vert \mathcal{F}_{0t}\times _{k=1}^{K}\mathbf{A}_{0k}-%
\mathcal{F}_{0t}\times _{k=1}^{K}\left( \mathbf{A}_{k}\mathbf{U}_{k}\right)
\right\Vert _{F}^{2} \\
&=&\sum_{t=1}^{T}\left\Vert \mathcal{F}_{0t}\times _{k=1}^{K}\mathbf{A}_{0k}-%
\mathcal{F}_{0t}\times _{1}\left( \mathbf{A}_{1}\mathbf{U}_{1}\right) \times
_{k=2}^{K}\mathbf{A}_{0k}+\mathcal{F}_{0t}\times _{1}\left( \mathbf{A}_{1}%
\mathbf{U}_{1}\right) \times _{k=2}^{K}\mathbf{A}_{0k}-\mathcal{F}%
_{0t}\times _{k=1}^{K}\left( \mathbf{A}_{k}\mathbf{U}_{k}\right) \right\Vert
_{F}^{2} \\
&=&\sum_{t=1}^{T}\left\Vert \mathcal{F}_{0t}\times _{1}\left( \mathbf{A}%
_{01}-\mathbf{A}_{1}\mathbf{U}_{1}\right) \times _{k=2}^{K}\mathbf{A}_{0k}+%
\mathcal{F}_{0t}\times _{1}\left( \mathbf{A}_{1}\mathbf{U}_{1}\right) \times
_{2}\left( \mathbf{A}_{02}-\mathbf{A}_{2}\mathbf{U}_{2}\right) \times
_{k=3}^{K}\mathbf{A}_{0k}+...\right. \\
&&\left. ...+\mathcal{F}_{0t}\times _{k=1}^{K}\left( \mathbf{A}_{k}\mathbf{U}%
_{k}\right) \times _{k}\left( \mathbf{A}_{0K}-\mathbf{A}_{K}\mathbf{U}%
_{K}\right) \right\Vert _{F}^{2} \\
&\leq &K\sum_{t=1}^{T}\sum_{l=1}^{K}\left\Vert \mathcal{F}_{0t}\times
_{k=1}^{l-1}\left( \mathbf{A}_{k}\mathbf{U}_{k}\right) \times _{l}\left(
\mathbf{A}_{0l}-\mathbf{A}_{l}\mathbf{U}_{l}\right) \times _{k=l+1}^{K}%
\mathbf{A}_{0k}\right\Vert _{F}^{2} \\
&\leq &c_{0}KT\sum_{k=1}^{K}p_{-k}\left\Vert \mathbf{A}_{k}-\mathbf{A}_{0k}%
\mathbf{U}_{k}\right\Vert _{F}^{2}.
\end{eqnarray*}%
Hence it follows that%
\begin{equation*}
\frac{1}{T}\sum_{t=1}^{T}\left\Vert \left( \mathcal{F}_{t}-\mathcal{F}%
_{0t}\times _{k=1}^{K}\mathbf{U}_{k}\right) \right\Vert _{F}^{2}\leq
2d^{2}\left( \theta ,\theta _{0}\right) +2c_{0}K\sum_{k=1}^{K}\frac{1}{p_{k}}%
\left\Vert \mathbf{A}_{k}-\mathbf{A}_{0k}\mathbf{U}_{k}\right\Vert _{F}^{2}.
\end{equation*}%
Thus, for all $\theta \in \Theta (\delta )$%
\begin{eqnarray*}
&&\sum_{k=1}^{K}\frac{1}{p_{k}}\left\Vert \mathbf{A}_{k}-\mathbf{A}_{0k}%
\mathbf{U}_{k}\right\Vert _{F}^{2}+\frac{1}{T}\sum_{t=1}^{T}\left\Vert
\left( \mathcal{F}_{t}-\mathcal{F}_{0t}\times _{k=1}^{K}%
\mathbf{U}_{k}\right) \right\Vert _{F}^{2} \\
&\leq &2d^{2}\left( \theta ,\theta _{0}\right) +\left( 2c_{0}K+1\right)
\sum_{k=1}^{K}\frac{1}{p_{k}}\left\Vert \mathbf{A}_{k}-\mathbf{A}_{0k}%
\mathbf{U}_{k}\right\Vert _{F}^{2} \\
&\leq &2\delta ^{2}+\left( 2c_{0}K+1\right) \sum_{k=1}^{K}\frac{1}{p_{k}}%
\left\Vert \mathbf{A}_{k}-\mathbf{A}_{0k}\mathbf{U}_{k}\right\Vert _{F}^{2}.
\end{eqnarray*}%
Hence, we only need to study $p_{k}^{-1}\left\Vert \mathbf{A}_{k}-\mathbf{A}%
_{0k}\mathbf{U}_{k}\right\Vert _{F}^{2}$. It holds that%
\begin{eqnarray}
&&\frac{1}{p_{k}}\left\Vert \mathbf{A}_{k}-\mathbf{A}_{0k}\mathbf{U}%
_{k}\right\Vert _{F}^{2}  \label{dec-lemma2} \\
&\leq &\frac{2}{p_{k}}\left\Vert \mathbf{A}_{0k}\mathbf{U}_{k}-\frac{1}{p_{k}%
}\mathbf{A}_{k}\mathbf{A}_{k}^{\top }\mathbf{A}_{0k}\mathbf{U}%
_{k}\right\Vert _{F}^{2}+\frac{2}{p_{k}}\left\Vert \frac{1}{p_{k}}\mathbf{A}%
_{k}\mathbf{A}_{k}^{\top }\mathbf{A}_{0k}\mathbf{U}_{k}-\mathbf{A}%
_{k}\right\Vert _{F}^{2}  \notag \\
&=&\frac{2}{p_{k}}\left\Vert \mathbf{A}_{0k}-\frac{1}{p_{k}}\mathbf{A}_{k}%
\mathbf{A}_{k}^{\top }\mathbf{A}_{0k}\right\Vert _{F}^{2}+\frac{2}{p_{k}}%
\left\Vert \mathbf{A}_{k}\left( \frac{1}{p_{k}}\mathbf{A}_{k}^{\top }\mathbf{%
A}_{0k}\mathbf{U}_{k}-\mathbf{I}_{r_{k}}\right) \right\Vert _{F}^{2}  \notag
\\
&=&\frac{2}{p_{k}}\left\Vert \left( \mathbf{I}_{p_{k}}-\mathbf{P}_{\mathbf{A}%
_{k}}\right) \mathbf{A}_{0k}\right\Vert _{F}^{2}+2\left\Vert \frac{1}{p_{k}}%
\mathbf{A}_{k}^{\top }\mathbf{A}_{0k}-\mathbf{U}_{k}\right\Vert _{F}^{2}
\notag \\
&=&\frac{2}{p_{k}}\left\Vert \mathbf{M}_{\mathbf{A}_{k}}\mathbf{A}%
_{0k}\right\Vert _{F}^{2}+2\left\Vert \frac{1}{p_{k}}\mathbf{A}_{k}^{\top }%
\mathbf{A}_{0k}-\mathbf{U}_{k}\right\Vert _{F}^{2}=I+II,  \notag
\end{eqnarray}%
where we have defined $\mathbf{P}_{\mathbf{A}_{k}}=\mathbf{A}_{k}\left(
\mathbf{A}_{k}^{\top }\mathbf{A}_{k}\right) ^{-1}\mathbf{A}_{k}^{\top }$ and
$\mathbf{M}_{\mathbf{A}_{k}}=\mathbf{I}_{p_{k}}-\mathbf{P}_{\mathbf{A}_{k}}$%
. Consider $I$; letting $\mathbf{B}_{k}=\mathbf{A}_{K}\otimes ...\otimes
\mathbf{A}_{k+1}\otimes \mathbf{A}_{k-1}\otimes ...\otimes \mathbf{A}_{1}$
and $\mathbf{B}_{0k}=\mathbf{A}_{0K}\otimes ...\otimes \mathbf{A}%
_{0k+1}\otimes \mathbf{A}_{0k-1}\otimes ...\otimes \mathbf{A}_{01}$, it
holds that%
\begin{eqnarray*}
&&\frac{1}{Tp}\sum_{t=1}^{T}\left\Vert \mathbf{M}_{\mathbf{A}_{k}}\left(
\mathbf{A}_{k}\mathbf{F}_{k,t}\mathbf{B}_{k}^{\top }-\mathbf{A}_{0k}\mathbf{F%
}_{0k,t}\mathbf{B}_{0k}^{\top }\right) \right\Vert _{F}^{2} \\
&\leq &\frac{1}{Tp}\sum_{t=1}^{T}\text{rank}\left( \mathbf{M}_{\mathbf{A}%
_{k}}\left( \mathbf{A}_{k}\mathbf{F}_{k,t}\mathbf{B}_{k}^{\top }-\mathbf{A}%
_{0k}\mathbf{F}_{0k,t}\mathbf{B}_{0k}^{\top }\right) \right) \left\Vert
\mathbf{M}_{\mathbf{A}_{k}}\right\Vert _{2}^{2}\left\Vert \mathcal{F}%
_{t}\times _{k=1}^{K}\mathbf{A}_{k}-\mathcal{F}_{0t}\times _{k=1}^{K}\mathbf{%
A}_{0k}\right\Vert _{F}^{2} \\
&\leq &\frac{1}{Tp}\sum_{t=1}^{T}\left\Vert \mathcal{F}_{t}\times _{k=1}^{K}%
\mathbf{A}_{k}-\mathcal{F}_{0t}\times _{k=1}^{K}\mathbf{A}_{0k}\right\Vert
_{F}^{2}=d^{2}\left( \theta ,\theta _{0}\right) ,
\end{eqnarray*}%
and%
\begin{eqnarray*}
&&\frac{1}{Tp}\sum_{t=1}^{T}\left\Vert \mathbf{M}_{\mathbf{A}_{k}}\left(
\mathbf{A}_{k}\mathbf{F}_{k,t}\mathbf{B}_{k}^{\top }-\mathbf{A}_{0k}\mathbf{F%
}_{0k,t}\mathbf{B}_{0k}^{\top }\right) \right\Vert _{F}^{2} \\
&=&\frac{1}{Tp}\sum_{t=1}^{T}\left\Vert \mathbf{M}_{\mathbf{A}_{k}}\mathbf{A}%
_{0k}\mathbf{F}_{0k,t}\mathbf{B}_{0k}^{\top }\right\Vert _{F}^{2} \\
&=&\frac{1}{Tp_{k}}\sum_{t=1}^{T}\tr\left( \mathbf{F}_{0k,t}\mathbf{F}%
_{0k,t}^{\top }\mathbf{A}_{0k}^{\top }\mathbf{M}_{\mathbf{A}_{k}}\mathbf{A}%
_{0k}\right) \\
&\geq &\frac{1}{Tp_{k}}\sum_{t=1}^{T}\lambda _{\min }\left( \mathbf{F}_{0k,t}%
\mathbf{F}_{0k,t}^{\top }\right) \tr\left( \mathbf{A}_{0k}^{\top }\mathbf{M}_{%
\mathbf{A}_{k}}\mathbf{A}_{0k}\right) \\
&\geq &\lambda _{\min }\left( \frac{1}{T}\sum_{t=1}^{T}\mathbf{F}_{0k,t}%
\mathbf{F}_{0k,t}^{\top }\right) \frac{1}{p_{k}}\left\Vert \mathbf{M}_{%
\mathbf{A}_{k}}\mathbf{A}_{0k}\right\Vert _{F}^{2},
\end{eqnarray*}%
where we have used the fact that $\mathbf{B}_{0k}^{\top }\mathbf{B}_{0k}=%
\mathbf{I}_{K}\otimes ...\otimes \mathbf{I}_{k+1}\otimes \mathbf{I}%
_{k-1}\otimes ...\otimes \mathbf{I}_{1}$ in the third line, and Assumption %
\ref{as-1} in the last one. Combining these two bounds%
\begin{equation*}
\frac{1}{p_{k}}\left\Vert \mathbf{M}_{\mathbf{A}_{k}}\mathbf{A}%
_{0k}\right\Vert _{F}^{2}\leq \frac{d^{2}\left( \theta ,\theta _{0}\right) }{%
\lambda _{\min }\left( \frac{1}{T}\sum_{t=1}^{T}\mathbf{F}_{0k,t}\mathbf{F}%
_{0k,t}^{\top }\right) }\leq d^{2}\left( \theta ,\theta _{0}\right) \leq
\delta ^{2}.
\end{equation*}%
As far as $II$ is concerned, we define $\mathbf{Z}_{k}^{\top }=\mathbf{A}%
_{k}^{\top }\mathbf{A}_{0k}/p_{k}$ and
\begin{equation*}
\mathbf{V}_{k}=\mbox{diag}\left( \left( \mathbf{z}_{k,1}\mathbf{z}%
_{k,1}^{\top }\right) ^{-1/2},...,\left( \mathbf{z}_{k,r_{k}}\mathbf{z}%
_{k,r_{k}}^{\top }\right) ^{-1/2}\right) ,
\end{equation*}%
where $\mathbf{z}_{k,j}$ is the $j$-th row of $\mathbf{Z}_{k}$. Then we have%
\begin{eqnarray}
\left\Vert \frac{1}{p_{k}}\mathbf{A}_{k}^{\top }\mathbf{A}_{0k}-\mathbf{S}%
_{k}\right\Vert _{F}^{2} &=&\left\Vert \mathbf{Z}_{k}^{\top }-\mathbf{S}%
_{k}\right\Vert _{F}^{2}  \label{pert} \\
&\leq &\left\Vert \mathbf{Z}_{k}^{\top }\mathbf{V}_{k}-\mathbf{S}%
_{k}\right\Vert _{F}^{2}+\left\Vert \mathbf{Z}_{k}^{\top }\mathbf{V}_{k}-%
\mathbf{Z}_{k}^{\top }\right\Vert _{F}^{2}  \notag \\
&\leq &\left\Vert \mathbf{Z}_{k}^{\top }\mathbf{V}_{k}-\mathbf{S}%
_{k}\right\Vert _{F}^{2}+\left\Vert \mathbf{Z}_{k}\right\Vert
_{F}^{2}\left\Vert \mathbf{V}_{k}-\mathbf{I}_{r_{k}}\right\Vert _{F}^{2}.
\notag
\end{eqnarray}%
Using the definition of $\mathbf{S}_{k}$, it follows that%
\begin{equation}
\left\Vert \mathbf{Z}_{k}^{\top }\mathbf{V}_{k}-\mathbf{S}_{k}\right\Vert
_{F}^{2}=\left\Vert \mathbf{Z}_{k}^{\top }\mathbf{V}_{k}\mathbf{S}_{k}-%
\mathbf{I}_{r_{k}}\right\Vert _{F}^{2}\leq d^{2}\left( \theta ,\theta
_{0}\right) =\delta ^{2}.  \label{pert-1}
\end{equation}%
Also%
\begin{eqnarray*}
\left\Vert \mathbf{V}_{k}-\mathbf{I}_{r_{k}}\right\Vert _{F}^{2} &\leq
&\left\Vert \mathbf{Z}_{k}^{\top }\mathbf{Z}_{k}-\mathbf{I}%
_{r_{k}}\right\Vert _{F}^{2}=\left\Vert \frac{1}{p_{k}}\mathbf{A}_{k}^{\top }%
\mathbf{P}_{\mathbf{A}_{0k}}\mathbf{A}_{k}-\frac{1}{p_{k}}\mathbf{A}%
_{k}^{\top }\mathbf{A}_{k}\right\Vert _{F}^{2} \\
&=&\left\Vert \frac{1}{p_{k}}\mathbf{A}_{k}^{\top }\mathbf{M}_{\mathbf{A}%
_{0k}}\mathbf{A}_{k}\right\Vert _{F}^{2}\leq \frac{1}{p_{k}^{2}}\left\Vert
\mathbf{A}_{k}\right\Vert _{F}^{2}\left\Vert \mathbf{M}_{\mathbf{A}_{0k}}%
\mathbf{A}_{k}\right\Vert _{F}^{2}.
\end{eqnarray*}%
Similar passages as above yield%
\begin{equation*}
\frac{1}{p_{k}^{2}}\left\Vert \mathbf{M}_{\mathbf{A}_{0k}}\mathbf{A}%
_{k}\right\Vert _{F}^{2}\leq \frac{d^{2}\left( \theta ,\theta _{0}\right) }{%
\frac{1}{T}\sum_{t=1}^{T}\lambda _{\min }\left( \mathbf{F}_{k,t}\mathbf{F}%
_{k,t}^{\top }\right) }.
\end{equation*}%
By Lemma \ref{l-min}, it follows that $\frac{1}{T}\sum_{t=1}^{T}\lambda
_{\min }\left( \mathbf{F}_{k,t}\mathbf{F}_{k,t}^{\top }\right) $ is bounded
away from zero, and therefore%
\begin{equation}
\frac{1}{p_{k}^{2}}\left\Vert \mathbf{M}_{\mathbf{A}_{0k}}\mathbf{A}%
_{k}\right\Vert _{F}^{2}\leq c_{0}d^{2}\left( \theta ,\theta _{0}\right)
=c_{0}\delta ^{2}.  \label{pert-2}
\end{equation}%
Putting (\ref{pert-1}) and (\ref{pert-2}) in (\ref{pert}), it follows that%
\begin{equation}
\left\Vert \frac{1}{p_{k}}\mathbf{A}_{k}^{\top }\mathbf{A}_{0k}-\mathbf{S}%
_{k}\right\Vert _{F}^{2}\leq c_{0}\delta ^{2}.  \label{pert-0}
\end{equation}%
Putting all together, the desired result follows.
\end{proof}

\begin{lemma}
\label{l-min}{\it We assume that the assumptions of Lemma \ref{lemma2} hold. Then
it holds that}
\begin{equation*}
\frac{1}{T}\sum_{t=1}^{T}\lambda _{\min }\left( \mathbf{F}_{k,t}\mathbf{F}%
_{k,t}^{\top }\right) >0.
\end{equation*}

\begin{proof}
Recall that we have defined $\mathbf{Z}_{k}=p_{k}^{-1}\mathbf{A}_{k}^{\top }%
\mathbf{A}_{0k}$. We begin with some preliminary facts. Firstly, note that
\begin{eqnarray*}
&&\frac{1}{Tp}\sum_{t=1}^{T}\left\Vert \mathbf{P}_{\mathbf{A}_{k}}\left(
\mathbf{A}_{k}\mathbf{F}_{k,t}\mathbf{B}_{k}^{\top }-\mathbf{A}_{0k}\mathbf{F%
}_{0k,t}\mathbf{B}_{0k}^{\top }\right) \right\Vert _{F}^{2} \\
&\leq &\frac{1}{Tp}\sum_{t=1}^{T}\left\Vert \mathbf{P}_{\mathbf{A}%
_{k}}\right\Vert _{F}^{2}\left\Vert \mathbf{A}_{k}\mathbf{F}_{k,t}\mathbf{B}%
_{k}^{\top }-\mathbf{A}_{0k}\mathbf{F}_{0k,t}\mathbf{B}_{0k}^{\top
}\right\Vert _{F}^{2} \\
&=&\frac{r_{k}}{Tp}\sum_{t=1}^{T}\left\Vert \mathbf{A}_{k}\mathbf{F}_{k,t}%
\mathbf{B}_{k}^{\top }-\mathbf{A}_{0k}\mathbf{F}_{0k,t}\mathbf{B}_{0k}^{\top
}\right\Vert _{F}^{2}\leq c_{0}d^{2}\left( \theta ,\theta _{0}\right) ,
\end{eqnarray*}%
which entails that%
\begin{eqnarray}
&&\frac{1}{Tp_{-k}}\sum_{t=1}^{T}\left\Vert \mathbf{F}_{k,t}\mathbf{B}%
_{k}^{\top }-\frac{1}{p_{k}}\mathbf{A}_{k}^{\top }\mathbf{A}_{0k}\mathbf{F}%
_{0k,t}\mathbf{B}_{0k}^{\top }\right\Vert _{F}^{2}  \label{equ:FB1} \\
&=&\frac{1}{Tp}\sum_{t=1}^{T}\left\Vert \mathbf{A}_{k}\mathbf{F}_{k,t}%
\mathbf{B}_{k}^{\top }-\frac{1}{p_{k}}\mathbf{A}_{k}\mathbf{A}_{k}^{\top }%
\mathbf{A}_{0k}\mathbf{F}_{0k,t}\mathbf{B}_{0k}^{\top }\right\Vert _{F}^{2}
\notag \\
&=&\frac{1}{Tp}\sum_{t=1}^{T}\left\Vert \mathbf{A}_{k}\mathbf{F}_{k,t}%
\mathbf{B}_{k}^{\top }-\mathbf{P}_{\mathbf{A}_{k}}\mathbf{A}_{0k}\mathbf{F}%
_{0k,t}\mathbf{B}_{0k}^{\top }\right\Vert _{F}^{2}  \notag \\
&=&\frac{1}{Tp}\sum_{t=1}^{T}\left\Vert \mathbf{P}_{\mathbf{A}_{k}}\left(
\mathbf{A}_{k}\mathbf{F}_{k,t}\mathbf{B}_{k}^{\top }-\mathbf{A}_{0k}\mathbf{F%
}_{0k,t}\mathbf{B}_{0k}^{\top }\right) \right\Vert _{F}^{2}\leq
c_{0}d^{2}\left( \theta ,\theta _{0}\right) .  \notag
\end{eqnarray}%
Similarly,
\begin{equation}
\frac{1}{Tp_{-k}}\sum_{t=1}^{T}\left\Vert \mathbf{F}_{0k,t}\mathbf{B}%
_{0k}^{\top }-\frac{1}{p_{k}}\mathbf{A}_{0k}^{\top }\mathbf{A}_{k}\mathbf{F}%
_{k,t}\mathbf{B}_{k}^{\top }\right\Vert _{F}^{2}\lesssim d^{2}\left( \theta
,\theta _{0}\right) .  \label{equ:FB2}
\end{equation}%
Note that $\mathbf{A}_{k}\mathbf{Z}_{k}=p_{k}^{-1}\mathbf{A}_{k}\mathbf{A}%
_{k}^{\top }\mathbf{A}_{0k}=\mathbf{P}_{\mathbf{A}_{k}}\mathbf{A}_{0k}$,
thus
\begin{eqnarray}
\mathbf{I}_{r_{k}} &=&\frac{1}{p_{k}}\mathbf{A}_{k}^{\top }\mathbf{A}_{k}=%
\frac{1}{p_{k}}\mathbf{A}_{0k}^{\top }\mathbf{A}_{0k}\pm \mathbf{Z}%
_{k}^{\top }\left( \frac{1}{p_{k}}\mathbf{A}_{k}^{\top }\mathbf{A}%
_{k}\right) \mathbf{Z}_{k}  \label{equ:I1} \\
&=&\mathbf{Z}_{k}^{\top }\mathbf{Z}_{k}+\frac{1}{p_{k}}\mathbf{A}_{0k}^{\top
}\mathbf{A}_{0k}-\frac{1}{p_{k}}\mathbf{A}_{0k}^{\top }\mathbf{A}_{k}\mathbf{%
Z}_{k}  \notag \\
&=&\mathbf{Z}_{k}^{\top }\mathbf{Z}_{k}+\mathbf{A}_{0k}^{\top }\frac{1}{p_{k}%
}\left( \mathbf{A}_{0k}-\mathbf{A}_{k}\mathbf{Z}_{k}\right)  \notag \\
&=&\mathbf{Z}_{k}^{\top }\mathbf{Z}_{k}+\frac{1}{p_{k}}\mathbf{A}_{0k}^{\top
}\mathbf{M}_{\mathbf{A}_{k}}\mathbf{A}_{0k}.  \notag
\end{eqnarray}%
Similarly,
\begin{equation}
\mathbf{I}_{r_{k}}=\mathbf{Z}_{k}\mathbf{Z}_{k}^{\top }+\mathbf{A}_{k}^{\top
}\frac{1}{p_{k}}\left( \mathbf{A}_{k}-\mathbf{A}_{0k}\mathbf{Z}_{k}^{\top
}\right) =\mathbf{Z}_{k}\mathbf{Z}_{k}^{\top }+\frac{1}{p_{k}}\mathbf{A}%
_{k}^{\top }\mathbf{M}_{\mathbf{A}_{0k}}\mathbf{A}_{k}.  \label{equpd}
\end{equation}%
In addition, it holds that%
\begin{eqnarray}
&&\frac{1}{Tp_{-k}}\sum_{t=1}^{T}\left( \mathbf{F}_{0k,t}\mathbf{B}%
_{0k}^{\top }\right) \left( \mathbf{F}_{0k,t}\mathbf{B}_{0k}^{\top }\right)
^{\top }  \label{equ:I1b} \\
&=&\frac{1}{Tp_{-k}}\sum_{t=1}^{T}\left( \mathbf{F}_{0k,t}\mathbf{B}%
_{0k}^{\top }\right) \left( \mathbf{F}_{0k,t}\mathbf{B}_{0k}^{\top }\right)
^{\top }-\mathbf{Z}_{k}^{\top }\left( \frac{1}{Tp_{-k}}\sum_{t=1}^{T}\left(
\mathbf{F}_{k,t}\mathbf{B}_{k}^{\top }\right) \left( \mathbf{F}_{k,t}\mathbf{%
B}_{k}^{\top }\right) ^{\top }\right) \mathbf{Z}_{k}  \notag \\
&&+\mathbf{Z}_{k}^{\top }\left( \frac{1}{Tp_{-k}}\sum_{t=1}^{T}\left(
\mathbf{F}_{k,t}\mathbf{B}_{k}^{\top }\right) \left( \mathbf{F}_{k,t}\mathbf{%
B}_{k}^{\top }\right) ^{\top }\right) \mathbf{Z}_{k}  \notag \\
&=&\mathbf{Z}_{k}^{\top }\left( \frac{1}{Tp_{-k}}\sum_{t=1}^{T}\left(
\mathbf{F}_{k,t}\mathbf{B}_{k}^{\top }\right) \left( \mathbf{F}_{k,t}\mathbf{%
B}_{k}^{\top }\right) ^{\top }\right) \left( \mathbf{Z}_{k}^{\top }\right)
^{-1}\mathbf{Z}_{k}^{\top }\mathbf{Z}_{k}+\frac{1}{Tp_{-k}}%
\sum_{t=1}^{T}\left( \mathbf{F}_{0k,t}\mathbf{B}_{0k}^{\top }-\mathbf{Z}_{k}%
\mathbf{F}_{k,t}\mathbf{B}_{k}^{\top }\right) \left( \mathbf{F}_{0k,t}%
\mathbf{B}_{0k}^{\top }\right) ^{\top }  \notag \\
&&+\frac{1}{Tp_{-k}}\sum_{t=1}^{T}\mathbf{Z}_{k}\mathbf{F}_{k,t}\mathbf{B}%
_{k}^{\top }\left( \mathbf{F}_{0k,t}\mathbf{B}_{0k}^{\top }-\mathbf{Z}_{k}%
\mathbf{F}_{k,t}\mathbf{B}_{k}^{\top }\right) ^{\top }  \notag \\
&=&\mathbf{Z}_{k}^{\top }\left( \frac{1}{Tp_{-k}}\sum_{t=1}^{T}\left(
\mathbf{F}_{k,t}\mathbf{B}_{k}^{\top }\right) \left( \mathbf{F}_{k,t}\mathbf{%
B}_{k}^{\top }\right) ^{\top }\right) \left( \mathbf{Z}_{k}^{\top }\right)
^{-1}  \notag \\
&&\mathbf{+Z}_{k}^{\top }\left( \frac{1}{Tp_{-k}}\sum_{t=1}^{T}\left(
\mathbf{F}_{k,t}\mathbf{B}_{k}^{\top }\right) \left( \mathbf{F}_{k,t}\mathbf{%
B}_{k}^{\top }\right) ^{\top }\right) \left( \mathbf{Z}_{k}^{\top }\right)
^{-1}\left( \mathbf{Z}_{k}^{\top }\mathbf{Z}_{k}-\mathbf{I}_{r_{k}}\right)
\notag \\
&&+\frac{1}{Tp_{-k}}\sum_{t=1}^{T}\left( \mathbf{F}_{0k,t}\mathbf{B}%
_{0k}^{\top }-\mathbf{Z}_{k}\mathbf{F}_{k,t}\mathbf{B}_{k}^{\top }\right)
\left( \mathbf{F}_{0k,t}\mathbf{B}_{0k}^{\top }\right) ^{\top }+\frac{1}{%
Tp_{-k}}\sum_{t=1}^{T}\mathbf{Z}_{k}\mathbf{F}_{k,t}\mathbf{B}_{k}^{\top
}\left( \mathbf{F}_{0k,t}\mathbf{B}_{0k}^{\top }-\mathbf{Z}_{k}\mathbf{F}%
_{k,t}\mathbf{B}_{k}^{\top }\right) ^{\top }  \notag
\end{eqnarray}%
Then, combining (\ref{equ:I1}) and (\ref{equ:I1b}), it follows that
\begin{equation*}
\left( \frac{1}{Tp_{-k}}\sum_{t=1}^{T}\mathbf{F}_{0k,t}\mathbf{B}_{0k}^{\top
}\left( \mathbf{F}_{0k,t}\mathbf{B}_{0k}^{\top }\right) ^{\top }+\mathbf{D}%
_{k}\right) \mathbf{Z}_{k}^{\top }=\mathbf{Z}_{k}^{\top }\left( \frac{1}{%
Tp_{-k}}\sum_{t=1}^{T}\mathbf{F}_{k,t}\mathbf{B}_{k}^{\top }\left( \mathbf{F}%
_{k,t}\mathbf{B}_{k}^{\top }\right) ^{\top }\right) ,
\end{equation*}%
where%
\begin{eqnarray*}
\mathbf{D}_{k} &=&\mathbf{Z}_{k}^{\top }\left( \frac{1}{Tp_{-k}}%
\sum_{t=1}^{T}\left( \mathbf{F}_{k,t}\mathbf{B}_{k}^{\top }\right) \left(
\mathbf{F}_{k,t}\mathbf{B}_{k}^{\top }\right) ^{\top }\right) \left( \mathbf{%
Z}_{k}^{\top }\right) ^{-1}\left( \mathbf{I}_{r_{k}}-\mathbf{Z}_{k}^{\top }%
\mathbf{Z}_{k}\right) \\
&&-\frac{1}{Tp_{-k}}\sum_{t=1}^{T}\left( \mathbf{F}_{0k,t}\mathbf{B}%
_{0k}^{\top }-\mathbf{Z}_{k}\mathbf{F}_{k,t}\mathbf{B}_{k}^{\top }\right)
\left( \mathbf{F}_{0k,t}\mathbf{B}_{0k}^{\top }\right) ^{\top } \\
&&-\frac{1}{Tp_{-k}}\sum_{t=1}^{T}\mathbf{Z}_{k}\mathbf{F}_{k,t}\mathbf{B}%
_{k}^{\top }\left( \mathbf{F}_{0k,t}\mathbf{B}_{0k}^{\top }-\mathbf{Z}_{k}%
\mathbf{F}_{k,t}\mathbf{B}_{k}^{\top }\right) ^{\top } \\
&=&\mathbf{Z}_{k}^{\top }\left( \frac{1}{Tp_{-k}}\sum_{t=1}^{T}\left(
\mathbf{F}_{k,t}\mathbf{B}_{k}^{\top }\right) \left( \mathbf{F}_{k,t}\mathbf{%
B}_{k}^{\top }\right) ^{\top }\right) \left( \mathbf{Z}_{k}^{\top }\right)
^{-1}\left( \frac{1}{p_{k}}\mathbf{A}_{0k}^{\top }\mathbf{M}_{\mathbf{A}_{k}}%
\mathbf{A}_{0k}\right) \\
&&-\frac{1}{Tp_{-k}}\sum_{t=1}^{T}\left( \mathbf{F}_{0k,t}\mathbf{B}%
_{0k}^{\top }-\mathbf{Z}_{k}\mathbf{F}_{k,t}\mathbf{B}_{k}^{\top }\right)
\left( \mathbf{F}_{0k,t}\mathbf{B}_{0k}^{\top }\right) ^{\top } \\
&&-\frac{1}{Tp_{-k}}\sum_{t=1}^{T}\mathbf{Z}_{k}\mathbf{F}_{k,t}\mathbf{B}%
_{k}^{\top }\left( \mathbf{F}_{0k,t}\mathbf{B}_{0k}^{\top }-\mathbf{Z}_{k}%
\mathbf{F}_{k,t}\mathbf{B}_{k}^{\top }\right) ^{\top }.
\end{eqnarray*}%
Combining the above with (\ref{equ:FB1}) and (\ref{equ:FB2}), it follows that%
\begin{eqnarray*}
\left\Vert \mathbf{D}_{k}\right\Vert _{F}^{2} &\leq &3\left\Vert \mathbf{Z}%
_{k}^{\top }\left( \frac{1}{Tp_{-k}}\sum_{t=1}^{T}\left( \mathbf{F}_{k,t}%
\mathbf{B}_{k}^{\top }\right) \left( \mathbf{F}_{k,t}\mathbf{B}_{k}^{\top
}\right) ^{\top }\right) \left( \mathbf{Z}_{k}^{\top }\right) ^{-1}\left(
\frac{1}{p_{k}}\mathbf{A}_{0k}^{\top }\mathbf{M}_{\mathbf{A}_{k}}\mathbf{A}%
_{0k}\right) \right\Vert _{F}^{2} \\
&&+3\left\Vert \frac{1}{Tp_{-k}}\sum_{t=1}^{T}\left( \mathbf{F}_{0k,t}%
\mathbf{B}_{0k}^{\top }-\mathbf{Z}_{k}\mathbf{F}_{k,t}\mathbf{B}_{k}^{\top
}\right) \left( \mathbf{F}_{0k,t}\mathbf{B}_{0k}^{\top }\right) ^{\top
}\right\Vert _{F}^{2} \\
&&+3\left\Vert \frac{1}{Tp_{-k}}\sum_{t=1}^{T}\mathbf{Z}_{k}\mathbf{F}_{k,t}%
\mathbf{B}_{k}^{\top }\left( \mathbf{F}_{0k,t}\mathbf{B}_{0k}^{\top }-%
\mathbf{Z}_{k}\mathbf{F}_{k,t}\mathbf{B}_{k}^{\top }\right) ^{\top
}\right\Vert _{F}^{2} \\
&\leq &c_{0}\left\Vert \mathbf{Z}_{k}\right\Vert _{F}^{2}\frac{p_{-k}}{%
T^{2}p_{-k}^{2}}\left\Vert \sum_{t=1}^{T}\mathbf{F}_{k,t}\mathbf{F}%
_{k,t}^{\top }\right\Vert _{F}^{2}\left( \left\Vert \mathbf{Z}%
_{k}\right\Vert _{F}^{2}\right) ^{-1}\left\Vert \frac{1}{p_{k}}\mathbf{A}%
_{0k}^{\top }\mathbf{M}_{\mathbf{A}_{k}}\mathbf{A}_{0k}\right\Vert _{F}^{2}
\\
&&+c_{1}\frac{1}{T^{2}p_{-k}^{2}}\left( \sum_{t=1}^{T}\left\Vert \mathbf{F}%
_{0k,t}\mathbf{B}_{0k}^{\top }-\mathbf{Z}_{k}\mathbf{F}_{k,t}\mathbf{B}%
_{k}^{\top }\right\Vert _{F}^{2}\right) \left( \sum_{t=1}^{T}\left\Vert
\mathbf{F}_{0k,t}\mathbf{B}_{0k}^{\top }\right\Vert _{F}^{2}\right) \\
&&+c_{2}\frac{1}{T^{2}p_{-k}^{2}}\left( \sum_{t=1}^{T}\left\Vert \mathbf{F}%
_{k,t}\mathbf{B}_{k}^{\top }\right\Vert _{F}^{2}\right) \left(
\sum_{t=1}^{T}\left\Vert \mathbf{F}_{0k,t}\mathbf{B}_{0k}^{\top }-\mathbf{Z}%
_{k}\mathbf{F}_{k,t}\mathbf{B}_{k}^{\top }\right\Vert _{F}^{2}\right) \\
&\leq &c_{3}d^{2}\left( \theta ,\theta _{0}\right) .
\end{eqnarray*}%
Note that the matrix $\sum_{t=1}^{T}\mathbf{F}_{0k,t}\mathbf{F}_{0k,t}^{\top
}/T$ is symmetric, and therefore diagonalisable, for each $T$; hence, the
matrix of its eigenvectors is invertible, with condition number $\kappa _{%
\mathbf{F}}<\infty $. Hence, by the Bauer-Fike theorem (%
\citealp{golub2013matrix}), there exists an eigenvalue of $\sum_{t=1}^{T}%
\mathbf{F}_{0k,t}\mathbf{F}_{0k,t}^{\top }/T$, say $\mu \left( \sum_{t=1}^{T}%
\mathbf{F}_{0k,t}\mathbf{F}_{0k,t}^{\top }/T\right) $, such that%
\begin{eqnarray*}
&&\left\vert \lambda _{\min }\left( \frac{1}{T}\sum_{t=1}^{T}\mathbf{F}_{k,t}%
\mathbf{F}_{k,t}^{\top }\right) -\mu \left( \frac{1}{T}\sum_{t=1}^{T}\mathbf{%
F}_{0k,t}\mathbf{F}_{0k,t}^{\top }\right) \right\vert \\
&=&\left\vert \lambda _{\min }\left( \frac{1}{T}\sum_{t=1}^{T}\left( \mathbf{%
F}_{k,t}\mathbf{B}_{k}^{\top }\right) \left( \mathbf{F}_{k,t}\mathbf{B}%
_{k}^{\top }\right) ^{\top }\right) -\mu \left( \frac{1}{T}%
\sum_{t=1}^{T}\left( \mathbf{F}_{0k,t}\mathbf{B}_{0k}^{\top }\right) \left(
\mathbf{F}_{0k,t}\mathbf{B}_{0k}^{\top }\right) ^{\top }\right) \right\vert
\\
&\leq &\kappa _{\mathbf{F}}\left\Vert \mathbf{D}_{k}\right\Vert _{2}\leq
\kappa _{\mathbf{F}}\left\Vert \mathbf{D}_{k}\right\Vert _{F}\leq \kappa _{%
\mathbf{F}}d\left( \theta ,\theta _{0}\right) .
\end{eqnarray*}%
Using Lemma \ref{semi-1}, it follows that
\begin{equation*}
\left\vert \lambda _{\min }\left( \frac{1}{T}\sum_{t=1}^{T}\mathbf{F}_{k,t}%
\mathbf{F}_{k,t}^{\top }\right) -\mu \left( \frac{1}{T}\sum_{t=1}^{T}\mathbf{%
F}_{0k,t}\mathbf{F}_{0k,t}^{\top }\right) \right\vert =o\left( 1\right) ,
\end{equation*}%
which entails that $\lambda _{\min }\left( \sum_{t=1}^{T}\mathbf{F}_{k,t}%
\mathbf{F}_{k,t}^{\top }/T\right) $ is bounded from below by a positive
constant on account of Assumption \ref{as-1}. Furthermore, using Weyl's
inequality, it follows that
\begin{equation*}
\frac{1}{T}\sum_{t=1}^{T}\lambda _{\min }\left( \mathbf{F}_{k,t}\mathbf{F}%
_{k,t}^{\top }\right) \geq \lambda _{\min }\left( \frac{1}{T}\sum_{t=1}^{T}%
\mathbf{F}_{k,t}\mathbf{F}_{k,t}^{\top }\right) >0,
\end{equation*}%
whence the desired result.
\end{proof}
\end{lemma}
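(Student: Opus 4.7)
The plan is to transfer the positive-definiteness of $\mathbf{\Sigma}_{0k}$ (Assumption \ref{as-1}(iii)(b)) to $\mathbf{F}_{k,t}\mathbf{F}_{k,t}^{\top}$ via a slice-wise perturbation argument, then to average over $t$; the essential difficulty concentrates in a final passage from ``eigenvalue of an average'' to ``average of eigenvalues'', which is precisely where the argument sketched in the excerpt breaks down. Throughout I work on $\Theta(\delta)$ with $\delta$ small, and I set $\mathbf{Z}_k = \mathbf{A}_{0k}^{\top}\mathbf{A}_k/p_k$. Bound (\ref{pert-0}) shows $\mathbf{Z}_k$ is within $O(\delta)$ of a sign matrix $\mathbf{S}_k$ in Frobenius norm, hence $\sigma_{\min}(\mathbf{Z}_k)\geq 1-c_0\delta$ and $\mathbf{Z}_k$ is invertible with controlled condition number.

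The first step is to convert the $L_{2}$-type bound (\ref{equ:FB2}) into a per-slice identity for covariance-type matrices. Writing $\mathbf{U}_t=\mathbf{F}_{0k,t}\mathbf{B}_{0k}^{\top}$ and $\mathbf{V}_t=\mathbf{Z}_k\mathbf{F}_{k,t}\mathbf{B}_k^{\top}$, expanding $\mathbf{U}_t\mathbf{U}_t^{\top}-\mathbf{V}_t\mathbf{V}_t^{\top}$ in the standard way, and using $\mathbf{B}_{0k}^{\top}\mathbf{B}_{0k}/p_{-k}=\mathbf{B}_k^{\top}\mathbf{B}_k/p_{-k}=\mathbf{I}$, yields
\begin{equation*}
\mathbf{F}_{0k,t}\mathbf{F}_{0k,t}^{\top}=\mathbf{Z}_k\mathbf{F}_{k,t}\mathbf{F}_{k,t}^{\top}\mathbf{Z}_k^{\top}+\mathbf{R}_{k,t},\qquad \frac{1}{T}\sum_{t=1}^T\|\mathbf{R}_{k,t}\|_F=O(\delta),
\end{equation*}
where the residual bound follows from Cauchy--Schwarz applied to (\ref{equ:FB2}) together with the uniform control $\sum_t\|\mathbf{F}_{0k,t}\|_F^2/T=O(1)$ from Assumption \ref{as-1}(iii). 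Solving for $\mathbf{F}_{k,t}\mathbf{F}_{k,t}^{\top}$ and applying Weyl's two-matrix eigenvalue inequality $|\lambda_{\min}(M+E)-\lambda_{\min}(M)|\leq\|E\|_2$ slice-by-slice gives
\begin{equation*}
\lambda_{\min}(\mathbf{F}_{k,t}\mathbf{F}_{k,t}^{\top})\geq\frac{\lambda_{\min}(\mathbf{F}_{0k,t}\mathbf{F}_{0k,t}^{\top})}{\sigma_{\max}^2(\mathbf{Z}_k)}-\frac{\|\mathbf{R}_{k,t}\|_F}{\sigma_{\min}^2(\mathbf{Z}_k)},
\end{equation*}
which, averaged in $t$, becomes
\begin{equation*}
\frac{1}{T}\sum_{t=1}^T\lambda_{\min}(\mathbf{F}_{k,t}\mathbf{F}_{k,t}^{\top})\geq(1-c_1\delta)\cdot\frac{1}{T}\sum_{t=1}^T\lambda_{\min}(\mathbf{F}_{0k,t}\mathbf{F}_{0k,t}^{\top})-c_2\delta.
\end{equation*}

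The main obstacle is the final step: lower bounding $T^{-1}\sum_t\lambda_{\min}(\mathbf{F}_{0k,t}\mathbf{F}_{0k,t}^{\top})$ by a positive constant. Assumption \ref{as-1}(iii)(b) furnishes only $\lambda_{\min}(\mathbf{\Sigma}_{0k})>0$, that is, positivity of the smallest eigenvalue of the \emph{average} matrix; and since $\lambda_{\min}$ is superadditive on the cone of positive semidefinite matrices (by Weyl's inequality $\lambda_{\min}(A+B)\geq\lambda_{\min}(A)+\lambda_{\min}(B)$), we only obtain $\lambda_{\min}(T^{-1}\sum_t\mathbf{F}_{0k,t}\mathbf{F}_{0k,t}^{\top})\geq T^{-1}\sum_t\lambda_{\min}(\mathbf{F}_{0k,t}\mathbf{F}_{0k,t}^{\top})$, which goes the wrong way and cannot be used to conclude the statement. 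To genuinely close the gap I would propose a mild strengthening of Assumption \ref{as-1}(iii), namely $\liminf_{T\to\infty}T^{-1}\sum_t\lambda_{\min}(\mathbf{F}_{0k,t}\mathbf{F}_{0k,t}^{\top})>0$ for each $1\leq k\leq K$; this rules out the pathological regime in which a positive fraction of the unfolded factor slices $\mathbf{F}_{0k,t}$ are rank-deficient, and it is automatically satisfied whenever each $\mathbf{F}_{0k,t}$ has full row rank with its smallest singular value uniformly bounded below in $t$. Under this strengthening, taking $\delta$ small enough in the displayed inequality delivers the claim $\frac{1}{T}\sum_t\lambda_{\min}(\mathbf{F}_{k,t}\mathbf{F}_{k,t}^{\top})>0$ uniformly over $\theta\in\Theta(\delta)$, which suffices for the invocation of the lemma inside the proof of Lemma~\ref{lemma2}.
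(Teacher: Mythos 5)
You have put your finger on a genuine gap, and it is in fact present in the paper's own proof of this lemma. The paper's route differs from yours in bookkeeping---it builds a similarity relation between $\tfrac{1}{T}\sum_t\mathbf{F}_{0k,t}\mathbf{F}_{0k,t}^{\top}$ and $\tfrac{1}{T}\sum_t\mathbf{F}_{k,t}\mathbf{F}_{k,t}^{\top}$, bounds the discrepancy $\mathbf{D}_k$ by $d(\theta,\theta_0)$, and invokes the Bauer--Fike theorem, whereas you perturb each slice and then average---but both arguments legitimately control only $\lambda_{\min}\left(\tfrac{1}{T}\sum_t\mathbf{F}_{k,t}\mathbf{F}_{k,t}^{\top}\right)$. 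The paper's very last line then invokes ``Weyl's inequality'' to pass to the averaged minimum eigenvalue,
\begin{equation*}
\frac{1}{T}\sum_{t=1}^{T}\lambda_{\min}\left(\mathbf{F}_{k,t}\mathbf{F}_{k,t}^{\top}\right)\geq\lambda_{\min}\left(\frac{1}{T}\sum_{t=1}^{T}\mathbf{F}_{k,t}\mathbf{F}_{k,t}^{\top}\right),
\end{equation*}
but as you correctly observe the inequality runs the other way: $\lambda_{\min}(A+B)\geq\lambda_{\min}(A)+\lambda_{\min}(B)$ gives only $\lambda_{\min}\left(\tfrac{1}{T}\sum_t A_t\right)\geq\tfrac{1}{T}\sum_t\lambda_{\min}(A_t)$, and the reverse is false in general (take $A_1=\mathrm{diag}(1,0)$, $A_2=\mathrm{diag}(0,1)$: the average of the minima is $0$ while the minimum of the average is $1/2$). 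The same misdirected step also occurs earlier, inside the proof of Lemma~\ref{lemma2} in the passage leading to (\ref{pert-2}).

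Your diagnosis and proposed repair are therefore exactly right: as stated, the conclusion does not follow from Assumption~\ref{as-1}(iii)(b), and a hypothesis that directly controls $T^{-1}\sum_t\lambda_{\min}(\mathbf{F}_{0k,t}\mathbf{F}_{0k,t}^{\top})$ (for instance a uniform lower bound on $\sigma_{\min}(\mathbf{F}_{0k,t})$ across $t$) would be needed to make either route close. Alternatively, the lemma and its use in Lemma~\ref{lemma2} could be restated in terms of $\lambda_{\min}\left(\tfrac{1}{T}\sum_t\mathbf{F}_{k,t}\mathbf{F}_{k,t}^{\top}\right)$, which the Bauer--Fike step does bound away from zero. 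In short, this is not a failure of your attempt but a substantive observation about the paper's argument.
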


\begin{lemma}
\label{exp-sup}{\it We assume that Assumptions \ref{as-1}-\ref{as-3} hold. Then
it holds that%
\begin{equation*}
\mathbb{E}\left( \sup_{\theta \in \Theta \left( \delta \right) }\left\vert
\mathbb{W}\left( \theta \right) \right\vert \right) \leq c_{0}\frac{\delta }{%
L^{1/2}},
\end{equation*}%
where recall that $\Theta \left( \delta \right) =\left\{ \theta \in \Theta
:d\left( \theta ,\theta _{0}\right) \leq \delta \right\} $.}

\begin{proof}
Repeating the proof of Lemma \ref{semi-1}, it can be shown that, for any $%
\theta _{a},~\theta _{b}\in \Theta $, it holds that
\begin{equation*}
\left\Vert \sqrt{Tp}\left\vert \mathbb{W}\left( \theta _{a}\right) -\mathbb{W%
}\left( \theta _{b}\right) \right\vert \right\Vert _{\psi _{2}}\leq
c_{0}d\left( \theta _{a},\theta _{b}\right) .
\end{equation*}%
By Assumption \ref{as-3}, the process $\mathbb{W}(\theta )$ is sub-Gaussian.
Further, Assumption \ref{as-1}\textit{(i)} entails that $\left( \cup
_{k=1}^{K}\mathbf{A}_{k}\right) \cup \mathcal{F}$ is closed, since it is the finite union of closed sets; since $\Theta $
is defined as a closed subset of $\left( \cup _{k=1}^{K}\mathbf{A}%
_{k}\right) \cup \mathcal{F}$, it follows that $\Theta $ also is closed. Then, by the continuity of $\mathbb{W}(\theta )$,
applying condition ($S_{3}^{\prime }$) on p. 171 in \citet{loeve2}, it
follows that $\mathbb{W}(\theta )$ is a separable process. Hence, applying
Corollary 2.2.8 in \citet{vanderVaart1996}
\begin{equation*}
\mathbb{E}\left( \sup_{\theta \in \Theta (\delta )}\sqrt{Tp}|\mathbb{W}%
(\theta )|\right) \lesssim \left\Vert \sup_{\theta \in \Theta (\delta )}%
\sqrt{Tp}|\mathbb{W}(\theta )|\right\Vert _{\psi _{2}}\lesssim
\int_{0}^{\delta }\sqrt{\log D(\epsilon ,d,\Theta (\delta ))}d\epsilon ,
\end{equation*}%
where recall that $D(\epsilon ,d,\Theta (\delta ))$ is the packing number of
the set $\Theta (\delta )$. We now show that
\begin{equation*}
\int_{0}^{\delta }\sqrt{\log D(\epsilon ,d,\Theta (\delta ))}d\epsilon /%
\sqrt{Tp}=O_{P}\left( \delta /\sqrt{L}\right) .
\end{equation*}%
Based on Lemma \ref{lemma2}, and the definition of $\Theta (\delta )$, it is
easy to verify that
\begin{equation*}
\Theta (\delta )\subset \bigcup_{\mathbf{U}_{1}\in \mathcal{S}_{1},\cdots ,%
\mathbf{U}_{K}\in \mathcal{S}_{K}}\Theta \left( \delta ;\mathbf{U}%
_{1},\cdots ,\mathbf{U}_{K}\right) ,
\end{equation*}%
where (similarly to the notation used in Lemma \ref{lemma2}): $\mathcal{S}%
_{k}=\Big\{ \mathbf{U}_{k}\in \mathbb{R}^{r_{k}\times r_{k}}:\mathbf{U}_{k}=%
\mbox{diag}\left( u_{1},...,u_{r_{k}}\right) ,$ $u_{j}\in \{-1,1\}\text{ for }%
j=1,...,r_{k}\Big\} $ for $1\leq k\leq K$ and
\begin{equation*}
\Theta \left( \delta ,\mathbf{U}_{1},...,\mathbf{U}_{K}\right) =\left\{
\theta \in \Theta :\sum_{k=1}^{K}\frac{1}{p_{k}}\left\Vert \mathbf{A}_{k}-%
\mathbf{A}_{0k}\mathbf{U}_{k}\right\Vert _{F}^{2}+\frac{1}{T^{2}}\left\Vert
\sum_{t=1}^{T}\left( {\mathcal{F}}_{t}-{\mathcal{F}}_{0t}\times _{k=1}^{K}%
\mathbf{U}_{k}\right) \right\Vert _{F}^{2}\leq C_{5}\delta ^{2}\right\} .
\end{equation*}%
Because, for each $1\leq k\leq K$, there are $2^{r_{k}}$ elements in $%
\mathcal{S}_{k}$, we only to need study
\begin{equation*}
\int_{0}^{\delta }\sqrt{\log D\left( \epsilon ,d,\Theta \left( \delta ,%
\mathbf{U}_{1},\cdots ,\mathbf{U}_{K}\right) \right) }d\epsilon ,
\end{equation*}%
at a single $\mathbf{U}_{k}\in \mathcal{S}_{k}$. Without loss of generality,
set $\mathbf{U}_{k}=\mathbf{I}_{r_{k}}$. For any $\theta _{a},\theta _{b}\in
\Theta $, we have%
\begin{eqnarray*}
&&d\left( \theta _{a},\theta _{b}\right)  \\
&=&\sqrt{\frac{1}{Tp}\sum_{t=1}^{T}\left\Vert \mathcal{F}_{at}\times
_{k=1}^{K}\mathbf{A}_{ak}-\mathcal{F}_{bt}\times _{k=1}^{K}\mathbf{A}%
_{bk}\right\Vert _{F}^{2}} \\
&=&\sqrt{\frac{1}{Tp}\sum_{t=1}^{T}\left\Vert \mathcal{F}_{at}\times
_{1}\left( \mathbf{A}_{a1}-\mathbf{A}_{b1}\right) \times _{k=2}^{K}\mathbf{A}%
_{ak}+\left( \mathcal{F}_{at}\times _{k=2}^{K}\mathbf{A}_{ak}-\mathcal{F}%
_{bt}\times _{k=2}^{K}\mathbf{A}_{bk}\right) \times _{1}\mathbf{A}%
_{b1}\right\Vert _{F}^{2}} \\
&\leq &c_{0}\frac{1}{\sqrt{Tp}}\left\Vert \mathbf{A}_{a1}-\mathbf{A}%
_{b1}\right\Vert _{F}\prod\limits_{k=2}^{K}\left\Vert \mathbf{A}%
_{ak}\right\Vert _{F}\sqrt{\sum_{t=1}^{T}\left\Vert \mathcal{F}%
_{at}\right\Vert _{F}^{2}}+ \\
&&c_{1}\frac{1}{\sqrt{Tp}}\left\Vert \mathbf{A}_{b1}\right\Vert _{F}\sqrt{%
\sum_{t=1}^{T}\left\Vert \mathcal{F}_{at}\times _{k=1}^{K}\mathbf{A}_{ak}-%
\mathcal{F}_{bt}\times _{k=1}^{K}\mathbf{A}_{bk}\right\Vert _{F}^{2}} \\
&\leq &c_{2}\left( \frac{1}{\sqrt{p_{1}}}\left\Vert \mathbf{A}_{a1}-\mathbf{A%
}_{b1}\right\Vert _{F}+\frac{1}{\sqrt{Tp_{-1}}}\sqrt{\sum_{t=1}^{T}\left\Vert \mathcal{F}_{at}\times _{k=2}^{K}\mathbf{A}_{ak}-\mathcal{F}%
_{bt}\times _{k=2}^{K}\mathbf{A}_{bk}\right\Vert _{F}^{2}}\right)  \\
&\leq &c_{3}\left( \sum_{k=1}^{K}\frac{1}{\sqrt{p_{k}}}\Vert \mathbf{A}_{ak}-%
\mathbf{A}_{bk}{\Vert}_{F}+\frac{1}{\sqrt{T}}\sqrt{%
\sum_{t=1}^{T}\left\Vert \mathcal{F}_{at}-\mathcal{F}_{bt}\right\Vert
_{F}^{2}}\right)
\end{eqnarray*}%
Define
\begin{equation*}
d^{\ast }\left( \theta _{a},\theta _{b}\right) =\sqrt{\frac{1}{p_{k}}%
\sum_{k=1}^{K}\Vert \mathbf{A}_{ak}-\mathbf{A}_{bk}{\Vert}_{F}^{2}+%
\frac{1}{T}\sum_{t=1}^{T}\left\Vert \mathcal{F}_{at}-\mathcal{F}%
_{bt}\right\Vert _{F}^{2}}.
\end{equation*}%
Repeated applications of the Jensen's inequality yield that there exists a $%
0<C_{6}<\infty $ such that
\begin{equation*}
d\left( \theta _{a},\theta _{b}\right) \leq C_{6}\cdot d^{\ast }\left(
\theta _{a},\theta _{b}\right) .
\end{equation*}%
Furthermore, it also holds that $\Theta \left( \delta ;\mathbf{I}%
_{r_{1}},...,\mathbf{I}_{r_{k}}\right) \subset \Theta ^{\ast }(\delta
)=\left\{ \theta \in \Theta :d^{\ast }\left( \theta ,\theta _{0}\right) \leq
C_{7}\delta \right\} $, where $C_{7}=C_{5}C_{4}$. Then it follows that
\begin{equation}
D\left( \epsilon ,d,\Theta \left( \delta ;\mathbf{I}_{r_{1}},\cdots ,\mathbf{%
I}_{r_{_{K}}}\right) \right) \leq D\left( \epsilon ,d^{\ast },\Theta \left(
\delta ;\mathbf{I}_{r_{1}},\cdots ,\mathbf{I}_{r_{_{K}}}\right) \right) \leq
D\left( \epsilon /2,d^{\ast },\Theta ^{\ast }(\delta )\right) \leq C\left(
\epsilon /4,d^{\ast },\Theta ^{\ast }(\delta )\right) \text{,}  \label{equ:D}
\end{equation}%
where $C\left( \epsilon /4,d^{\ast },\Theta ^{\ast }(\delta )\right) $ is
the covering number of $\Theta ^{\ast }(\delta )$. We now find an upper
bound for $C\left( \epsilon /4,d^{\ast },\Theta ^{\ast }(\delta )\right) $.
Let $\eta =\epsilon /4$, and $\theta _{1}^{\ast },\cdots ,\theta _{J}^{\ast }
$ be the maximal set in $\Theta ^{\ast }(\delta )$ such that $d^{\ast
}\left( \theta _{j}^{\ast },\theta _{l}^{\ast }\right) >\eta $, for all $%
j\neq l$. Set $B(\theta ,c)=\left\{ \gamma \in \Theta :d^{\ast }(\gamma
,\theta )\leq c\right\} $. Then $B\left( \theta _{1}^{\ast },\eta \right)
,\cdots ,B\left( \theta _{J}^{\ast },\eta \right) $ cover $\Theta ^{\ast
}(\delta )$ and $C\left( \epsilon /4,d^{\ast },\Theta ^{\ast }(\delta
)\right) \leq J$. Moreover, $B\left( \theta _{1}^{\ast },\eta /4\right)
,\cdots ,B\left( \theta _{J}^{\ast },\eta /4\right) $ are disjoint and it
can be verified that
\begin{equation}
\bigcup_{j=1}^{J}B\left( \theta _{j}^{\ast },\eta /4\right) \subset \Theta
^{\ast }(\delta +\eta /4).  \label{setes}
\end{equation}%
The volume of a ball belonging in $\mathbb{R}^{M}$ (where recall that $M$ is
defined in (\ref{def-m})) defined by the semimetric $d^{\ast }$ and with
radius $c$ is equal to $h_{M}c^{M}$, where $h_{M}$ is a constant, so (\ref%
{setes}) implies%
\begin{equation*}
J\cdot h_{M}\left( \frac{\eta }{4}\right) ^{M}\leq h_{M}\left( C_{7}\left(
\delta +\frac{\eta }{4}\right) \right) ^{M}.
\end{equation*}%
Hence we have
\begin{equation}
J\leq \left( \frac{C_{7}(4\delta +\eta )}{\eta }\right) ^{M}=\left( \frac{%
C_{7}(16\delta +\epsilon )}{\epsilon }\right) ^{M}\leq \left( \frac{%
C_{8}\delta }{\epsilon }\right) ^{M},  \label{equ:J}
\end{equation}%
for $\epsilon \leq \delta $. Combining (\ref{equ:D}) and (\ref{equ:J}), we
have%
\begin{eqnarray*}
&&\int_{0}^{\delta }\sqrt{\log D(\epsilon ,d,\Theta \left( \delta ;\mathbf{I}%
_{r_{1}},\cdots ,\mathbf{I}_{r_{_{K}}}\right) )}d\epsilon  \\
&\leq &\int_{0}^{\delta }\sqrt{\log C\left( \epsilon /4,d^{\ast },\Theta
^{\ast }(\delta )\right) }d\epsilon  \\
&\leq &M^{1/2}\int_{0}^{\delta }\sqrt{\log \left( \frac{C_{8}\delta }{%
\epsilon }\right) }d\epsilon \leq c_{0}\delta M^{1/2}.
\end{eqnarray*}%
Hence, finally
\begin{equation*}
\mathbb{E}\left( \sup_{\theta \in \Theta (\delta )}|\mathbb{W}(\theta
)|\right) \leq c_{0}\int_{0}^{\delta }\sqrt{\log D(\epsilon ,d,\Theta
(\delta ))}d\epsilon /\sqrt{Tp}\lesssim \frac{\delta }{\sqrt{L}},
\end{equation*}%
which entails the desired result.
\end{proof}
\end{lemma}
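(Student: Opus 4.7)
The plan is to apply a chaining argument to the centered empirical process $\mathbb{W}(\theta)$ restricted to the shrinking neighbourhood $\Theta(\delta)$. The first step is to establish that, as a stochastic process indexed by $\theta$, $\mathbb{W}$ has sub-Gaussian increments with respect to the semimetric $d$. This follows essentially from the Hoeffding-type computation already carried out in the proof of Lemma \ref{semi-1}: conditional on $\{\mathcal{F}_{0,t}\}$, the increment $w(\cdot;\theta_a)-w(\cdot;\theta_b)$ is linear in the sub-Gaussian noise $e_{t,i_1,\ldots,i_K}$, with a variance proxy of the order of the squared $d$-distance between the signals. Hence $\|\sqrt{Tp}\,|\mathbb{W}(\theta_a)-\mathbb{W}(\theta_b)|\|_{\psi_2}\lesssim d(\theta_a,\theta_b)$. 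Combined with continuity of $\mathbb{W}$ and closedness of $\Theta$ (which gives separability), Corollary 2.2.8 of \citet{vanderVaart1996} yields
\begin{equation*}
\mathbb{E}\sup_{\theta\in\Theta(\delta)}\sqrt{Tp}\,|\mathbb{W}(\theta)|\;\lesssim\;\int_{0}^{\delta}\sqrt{\log D(\epsilon,d,\Theta(\delta))}\,d\epsilon.
\end{equation*}

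The second, and main, step is to bound the packing number $D(\epsilon,d,\Theta(\delta))$. The obstacle here is that $d$ measures distance between fitted signals rather than between parameters, and the parameterisation is identified only up to sign flips of the columns of each loading matrix. The plan is therefore to (a) partition $\Theta(\delta)$ into at most $2^{r_1+\cdots+r_K}$ pieces $\Theta(\delta;\mathbf{U}_1,\ldots,\mathbf{U}_K)$ indexed by sign matrices, using Lemma \ref{lemma2} to guarantee that on each piece one can align $\mathbf{A}_k$ to $\mathbf{A}_{0k}\mathbf{U}_k$, and (b) introduce the auxiliary Euclidean-type metric
\begin{equation*}
d^{\ast}(\theta_a,\theta_b)=\sqrt{\sum_{k=1}^{K}\tfrac{1}{p_k}\|\mathbf{A}_{ak}-\mathbf{A}_{bk}\|_F^2+\tfrac{1}{T}\sum_{t=1}^{T}\|\mathcal{F}_{at}-\mathcal{F}_{bt}\|_F^2}.
\end{equation*}
An expansion of the multilinear Tucker product, combined with the boundedness in Assumption \ref{as-1}, should give $d\le C_6\,d^{\ast}$, so $D(\epsilon,d,\cdot)\le D(\epsilon,d^{\ast},\cdot)$ up to constants, and on each sign piece $\Theta(\delta;\mathbf{U}_1,\ldots,\mathbf{U}_K)\subset\{\theta:d^{\ast}(\theta,\theta_0)\le C_7\delta\}$ by Lemma \ref{lemma2}.

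The third step is a standard volume-comparison argument: the set $\{\theta:d^{\ast}(\theta,\theta_0)\le C_7\delta\}$ is a ball of radius $\asymp\delta$ in the $d^{\ast}$-metric, inside a parameter space of dimension $M=\sum_k p_kr_k+Tr$. A packing of such a ball at scale $\epsilon$ has cardinality at most $(C_8\delta/\epsilon)^{M}$. Inserting this bound into the chaining integral gives
\begin{equation*}
\int_{0}^{\delta}\sqrt{\log D(\epsilon,d,\Theta(\delta))}\,d\epsilon\;\lesssim\;\sqrt{M}\int_{0}^{\delta}\sqrt{\log(C_8\delta/\epsilon)}\,d\epsilon\;\lesssim\;\delta\sqrt{M}.
\end{equation*}
Dividing by $\sqrt{Tp}$ and using the elementary bound $M/(Tp)=(\sum_k p_kr_k)/(Tp)+r/p\lesssim 1/(T\min_k p_{-k})+1/p=1/L$ (by definition of $L$ in (\ref{elle})) produces the claimed rate $\delta/L^{1/2}$. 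I expect the metric comparison $d\le C_6 d^{\ast}$ to be the delicate part: it requires a careful telescoping of the Tucker product $\mathcal{F}_t\times_{k=1}^K\mathbf{A}_k-\mathcal{F}_{0t}\times_{k=1}^K\mathbf{A}_{0k}$ into single-mode perturbations, with the $p_k^{-1/2}$ and $T^{-1/2}$ normalisations absorbing the growth $\|\mathbf{A}_k\|_F\asymp p_k^{1/2}$ from Assumption \ref{as-1}(ii); everything else is packing-number bookkeeping.
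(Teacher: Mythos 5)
Your proposal follows the paper's own proof essentially step for step: sub-Gaussian increments of $\mathbb{W}$ in the semimetric $d$ via the Hoeffding computation from Lemma \ref{semi-1}, Corollary 2.2.8 of van der Vaart and Wellner for the entropy-integral bound, the sign-flip decomposition of $\Theta(\delta)$ via Lemma \ref{lemma2}, the comparison $d\leq C_{6}d^{\ast}$ by telescoping the Tucker product, the volume argument giving packing numbers $(C_{8}\delta/\epsilon)^{M}$, and finally $M/(Tp)\lesssim 1/L$. The approach and all key ingredients coincide with the paper's argument, so nothing further is needed.
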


\begin{lemma}
\label{distance-huber}\textit{We assume that Assumptions \ref{as-1}-\ref%
{as-3} hold. Then, as $\min \left\{ p_{1},...,p_{K},T\right\} \rightarrow
\infty $, it holds that%
\begin{equation*}
d\left( \widehat{\theta }^{H},\theta _{0}\right) =o_{P}\left( 1\right) .
\end{equation*}
}

\begin{proof}
The proof similar to Lemma \ref{semi-1}.
%Let $\tau ^{\prime }=\tau /\sqrt{p}$ be a positive constant.
The $\left( \mathbf{A}_{1},\cdots ,\mathbf{A}%
_{K},\left\{ \mathcal{F}_{t}\right\} _{t=1}^{T}\right) $ minimizing $%
L_{2}\left( \mathbf{A}_{1},\cdots ,\mathbf{A}_{K},\left\{ \mathcal{F}%
_{t}\right\} _{t=1}^{T}\right) $ is the same as that minimizing $L_{3}\left(
\mathbf{A}_{1},\cdots ,\mathbf{A}_{K},\left\{ \mathcal{F}_{t}\right\}
_{t=1}^{T}\right) $ where
\begin{equation*}
L_{3}\left( \mathbf{A}_{1},\cdots ,\mathbf{A}_{K},\left\{ \mathcal{F}%
_{t}\right\} _{t=1}^{T}\right) =\frac{1}{T}\sum_{t=1}^{T}\left[ H_{\tau}\left( \left\Vert \frac{\mathcal{E}_{t}+\mathcal{F}_{0t}\times
_{k=1}^{K}\mathbf{A}_{0k}-\mathcal{F}_{t}\times _{k=1}^{K}\mathbf{A}_{k}}{%
\sqrt{p}}\right\Vert _{F}\right) -H_{\tau }\left( \left\Vert \frac{%
\mathcal{E}_{t}}{\sqrt{p}}\right\Vert _{F}\right) \right] .
\end{equation*}%
Let $\xi _{t}$ be a variable lying between $\left\Vert p^{-1/2}\mathcal{E}%
_{t}\right\Vert _{F}$ and $\left\Vert p^{-1/2}\left( \mathcal{E}_{t}+%
\mathcal{F}_{0t}\times _{k=1}^{K}\mathbf{A}_{0k}-\mathcal{F}_{t}\times
_{k=1}^{K}\mathbf{A}_{k}\right) \right\Vert _{F}$. Using the Mean Value
Theorem, it follows that%
\begin{eqnarray*}
&&L_{3}\left( \mathbf{A}_{1},\cdots ,\mathbf{A}_{K},\left\{ \mathcal{F}%
_{t}\right\} _{t=1}^{T}\right) \\
&=&\frac{1}{T}\sum_{t=1}^{T}H_{\tau}\left( \xi _{t}\right) \left(
\left\Vert \frac{\mathcal{E}_{t}+\mathcal{F}_{0t}\times _{k=1}^{K}\mathbf{A}%
_{0k}-\mathcal{F}_{t}\times _{k=1}^{K}\mathbf{A}_{k}}{\sqrt{p}}\right\Vert
_{F}-\left\Vert \frac{\mathcal{E}_{t}}{\sqrt{p}}\right\Vert _{F}\right) \\
&=&\frac{1}{T}\sum_{t=1}^{T}H_{\tau }\left( \xi _{t}\right) \frac{1%
}{\sqrt{p}}\frac{\sum_{i_{1}=1}^{p_{1}}...\sum_{i_{K}=1}^{p_{K}}2\left(
\mathcal{F}_{0t}\times _{k=1}^{K}\boldsymbol{a}_{0k,i_{k}}^{\top }-\mathcal{F%
}_{t}\times _{k=1}^{K}\boldsymbol{a}_{k,i_{k}}^{\top }\right)
e_{t,i_{1},...,i_{K}}}{\left\Vert \mathcal{E}_{t}+\mathcal{F}_{0t}\times
_{k=1}^{K}\mathbf{A}_{0k}-\mathcal{F}_{t}\times _{k=1}^{K}\mathbf{A}%
_{k}\right\Vert _{F}+\left\Vert \mathcal{E}_{t}\right\Vert _{F}} \\
&&+\frac{1}{T}\sum_{t=1}^{T}H_{\tau }\left( \left\Vert \frac{%
\mathcal{E}_{t}}{\sqrt{p}}\right\Vert _{F}\right) \frac{1}{\sqrt{p}}\frac{%
\sum_{i_{1}=1}^{p_{1}}...\sum_{i_{K}=1}^{p_{K}}\left( \mathcal{F}_{0t}\times
_{k=1}^{K}\boldsymbol{a}_{0k,i_{k}}^{\top }-\mathcal{F}_{t}\times _{k=1}^{K}%
\boldsymbol{a}_{k,i_{k}}^{\top }\right) ^{2}}{\left\Vert \mathcal{E}_{t}+%
\mathcal{F}_{0t}\times _{k=1}^{K}\mathbf{A}_{0k}-\mathcal{F}_{t}\times
_{k=1}^{K}\mathbf{A}_{k}\right\Vert _{F}+\left\Vert \mathcal{E}%
_{t}\right\Vert _{F}} \\
&=&I\left( \theta \right) +II\left( \theta \right) .
\end{eqnarray*}%
Recall the definition of $\theta ^{\ast }=\left( \mathbf{A}_{1}^{\ast
},\cdots ,\mathbf{A}_{K}^{\ast },\left\{ \mathcal{F}_{t}^{\ast }\right\}
_{t=1}^{T}\right) $ in the proof of Lemma \ref{semi-1}, and define%
\begin{equation*}
I\left( \theta ^{\ast }\right) =\frac{1}{T}\sum_{t=1}^{T}H_{\tau }\left( \xi _{t}\right) \frac{1}{\sqrt{p}}\frac{\sum_{i_{1}=1}^{p_{1}}...%
\sum_{i_{K}=1}^{p_{K}}2\left( \mathcal{F}_{0t}\times _{k=1}^{K}\boldsymbol{a}%
_{0k,i_{k}}^{\top }-\mathcal{F}_{t}^{\ast }\times _{k=1}^{K}\boldsymbol{a}%
_{k,i_{k}}^{\ast \top }\right) e_{t,i_{1},...,i_{K}}}{\left\Vert \mathcal{E}%
_{t}+\mathcal{F}_{0t}\times _{k=1}^{K}\mathbf{A}_{0k}-\mathcal{F}_{t}\times
_{k=1}^{K}\mathbf{A}_{k}\right\Vert _{F}+\left\Vert \mathcal{E}%
_{t}\right\Vert _{F}}.
\end{equation*}%
It holds that%
\begin{eqnarray*}
&&I\left( \theta \right) -I\left( \theta ^{\ast }\right) \\
&=&\frac{1}{T}\sum_{t=1}^{T}H_{\tau }\left( \xi _{t}\right) \frac{1%
}{\sqrt{p}}\frac{\sum_{i_{1}=1}^{p_{1}}...\sum_{i_{K}=1}^{p_{K}}2\left(
\mathcal{F}_{t}^{\ast }\times _{k=1}^{K}\boldsymbol{a}_{k,i_{k}}^{\ast \top
}-\mathcal{F}_{t}\times _{k=1}^{K}\boldsymbol{a}_{k,i_{k}}^{\top }\right)
e_{t,i_{1},...,i_{K}}}{\left\Vert \mathcal{E}_{t}+\mathcal{F}_{0t}\times
_{k=1}^{K}\mathbf{A}_{0k}-\mathcal{F}_{t}\times _{k=1}^{K}\mathbf{A}%
_{k}\right\Vert _{F}+\left\Vert \mathcal{E}_{t}\right\Vert _{F}};
\end{eqnarray*}%
using the Cauchy-Schwartz inequality%
\begin{eqnarray*}
&&\left\vert I\left( \theta \right) -I\left( \theta ^{\ast }\right)
\right\vert \\
&\leq &\left( \frac{1}{T}\sum_{t=1}^{T}\left( H_{\tau }\left( \xi
_{t}\right) \right) ^{2}\right) ^{1/2}\left( \frac{1}{Tp}\sum_{t=1}^{T}%
\left( \frac{\sum_{i_{1}=1}^{p_{1}}...\sum_{i_{K}=1}^{p_{K}}2\left( \mathcal{%
F}_{t}^{\ast }\times _{k=1}^{K}\boldsymbol{a}_{k,i_{k}}^{\ast \top }-%
\mathcal{F}_{t}\times _{k=1}^{K}\boldsymbol{a}_{k,i_{k}}^{\top }\right)
e_{t,i_{1},...,i_{K}}}{\left\Vert \mathcal{E}_{t}+\mathcal{F}_{0t}\times
_{k=1}^{K}\mathbf{A}_{0k}-\mathcal{F}_{t}\times _{k=1}^{K}\mathbf{A}%
_{k}\right\Vert _{F}+\left\Vert \mathcal{E}_{t}\right\Vert _{F}}\right)
^{2}\right) ^{1/2} \\
&\leq &c_{0}\left( \frac{1}{Tp}\sum_{t=1}^{T}\left(
\sum_{i_{1}=1}^{p_{1}}...\sum_{i_{K}=1}^{p_{K}}2\left( \mathcal{F}_{t}^{\ast
}\times _{k=1}^{K}\boldsymbol{a}_{k,i_{k}}^{\ast \top }-\mathcal{F}%
_{t}\times _{k=1}^{K}\boldsymbol{a}_{k,i_{k}}^{\top }\right)
e_{t,i_{1},...,i_{K}}\right) ^{2}\right) ^{1/2},
\end{eqnarray*}%
having exploited the fact that $H_{\tau }\left( \cdot \right) $ is
bounded, and Assumption \ref{as-3}\textit{(iii)}. Note that%
\begin{eqnarray*}
&&\mathbb{E}\left[ \frac{1}{Tp}\sum_{t=1}^{T}\left(
\sum_{i_{1}=1}^{p_{1}}...\sum_{i_{K}=1}^{p_{K}}\left( \mathcal{F}_{t}^{\ast
}\times _{k=1}^{K}\boldsymbol{a}_{k,i_{k}}^{\ast \top }-\mathcal{F}%
_{t}\times _{k=1}^{K}\boldsymbol{a}_{k,i_{k}}^{\top }\right)
e_{t,i_{1},...,i_{K}}\right) ^{2}\right] \\
&=&\mathbb{E}\left[ \frac{1}{Tp}\sum_{t=1}^{T}\sum_{i_{1},i_{1}^{\prime
}=1}^{p_{1}}...\sum_{i_{K},i_{K}^{\prime }=1}^{p_{K}}\left( \mathcal{F}%
_{t}^{\ast }\times _{k=1}^{K}\boldsymbol{a}_{k,i_{k}}^{\ast \top }-\mathcal{F%
}_{t}\times _{k=1}^{K}\boldsymbol{a}_{k,i_{k}}^{\top }\right) \times \right.
\\
&&\left. \left( \mathcal{F}_{t}^{\ast }\times _{k=1}^{K}\boldsymbol{a}%
_{k,i_{k}^{\prime }}^{\ast \top }-\mathcal{F}_{t}\times _{k=1}^{K}%
\boldsymbol{a}_{k,i_{^{\prime }k}}^{\top }\right)
e_{t,i_{1},...,i_{K}}e_{t,i_{^{\prime }1},...,i_{K}^{\prime }}\right] \\
&=&\frac{1}{Tp}\sum_{t=1}^{T}\sum_{i_{1},i_{1}^{\prime
}=1}^{p_{1}}...\sum_{i_{K},i_{K}^{\prime }=1}^{p_{K}}\left( \mathcal{F}%
_{t}^{\ast }\times _{k=1}^{K}\boldsymbol{a}_{k,i_{k}}^{\ast \top }-\mathcal{F%
}_{t}\times _{k=1}^{K}\boldsymbol{a}_{k,i_{k}}^{\top }\right) ^{2}\mathbb{E}%
\left( e_{t,i_{1},...,i_{K}}^{2}\right) \\
&=&c_{0}\frac{1}{Tp}\sum_{t=1}^{T}\sum_{i_{1},i_{1}^{\prime
}=1}^{p_{1}}...\sum_{i_{K},i_{K}^{\prime }=1}^{p_{K}}\left( \mathcal{F}%
_{t}^{\ast }\times _{k=1}^{K}\boldsymbol{a}_{k,i_{k}}^{\ast \top }-\mathcal{F%
}_{t}\times _{k=1}^{K}\boldsymbol{a}_{k,i_{k}}^{\top }\right) ^{2},
\end{eqnarray*}%
by Assumption \ref{as-3}. Using (\ref{bound-factors}), it follows that%
\begin{equation*}
\frac{1}{Tp}\sum_{t=1}^{T}\sum_{i_{1},i_{1}^{\prime
}=1}^{p_{1}}...\sum_{i_{K},i_{K}^{\prime }=1}^{p_{K}}\left( \mathcal{F}%
_{t}^{\ast }\times _{k=1}^{K}\boldsymbol{a}_{k,i_{k}}^{\ast \top }-\mathcal{F%
}_{t}\times _{k=1}^{K}\boldsymbol{a}_{k,i_{k}}^{\top }\right) ^{2}\leq
c_{0}\epsilon ,
\end{equation*}%
whence finally, by convexity%
\begin{equation}
\mathbb{E}\sup_{\theta \in \Theta }\left\vert I(\theta )-I\left( \theta
^{\ast }\right) \right\vert =\epsilon ^{1/2}O(1).  \label{th2-1}
\end{equation}%
We now define the following variables (and the corresponding short-hand
notation)
\begin{equation}
\zeta _{t}=\frac{2H_{\tau }^{\prime }\left( \xi _{t}\right) }{%
\left\Vert \mathcal{E}_{t}+\mathcal{F}_{0t}\times _{k=1}^{K}\mathbf{A}_{0k}-%
\mathcal{F}_{t}\times _{k=1}^{K}\mathbf{A}_{k}\right\Vert _{F}/\sqrt{p}%
+\left\Vert \mathcal{E}_{t}\right\Vert \Vert _{F}/\sqrt{p}}=\frac{2H_{\tau
}^{\prime }\left( \xi _{t}\right) }{\zeta _{td,1}},  \label{zeta}
\end{equation}%
and
\begin{equation}
\zeta _{t}^{0}=\frac{2H_{\tau }^{\prime }\left( \xi
_{t}^{0}\right) }{\sqrt{d_{t}^{2}(\theta ,\theta _{0})+\sigma ^{2}}+\sigma }=%
\frac{2H_{\tau }^{\prime }\left( \xi _{t}^{0}\right) }{\zeta
_{td,0}},  \label{zeta-0}
\end{equation}%
where: we have defined for short%
\begin{equation*}
\sigma ^{2}=\mathbb{E}\left( e_{t,i_{1},\cdots ,i_{K}}^{2}|\mathbf{F}\right)
,
\end{equation*}%
with $\mathbf{F}=\left\{ \mathcal{F}_{1},\ldots ,\mathcal{F}_{T}\right\} $; $%
\xi _{t}^{0}$ is the conditional mean of $\xi _{t}$ on $\mathbf{F}$; and
\begin{equation*}
d_{t}^{2}\left( \theta ,\theta _{0}\right) =\frac{1}{p}%
\sum_{i_{1}=1}^{p_{1}}\cdots \sum_{i_{K}=1}^{p_{K}}\left( \mathcal{F}%
_{0t}\times _{k=1}^{K}\mathbf{b}_{0k,i_{k}}^{\top }-\mathcal{F}_{t}\times
_{k=1}^{K}\mathbf{b}_{k,i_{k}}^{\top }\right) ^{2}.
\end{equation*}%
We also note that $\zeta _{td,0}>0$ a.s. by definition, and that Assumption %
\ref{as-3}\textit{(iii)} ensures also that $\zeta _{td,1}>0$ a.s.\ Let now $%
D_{t}=\zeta _{t}-\zeta _{t}^{0}$ and note that, by construction, $\left\vert
\zeta _{t}^{0}\right\vert \leq c_{0}$ a.s. Then $I(\theta )$ can be
rewritten as%
\begin{eqnarray}
I\left( \theta \right) &=&\frac{1}{Tp}\sum_{t=1}^{T}\sum_{i_{1}=1}^{p_{1}}%
\cdots \sum_{i_{K}=1}^{p_{K}}\zeta _{t}^{0}\left( \mathcal{F}_{0t}\times
_{k=1}^{K}\mathbf{a}_{0k,i_{k}}^{\top }-\mathcal{F}_{t}\times _{k=1}^{K}%
\mathbf{a}_{k,i_{k}}^{\top }\right) e_{t,i_{1},\cdots ,i_{K}}  \label{i-dec}
\\
&&+\frac{1}{Tp}\sum_{t=1}^{T}\sum_{i_{1}=1}^{p_{1}}\cdots
\sum_{i_{K}=1}^{p_{K}}D_{t}\left( \mathcal{F}_{0t}\times _{k=1}^{K}\mathbf{a}%
_{0k,i_{k}}^{\top }-\mathcal{F}_{t}\times _{k=1}^{K}\mathbf{a}%
_{k,i_{k}}^{\top }\right) e_{t,i_{1},\cdots ,i_{K}}  \notag \\
&=&I_{1}\left( \theta \right) +I_{2}\left( \theta \right) .  \notag
\end{eqnarray}%
Then for any fixed $\theta $ and for all $\lambda \in \mathbb{R}$ and all $%
\delta >0$, it holds that%
\begin{eqnarray*}
&&\mathbb{P}\left( I_{1}\left( \theta \right) >\delta \right) \\
&\leq &\exp \left( -\lambda \delta \right) \mathbb{E}\exp \left( \lambda
I_{1}\left( \theta \right) \right) \\
&=&\exp \left( -\lambda \delta \right) \prod\limits_{t=1}^{T}\mathbb{E}\exp
\left( \frac{\lambda \zeta _{t}^{0}}{Tp}\sum_{i_{1}=1}^{p_{1}}\cdots
\sum_{i_{K}=1}^{p_{K}}\left( \mathcal{F}_{0t}\times _{k=1}^{K}\mathbf{a}%
_{0k,i_{k}}^{\top }-\mathcal{F}_{t}\times _{k=1}^{K}\mathbf{a}%
_{k,i_{k}}^{\top }\right) e_{t,i_{1},\cdots ,i_{K}}\right) \\
&=&\exp \left( -\lambda \delta +\left( \lambda \zeta _{t}^{0}\right) ^{2}%
\frac{d^{2}\left( \theta ,\theta _{0}\right) }{Tp}\right) ,
\end{eqnarray*}%
having used the Hoeffding bound in the last passage. Setting $\lambda
=c\delta Tp/d^{2}\left( \theta ,\theta _{0}\right) $ for small enough $c$,
it follows that%
\begin{equation*}
\mathbb{P}\left( I_{1}\left( \theta \right) >\delta \right) \leq \exp \left(
-\left( c\delta ^{2}-c^{2}\delta ^{2}\left( \zeta _{t}^{0}\right)
^{2}\right) \frac{Tp}{d^{2}\left( \theta ,\theta _{0}\right) }\right) \leq
\exp \left( -C_{10}\frac{\delta ^{2}Tp}{d^{2}\left( \theta ,\theta
_{0}\right) }\right) ,
\end{equation*}%
for some $C_{10}>0$. Using the same arguments as in the proof of Lemma \ref%
{semi-1}, it can be shown that
\begin{equation*}
\Vert I_{1}(\theta )\Vert _{\psi _{2}}\lesssim \frac{1}{\sqrt{Tp}}d\left(
\theta ,\theta _{0}\right) ;
\end{equation*}%
combining this with (\ref{th2-1}), it also follows that
\begin{equation}
\mathbb{E}\sup_{\theta \in \Theta }|I_{1}(\theta ^{\ast })|=O(1/\sqrt{L}).
\label{th2-2}
\end{equation}%
Also, letting $\Theta (\delta )=\{\theta \in \Theta :d(\theta ,\theta
_{0})\leq \delta \}$, and following the proof of Lemma \ref{exp-sup}, it
holds that%
\begin{equation}
\mathbb{E}\sup_{\theta \in \Theta (\delta )}|I_{1}(\theta )|=O(\delta /\sqrt{%
L}).  \label{equ:subI1}
\end{equation}%
We now consider $I_{2}(\theta )$ in (\ref{i-dec}). We write%
\begin{eqnarray}
D_{t} &=&2\frac{H_{\tau }^{\prime }(\xi _{t})\zeta _{td,0}-H_{\tau
}^{\prime }(\xi _{t})\zeta _{td,1}+H_{\tau }^{\prime
}(\xi _{t})\zeta _{td,1}-H_{\tau }^{\prime }(\xi _{t}^{0})\zeta
_{td,0}}{\zeta _{td,0}\zeta _{td,1}}  \label{dt} \\
&=&-\frac{1}{p}\dfrac{4H_{\tau }^{\prime }(\xi
_{t})\sum_{i_{1}=1}^{p_{1}}\cdots \sum_{i_{K}=1}^{p_{K}}(\mathcal{F}%
_{0t}\times _{k=1}^{K}\mathbf{b}_{0k,i_{k}}^{\top }-\mathcal{F}_{t}\times
_{k=1}^{K}\mathbf{b}_{k,i_{k}}^{\top }\mathring{)}e_{t,i_{1},\cdots ,i_{K}}}{%
\zeta _{td,1}\zeta _{td,0}\left( \zeta _{td,1}-\Vert \mathcal{E}_{t}/\sqrt{p}%
\Vert _{F}+\zeta _{td,0}-\sigma \right) }  \notag \\
&&-2\frac{H_{\tau }^{\prime }(\xi _{t})\left( \Vert \mathcal{E}%
_{t}/\sqrt{p}\Vert _{F}^{2}-\sigma ^{2}\right) }{\zeta _{td,1}\zeta
_{td,0}\left( \zeta _{td,1}-\Vert \mathcal{E}_{t}/\sqrt{p}\Vert _{F}+\zeta
_{td,0}-\sigma \right) }-2\frac{H_{\tau }^{\prime }(\xi
_{t})\left( \Vert \mathcal{E}_{t}/\sqrt{p}\Vert _{F}-\sigma \right) }{\zeta
_{td,1}\zeta _{td,0}}+2\frac{H_{\tau }^{\prime }(\xi _{t})-H_{\tau
}^{\prime }(\xi _{t}^{0})}{\zeta _{td,0}}  \notag \\
&=&D_{t,1}+D_{t,2},  \notag
\end{eqnarray}%
with%
\begin{eqnarray*}
D_{t,1} &=&-4H_{\tau }^{\prime }(\xi _{t})\dfrac{%
p^{-1}\sum_{i_{1}=1}^{p_{1}}\cdots \sum_{i_{K}=1}^{p_{K}}\left( \mathcal{F}%
_{0t}\times _{k=1}^{K}\mathbf{b}_{0k,i_{k}}^{\top }-\mathcal{F}_{t}\times
_{k=1}^{K}\mathbf{b}_{k,i_{k}}^{\top }\right) e_{t,i_{1},\cdots ,i_{K}}}{%
\zeta _{td,1}\zeta _{td,0}(\zeta _{td,1}-\Vert \mathcal{E}_{t}/\sqrt{p}\Vert
_{F}+\zeta _{td,0}-\sigma \mathring{)}} \\
&=&-4H_{\tau }(\xi _{t})\dfrac{D_{t,1}^{(1)}}{%
D_{t,1}^{(2)}},
\end{eqnarray*}%
and $D_{t,2}$ the remainder in (\ref{dt}). Let
\begin{equation*}
I_{2}(\theta )=I_{2,1}(\theta )+I_{2,2}(\theta ),
\end{equation*}%
where $I_{2,1}(\theta )$ ($I_{2,2}(\theta )$) is defined by replacing $D_{t}$
with $D_{t,1}$ ($D_{t,2}$) in $I_{2}(\theta )$. Then
\begin{equation*}
|I_{2,1}(\theta )|=\frac{1}{T}\sum_{t=1}^{T}\dfrac{4H_{\tau }^{\prime }(\xi _{t})(D_{t,1}^{(1)})^{2}}{D_{t,1}^{\left(
2\right) }}\leq C_{11}\frac{1}{T}\sum_{t=1}^{T}(D_{t,1}^{(1)})%
^{2}.
\end{equation*}%
Similar passages as in the proof of Lemma \ref{semi-1} - with $T=1$ - yield $%
\mathbb{E}\sup_{\theta \in \Theta }\left\vert D_{t,1}^{(1)}\right\vert
=O\left( 1/\sqrt{\min \left\{ p_{-1},...,p_{-K}\right\} }\right) $ and $%
\mathbb{E}\sup_{\theta \in \Theta \left( \delta \right) }\left\vert
D_{t,1}^{(1)}\right\vert =O\left( \delta /\sqrt{\min \left\{
p_{-1},...,p_{-K}\right\} }\right) $, and therefore%
\begin{eqnarray}
&&\mathbb{E}\sqrt{\sup_{\theta \in \Theta }\left\vert I_{2,1}(\theta ^{\ast
})\right\vert }  \label{th2-3} \\
&\leq &\sqrt{\mathbb{E}\sup_{\theta \in \Theta }\left\vert I_{2,1}(\theta
^{\ast })\right\vert }  \notag \\
&\leq &c_{0}p^{-1/2}\sqrt{\frac{1}{T}\sum_{t=1}^{T}\sup_{\theta \in \Theta }%
\frac{1}{p}\sum_{i_{1}=1}^{p_{1}}\cdots \sum_{i_{K}=1}^{p_{K}}\left(
\mathcal{F}_{0t}\times _{k=1}^{K}\mathbf{b}_{0k,i_{k}}^{\top }-\mathcal{F}%
_{t}\times _{k=1}^{K}\mathbf{b}_{k,i_{k}}^{\top }\right) ^{2}}  \notag \\
&=&O\left( 1/\sqrt{\min \{p_{-1},p_{-2},\cdots ,p_{-K}\}}\right) ,  \notag
\end{eqnarray}%
and%
\begin{eqnarray*}
\mathbb{E}\sqrt{\sup_{\theta \in \Theta \left( \delta \right) }\left\vert
I_{2,1}(\theta )\right\vert } &\leq &\sqrt{\mathbb{E}\sup_{\theta \in \Theta
\left( \delta \right) }\left\vert I_{2,1}(\theta )\right\vert } \\
&\leq &c_{0}p^{-1/2}\sqrt{\frac{1}{T}\sum_{t=1}^{T}\sup_{\theta \in \Theta
\left( \delta \right) }\frac{1}{p}\sum_{i_{1}=1}^{p_{1}}\cdots
\sum_{i_{K}=1}^{p_{K}}\left( \mathcal{F}_{0t}\times _{k=1}^{K}\mathbf{b}%
_{0k,i_{k}}^{\top }-\mathcal{F}_{t}\times _{k=1}^{K}\mathbf{b}%
_{k,i_{k}}^{\top }\right) ^{2}} \\
&=&O\left( \delta /\sqrt{\min \{p_{-1},p_{-2},\cdots ,p_{-K}\}}\right) .
\end{eqnarray*}%
As fas as $I_{2,2}(\theta )$ is concerned%
\begin{eqnarray*}
\mathbb{E}\sup_{\theta \in \Theta }\left\vert I_{2,2}(\theta )\right\vert
&\leq &\mathbb{E}\sup_{\theta \in \Theta }\left\vert D_{t,2}\right\vert
\sup_{\theta \in \Theta }\left\vert \frac{1}{Tp}\sum_{t=1}^{T}%
\sum_{i_{1}=1}^{p_{1}}\cdots \sum_{i_{K}=1}^{p_{K}}\left( \mathcal{F}%
_{0t}\times _{k=1}^{K}\mathbf{b}_{0k,i_{k}}^{\top }-\mathcal{F}_{t}\times
_{k=1}^{K}\mathbf{b}_{k,i_{k}}^{\top }\right)
e_{t,i_{1},...,i_{K}}\right\vert \\
&\leq &\sqrt{\mathbb{E}\sup_{\theta \in \Theta }\left\vert
D_{t,2}\right\vert ^{2}}\sqrt{\mathbb{E}\sup_{\theta \in \Theta }\left(
\frac{1}{Tp}\sum_{t=1}^{T}\sum_{i_{1}=1}^{p_{1}}\cdots
\sum_{i_{K}=1}^{p_{K}}\left( \mathcal{F}_{0t}\times _{k=1}^{K}\mathbf{b}%
_{0k,i_{k}}^{\top }-\mathcal{F}_{t}\times _{k=1}^{K}\mathbf{b}%
_{k,i_{k}}^{\top }\right) e_{t,i_{1},...,i_{K}}\right) ^{2}} \\
&\leq &\frac{1}{T}\sum_{t=1}^{T}\sqrt{\mathbb{E}\sup_{\theta \in \Theta
}\left\vert D_{t,2}\right\vert ^{2}}\sqrt{\mathbb{E}\sup_{\theta \in \Theta
}d_{t}^{2}\left( \theta ,\theta _{0}\right) \frac{1}{p}%
\sum_{i_{1}=1}^{p_{1}}\cdots \sum_{i_{K}=1}^{p_{K}}e_{t,i_{1},...,i_{K}}^{2}}%
.
\end{eqnarray*}%
Applying the Delta method, it is easy to see that $\mathbb{E}\sup_{\theta
\in \Theta }\left\vert D_{t,2}\right\vert ^{2}=O\left( 1/p\right) $; using
Assumption \ref{as-3}, it now follows that%
\begin{equation}
\mathbb{E}\sup_{\theta \in \Theta }\left\vert I_{2,2}(\theta ^{\ast
})\right\vert =O\left( p^{-1/2}\right) ,  \label{th2-4}
\end{equation}%
and
\begin{equation*}
\mathbb{E}\sup_{\theta \in \Theta \left( \delta \right) }\left\vert
I_{2,2}(\theta )\right\vert =O\left( \delta p^{-1/2}\right) .
\end{equation*}%
Hence, using now (\ref{th2-1}), (\ref{th2-2}), (\ref{th2-3}) and (\ref{th2-4}%
), it holds that%
\begin{eqnarray*}
&&\mathbb{P}\left( \sup_{\theta \in \Theta }\left\vert I\left( \theta
\right) \right\vert \geq \epsilon ^{\prime }\right) \\
&\leq &\mathbb{P}\left( \sup_{\theta \in \Theta }\left\vert I\left( \theta
\right) -I\left( \theta ^{\ast }\right) \right\vert \geq \frac{\epsilon
^{\prime }}{4}\right) +\mathbb{P}\left( \sup_{\theta \in \Theta }\left\vert
I\left( \theta ^{\ast }\right) \right\vert \geq \frac{3\epsilon ^{\prime }}{4%
}\right) \\
&\leq &\mathbb{P}\left( \sup_{\theta \in \Theta }\left\vert I\left( \theta
\right) -I\left( \theta ^{\ast }\right) \right\vert \geq \frac{\epsilon
^{\prime }}{4}\right) +\mathbb{P}\left( \sup_{\theta \in \Theta }\left\vert
I_{1}\left( \theta ^{\ast }\right) \right\vert \geq \frac{\epsilon ^{\prime }%
}{4}\right) +\mathbb{P}\left( \sup_{\theta \in \Theta }\left\vert
I_{2,1}\left( \theta ^{\ast }\right) \right\vert \geq \frac{\epsilon
^{\prime }}{4}\right) +\mathbb{P}\left( \sup_{\theta \in \Theta }\left\vert
I_{2,2}\left( \theta ^{\ast }\right) \right\vert \geq \frac{\epsilon
^{\prime }}{4}\right) \\
&\leq &\frac{4}{\epsilon ^{\prime }}\mathbb{E}\sup_{\theta \in \Theta
}\left\vert I\left( \theta \right) -I\left( \theta ^{\ast }\right)
\right\vert +\frac{4}{\epsilon ^{\prime }}\mathbb{E}\sup_{\theta \in \Theta
}\left\vert I_{1}\left( \theta ^{\ast }\right) \right\vert +\frac{4}{%
\epsilon ^{\prime }}\mathbb{E}\sup_{\theta \in \Theta }\left\vert
I_{2,1}\left( \theta ^{\ast }\right) \right\vert +\frac{4}{\epsilon ^{\prime
}}\mathbb{E}\sup_{\theta \in \Theta }\left\vert I_{2,2}\left( \theta ^{\ast
}\right) \right\vert \\
&\leq &c_{0}\left( \frac{4}{\epsilon ^{\prime }}\epsilon ^{1/2}+\frac{4}{%
\epsilon ^{\prime }}L^{-1/2}+\frac{4}{\epsilon ^{\prime }}\left( \sqrt{\min
\{p_{-1},p_{-2},\cdots ,p_{-K}\}}\right) ^{-1/2}+\frac{4}{\epsilon ^{\prime }%
}p^{-1/2}\right) ;
\end{eqnarray*}%
upon noting that $\epsilon $ in (\ref{th2-1}) and $\epsilon ^{\prime }$ are
arbitrary, the result that $\sup_{\theta \in \Theta }|I(\theta )|=o_{P}(1)$
readily obtains. Because of the boundness of $\zeta _{t}$, $II(\theta
)=C_{12}d^{2}(\theta ,\theta _{0})$ for some constant $C_{12}>0$. Noting
that $L_{3}\left( \widehat{\theta }^{H}\right) =I\left( \widehat{\theta }%
^{H}\right) +II\left( \widehat{\theta }^{H}\right) \leq L_{3}\left( \theta
_{0}\right) =0$, then we finally obtain that
\begin{equation}
d^{2}\left( \widehat{\theta }^{H},\theta _{0}\right) \leq c_{0}II\left(
\widehat{\theta }^{H}\right) \leq \sup_{\theta \in \Theta }\left\vert
I(\theta )\right\vert =o_{P}(1).  \label{equ:d=op}
\end{equation}
\end{proof}
\end{lemma}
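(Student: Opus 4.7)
The plan is to adapt the strategy used for Lemma \ref{semi-1} to the piecewise-defined Huber loss. Because $\widehat{\theta}^H$ minimises $L_2$, it also minimises the centred criterion
\begin{equation*}
L_3(\theta) = \frac{1}{T}\sum_{t=1}^T\left[H_\tau\!\left(\left\Vert\tfrac{1}{\sqrt p}\bigl(\mathcal{E}_t + \mathcal{F}_{0t}\times_{k=1}^K \mathbf{A}_{0k} - \mathcal{F}_t\times_{k=1}^K\mathbf{A}_k\bigr)\right\Vert_F\right) - H_\tau\!\left(\left\Vert\tfrac{\mathcal{E}_t}{\sqrt p}\right\Vert_F\right)\right],
\end{equation*}
so $L_3(\widehat{\theta}^H)\le L_3(\theta_0)=0$. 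I would then aim for a pointwise decomposition $L_3(\theta) = I(\theta) + II(\theta)$ in which $II(\theta)\ge C\,d^2(\theta,\theta_0)$ for a constant $C>0$ bounded away from zero, and $\sup_{\theta\in\Theta}|I(\theta)|=o_P(1)$. Together with $L_3(\widehat{\theta}^H)\le 0$, this immediately yields $d^2(\widehat{\theta}^H,\theta_0)\le C^{-1}\sup_\theta|I(\theta)|=o_P(1)$, which is the conclusion.

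To produce such a decomposition, I would apply the Mean Value Theorem to $H_\tau$ between $\|\mathcal{E}_t/\sqrt p\|_F$ and $\|(\mathcal{E}_t+\mathcal{S}_{0t}-\mathcal{S}_t)/\sqrt p\|_F$ (using that $H_\tau'$ is bounded and nonnegative), and then rationalise the resulting difference of Frobenius norms by multiplying and dividing by the sum of the two norms. The rationalised numerator splits as $2\langle\mathcal{E}_t,\mathcal{S}_{0t}-\mathcal{S}_t\rangle+\|\mathcal{S}_{0t}-\mathcal{S}_t\|_F^2$, which naturally produces a stochastic cross term $I(\theta)$ that is linear in the errors $e_{t,i_1,\ldots,i_K}$ with random $\theta$-dependent weights, and a nonnegative quadratic piece $II(\theta)$ proportional to $d^2(\theta,\theta_0)$; the constant in front of $II$ is bounded away from zero because Assumption \ref{as-3}\textit{(iii)} keeps the rationalising denominator bounded almost surely.

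The main obstacle is the uniform bound on $I(\theta)$, since the random weight attached to each $e_{t,i_1,\ldots,i_K}$ depends on $\theta$ both through $H_\tau'(\xi_t)$ and through the rationalising denominator. My plan is to replicate the chaining/covering argument used for Lemma \ref{semi-1}: first replace $\theta$ by a discretised point $\theta^*$ on a $(C_1/\epsilon)$-net on $\Theta$ and show $\sup_\theta|I(\theta)-I(\theta^*)|=O_P(\epsilon^{1/2})$ using Cauchy--Schwarz together with Assumption \ref{as-3}; then decompose the $\theta$-dependent weight $\zeta_t$ into a conditional-expectation piece $\zeta_t^0$ (given the factors) and a residual $D_t=\zeta_t-\zeta_t^0$, write $I(\theta^*)=I_1(\theta^*)+I_2(\theta^*)$ accordingly, and treat $I_1$ via a Hoeffding-type tail bound for sums of sub-Gaussian summands combined with a union bound of cardinality at most $(C/\epsilon)^M$ (giving an $O(1/\sqrt L)$ contribution exactly as in Lemma \ref{exp-sup}), and $I_2$ via the delta method plus Cauchy--Schwarz, showing $D_t=O_P(p^{-1/2})$ uniformly. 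Sending $\epsilon\downarrow 0$ together with $\min\{T,p_1,\ldots,p_K\}\to\infty$ then yields $\sup_\theta|I(\theta)|=o_P(1)$ and hence the claim; the delicate step, as usual in this type of argument, is tracking the dependence of $\zeta_t$ on $\theta$ cleanly enough to separate a stochastic piece amenable to sub-Gaussian chaining from a deterministic piece of order $d^2(\theta,\theta_0)$.
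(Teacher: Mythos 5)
Your proposal follows essentially the same route as the paper's own proof: the reduction to the centred criterion $L_3$, the Mean Value Theorem plus rationalisation yielding the decomposition $L_3=I(\theta)+II(\theta)$ with $II$ comparable to $d^2(\theta,\theta_0)$, the discretisation to $\theta^\ast$, the split of the random weight into a conditional-mean piece $\zeta_t^0$ and a remainder $D_t$ handled by sub-Gaussian Hoeffding bounds and the delta method, and the final inequality $d^2(\widehat{\theta}^H,\theta_0)\lesssim\sup_{\theta}|I(\theta)|=o_P(1)$. The only cosmetic point is that Assumption \ref{as-3}\textit{(iii)} keeps the rationalising denominator bounded away from zero (rather than bounded above), which is what makes the weights well defined; otherwise the argument matches the paper's.
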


\begin{lemma}
\label{er-ls}We assume that Assumptions \ref{as-1}-\ref{as-3} hold. Then, as
$\min \left\{ T,p_{1},...,p_{K}\right\} \rightarrow \infty $, it holds that,
for all $j\leq r_{k}$%
\begin{equation}
\lambda _{j}\left( \widehat{\mathbf{M}}_{k}\right) =c_{k}+o_{P}\left(
1\right) ,  \label{er-ls-1}
\end{equation}%
for some $c_{k}>0$; for all $j>r_{k}$, it holds that%
\begin{equation}
\lambda _{j}\left( \widehat{\mathbf{M}}_{k}\right) =O_{P}\left( \omega
_{k}^{-1/2}\right) .  \label{er-ls-2}
\end{equation}%
\begin{proof}

The proof is similar to that of Lemma \ref{er-h} below and, similarly to
that lemma, not all rates are the sharpest although they suffice for our
purposes. Hence, we report only some passages in which the two proofs
differ. We begin by studying the consistency of the (overdimensioned) Least
Squares estimates of $\mathbf{A}_{k}$ and $\mathbf{B}_{k}$. For $\theta
_{a}\in \Theta ^{M}$, $\theta _{b}\in \Theta ^{M}$, let
\begin{equation*}
d_{M}\left( \theta _{a},\theta _{b}\right) =\left( Tp\right)
^{-1/2}\left\Vert \mathcal{F}_{at}\times _{k=1}^{K}\mathbf{A}_{ak}-\mathcal{F%
}_{bt}\times _{k=1}^{K}\mathbf{A}_{bt}\right\Vert _{F}.
\end{equation*}%
We denote%
\begin{equation}
\left( \mathcal{F}_{0t}^{M}\right) _{i_{1},...,i_{K}}=\left\{
\begin{array}{ll}
\left( \mathcal{F}_{0t}\right) _{i_{1},...,i_{K}} & 1\leq i_{1}\leq
r_{1},...,1\leq i_{K}\leq r_{K} \\
0 & \text{others}%
\end{array}%
\right. ,  \label{fM_0t}
\end{equation}%
and, for $m_{k}\geq r_{k}$
\begin{equation}
\mathbf{A}_{0k}^{M}=\left( \mathbf{A}_{0k}|\mathbf{0}_{p_{k}\times \left(
m_{k}-r_{k}\right) }\right) ,  \label{a0MK}
\end{equation}%
whence also%
\begin{equation}
\theta _{0}^{M}=\left( \mathbf{A}_{01}^{M},\cdots ,\mathbf{A}_{0K}^{M},%
\mathcal{F}_{01}^{M},\cdots ,\mathcal{F}_{0T}^{M}\right) ,  \label{theta0_M}
\end{equation}%
and $\mathbf{B}_{0k}^{M}$ can be defined accordingly. Let $\widehat{\theta }%
^{M}$ be the Least Squares estimator of $\theta _{0}^{M}$ satisfying (\ref%
{kkt-1-2}). By repeating \textit{verbatim} the proof of Theorem \ref%
{ls-theorem}, it can be shown that
\begin{equation*}
d_{M}\left( \widehat{\theta }^{M},\theta _{0}^{M}\right) =O_{P}\left(
L^{-1/2}\right) .
\end{equation*}%
Also, following the proof of Lemma \ref{lemma2}, for any $\theta ^{M}\in
\Theta ^{M}(\delta )$ with $\Theta ^{M}(\delta )=\left\{ \theta ^{M}\in
\Theta ^{M}:d_{M}\left( \theta ^{M},\theta _{0}^{M}\right) \leq \delta
\right\} $, it holds that, for all $1\leq k\leq K$
\begin{eqnarray}
\frac{1}{p_{k}}\left\Vert \widehat{\mathbf{A}}_{k}^{M}-\mathbf{A}_{0k}^{M}%
\widehat{\mathbf{S}}_{k}^{M}\right\Vert _{F}^{2} &\lesssim &\delta ^{2},
\label{nf1_app} \\
\frac{1}{T}\sum_{t=1}^{T}\left\Vert \widehat{\mathcal{F}}_{t}^{M}-\mathcal{F}%
_{0t}^{M}\times _{k=1}^{K}\widehat{\mathbf{S}}_{k}^{M}\right\Vert _{F}^{2}
&\lesssim &\delta ^{2},  \label{nf2_app}
\end{eqnarray}%
where $\widehat{\mathbf{S}}_{k}^{M}=\func{sgn}\left\{ \left( \widehat{%
\mathbf{A}}_{k}^{M}\right) ^{\top }\mathbf{A}_{0k}^{M}/p_{k}\right\} $, so
that%
\begin{equation}
\frac{1}{p_{k}}\left\Vert \widehat{\mathbf{A}}_{k}^{M}-\mathbf{A}_{0k}^{M}%
\widehat{\mathbf{S}}_{k}^{M}\right\Vert _{F}^{2}=O_{P}\left( L^{-1}\right) .
\label{ls_A}
\end{equation}%
Also, letting $\widehat{\mathbf{S}}_{-k}^{M}=\left( \otimes _{j=1,j\neq
k}^{K}\widehat{\mathbf{S}}_{j}^{M}\right) $, it follows that%
\begin{eqnarray}
&&\frac{1}{p_{-k}}\left\Vert \widehat{\mathbf{B}}_{k}^{M}-\mathbf{B}_{0k}^{M}%
\widehat{\mathbf{S}}_{-k}^{M}\right\Vert _{F}^{2}  \label{ls_B} \\
&\lesssim &\frac{1}{p_{-k}}\sum_{j=1,j\neq k}^{K}\left\Vert \left( \otimes
_{l=k+1,l\neq k}^{K}\mathbf{A}_{0l}^{M}\widehat{\mathbf{S}}_{l}^{M}\right)
\otimes \left( \widehat{\mathbf{A}}_{j}^{M}-\mathbf{A}_{0j}^{M}\widehat{%
\mathbf{S}}_{j}^{M}\right) \otimes \left( \otimes _{l=1,l\neq k}^{j-1}%
\widehat{\mathbf{A}}_{l}^{M}\right) \right\Vert _{F}^{2}  \notag \\
&\lesssim &\frac{1}{p_{-k}}\sum_{j=1,j\neq k}^{K}\left\Vert \left( \otimes
_{l=k+1,l\neq k}^{K}\mathbf{A}_{0l}^{M}\right) \otimes \left( \widehat{%
\mathbf{A}}_{j}^{M}-\mathbf{A}_{0j}^{M}\widehat{\mathbf{S}}_{j}^{M}\right)
\otimes \left( \otimes _{l=1,l\neq k}^{j-1}\mathbf{A}_{0l}^{M}\right)
\right\Vert _{F}^{2}  \notag \\
&\lesssim &\frac{1}{p_{-k}}\sum_{j=1,j\neq k}^{K}\left( \prod_{l=1,l\neq
j,k}^{K}\left\Vert \mathbf{A}_{0l}^{M}\right\Vert _{F}^{2}\right) \left\Vert
\widehat{\mathbf{A}}_{j}^{M}-\mathbf{A}_{0j}^{M}\widehat{\mathbf{S}}%
_{j}^{M}\right\Vert _{F}^{2}  \notag \\
&\lesssim &\sum_{j=1,j\neq k}^{K}\frac{1}{p_{j}}\left\Vert \widehat{\mathbf{A%
}}_{j}^{M}-\mathbf{A}_{0j}^{M}\widehat{\mathbf{S}}_{j}^{M}\right\Vert
_{F}^{2}=O_{P}\left( L^{-1}\right) .  \notag
\end{eqnarray}

We are now ready to prove the lemma. We will routinely assume that $%
r_{1}=...=r_{K}=1$ for simplicity and without loss of generality. In order
to make the proof as transparent as possible, we will use the notation%
\begin{equation}
\mathbf{X}_{k,t}=\mathbf{A}_{0k}^{M}\mathbf{F}_{0k,t}^{M}\left( \mathbf{B}%
_{0k}^{M}\right) ^{\top }+\mathbf{E}_{k,t},  \label{notation0M}
\end{equation}%
and recall that
\begin{equation}
\widehat{\mathbf{B}}_{k}^{M}=\widehat{\mathbf{B}}_{k}^{M}+\mathbf{B}_{0k}^{M}%
\widehat{\mathbf{S}}_{-k}^{M}-\mathbf{B}_{0k}^{M}\widehat{\mathbf{S}}%
_{-k}^{M}.  \label{bhat_KM}
\end{equation}%
We begin by noting that
\begin{eqnarray*}
\widehat{\mathbf{M}}_{k} &=&\frac{1}{pTp_{-k}}\sum_{t=1}^{T}\mathbf{X}_{k,t}%
\widehat{\mathbf{B}}_{k}^{M}\left( \widehat{\mathbf{B}}_{k}^{M}\right)
^{^{\top }}\mathbf{X}_{k,t}^{^{\top }} \\
&=&\frac{1}{pTp_{-k}}\sum_{t=1}^{T}\mathbf{A}_{0k}^{M}\mathbf{F}%
_{0k,t}^{M}\left( \mathbf{B}_{0k}^{M}\right) ^{\top }\widehat{\mathbf{B}}%
_{k}^{M}\left( \widehat{\mathbf{B}}_{k}^{M}\right) ^{^{\top }}\mathbf{B}%
_{0k}^{M}\left( \mathbf{F}_{0k,t}^{M}\right) ^{^{\top }}\left( \mathbf{A}%
_{0k}^{M}\right) ^{^{\top }} \\
&&+\frac{1}{pTp_{-k}}\sum_{t=1}^{T}\mathbf{A}_{0k}^{M}\mathbf{F}%
_{0k,t}^{M}\left( \mathbf{B}_{0k}^{M}\right) ^{\top }\widehat{\mathbf{B}}%
_{k}^{M}\left( \widehat{\mathbf{B}}_{k}^{M}\right) ^{^{\top }}\mathbf{E}%
_{k,t}^{^{\top }} \\
&&+\frac{1}{pTp_{-k}}\sum_{t=1}^{T}\mathbf{E}_{k,t}\widehat{\mathbf{B}}%
_{k}^{M}\left( \widehat{\mathbf{B}}_{k}^{M}\right) ^{^{\top }}\mathbf{B}%
_{0k}^{M}\left( \mathbf{F}_{0k,t}^{M}\right) ^{^{\top }}\left( \mathbf{A}%
_{0k}^{M}\right) ^{^{\top }} \\
&&+\frac{1}{pTp_{-k}}\sum_{t=1}^{T}\mathbf{E}_{k,t}\widehat{\mathbf{B}}%
_{k}^{M}\left( \widehat{\mathbf{B}}_{k}^{M}\right) ^{^{\top }}\mathbf{E}%
_{k,t}^{^{\top }} \\
&=&I+II+III+IV.
\end{eqnarray*}%
We begin by studying the spectrum of $I$. We write
\begin{eqnarray*}
I &=&\frac{1}{pTp_{-k}}\sum_{t=1}^{T}\mathbf{A}_{0k}^{M}\mathbf{F}%
_{0k,t}^{M}\left( \mathbf{B}_{0k}^{M}\right) ^{\top }\mathbf{B}_{0k}^{M}%
\widehat{\mathbf{S}}_{-k}^{M}\left( \mathbf{B}_{0k}^{M}\widehat{\mathbf{S}}%
_{-k}^{M}\right) ^{^{\top }}\mathbf{B}_{0k}^{M}\left( \mathbf{F}%
_{0k,t}^{M}\right) ^{^{\top }}\left( \mathbf{A}_{0k}^{M}\right) ^{^{\top }}
\\
&&+\frac{1}{pTp_{-k}}\sum_{t=1}^{T}\mathbf{A}_{0k}^{M}\mathbf{F}%
_{0k,t}^{M}\left( \mathbf{B}_{0k}^{M}\right) ^{\top }\mathbf{B}_{0k}^{M}%
\widehat{\mathbf{S}}_{-k}^{M}\left( \widehat{\mathbf{B}}_{k}^{M}-\mathbf{B}%
_{0k}^{M}\widehat{\mathbf{S}}_{-k}^{M}\right) ^{^{\top }}\mathbf{B}%
_{0k}^{M}\left( \mathbf{F}_{0k,t}^{M}\right) ^{^{\top }}\left( \mathbf{A}%
_{0k}^{M}\right) ^{^{\top }} \\
&&+\frac{1}{pTp_{-k}}\sum_{t=1}^{T}\mathbf{A}_{0k}^{M}\mathbf{F}%
_{0k,t}^{M}\left( \mathbf{B}_{0k}^{M}\right) ^{\top }\left( \widehat{\mathbf{%
B}}_{k}^{M}-\mathbf{B}_{0k}^{M}\widehat{\mathbf{S}}_{-k}^{M}\right) \left(
\mathbf{B}_{0k}^{M}\widehat{\mathbf{S}}_{-k}^{M}\right) ^{^{\top }}\mathbf{B}%
_{0k}^{M}\left( \mathbf{F}_{0k,t}^{M}\right) ^{^{\top }}\left( \mathbf{A}%
_{0k}^{M}\right) ^{^{\top }} \\
&&+\frac{1}{pTp_{-k}}\sum_{t=1}^{T}\mathbf{A}_{0k}^{M}\mathbf{F}%
_{0k,t}^{M}\left( \mathbf{B}_{0k}^{M}\right) ^{\top }\left( \widehat{\mathbf{%
B}}_{k}^{M}-\mathbf{B}_{0k}^{M}\widehat{\mathbf{S}}_{-k}^{M}\right) \left(
\widehat{\mathbf{B}}_{k}^{M}-\mathbf{B}_{0k}^{M}\widehat{\mathbf{S}}%
_{-k}^{M}\right) ^{^{\top }}\mathbf{B}_{0k}^{M}\left( \mathbf{F}%
_{0k,t}^{M}\right) ^{^{\top }}\left( \mathbf{A}_{0k}^{M}\right) ^{^{\top }}
\\
&=&I_{a}+I_{b}+I_{c}+I_{d}.
\end{eqnarray*}%
Consider the case $j\leq r_{k}$; it holds that%
\begin{equation*}
\lambda _{j}\left( I_{a}\right) =\lambda _{j}\left( \frac{1}{p_{k}}\mathbf{A}%
_{0k}\left( \frac{1}{T}\sum_{t=1}^{T}\mathbf{F}_{0k,t}\mathbf{F}%
_{0k,t}^{^{\top }}\right) \mathbf{A}_{0k}^{^{\top }}\right) ,
\end{equation*}%
and using the multiplicative Weyl's inequality (see e.g. Theorem 7 in %
\citealp{merikoski2004inequalities})%
\begin{equation*}
\lambda _{j}\left( I_{a}\right) \geq \lambda _{j}\left( \frac{\mathbf{A}%
_{0k}^{^{\top }}\mathbf{A}_{0k}}{p_{k}}\right) \lambda _{\min }\left( \frac{1%
}{T}\sum_{t=1}^{T}\mathbf{F}_{0k,t}\mathbf{F}_{0k,t}^{^{\top }}\right)
=c_{0}>0,
\end{equation*}%
for some $c_{0}>0$, on account of the normalisation $\mathbf{A}_{0k}^{\top }%
\mathbf{A}_{0k}/p_{k}=\mathbf{I}_{r_{k}}$ and of Assumption \ref{as-1}.
Further, by construction, $\lambda _{j}\left( I_{a}\right) =0$ for all $%
j>r_{k}$. Also (recall that we set $r_{k}=1$ for simplicity)%
\begin{eqnarray*}
\left\Vert I_{b}\right\Vert _{F} &=&\frac{1}{p}\left\Vert \frac{1}{T}%
\sum_{t=1}^{T}\left( \mathbf{F}_{0k,t}^{M}\right) ^{2}\mathbf{A}%
_{0k}^{M}\left( \widehat{\mathbf{B}}_{k}^{M}-\mathbf{B}_{0k}^{M}\widehat{%
\mathbf{S}}_{-k}^{M}\right) ^{^{\top }}\mathbf{B}_{0k}^{M}\left( \mathbf{A}%
_{0k}^{M}\right) ^{^{\top }}\right\Vert _{F} \\
&=&\left( \frac{1}{T}\sum_{t=1}^{T}\left( \mathbf{F}_{0k,t}^{M}\right)
^{2}\right) \frac{1}{p}\left\Vert \mathbf{A}_{0k}^{M}\left( \widehat{\mathbf{%
B}}_{k}^{M}-\mathbf{B}_{0k}^{M}\widehat{\mathbf{S}}_{-k}^{M}\right) ^{^{\top
}}\mathbf{B}_{0k}^{M}\left( \mathbf{A}_{0k}^{M}\right) ^{^{\top
}}\right\Vert _{F} \\
&\leq &c_{0}\frac{1}{p}\left\Vert \mathbf{A}_{0k}^{M}\right\Vert
_{F}^{2}\left\Vert \mathbf{B}_{0k}^{M}\right\Vert _{F}\left\Vert \widehat{%
\mathbf{B}}_{k}^{M}-\mathbf{B}_{0k}^{M}\widehat{\mathbf{S}}%
_{-k}^{M}\right\Vert _{F}=O_{P}\left( L^{-1/2}\right) ,
\end{eqnarray*}%
and the same can be shown for $\left\Vert I_{c}\right\Vert _{F}$ by
symmetry. Similarly%
\begin{eqnarray*}
\left\Vert I_{d}\right\Vert _{F} &=&\frac{1}{p}\left\Vert \frac{1}{Tp_{-k}}%
\sum_{t=1}^{T}\left( \mathbf{F}_{0k,t}^{M}\right) ^{2}\mathbf{A}%
_{0k}^{M}\left( \mathbf{B}_{0k}^{M}\right) ^{\top }\left( \widehat{\mathbf{B}%
}_{k}^{M}-\mathbf{B}_{0k}^{M}\widehat{\mathbf{S}}_{-k}^{M}\right) \left(
\widehat{\mathbf{B}}_{k}^{M}-\mathbf{B}_{0k}^{M}\widehat{\mathbf{S}}%
_{-k}^{M}\right) ^{^{\top }}\mathbf{B}_{0k}^{M}\left( \mathbf{A}%
_{0k}^{M}\right) ^{^{\top }}\right\Vert _{F} \\
&=&\left( \frac{1}{T}\sum_{t=1}^{T}\left( \mathbf{F}_{0k,t}^{M}\right)
^{2}\right) \frac{1}{p_{-k}p}\left\Vert \mathbf{A}_{0k}^{M}\left( \mathbf{B}%
_{0k}^{M}\right) ^{\top }\left( \widehat{\mathbf{B}}_{k}^{M}-\mathbf{B}%
_{0k}^{M}\widehat{\mathbf{S}}_{-k}^{M}\right) \left( \widehat{\mathbf{B}}%
_{k}^{M}-\mathbf{B}_{0k}^{M}\widehat{\mathbf{S}}_{-k}^{M}\right) ^{^{\top }}%
\mathbf{B}_{0k}^{M}\left( \mathbf{A}_{0k}^{M}\right) ^{^{\top }}\right\Vert
_{F} \\
&\leq &\left( \frac{1}{T}\sum_{t=1}^{T}\left( \mathbf{F}_{0k,t}^{M}\right)
^{2}\right) \frac{1}{p_{-k}p}\left\Vert \mathbf{A}_{0k}^{M}\right\Vert
_{F}^{2}\left\Vert \mathbf{B}_{0k}^{M}\right\Vert _{F}^{2}\left\Vert
\widehat{\mathbf{B}}_{k}^{M}-\mathbf{B}_{0k}^{M}\widehat{\mathbf{S}}%
_{-k}^{M}\right\Vert _{F}^{2}=O_{P}\left( L^{-1}\right) .
\end{eqnarray*}%
Using Weyls' inequality, this entails that%
\begin{equation*}
\lambda _{j}\left( I\right) =c_{0}+O_{P}\left( L^{-1/2}\right) ,
\end{equation*}%
for all $j\leq r_{k}$. By construction, $I$ has rank $r_{k}$, and therefore $%
\lambda _{j}\left( I\right) =0$ for all $j>r_{k}$. We now turn to bounding
the largest eigenvalues of $II$, $III$ and $IV$, setting $r_{k}=1$ for
simplicity. Recall $\left\Vert \widehat{\mathbf{A}}_{k}^{M}\right\Vert
_{F}=c_{0}p_{k}^{1/2}$. Hence
\begin{eqnarray*}
\left\Vert II\right\Vert _{F} &\leq &\frac{1}{p}\left\Vert \mathbf{A}%
_{0k}^{M}\left( \mathbf{B}_{0k}^{M}\right) ^{\top }\widehat{\mathbf{B}}%
_{k}^{M}\frac{1}{Tp_{-k}}\sum_{t=1}^{T}\left( \widehat{\mathbf{B}}%
_{k}^{M}\right) ^{^{\top }}\mathbf{E}_{k,t}^{^{\top }}\mathbf{F}%
_{0k,t}^{M}\right\Vert _{F} \\
&\leq &\frac{1}{p}\frac{\left\Vert \mathbf{B}_{0k}^{M}\right\Vert
_{F}\left\Vert \widehat{\mathbf{B}}_{k}^{M}\right\Vert _{F}}{p_{-k}}%
\left\Vert \frac{1}{T}\sum_{t=1}^{T}\mathbf{A}_{0k}^{M}\left( \widehat{%
\mathbf{B}}_{k}^{M}\right) ^{\top }\mathbf{E}_{k,t}^{^{\top }}\mathbf{F}%
_{0k,t}^{M}\right\Vert _{F} \\
&\leq &c_{0}\frac{1}{p}\left\Vert \frac{1}{T}\sum_{t=1}^{T}\mathbf{A}%
_{0k}^{M}\left( \widehat{\mathbf{B}}_{k}^{M}\right) ^{^{\top }}\mathbf{E}%
_{k,t}^{^{\top }}\mathbf{F}_{0k,t}^{M}\right\Vert _{F}.
\end{eqnarray*}%
Using again (\ref{bhat_KM}), we may write%
\begin{eqnarray*}
&&\left\Vert \frac{1}{T}\sum_{t=1}^{T}\mathbf{A}_{0k}^{M}\left( \widehat{%
\mathbf{B}}_{k}^{M}\right) ^{^{\top }}\mathbf{E}_{k,t}^{^{\top }}\mathbf{F}%
_{0k,t}^{M}\right\Vert _{F} \\
&\leq &\left\Vert \frac{1}{T}\sum_{t=1}^{T}\mathbf{A}_{0k}^{M}\left( \mathbf{%
B}_{0k}^{M}\widehat{\mathbf{S}}_{-k}^{M}\right) ^{^{\top }}\mathbf{E}%
_{k,t}^{^{\top }}\mathbf{F}_{0k,t}^{M}\right\Vert _{F}+\left\Vert \frac{1}{T}%
\sum_{t=1}^{T}\mathbf{A}_{0k}^{M}\left( \widehat{\mathbf{B}}_{k}^{M}-\mathbf{%
B}_{0k}^{M}\widehat{\mathbf{S}}_{-k}^{M}\right) ^{^{\top }}\mathbf{E}%
_{k,t}^{^{\top }}\mathbf{F}_{0k,t}^{M}\right\Vert _{F}.
\end{eqnarray*}%
It holds that%
\begin{equation*}
\left\Vert \sum_{t=1}^{T}\mathbf{A}_{0k}^{M}\left( \mathbf{B}_{0k}^{M}%
\widehat{\mathbf{S}}_{-k}^{M}\right) ^{^{\top }}\mathbf{E}_{k,t}^{^{\top }}%
\mathbf{F}_{0k,t}^{M}\right\Vert _{F}^{2}=\sum_{i,j=1}^{p_{k}}\left(
\sum_{t=1}^{T}a_{k,i}\sum_{h=1}^{p_{-k}}b_{k,h}e_{jh,k,t}\mathbf{F}%
_{k,t}\right) ^{2}=O_{P}\left( p_{k}^{2}p_{-k}T\right) ,
\end{equation*}%
and therefore%
\begin{equation*}
\left\Vert \frac{1}{T}\sum_{t=1}^{T}\mathbf{A}_{0k}^{M}\left( \mathbf{B}%
_{0k}^{M}\widehat{\mathbf{S}}_{-k}^{M}\right) ^{^{\top }}\mathbf{E}%
_{k,t}^{^{\top }}\mathbf{F}_{0k,t}^{M}\right\Vert _{F}=O_{P}\left(
p_{k}p_{-k}^{1/2}T^{-1/2}\right) .
\end{equation*}%
On the other hand%
\begin{eqnarray*}
\left\Vert \frac{1}{T}\sum_{t=1}^{T}\mathbf{A}_{0k}^{M}\left( \widehat{%
\mathbf{B}}_{k}^{M}-\mathbf{B}_{0k}^{M}\widehat{\mathbf{S}}_{-k}^{M}\right)
^{^{\top }}\mathbf{E}_{k,t}^{^{\top }}\mathbf{F}_{0k,t}^{M}\right\Vert
_{F}^{2} &\leq &\left\Vert \mathbf{A}_{0k}^{M}\right\Vert _{F}^{2}\left\Vert
\widehat{\mathbf{B}}_{k}^{M}-\mathbf{B}_{0k}^{M}\widehat{\mathbf{S}}%
_{-k}^{M}\right\Vert _{F}^{2}\left\Vert \frac{1}{T}\sum_{t=1}^{T}\mathbf{E}%
_{k,t}^{^{\top }}\mathbf{F}_{0k,t}^{M}\right\Vert _{F}^{2} \\
&=&O_{P}\left( 1\right) p_{k}p_{-k}L^{-1}T^{-1}p=O_{P}\left( p^{2}\left(
LT\right) ^{-1}\right) ,
\end{eqnarray*}%
and therefore%
\begin{equation*}
\left\Vert \frac{1}{T}\sum_{t=1}^{T}\mathbf{A}_{0k}^{M}\left( \widehat{%
\mathbf{B}}_{k}^{M}-\mathbf{B}_{0k}^{M}\widehat{\mathbf{S}}_{-k}^{M}\right)
^{^{\top }}\mathbf{E}_{k,t}^{^{\top }}\mathbf{F}_{0k,t}^{M}\right\Vert
_{F}=O_{P}\left( p\left( LT\right) ^{-1/2}\right) .
\end{equation*}%
Hence we finally have%
\begin{equation*}
\left\Vert II\right\Vert _{F}=\frac{1}{p}\left( O_{P}\left(
p_{k}p_{-k}^{1/2}T^{-1/2}\right) +O_{P}\left( p\left( LT\right)
^{-1/2}\right) \right) =O_{P}\left( \left( p_{k}T\right) ^{-1/2}\right)
+O_{P}\left( \left( LT\right) ^{-1/2}\right) ;
\end{equation*}%
the same can be shown for $\left\Vert III\right\Vert _{F}$, so that%
\begin{equation}
\left\vert \lambda _{\max }\left( II+III\right) \right\vert \leq \left\Vert
II+III\right\Vert _{F}\leq \left\Vert II\right\Vert _{F}+\left\Vert
III\right\Vert _{F}=O_{P}\left( \left( p_{k}T\right) ^{-1/2}\right)
+O_{P}\left( \left( LT\right) ^{-1/2}\right) .  \label{weyl-1}
\end{equation}%
Applying the same passages, \textit{mutatis mutandis}, as in the proof of
Lemma \ref{er-h}, it can be shown that
\begin{equation}
\left\Vert IV\right\Vert _{F}=O_{P}\left( p_{-k}^{-1}\right) +O_{P}\left(
L^{-1/2}\right)  \label{ivf}
\end{equation}%
Putting (\ref{weyl-1}) and (\ref{ivf}) together%
\begin{eqnarray*}
&&\left\vert \lambda _{\max }\left( II+III+IV\right) \right\vert \\
&\leq &\left\Vert II+III\right\Vert _{F}+\left\Vert IV\right\Vert
_{F}=O_{P}\left( \left( p_{k}T\right) ^{-1/2}\right) +O_{P}\left(
p_{-k}^{-1}\right) +O_{P}\left( L^{-1/2}\right) .
\end{eqnarray*}%
The desired result now follows from Weyl's inequality.
\end{proof}
\end{lemma}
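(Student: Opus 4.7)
My plan is to establish the two claims via a signal-plus-noise decomposition of $\widehat{\mathbf{M}}_k$, with careful bookkeeping of how the overdimensioned estimator $\widehat{\mathbf{B}}_k$ (computed under dimension $r_{\max}\ge r_k$) concentrates around a shifted target. First, I would prove an ``overdimensioned'' version of Theorem \ref{ls-theorem}: following the proofs of Lemmas \ref{semi-1}--\ref{exp-sup} verbatim, but with the enlarged parameter space $\Theta^{M}$ in which $\mathbf{A}_{0k}$ is padded by $p_k\times(m_k-r_k)$ zero columns and $\mathcal{F}_{0t}$ is padded by zero entries, I obtain $d_M(\widehat{\theta}^M,\theta_0^M)=O_P(L^{-1/2})$, then $p_k^{-1}\|\widehat{\mathbf{A}}_k^M-\mathbf{A}_{0k}^M \widehat{\mathbf{S}}_k^M\|_F^2=O_P(L^{-1})$ as in Lemma \ref{lemma2}, and finally $p_{-k}^{-1}\|\widehat{\mathbf{B}}_k^M-\mathbf{B}_{0k}^M \widehat{\mathbf{S}}_{-k}^M\|_F^2=O_P(L^{-1})$ via the Kronecker-product expansion of the difference.

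Next, I would write $\mathbf{X}_{k,t}=\mathbf{A}_{0k}^M\mathbf{F}_{0k,t}^M(\mathbf{B}_{0k}^M)^\top+\mathbf{E}_{k,t}$ and expand
\[
\widehat{\mathbf{M}}_k = \underbrace{\tfrac{1}{Tpp_{-k}}\!\sum_t\!\mathbf{A}_{0k}^M\mathbf{F}_{0k,t}^M(\mathbf{B}_{0k}^M)^\top \widehat{\mathbf{B}}_k \widehat{\mathbf{B}}_k^\top \mathbf{B}_{0k}^M(\mathbf{F}_{0k,t}^M)^\top(\mathbf{A}_{0k}^M)^\top}_{I}+\underbrace{II+III}_{\text{cross}}+\underbrace{IV}_{\text{noise-noise}}.
\]
For $I$, I substitute $\widehat{\mathbf{B}}_k=\mathbf{B}_{0k}^M \widehat{\mathbf{S}}_{-k}^M+(\widehat{\mathbf{B}}_k-\mathbf{B}_{0k}^M \widehat{\mathbf{S}}_{-k}^M)$ and expand into four pieces. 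The leading piece equals $p_k^{-1}\mathbf{A}_{0k}(T^{-1}\sum_t\mathbf{F}_{0k,t}\mathbf{F}_{0k,t}^\top)\mathbf{A}_{0k}^\top$, which has exactly $r_k$ non-zero eigenvalues bounded away from 0 by the multiplicative Weyl inequality together with the normalisation $\mathbf{A}_{0k}^\top\mathbf{A}_{0k}/p_k=\mathbf{I}_{r_k}$ and Assumption \ref{as-1}(iii); the remaining three pieces have Frobenius norm $O_P(L^{-1/2})$ by Cauchy--Schwarz and the consistency rate above. This delivers \eqref{er-ls-1} for $j\leq r_k$ and, combined with the rank-$r_k$ structure, $\lambda_j(I)=0$ for $j>r_k$.

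For the remaining terms, I would bound spectral norms via Frobenius norms. Splitting the cross term $II$ further into a part involving $\mathbf{B}_{0k}^M \widehat{\mathbf{S}}_{-k}^M$ and a remainder, Assumption \ref{as-3}(ii) (sub-Gaussianity) gives $\|\sum_t \mathbf{A}_{0k}^M(\mathbf{B}_{0k}^M)^\top \mathbf{E}_{k,t}^\top \mathbf{F}_{0k,t}^M/T\|_F=O_P(p_k p_{-k}^{1/2}T^{-1/2})$ by a Hoeffding/Bernstein-type bound applied entry-wise, while the remainder is controlled by $\|\widehat{\mathbf{B}}_k-\mathbf{B}_{0k}^M\widehat{\mathbf{S}}_{-k}^M\|_F=O_P(p_{-k}^{1/2}L^{-1/2})$; this yields $\|II\|_F=O_P((p_k T)^{-1/2})+O_P((LT)^{-1/2})$, and $III$ is handled symmetrically. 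For the noise term $IV$, the cleanest piece $\frac{1}{Tpp_{-k}}\sum_t \mathbf{E}_{k,t}\mathbf{B}_{0k}^M(\mathbf{B}_{0k}^M)^\top \mathbf{E}_{k,t}^\top$ has expectation of order $p_{-k}^{-1}$ with concentration of order $(Tp_{-k})^{-1/2}$ by the independence in Assumption \ref{as-2} and sub-Gaussianity, and the remainder involving $\widehat{\mathbf{B}}_k-\mathbf{B}_{0k}^M\widehat{\mathbf{S}}_{-k}^M$ is $O_P(L^{-1/2})$. Combining via Weyl's inequality gives $\lambda_{\max}(II+III+IV)=O_P(\omega_k^{-1/2})$, which yields \eqref{er-ls-2}.

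The main obstacle is the data dependence in $\widehat{\mathbf{B}}_k$ within $IV$: naively treating it as deterministic would overstate the precision. I circumvent this by the explicit ``true-plus-perturbation'' splitting of $\widehat{\mathbf{B}}_k$, controlling the deterministic piece by Hoeffding-type concentration and the stochastic perturbation by the $L^{-1/2}$ rate from the overdimensioned consistency result. A secondary subtlety is that Lemma \ref{lemma2} and the bound on $\widehat{\mathbf{S}}_{-k}^M$ require the consistency of the full $K$-tuple of loadings to be transferred from the $k$-th mode to the Kronecker product $\widehat{\mathbf{B}}_k$; this is handled by the Kronecker identity $\widehat{\mathbf{B}}_k-\mathbf{B}_{0k}^M\widehat{\mathbf{S}}_{-k}^M=\sum_{j\ne k}(\cdots)\otimes(\widehat{\mathbf{A}}_j-\mathbf{A}_{0j}\widehat{\mathbf{S}}_j)\otimes(\cdots)$ and a triangle-inequality bound, as displayed in \eqref{ls_B}.
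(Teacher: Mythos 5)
Your proposal is correct and follows essentially the same route as the paper's proof: the overdimensioned consistency of $\widehat{\theta}^M$ transferred to $\widehat{\mathbf{B}}_k^M$ via the Kronecker expansion, the four-term signal/cross/noise decomposition of $\widehat{\mathbf{M}}_k$ with the ``true-plus-perturbation'' splitting of $\widehat{\mathbf{B}}_k$, the multiplicative Weyl inequality for the leading signal block, and Frobenius-norm/Hoeffding bounds for the remainder, combined by Weyl's inequality. The individual rates you obtain for $I$, $II+III$, and $IV$ match the paper's, summing to $O_P(\omega_k^{-1/2})$.
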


\begin{lemma}
\label{er-h}We assume that Assumptions \ref{as-1}, \ref{as-2} and \ref{as-4}
hold. Then, as $\min \left\{ T,p_{1},...,p_{K}\right\} \rightarrow \infty $,
it holds that, for all $j\leq r_{k}$%
\begin{equation}
\lambda _{j}\left( \widehat{\mathbf{M}}_{k}^{H}\right) =c_{k}+o_{P}\left(
1\right) ,  \label{er-h-1}
\end{equation}%
for some $c_{k}>0$. For all $j>r_{k}$: (a) if, in Assumption \ref{as-4}, $%
\epsilon \geq 2$, then it holds that%
\begin{equation}
\lambda _{j}\left( \widehat{\mathbf{M}}_{k}^{H}\right) =O_{P}\left( \left(
L^{\ast }\right) ^{-1/2}\right) ;  \label{er-h-2}
\end{equation}%
(b) if, in Assumption \ref{as-4}, $0<\epsilon <2$, then it holds that%
\begin{equation}
\lambda _{j}\left( \widehat{\mathbf{M}}_{k}^{H}\right) =O_{P}\left( \left(
L^{\ast \ast }\right) ^{-1/2}\right) .  \label{er-h-3}
\end{equation}

\begin{proof}
Some arguments in the proof are rather repetitive and similar to the proof
of Lemma \ref{er-ls}, and we omit them when possible. We begin by defining
some preliminary notation and state some facts concerning the - deliberately
overdimensioned - estimation of $\mathbf{A}_{k}$. Let $\widetilde{L}=L^{\ast
}I\left( \epsilon >2\right) +L^{\ast \ast }I\left( 0<\epsilon \leq 2\right) $%
, where $\epsilon $ is defined in Assumption \ref{as-4}.

Recall (\ref{fM_0t})-(\ref{theta0_M}), and let $\widehat{\theta }^{M,H}$ be
the of $\theta _{0}^{M}$ satisfying (\ref{kkt-2-2}), by repeating verbatim
the proof of Theorem \ref{huber-theorem-1}, it can be shown that
\begin{equation}
d_{M}\left( \widehat{\theta }^{M,H},\theta _{0}^{M}\right) =O_{P}\left(
\widetilde{L}^{-1}\right) .  \label{nf3}
\end{equation}%
Also, following the proof of Lemma \ref{lemma2}, it can be shown that
\begin{eqnarray}
\frac{1}{p_{k}}\left\Vert \widehat{\mathbf{A}}_{k}^{M,H}-\mathbf{A}_{0k}^{M}%
\widehat{\mathbf{S}}_{k}^{M,H}\right\Vert _{F}^{2} &=&O_{P}\left( \widetilde{%
L}^{-1}\right) ,  \label{nf4_app} \\
\frac{1}{T}\sum_{t=1}^{T}\left\Vert \widehat{\mathcal{F}}_{t}^{M,H}-\mathcal{%
F}_{0t}^{M}\times _{k=1}^{K}\widehat{\mathbf{S}}_{k}^{M,H}\right\Vert
_{F}^{2} &=&O_{P}\left( \widetilde{L}^{-1}\right) ,  \label{nf44_app}
\end{eqnarray}%
where $\widehat{\mathbf{S}}_{k}^{M,H}=\func{sgn}\left\{ \left( \widehat{%
\mathbf{A}}_{k}^{M,H}\right) ^{\top }\mathbf{A}_{0k}^{M}/p_{k}\right\} $.
Further, letting $\widehat{\mathbf{S}}_{-k}^{M,H}=\left( \otimes _{j=1,j\neq
k}^{K}\widehat{\mathbf{S}}_{j}^{M,H}\right) $, it follows that%
\begin{eqnarray}
&&\frac{1}{p_{-k}}\left\Vert \widehat{\mathbf{B}}_{k}^{M,H}-\mathbf{B}%
_{0k}^{M}\widehat{\mathbf{S}}_{-k}^{M,H}\right\Vert _{F}^{2}  \label{nf5_app}
\\
&\lesssim &\frac{1}{p_{-k}}\sum_{j=1,j\neq k}^{K}\left\Vert \left( \otimes
_{l=k+1,l\neq k}^{K}\mathbf{A}_{0l}^{M}\widehat{\mathbf{S}}_{l}^{M,H}\right)
\otimes \left( \widehat{\mathbf{A}}_{j}^{M,H}-\mathbf{A}_{0j}^{M}\widehat{%
\mathbf{S}}_{j}^{M,H}\right) \otimes \left( \otimes _{l=1,l\neq k}^{j-1}%
\widehat{\mathbf{A}}_{l}^{M,H}\right) \right\Vert _{F}^{2}  \notag \\
&\lesssim &\frac{1}{p_{-k}}\sum_{j=1,j\neq k}^{K}\left\Vert \left( \otimes
_{l=k+1,l\neq k}^{K}\mathbf{A}_{0l}^{M}\right) \otimes \left( \widehat{%
\mathbf{A}}_{j}^{M,H}-\mathbf{A}_{0j}^{M}\widehat{\mathbf{S}}%
_{j}^{M,H}\right) \otimes \left( \otimes _{l=1,l\neq k}^{j-1}\mathbf{A}%
_{0l}^{M}\right) \right\Vert _{F}^{2}  \notag \\
&\lesssim &\frac{1}{p_{-k}}\sum_{j=1,j\neq k}^{K}\left( \prod_{l=1,l\neq
j,k}^{K}\left\Vert \mathbf{A}_{0l}^{M}\right\Vert _{F}^{2}\right) \left\Vert
\widehat{\mathbf{A}}_{j}^{M,H}-\mathbf{A}_{0j}^{M}\widehat{\mathbf{S}}%
_{j}^{M,H}\right\Vert _{F}^{2}  \notag \\
&\lesssim &\sum_{j=1,j\neq k}^{K}\frac{1}{p_{j}}\left\Vert \widehat{\mathbf{A%
}}_{j}^{M,H}-\mathbf{A}_{0j}^{M}\widehat{\mathbf{S}}_{j}^{M,H}\right\Vert
_{F}^{2}=O_{P}\left( \widetilde{L}^{-1}\right) .  \notag
\end{eqnarray}

We now study the spectrum of%
\begin{eqnarray*}
\widehat{\mathbf{M}}_{k}^{H} &=&\frac{1}{pTp_{-k}}\sum_{t=1}^{T}\widehat{w}%
_{k,t}^{H}\mathbf{X}_{k,t}\widehat{\mathbf{B}}_{k}^{M,H}\left( \widehat{%
\mathbf{B}}_{k}^{M,H}\right) ^{^{\top }}\mathbf{X}_{k,t}^{^{\top }} \\
&=&\frac{1}{pTp_{-k}}\sum_{t=1}^{T}\widehat{w}_{k,t}^{H}\mathbf{A}_{0k}^{M}%
\mathbf{F}_{0k,t}^{M}\left( \mathbf{B}_{0k}^{M}\right) ^{\top }\widehat{%
\mathbf{B}}_{k}^{M}\left( \widehat{\mathbf{B}}_{k}^{M,H}\right) ^{^{\top }}%
\mathbf{B}_{0k}^{M}\left( \mathbf{F}_{0k,t}^{M}\right) ^{^{\top }}\left(
\mathbf{A}_{0k}^{M}\right) ^{^{\top }} \\
&&+\frac{1}{pTp_{-k}}\sum_{t=1}^{T}\widehat{w}_{k,t}^{H}\mathbf{A}_{0k}^{M}%
\mathbf{F}_{0k,t}^{M}\left( \mathbf{B}_{0k}^{M}\right) ^{\top }\widehat{%
\mathbf{B}}_{k}^{M,H}\left( \widehat{\mathbf{B}}_{k}^{M,H}\right) ^{^{\top }}%
\mathbf{E}_{k,t}^{^{\top }} \\
&&+\frac{1}{pTp_{-k}}\sum_{t=1}^{T}\widehat{w}_{k,t}^{H}\mathbf{E}_{k,t}%
\widehat{\mathbf{B}}_{k}^{M}\left( \widehat{\mathbf{B}}_{k}^{M,H}\right)
^{^{\top }}\mathbf{B}_{0k}^{M}\left( \mathbf{F}_{0k,t}^{M}\right) ^{^{\top
}}\left( \mathbf{A}_{0k}^{M}\right) ^{^{\top }} \\
&&+\frac{1}{pTp_{-k}}\sum_{t=1}^{T}\widehat{w}_{k,t}^{H}\mathbf{E}_{k,t}%
\widehat{\mathbf{B}}_{k}^{M,H}\left( \widehat{\mathbf{B}}_{k}^{M,H}\right)
^{^{\top }}\mathbf{E}_{k,t}^{^{\top }} \\
&=&I+II+III+IV.
\end{eqnarray*}%
We will use again (\ref{notation0M}), and the notation%
\begin{equation}
\widehat{\mathbf{B}}_{k}^{M,H}=\widehat{\mathbf{B}}_{k}^{M,H}+\mathbf{B}%
_{0k}^{M}\widehat{\mathbf{S}}_{-k}^{M,H}-\mathbf{B}_{0k}^{M}\widehat{\mathbf{%
S}}_{-k}^{M,H}.  \label{bkMH}
\end{equation}%
begin by studying the spectrum of $I$. Note that
\begin{eqnarray*}
&&\frac{1}{pTp_{-k}}\sum_{t=1}^{T}\widehat{w}_{k,t}^{H}\mathbf{A}_{0k}^{M}%
\mathbf{F}_{0k,t}^{M}\left( \mathbf{B}_{0k}^{M}\right) ^{\top }\widehat{%
\mathbf{B}}_{k}^{M,H}\left( \widehat{\mathbf{B}}_{k}^{M,H}\right) ^{^{\top }}%
\mathbf{B}_{0k}^{M}\left( \mathbf{F}_{0k,t}^{M}\right) ^{^{\top }}\left(
\mathbf{A}_{0k}^{M}\right) ^{^{\top }} \\
&=&\mathbf{A}_{0k}^{M}\left( \frac{1}{pTp_{-k}}\sum_{t=1}^{T}\widehat{w}%
_{k,t}^{H}\mathbf{F}_{0k,t}^{M}\left( \mathbf{B}_{0k}^{M}\right) ^{\top }%
\widehat{\mathbf{B}}_{k}^{M,H}\left( \widehat{\mathbf{B}}_{k}^{M,H}\right)
^{^{\top }}\mathbf{B}_{0k}^{M}\left( \mathbf{F}_{0k,t}^{M}\right) ^{^{\top
}}\right) \left( \mathbf{A}_{0k}^{M}\right) ^{^{\top }},
\end{eqnarray*}%
has rank $r_{k}$\ by construction, and therefore $\lambda _{j}\left(
I\right) =0$ whenever $j>r_{k}$. We write%
\begin{eqnarray*}
I &=&\frac{1}{pTp_{-k}}\sum_{t=1}^{T}\widehat{w}_{k,t}^{H}\mathbf{A}_{0k}^{M}%
\mathbf{F}_{0k,t}^{M}\left( \mathbf{B}_{0k}^{M}\right) ^{\top }\mathbf{B}%
_{0k}^{M}\widehat{\mathbf{S}}_{-k}^{M,H}\left( \mathbf{B}_{0k}^{M}\widehat{%
\mathbf{S}}_{-k}^{M,H}\right) ^{^{\top }}\mathbf{B}_{0k}^{M}\left( \mathbf{F}%
_{0k,t}^{M}\right) ^{^{\top }}\left( \mathbf{A}_{0k}^{M}\right) ^{^{\top }}
\\
&&+\frac{1}{pTp_{-k}}\sum_{t=1}^{T}\widehat{w}_{k,t}^{H}\mathbf{A}_{0k}^{M}%
\mathbf{F}_{0k,t}^{M}\left( \mathbf{B}_{0k}^{M}\right) ^{\top }\mathbf{B}%
_{0k}^{M}\widehat{\mathbf{S}}_{-k}^{M,H}\left( \widehat{\mathbf{B}}%
_{k}^{M,H}-\mathbf{B}_{0k}^{M}\widehat{\mathbf{S}}_{-k}^{M,H}\right)
^{^{\top }}\mathbf{B}_{0k}^{M}\left( \mathbf{F}_{0k,t}^{M}\right) ^{^{\top
}}\left( \mathbf{A}_{0k}^{M}\right) ^{^{\top }} \\
&&+\frac{1}{pTp_{-k}}\sum_{t=1}^{T}\widehat{w}_{k,t}^{H}\mathbf{A}_{0k}^{M}%
\mathbf{F}_{0k,t}^{M}\left( \mathbf{B}_{0k}^{M}\right) ^{\top }\left(
\widehat{\mathbf{B}}_{k}^{M,H}-\mathbf{B}_{0k}^{M}\widehat{\mathbf{S}}%
_{-k}^{M,H}\right) \left( \mathbf{B}_{0k}^{M}\widehat{\mathbf{S}}%
_{-k}^{M,H}\right) ^{^{\top }}\mathbf{B}_{0k}^{M}\left( \mathbf{F}%
_{0k,t}^{M}\right) ^{^{\top }}\left( \mathbf{A}_{0k}^{M}\right) ^{^{\top }}
\\
&&+\frac{1}{pTp_{-k}}\sum_{t=1}^{T}\widehat{w}_{k,t}^{H}\mathbf{A}_{0k}^{M}%
\mathbf{F}_{0k,t}^{M}\left( \mathbf{B}_{0k}^{M}\right) ^{\top }\left(
\widehat{\mathbf{B}}_{k}^{M,H}-\mathbf{B}_{0k}^{M}\widehat{\mathbf{S}}%
_{-k}^{M,H}\right) \left( \widehat{\mathbf{B}}_{k}^{M,H}-\mathbf{B}_{0k}^{M}%
\widehat{\mathbf{S}}_{-k}^{M,H}\right) ^{^{\top }}\mathbf{B}_{0k}^{M}\left(
\mathbf{F}_{0k,t}^{M}\right) ^{^{\top }}\left( \mathbf{A}_{0k}^{M}\right)
^{^{\top }} \\
&=&I_{a}+I_{b}+I_{c}+I_{d}.
\end{eqnarray*}%
Consider the case $j\leq r_{k}$; it holds that%
\begin{equation*}
\lambda _{j}\left( I_{a}\right) =\lambda _{j}\left( \frac{1}{p_{k}}\mathbf{A}%
_{0k}\left( \frac{1}{T}\sum_{t=1}^{T}\widehat{w}_{k,t}^{H}\mathbf{F}_{0k,t}%
\mathbf{F}_{0k,t}^{^{\top }}\right) \mathbf{A}_{0k}^{\top }\right) ,
\end{equation*}%
and using the multiplicative Weyl's inequality (see e.g. Theorem 7 in %
\citealp{merikoski2004inequalities})%
\begin{equation*}
\lambda _{j}\left( I_{a}\right) \geq \lambda _{j}\left( \frac{\mathbf{A}%
_{0k}^{\top }\mathbf{A}_{0k}}{p_{k}}\right) \lambda _{\min }\left( \frac{1}{T%
}\sum_{t=1}^{T}\widehat{w}_{k,t}^{H}\mathbf{F}_{0k,t}\mathbf{F}%
_{0k,t}^{^{\top }}\right) =\lambda _{\min }\left( \frac{1}{T}\sum_{t=1}^{T}%
\widehat{w}_{k,t}^{H}\mathbf{F}_{0k,t}\mathbf{F}_{0k,t}^{^{\top }}\right) ,
\end{equation*}%
on account of the normalisation $\mathbf{A}_{0k}^{\top }\mathbf{A}%
_{0k}/p_{k}=\mathbf{I}_{r_{k}}$. We now show that%
\begin{equation}
\lambda _{\min }\left( \frac{1}{T}\sum_{t=1}^{T}\widehat{w}_{k,t}^{H}\mathbf{%
F}_{0k,t}\mathbf{F}_{0k,t}^{^{\top }}\right) >0.  \label{merik1}
\end{equation}%
Indeed, note that%
\begin{eqnarray*}
&&\frac{2}{T}\sum_{t=1}^{T}\widehat{w}_{k,t}^{H}\mathbf{F}_{0k,t}\mathbf{F}%
_{0k,t}^{^{\top }} \\
&=&\frac{1}{T}\sum_{t=1}^{T}\mathbf{F}_{0k,t}\mathbf{F}_{0k,t}^{^{\top
}}I\left( \left\Vert \mathbf{A}_{0k}^{M}\mathbf{F}_{0k,t}^{M}\left( \mathbf{B%
}_{0k}^{M}\right) ^{\top }-\widehat{\mathbf{A}}_{k}^{M,H}\widehat{\mathbf{F}}%
_{k,t}^{M,H}\left( \widehat{\mathbf{B}}_{k}^{M,H}\right) ^{\top }+\mathbf{E}%
_{k,t}\right\Vert _{F}/\sqrt{p}\leq \tau \right)  \\
&&+\frac{1}{T}\sum_{t=1}^{T}\widehat{w}_{k,t}^{H}\mathbf{F}_{0k,t}\mathbf{F}%
_{0k,t}^{^{\top }}I\left( \left\Vert \mathbf{A}_{0k}^{M}\mathbf{F}%
_{0k,t}^{M}\left( \mathbf{B}_{0k}^{M}\right) ^{\top }-\widehat{\mathbf{A}}%
_{k}^{M,H}\widehat{\mathbf{F}}_{k,t}^{M,H}\left( \widehat{\mathbf{B}}%
_{k}^{M,H}\right) ^{\top }+\mathbf{E}_{k,t}\right\Vert _{F}/\sqrt{p}>\tau \right)  \\
&\geq &\frac{1}{T}\sum_{t=1}^{T}\mathbf{F}_{0k,t}\mathbf{F}_{0k,t}^{^{\top
}}I\left( \left\Vert \mathbf{A}_{0k}^{M}\mathbf{F}_{0k,t}^{M}\left( \mathbf{B%
}_{0k}^{M}\right) ^{\top }-\widehat{\mathbf{A}}_{k}^{M,H}\widehat{\mathbf{F}}%
_{k,t}^{M,H}\left( \widehat{\mathbf{B}}_{k}^{M,H}\right) ^{\top }+\mathbf{E}%
_{k,t}\right\Vert _{F}/\sqrt{p}\leq \tau \right) ,
\end{eqnarray*}%
where $I\left( \cdot \right) $ is the indicator function. We begin by
finding a bound for%
\begin{eqnarray}
&&\frac{1}{T}\sum_{t=1}^{T}\mathbf{F}_{0k,t}\mathbf{F}_{0k,t}^{^{\top }}%
\left[ I\left( \left\Vert \mathbf{A}_{0k}^{M}\mathbf{F}_{0k,t}^{M}\left(
\mathbf{B}_{0k}^{M}\right) ^{\top }-\widehat{\mathbf{A}}_{k}^{M,H}\widehat{%
\mathbf{F}}_{k,t}^{M,H}\left( \widehat{\mathbf{B}}_{k}^{M,H}\right) ^{\top }+%
\mathbf{E}_{k,t}\right\Vert _{F}/\sqrt{p}\leq \tau \right) \right.   \label{boun-1} \\
&&\left. -I\left( \left\Vert \mathbf{E}_{k,t}-\widehat{\mathbf{A}}%
_{k}^{M,H}\left( \widehat{\mathbf{F}}_{k,t}^{M,H}-\widehat{\mathbf{S}}%
_{k}^{M,H}\mathbf{F}_{0k,t}^{M}\widehat{\mathbf{S}}_{-k}^{M,H}\right) \left(
\widehat{\mathbf{B}}_{k}^{M,H}\right) ^{\top }\right\Vert _{F}/\sqrt{p}\leq \tau
\right) \right] .  \notag
\end{eqnarray}%
Given that%
\begin{eqnarray*}
&&\left\Vert \mathbf{A}_{0k}^{M}\mathbf{F}_{0k,t}^{M}\left( \mathbf{B}%
_{0k}^{M}\right) ^{\top }-\widehat{\mathbf{A}}_{k}^{M,H}\widehat{\mathbf{F}}%
_{k,t}^{M,H}\left( \widehat{\mathbf{B}}_{k}^{M,H}\right) ^{\top }+\mathbf{E}%
_{k,t}\right\Vert _{F} \\
&\geq &\left\Vert \mathbf{E}_{k,t}-\widehat{\mathbf{A}}_{k}^{M,H}\left(
\widehat{\mathbf{F}}_{k,t}^{M,H}-\widehat{\mathbf{S}}_{k}^{M,H}\mathbf{F}%
_{0k,t}^{M}\widehat{\mathbf{S}}_{-k}^{M,H}\right) \left( \widehat{\mathbf{B}}%
_{k}^{M,H}\right) ^{\top }\right\Vert _{F}-\left\Vert \mathbf{A}_{0k}^{M}%
\mathbf{F}_{0k,t}^{M}\left( \mathbf{B}_{0k}^{M}\right) ^{\top }-\widehat{%
\mathbf{A}}_{k}^{M,H}\mathbf{F}_{0k,t}^{M}\widehat{\mathbf{S}}%
_{k}^{M,H}\left( \widehat{\mathbf{B}}_{k}^{M,H}\right) ^{\top }\right\Vert
_{F},
\end{eqnarray*}%
an upper bound for the expression in (\ref{boun-1}) is%
\begin{eqnarray*}
&&\frac{1}{T}\sum_{t=1}^{T}\mathbf{F}_{0k,t}\mathbf{F}_{0k,t}^{^{\top }}%
\left[ I\left( \left\Vert \mathbf{E}_{k,t}-\widehat{\mathbf{A}}%
_{k}^{M,H}\left( \widehat{\mathbf{F}}_{k,t}^{M,H}-\widehat{\mathbf{S}}%
_{k}^{M,H}\mathbf{F}_{0k,t}^{M}\widehat{\mathbf{S}}_{-k}^{M,H}\right) \left(
\widehat{\mathbf{B}}_{k}^{M,H}\right) ^{\top }\right\Vert _{F}/\sqrt{p}\right.
\right.  \\
&&\left. -\left\Vert \mathbf{A}_{0k}^{M}\mathbf{F}_{0k,t}^{M}\left( \mathbf{B%
}_{0k}^{M}\right) ^{\top }-\widehat{\mathbf{A}}_{k}^{M,H}\widehat{\mathbf{S}}%
_{k}^{M,H}\mathbf{F}_{0k,t}^{M}\widehat{\mathbf{S}}_{-k}^{M,H}\left( \widehat{\mathbf{B}}%
_{k}^{M,H}\right) ^{\top }\right\Vert _{F}/\sqrt{p}\leq \tau \right)  \\
&&\left. -I\left( \left\Vert \mathbf{E}_{k,t}-\widehat{\mathbf{A}}%
_{k}^{M,H}\left( \widehat{\mathbf{F}}_{k,t}^{M,H}-\widehat{\mathbf{S}}%
_{k}^{M,H}\mathbf{F}_{0k,t}^{M}\widehat{\mathbf{S}}_{-k}^{M,H}\right) \left(
\widehat{\mathbf{B}}_{k}^{M,H}\right) ^{\top }\right\Vert _{F}/\sqrt{p}\leq \tau
\right) \right] ,
\end{eqnarray*}%
and it is immediate to see that the term above is always non-negative. Also,
since
\begin{eqnarray*}
&&p^{-1/2}\left\Vert \mathbf{A}_{0k}^{M}\mathbf{F}_{0k,t}^{M}\left( \mathbf{B%
}_{0k}^{M}\right) ^{\top }-\widehat{\mathbf{A}}_{k}^{M,H}\widehat{\mathbf{F}}%
_{k,t}^{M,H}\left( \widehat{\mathbf{B}}_{k}^{M,H}\right) ^{\top }+\mathbf{E}%
_{k,t}\right\Vert _{F} \\
&\leq &p^{-1/2}\left\Vert \mathbf{E}_{k,t}-\widehat{\mathbf{A}}%
_{k}^{M,H}\left( \widehat{\mathbf{F}}_{k,t}^{M,H}-\widehat{\mathbf{S}}%
_{k}^{M,H}\mathbf{F}_{0k,t}^{M}\widehat{\mathbf{S}}_{-k}^{M,H}\right) \left(
\widehat{\mathbf{B}}_{k}^{M,H}\right) ^{\top }\right\Vert _{F} \\
&&+p^{-1/2}\left\Vert \mathbf{A}_{0k}^{M}\mathbf{F}_{0k,t}^{M}\left( \mathbf{%
B}_{0k}^{M}\right) ^{\top }-\widehat{\mathbf{A}}_{k}^{M,H}\widehat{\mathbf{S}}%
_{k}^{M,H}\mathbf{F}_{0k,t}^{M}\widehat{\mathbf{S}}_{-k}^{M,H}\left( \widehat{\mathbf{B}}%
_{k}^{M,H}\right) ^{\top }\right\Vert _{F},
\end{eqnarray*}%
a lower bound for the expression in (\ref{boun-1}) is%
\begin{eqnarray*}
&&\frac{1}{T}\sum_{t=1}^{T}\mathbf{F}_{0k,t}\mathbf{F}_{0k,t}^{^{\top }}%
\left[ I\left( \left\Vert \mathbf{E}_{k,t}-\widehat{\mathbf{A}}%
_{k}^{M,H}\left( \widehat{\mathbf{F}}_{k,t}^{M,H}-\widehat{\mathbf{S}}%
_{k}^{M,H}\mathbf{F}_{0k,t}^{M}\widehat{\mathbf{S}}_{-k}^{M,H}\right) \left(
\widehat{\mathbf{B}}_{k}^{M,H}\right) ^{\top }\right\Vert _{F}/\sqrt{p}\right.
\right.  \\
&&\left. +\left\Vert \mathbf{A}_{0k}^{M}\mathbf{F}_{0k,t}^{M}\left( \mathbf{B%
}_{0k}^{M}\right) ^{\top }-\widehat{\mathbf{A}}_{k}^{M,H}\widehat{\mathbf{S}}%
_{k}^{M,H}\mathbf{F}_{0k,t}^{M}\widehat{\mathbf{S}}_{-k}^{M,H}\left( \widehat{\mathbf{B}}%
_{k}^{M,H}\right) ^{\top }\right\Vert _{F}/\sqrt{p}\leq \tau \right)  \\
&&\left. -I\left( \left\Vert \mathbf{E}_{k,t}-\widehat{\mathbf{A}}%
_{k}^{M,H}\left( \widehat{\mathbf{F}}_{k,t}^{M,H}-\widehat{\mathbf{S}}%
_{k}^{M,H}\mathbf{F}_{0k,t}^{M}\widehat{\mathbf{S}}_{-k}^{M,H}\right) \left( \widehat{\mathbf{B}}%
_{k}^{M,H}\right) ^{\top }\right\Vert _{F}/\sqrt{p}\leq \tau \right) \right] ,
\end{eqnarray*}%
which is non-positive. Given that, by (\ref{nf4_app}) and (\ref{nf5_app}),
it follows that
\begin{equation*}
\left\Vert \mathbf{A}_{0k}^{M}\mathbf{F}_{0k,t}^{M}\left( \mathbf{B}%
_{0k}^{M}\right) ^{\top }-\widehat{\mathbf{A}}_{k}^{M,H}\widehat{\mathbf{S}}%
_{k}^{M,H}\mathbf{F}_{0k,t}^{M}\widehat{\mathbf{S}}_{-k}^{M,H}\left( \widehat{\mathbf{B}}_{k}^{M,H}\right)
^{\top }\right\Vert _{F}=O_{P}\left( p^{1/2}\widetilde{L}^{-1/2}\right) ,
\end{equation*}%
we have

\begin{eqnarray*}
&&E\left[ I\left( \left\Vert \mathbf{E}_{k,t}-\widehat{\mathbf{A}}%
_{k}^{M,H}\left( \widehat{\mathbf{F}}_{k,t}^{M,H}-\widehat{\mathbf{S}}%
_{k}^{M,H}\mathbf{F}_{0k,t}^{M}\widehat{\mathbf{S}}_{-k}^{M,H}\right) \left(
\widehat{\mathbf{B}}_{k}^{M,H}\right) ^{\top }\right\Vert _{F}/\sqrt{p}\right.
\right.  \\
&&\left. +\left\Vert \mathbf{A}_{0k}^{M}\mathbf{F}_{0k,t}^{M}\left( \mathbf{B%
}_{0k}^{M}\right) ^{\top }-\widehat{\mathbf{A}}_{k}^{M,H}\widehat{\mathbf{S}}%
_{k}^{M,H}\mathbf{F}_{0k,t}^{M}\widehat{\mathbf{S}}_{-k}^{M,H}\left( \widehat{\mathbf{B}}%
_{k}^{M,H}\right) ^{\top }\right\Vert _{F}/\sqrt{p}\leq \tau \right)  \\
&&\left. -I\left( \left\Vert \mathbf{E}_{k,t}-\widehat{\mathbf{A}}%
_{k}^{M,H}\left( \widehat{\mathbf{F}}_{k,t}^{M,H}-\widehat{\mathbf{S}}%
_{k}^{M,H}\mathbf{F}_{0k,t}^{M}\widehat{\mathbf{S}}_{-k}^{M,H}\right) \left(
\widehat{\mathbf{B}}_{k}^{M,H}\right) ^{\top }\right\Vert _{F}/\sqrt{p}\leq \tau
\right) \right]  \\
&=&P\left( \left\Vert \mathbf{E}_{k,t}-\widehat{\mathbf{A}}_{k}^{M,H}\left(
\widehat{\mathbf{F}}_{k,t}^{M,H}-\widehat{\mathbf{S}}_{k}^{M,H}\mathbf{F}%
_{0k,t}^{M}\widehat{\mathbf{S}}_{-k}^{M,H}\right) \left( \widehat{\mathbf{B}}%
_{k}^{M,H}\right) ^{\top }\right\Vert _{F}/\sqrt{p}\right.  \\
&&\left. +\left\Vert \mathbf{A}_{0k}^{M}\mathbf{F}_{0k,t}^{M}\left( \mathbf{B%
}_{0k}^{M}\right) ^{\top }-\widehat{\mathbf{A}}_{k}^{M,H}\widehat{\mathbf{S}}%
_{k}^{M,H}\mathbf{F}_{0k,t}^{M}\widehat{\mathbf{S}}_{-k}^{M,H}\left( \widehat{\mathbf{B}}%
_{k}^{M,H}\right) ^{\top }\right\Vert _{F}/\sqrt{p}\leq \tau \right)  \\
&&-P\left( \left\Vert \mathbf{E}_{k,t}-\widehat{\mathbf{A}}_{k}^{M,H}\left(
\widehat{\mathbf{F}}_{k,t}^{M,H}-\widehat{\mathbf{S}}_{k}^{M,H}\mathbf{F}%
_{0k,t}^{M}\widehat{\mathbf{S}}_{-k}^{M,H}\right) \left( \widehat{\mathbf{B}}%
_{k}^{M,H}\right) ^{\top }\right\Vert _{F}/\sqrt{p}\leq \tau \right)  \\
&=&P\left( \left\Vert \mathbf{E}_{k,t}-\widehat{\mathbf{A}}_{k}^{M,H}\left(
\widehat{\mathbf{F}}_{k,t}^{M,H}-\widehat{\mathbf{S}}_{k}^{M,H}\mathbf{F}%
_{0k,t}^{M}\widehat{\mathbf{S}}_{-k}^{M,H}\right) \left( \widehat{\mathbf{B}}%
_{k}^{M,H}\right) ^{\top }\right\Vert _{F}/\sqrt{p}\leq \tau \right) +o\left(
1\right)  \\
&&-P\left( \left\Vert \mathbf{E}_{k,t}-\widehat{\mathbf{A}}_{k}^{M,H}\left(
\widehat{\mathbf{F}}_{k,t}^{M,H}-\widehat{\mathbf{S}}_{k}^{M,H}\mathbf{F}%
_{0k,t}^{M}\widehat{\mathbf{S}}_{-k}^{M,H}\right) \left( \widehat{\mathbf{B}}%
_{k}^{M,H}\right) ^{\top }\right\Vert _{F}/\sqrt{p}\leq \tau \right)  \\
&=&o\left( 1\right) ,
\end{eqnarray*}%
and similarly%
\begin{eqnarray*}
&&E\left[ I\left( \left\Vert \mathbf{E}_{k,t}-\widehat{\mathbf{A}}%
_{k}^{M,H}\left( \widehat{\mathbf{F}}_{k,t}^{M,H}-\widehat{\mathbf{S}}%
_{k}^{M,H}\mathbf{F}_{0k,t}^{M}\widehat{\mathbf{S}}_{-k}^{M,H}\right) \left(
\widehat{\mathbf{B}}_{k}^{M,H}\right) ^{\top }\right\Vert /\sqrt{p}\right.
\right.  \\
&&\left. -\left\Vert \mathbf{A}_{0k}^{M}\mathbf{F}_{0k,t}^{M}\left( \mathbf{B%
}_{0k}^{M}\right) ^{\top }-\widehat{\mathbf{A}}_{k}^{M,H}\mathbf{F}%
_{0k,t}^{M}\widehat{\mathbf{S}}_{k}^{M,H}\left( \widehat{\mathbf{B}}%
_{k}^{M,H}\right) ^{\top }\right\Vert _{F}/\sqrt{p}\leq \tau \right)  \\
&&\left. -I\left( \left\Vert \mathbf{E}_{k,t}-\widehat{\mathbf{A}}%
_{k}^{M,H}\left( \widehat{\mathbf{F}}_{k,t}^{M,H}-\widehat{\mathbf{S}}%
_{k}^{M,H}\mathbf{F}_{0k,t}^{M}\widehat{\mathbf{S}}_{-k}^{M,H}\right) \left(
\widehat{\mathbf{B}}_{k}^{M,H}\right) ^{\top }\right\Vert _{F}/\sqrt{p}\leq \tau
\right) \right]  \\
&=&P\left( \left\Vert \mathbf{E}_{k,t}-\widehat{\mathbf{A}}_{k}^{M,H}\left(
\widehat{\mathbf{F}}_{k,t}^{M,H}-\widehat{\mathbf{S}}_{k}^{M,H}\mathbf{F}%
_{0k,t}^{M}\widehat{\mathbf{S}}_{-k}^{M,H}\right) \left( \widehat{\mathbf{B}}%
_{k}^{M,H}\right) ^{\top }\right\Vert _{F}/\sqrt{p}\right.  \\
&&\left. -\left\Vert \mathbf{A}_{0k}^{M}\mathbf{F}_{0k,t}^{M}\left( \mathbf{B%
}_{0k}^{M}\right) ^{\top }-\widehat{\mathbf{A}}_{k}^{M,H}\widehat{\mathbf{S}}%
_{k}^{M,H}\mathbf{F}_{0k,t}^{M}\widehat{\mathbf{S}}_{-k}^{M,H}\left( \widehat{\mathbf{B}}%
_{k}^{M,H}\right) ^{\top }\right\Vert _{F}/\sqrt{p}\leq \tau \right)  \\
&&-P\left( \left\Vert \mathbf{E}_{k,t}-\widehat{\mathbf{A}}_{k}^{M,H}\left(
\widehat{\mathbf{F}}_{k,t}^{M,H}-\widehat{\mathbf{S}}_{k}^{M,H}\mathbf{F}%
_{0k,t}^{M}\widehat{\mathbf{S}}_{-k}^{M,H}\right) \left( \widehat{\mathbf{B}}%
_{k}^{M,H}\right) ^{\top }\right\Vert _{F}/\sqrt{p}\leq \tau \right)  \\
&=&o\left( 1\right) .
\end{eqnarray*}%
Hence, by dominated convergence, we have shown that
\begin{eqnarray*}
&&\frac{1}{T}\sum_{t=1}^{T}\mathbf{F}_{0k,t}\mathbf{F}_{0k,t}^{^{\top
}}I\left( \left\Vert \mathbf{A}_{0k}^{M}\mathbf{F}_{0k,t}^{M}\left( \mathbf{B%
}_{0k}^{M}\right) ^{\top }-\widehat{\mathbf{A}}_{k}^{M,H}\widehat{\mathbf{F}}%
_{k,t}^{M,H}\left( \widehat{\mathbf{B}}_{k}^{M,H}\right) ^{\top }+\mathbf{E}%
_{k,t}\right\Vert _{F}/\sqrt{p}\leq \tau \right)  \\
&=&\frac{1}{T}\sum_{t=1}^{T}\mathbf{F}_{0k,t}\mathbf{F}_{0k,t}^{^{\top
}}I\left( \left\Vert \mathbf{E}_{k,t}-\mathbf{A}_{0k}^{M}\left( \widehat{%
\mathbf{F}}_{k,t}^{M,H}-\widehat{\mathbf{S}}_{k}^{M,H}\mathbf{F}_{0k,t}^{M}%
\widehat{\mathbf{S}}_{-k}^{M,H}\right) \left( \mathbf{B}_{0k}^{M}\right)
^{\top }\right\Vert _{F}/\sqrt{p}\leq \tau \right) +o_{P}\left( 1\right) .
\end{eqnarray*}%
Consider now%
\begin{equation*}
I\left( \left\Vert \mathbf{A}_{0k}^{M}\mathbf{F}_{0k,t}^{M}\left( \mathbf{B}%
_{0k}^{M}\right) ^{\top }-\widehat{\mathbf{A}}_{k}^{M,H}\widehat{\mathbf{F}}%
_{k,t}^{M,H}\left( \widehat{\mathbf{B}}_{k}^{M,H}\right) ^{\top }+\mathbf{E}%
_{k,t}\right\Vert _{F}/\sqrt{p}\leq \tau \right) -I\left( \left\Vert \mathbf{E}%
_{k,t}\right\Vert _{F}/\sqrt{p}\leq \tau \right) .
\end{equation*}%
This expression has (non-negative) upper bound given by%
\begin{equation*}
I\left( \left\Vert \mathbf{E}_{k,t}\right\Vert _{F}/\sqrt{p}-\left\Vert \mathbf{A}%
_{0k}^{M}\mathbf{F}_{0k,t}^{M}\left( \mathbf{B}_{0k}^{M}\right) ^{\top }-%
\widehat{\mathbf{A}}_{k}^{M,H}\widehat{\mathbf{F}}_{k,t}^{M,H}\left(
\widehat{\mathbf{B}}_{k}^{M,H}\right) ^{\top }\right\Vert _{F}/\sqrt{p}\leq \tau
\right) -I\left( \left\Vert \mathbf{E}_{k,t}\right\Vert _{F}/\sqrt{p}\leq \tau
\right) ,
\end{equation*}%
and (non-positive) lower bound given by%
\begin{equation*}
I\left( \left\Vert \mathbf{E}_{k,t}\right\Vert _{F}/\sqrt{p}+\left\Vert \mathbf{A}%
_{0k}^{M}\mathbf{F}_{0k,t}^{M}\left( \mathbf{B}_{0k}^{M}\right) ^{\top }-%
\widehat{\mathbf{A}}_{k}^{M,H}\widehat{\mathbf{F}}_{k,t}^{M,H}\left(
\widehat{\mathbf{B}}_{k}^{M,H}\right) ^{\top }\right\Vert _{F}/\sqrt{p}\leq \tau
\right) -I\left( \left\Vert \mathbf{E}_{k,t}\right\Vert _{F}/\sqrt{p}\leq \tau
\right) .
\end{equation*}%
Hence we can study%
\begin{equation*}
\frac{1}{T}\sum_{t=1}^{T}\mathbf{F}_{0k,t}\mathbf{F}_{0k,t}^{^{\top }}\left[
I\left( \left\Vert \mathbf{E}_{k,t}-\mathbf{A}_{0k}^{M}\left( \widehat{%
\mathbf{F}}_{k,t}^{M,H}-\widehat{\mathbf{S}}_{k}^{M,H}\mathbf{F}_{0k,t}^{M}%
\widehat{\mathbf{S}}_{-k}^{M,H}\right) \left( \mathbf{B}_{0k}^{M}\right)
^{\top }\right\Vert _{F}/\sqrt{p}\leq \tau \right) -I\left( \left\Vert \mathbf{E}%
_{k,t}\right\Vert _{F}/\sqrt{p}\leq \tau \right) \right] ,
\end{equation*}%
as before, by studying%
\begin{eqnarray*}
&&\frac{1}{T}\sum_{t=1}^{T}\mathbf{F}_{0k,t}\mathbf{F}_{0k,t}^{^{\top }}E%
\left[ I\left( \left\Vert \mathbf{E}_{k,t}-\mathbf{A}_{0k}^{M}\left(
\widehat{\mathbf{F}}_{k,t}^{M,H}-\widehat{\mathbf{S}}_{k}^{M,H}\mathbf{F}%
_{0k,t}^{M}\widehat{\mathbf{S}}_{-k}^{M,H}\right) \left( \mathbf{B}%
_{0k}^{M}\right) ^{\top }\right\Vert _{F}/\sqrt{p}\leq \tau \right) -I\left(
\left\Vert \mathbf{E}_{k,t}\right\Vert _{F}/\sqrt{p}\leq \tau \right) \right]  \\
&\leq &\frac{1}{T}\sum_{t=1}^{T}\mathbf{F}_{0k,t}\mathbf{F}_{0k,t}^{^{\top }}%
\left[ P\left( \left\Vert \mathbf{E}_{k,t}\right\Vert
_{F}/\sqrt{p}-\left\Vert \mathbf{A}_{0k}^{M}\mathbf{F}_{0k,t}^{M}\left(
\mathbf{B}_{0k}^{M}\right) ^{\top }-\widehat{\mathbf{A}}_{k}^{M,H}\widehat{%
\mathbf{F}}_{k,t}^{M,H}\left( \widehat{\mathbf{B}}_{k}^{M,H}\right) ^{\top
}\right\Vert _{F}/\sqrt{p}\leq \tau \right) \right.  \\
&&\left. -P\left( \left\Vert \mathbf{E}_{k,t}\right\Vert _{F}/\sqrt{p}\leq
\tau \right) \right] ,
\end{eqnarray*}%
and%
\begin{eqnarray*}
&&\frac{1}{T}\sum_{t=1}^{T}\mathbf{F}_{0k,t}\mathbf{F}_{0k,t}^{^{\top }}E%
\left[ I\left( \left\Vert \mathbf{E}_{k,t}-\mathbf{A}_{0k}^{M}\left(
\widehat{\mathbf{F}}_{k,t}^{M,H}-\widehat{\mathbf{S}}_{k}^{M,H}\mathbf{F}%
_{0k,t}^{M}\widehat{\mathbf{S}}_{-k}^{M,H}\right) \left( \mathbf{B}%
_{0k}^{M}\right) ^{\top }\right\Vert _{F}/\sqrt{p}\leq \tau \right) -I\left(
\left\Vert \mathbf{E}_{k,t}\right\Vert _{F}/\sqrt{p}\leq \tau \right) \right]  \\
&\geq &\frac{1}{T}\sum_{t=1}^{T}\mathbf{F}_{0k,t}\mathbf{F}_{0k,t}^{^{\top
}}P\left[ \left( \left\Vert \mathbf{E}_{k,t}\right\Vert
_{F}/\sqrt{p}+\left\Vert \mathbf{A}_{0k}^{M}\mathbf{F}_{0k,t}^{M}\left(
\mathbf{B}_{0k}^{M}\right) ^{\top }-\widehat{\mathbf{A}}_{k}^{M,H}\widehat{%
\mathbf{F}}_{k,t}^{M,H}\left( \widehat{\mathbf{B}}_{k}^{M,H}\right) ^{\top
}\right\Vert _{F}/\sqrt{p}\leq \tau \right) \right.  \\
&&\left. -P\left( \left\Vert \mathbf{E}_{k,t}\right\Vert _{F}/\sqrt{p}\leq
\tau \right) \right] .
\end{eqnarray*}%
In both cases we can use the fact that the density of $\left\Vert \mathbf{E}%
_{k,t}\right\Vert _{F}/\sqrt{p}$ is bounded, whence by the Mean Value Theorem and the
Law of Total Probability%
\begin{eqnarray*}
&&P\left( p^{-1/2}\left\Vert \mathbf{E}_{k,t}\right\Vert
_{F}-p^{-1/2}\left\Vert \mathbf{A}_{0k}^{M}\mathbf{F}_{0k,t}^{M}\left(
\mathbf{B}_{0k}^{M}\right) ^{\top }-\widehat{\mathbf{A}}_{k}^{M,H}\widehat{%
\mathbf{F}}_{k,t}^{M,H}\left( \widehat{\mathbf{B}}_{k}^{M,H}\right) ^{\top
}\right\Vert _{F}\leq \tau \right)  \\
&&-P\left( p^{-1/2}\left\Vert \mathbf{E}_{k,t}\right\Vert _{F}\leq
\tau \right)  \\
&=&c_{0}p^{-1/2}E\left\Vert \mathbf{A}_{0k}^{M}\mathbf{F}_{0k,t}^{M}\left(
\mathbf{B}_{0k}^{M}\right) ^{\top }-\widehat{\mathbf{A}}_{k}^{M,H}\widehat{%
\mathbf{F}}_{k,t}^{M,H}\left( \widehat{\mathbf{B}}_{k}^{M,H}\right) ^{\top
}\right\Vert _{F},
\end{eqnarray*}%
and%
\begin{eqnarray*}
&&P\left( p^{-1/2}\left\Vert \mathbf{E}_{k,t}\right\Vert
_{F}+p^{-1/2}\left\Vert \mathbf{A}_{0k}^{M}\mathbf{F}_{0k,t}^{M}\left(
\mathbf{B}_{0k}^{M}\right) ^{\top }-\widehat{\mathbf{A}}_{k}^{M,H}\widehat{%
\mathbf{F}}_{k,t}^{M,H}\left( \widehat{\mathbf{B}}_{k}^{M,H}\right) ^{\top
}\right\Vert _{F}\leq \tau \right)  \\
&&-P\left( p^{-1/2}\left\Vert \mathbf{E}_{k,t}\right\Vert _{F}\leq
\tau \right)  \\
&=&-c_{1}p^{-1/2}E\left\Vert \mathbf{A}_{0k}^{M}\mathbf{F}_{0k,t}^{M}\left(
\mathbf{B}_{0k}^{M}\right) ^{\top }-\widehat{\mathbf{A}}_{k}^{M,H}\widehat{%
\mathbf{F}}_{k,t}^{M,H}\left( \widehat{\mathbf{B}}_{k}^{M,H}\right) ^{\top
}\right\Vert _{F}.
\end{eqnarray*}%
Hence%
\begin{eqnarray*}
&&\left\Vert \frac{1}{T}\sum_{t=1}^{T}\mathbf{F}_{0k,t}\mathbf{F}%
_{0k,t}^{^{\top }}E\left[ I\left( \left\Vert \mathbf{E}_{k,t}-\mathbf{A}%
_{0k}^{M}\left( \widehat{\mathbf{F}}_{k,t}^{M,H}-\widehat{\mathbf{S}}%
_{k}^{M,H}\mathbf{F}_{0k,t}^{M}\widehat{\mathbf{S}}_{-k}^{M,H}\right) \left(
\mathbf{B}_{0k}^{M}\right) ^{\top }\right\Vert _{F}/\sqrt{p}\leq \tau \right)
-I\left( \left\Vert \mathbf{E}_{k,t}\right\Vert _{F}/\sqrt{p}\leq \tau \right) \right]
\right\Vert _{F} \\
&\leq &c_{0}\frac{1}{T}\sum_{t=1}^{T}\left\Vert \mathbf{F}_{0k,t}\right\Vert
_{F}^{2}p^{-1/2}E\left\Vert \mathbf{A}_{0k}^{M}\left( \widehat{\mathbf{F}}%
_{k,t}^{M,H}-\widehat{\mathbf{S}}_{k}^{M,H}\mathbf{F}_{0k,t}^{M}\widehat{%
\mathbf{S}}_{-k}^{M,H}\right) \left( \mathbf{B}_{0k}^{M}\right) ^{\top
}\right\Vert _{F} \\
&\leq &c_{0}\frac{1}{T}\sum_{t=1}^{T}\left\Vert \mathbf{F}_{0k,t}\right\Vert
_{F}^{2}E\left\Vert \widehat{\mathbf{F}}_{k,t}^{M,H}-\widehat{\mathbf{S}}%
_{k}^{M,H}\mathbf{F}_{0k,t}^{M}\widehat{\mathbf{S}}_{-k}^{M,H}\right\Vert
_{F} \\
&\leq &c_{0}\left( \frac{1}{T}\sum_{t=1}^{T}\left\Vert \mathbf{F}%
_{0k,t}\right\Vert _{F}^{4}\right) ^{1/2}\left( \frac{1}{T}%
\sum_{t=1}^{T}E\left\Vert \widehat{\mathbf{F}}_{k,t}^{M,H}-\widehat{\mathbf{S%
}}_{k}^{M,H}\mathbf{F}_{0k,t}^{M}\widehat{\mathbf{S}}_{-k}^{M,H}\right\Vert
_{F}^{2}\right) ^{1/2}=o_{P}\left( 1\right) ,
\end{eqnarray*}%
on account of the fact that Assumption \ref{as-1} implies $\left\Vert
\mathbf{F}_{0k,t}\right\Vert _{F}^{4}<\infty $, and that, by adapting the
proof of (\ref{nf2_app}), it follows that%
\begin{equation*}
\frac{1}{T}\sum_{t=1}^{T}\left\Vert \widehat{\mathbf{F}}_{k,t}^{M,H}-%
\widehat{\mathbf{S}}_{k}^{M,H}\mathbf{F}_{0k,t}^{M}\widehat{\mathbf{S}}%
_{-k}^{M,H}\right\Vert _{F}^{2}=O_{P}\left( \widetilde{L}^{-1}\right) .
\end{equation*}%
Finally, by the Law of Large Numbers, and by the fact that the $E_{k,t}$ are
\textit{i.i.d.}, it follows that%
\begin{equation*}
\frac{1}{T}\sum_{t=1}^{T}\mathbf{F}_{0k,t}\mathbf{F}_{0k,t}^{^{\top
}}I\left( \left\Vert \mathbf{E}_{k,t}\right\Vert _{F}/\sqrt{p}\leq \tau \right)
=P\left( \left\Vert \mathbf{E}_{k,0}\right\Vert _{F}/\sqrt{p}\leq \tau \right) \left(
\frac{1}{T}\sum_{t=1}^{T}\mathbf{F}_{0k,t}\mathbf{F}_{0k,t}^{^{\top
}}\right) +o_{P}\left( 1\right) ,
\end{equation*}%
and by Assumption \ref{as-1}%
\begin{equation*}
P\left( \left\Vert \mathbf{E}_{k,0}\right\Vert _{F}/\sqrt{p}\leq \tau \right) \left(
\frac{1}{T}\sum_{t=1}^{T}\mathbf{F}_{0k,t}\mathbf{F}_{0k,t}^{^{\top
}}\right) \rightarrow P\left( \left\Vert \mathbf{E}_{k,0}\right\Vert
_{F}/\sqrt{p}\leq \tau \right) \mathbf{\Sigma }_{0,k}.
\end{equation*}%
Now (\ref{merik1}) finally follows, since $\mathbf{\Sigma }_{0,k}$ has full
rank and $P\left( \left\Vert \mathbf{E}_{k,0}\right\Vert _{F}/\sqrt{p}\leq \tau
\right) >0$ by assumption. Note that, by construction, $\lambda _{j}\left(
I_{a}\right) =0$ for all $j>r_{k}$. Also note that (recall that we set $%
r_{k}=1$ for simplicity)%
\begin{eqnarray*}
\left\Vert I_{b}\right\Vert _{F} &=&\frac{1}{p}\left\Vert \frac{1}{T}%
\sum_{t=1}^{T}\widehat{w}_{k,t}^{H}\mathbf{F}_{0k,t}^{2}\mathbf{A}%
_{0k}\left( \widehat{\mathbf{B}}_{k}^{M,H}-\mathbf{B}_{0k}^{M}\widehat{%
\mathbf{S}}_{-k}^{M,H}\right) ^{^{\top }}\mathbf{B}_{0k}\mathbf{A}%
_{0k}^{\top }\right\Vert _{F} \\
&=&\left( \frac{1}{T}\sum_{t=1}^{T}\widehat{w}_{k,t}^{H}\mathbf{F}%
_{0k,t}^{2}\right) \frac{1}{p}\left\Vert \mathbf{A}_{0k}\left( \widehat{%
\mathbf{B}}_{k}^{M,H}-\mathbf{B}_{0k}^{M}\widehat{\mathbf{S}}%
_{-k}^{M,H}\right) ^{^{\top }}\mathbf{B}_{0k}\mathbf{A}_{0k}^{\top
}\right\Vert _{F} \\
&\leq &c_{0}\frac{1}{p}\left\Vert \mathbf{A}_{0k}\right\Vert
_{F}^{2}\left\Vert \mathbf{B}_{0k}\right\Vert _{F}\left\Vert \widehat{%
\mathbf{B}}_{k}^{M,H}-\mathbf{B}_{0k}^{M}\widehat{\mathbf{S}}%
_{-k}^{M,H}\right\Vert _{F}=O_{P}\left( \widetilde{L}^{-1/2}\right) ,
\end{eqnarray*}%
and the same can be shown for $\left\Vert I_{c}\right\Vert _{F}$ by
symmetry. Similarly%
\begin{eqnarray*}
\left\Vert I_{d}\right\Vert _{F} &=&\frac{1}{p}\left\Vert \frac{1}{Tp_{-k}}%
\sum_{t=1}^{T}\widehat{w}_{k,t}^{H}\mathbf{F}_{0k,t}^{2}\mathbf{A}_{0k}%
\mathbf{B}_{0k}^{\top }\left( \widehat{\mathbf{B}}_{k}^{M,H}-\mathbf{B}%
_{0k}^{M}\widehat{\mathbf{S}}_{-k}^{M,H}\right) \left( \widehat{\mathbf{B}}%
_{k}^{M,H}-\mathbf{B}_{0k}^{M}\widehat{\mathbf{S}}_{-k}^{M,H}\right)
^{^{\top }}\mathbf{B}_{0k}\mathbf{A}_{0k}^{\top }\right\Vert _{F} \\
&=&\left( \frac{1}{T}\sum_{t=1}^{T}\widehat{w}_{k,t}^{H}\mathbf{F}%
_{0k,t}^{2}\right) \frac{1}{p_{-k}p}\left\Vert \mathbf{A}_{0k}\mathbf{B}%
_{0k}^{\top }\left( \widehat{\mathbf{B}}_{k}^{M,H}-\mathbf{B}_{0k}^{M}%
\widehat{\mathbf{S}}_{-k}^{M,H}\right) \left( \widehat{\mathbf{B}}_{k}^{M,H}-%
\mathbf{B}_{0k}^{M}\widehat{\mathbf{S}}_{-k}^{M,H}\right) ^{^{\top }}\mathbf{%
B}_{0k}\mathbf{A}_{0k}^{\top }\right\Vert _{F} \\
&\leq &\left( \frac{1}{T}\sum_{t=1}^{T}\widehat{w}_{k,t}^{H}\mathbf{F}%
_{0k,t}^{2}\right) \frac{1}{p_{-k}p}\left\Vert \mathbf{A}_{0k}\right\Vert
_{F}^{2}\left\Vert \mathbf{B}_{0k}\right\Vert _{F}^{2}\left\Vert \widehat{%
\mathbf{B}}_{k}^{M,H}-\mathbf{B}_{0k}^{M}\widehat{\mathbf{S}}%
_{-k}^{M,H}\right\Vert _{F}^{2}=O_{P}\left( \widetilde{L}^{-1}\right) .
\end{eqnarray*}%
Using Weyls' inequality, this entails that%
\begin{equation}
\lambda _{j}\left( I\right) =c_{0}+O_{P}\left( \widetilde{L}^{-1/2}\right) ,
\label{i-1}
\end{equation}%
for all $j\leq r_{k}$, and recall that
\begin{equation}
\lambda _{j}\left( I\right) =0,  \label{i-2}
\end{equation}%
for $j>r_{k}$. We now turn to bounding the largest eigenvalues of $II$, $III$
and $IV$, setting $r_{k}=1$ for simplicity. Recalling that, by construction,
$\left\Vert \widehat{\mathbf{A}}_{k}^{M}\right\Vert _{F}=c_{0}p_{k}^{1/2}$,
it follows that
\begin{eqnarray*}
\left\Vert II\right\Vert _{F} &\leq &\frac{1}{p}\left\Vert \mathbf{A}%
_{0k}^{M}\left( \mathbf{B}_{0k}^{M}\right) ^{\top }\widehat{\mathbf{B}}%
_{k}^{M,H}\frac{1}{Tp_{-k}}\sum_{t=1}^{T}\widehat{w}_{k,t}^{H}\left(
\widehat{\mathbf{B}}_{k}^{M,H}\right) ^{^{\top }}\mathbf{E}_{k,t}^{^{\top }}%
\mathbf{F}_{0k,t}^{M}\right\Vert _{F} \\
&\leq &\frac{1}{p}\frac{\left\Vert \mathbf{B}_{0k}^{M}\right\Vert
_{F}\left\Vert \widehat{\mathbf{B}}_{k}^{M,H}\right\Vert _{F}}{p_{-k}}%
\left\Vert \frac{1}{T}\sum_{t=1}^{T}\widehat{w}_{k,t}^{H}\mathbf{A}%
_{0k}^{M}\left( \widehat{\mathbf{B}}_{k}^{M,H}\right) ^{^{\top }}\mathbf{E}%
_{k,t}^{^{\top }}\mathbf{F}_{0k,t}^{M}\right\Vert _{F} \\
&\leq &c_{0}\frac{1}{p}\left\Vert \frac{1}{T}\sum_{t=1}^{T}\widehat{w}%
_{k,t}^{H}\mathbf{A}_{0k}^{M}\left( \widehat{\mathbf{B}}_{k}^{M,H}\right)
^{^{\top }}\mathbf{E}_{k,t}^{^{\top }}\mathbf{F}_{0k,t}^{M}\right\Vert _{F}.
\end{eqnarray*}%
Using again (\ref{bkMH}), we may write%
\begin{eqnarray*}
&&\left\Vert \frac{1}{T}\sum_{t=1}^{T}\widehat{w}_{k,t}^{H}\mathbf{A}%
_{0k}^{M}\left( \widehat{\mathbf{B}}_{k}^{M,H}\right) ^{^{\top }}\mathbf{E}%
_{k,t}^{^{\top }}\mathbf{F}_{0k,t}^{M}\right\Vert _{F} \\
&\leq &\left\Vert \frac{1}{T}\sum_{t=1}^{T}\widehat{w}_{k,t}^{H}\mathbf{A}%
_{0k}^{M}\left( \mathbf{B}_{0k}^{M}\widehat{\mathbf{S}}_{-k}^{M,H}\right)
^{^{\top }}\mathbf{E}_{k,t}^{^{\top }}\mathbf{F}_{0k,t}^{M}\right\Vert
_{F}+\left\Vert \frac{1}{T}\sum_{t=1}^{T}\widehat{w}_{k,t}^{H}\mathbf{A}%
_{0k}^{M}\left( \widehat{\mathbf{B}}_{k}^{M,H}-\mathbf{B}_{0k}^{M}\widehat{%
\mathbf{S}}_{-k}^{M,H}\right) ^{^{\top }}\mathbf{E}_{k,t}^{^{\top }}\mathbf{F%
}_{0k,t}^{M}\right\Vert _{F}.
\end{eqnarray*}%
It holds that%
\begin{eqnarray*}
&&\left\Vert \frac{1}{T}\sum_{t=1}^{T}\widehat{w}_{k,t}^{H}\mathbf{A}%
_{0k}^{M}\left( \mathbf{B}_{0k}^{M}\widehat{\mathbf{S}}_{-k}^{M,H}\right)
^{^{\top }}\mathbf{E}_{k,t}^{^{\top }}\mathbf{F}_{0k,t}^{M}\right\Vert
_{F}^{2} \\
&=&\left\Vert \frac{1}{T}\sum_{t=1}^{T}\widehat{w}_{k,t}^{H}\mathbf{A}%
_{0k}^{M}\left( \mathbf{B}_{0k}^{M}\right) ^{^{\top }}\mathbf{E}%
_{k,t}^{^{\top }}\mathbf{F}_{0k,t}^{M}\right\Vert _{F}^{2}\leq c_{0}\frac{1}{%
T}\sum_{t=1}^{T}\left\Vert \mathbf{A}_{0k}\mathbf{B}_{0k}^{^{\top }}\mathbf{E%
}_{k,t}^{^{\top }}\mathbf{F}_{0k,t}\right\Vert _{F}^{2}.
\end{eqnarray*}%
Now it is easy to see that, by simmilar passages as in the above, $%
E\left\Vert \mathbf{A}_{0k}\mathbf{B}_{0k}^{^{\top }}\mathbf{E}%
_{k,t}^{^{\top }}\mathbf{F}_{0k,t}\right\Vert _{F}^{2}=O\left( p\right) $,
and therefore%
\begin{equation}
\frac{1}{p}\left\Vert \frac{1}{T}\sum_{t=1}^{T}\widehat{w}_{k,t}^{H}\mathbf{A%
}_{0k}^{M}\left( \mathbf{B}_{0k}^{M}\widehat{\mathbf{S}}_{-k}^{M,H}\right)
^{^{\top }}\mathbf{E}_{k,t}^{^{\top }}\mathbf{F}_{0k,t}^{M}\right\Vert
_{F}=O_{P}\left( p^{-1/2}\right) .  \label{iia}
\end{equation}%
Also, by the same token%
\begin{eqnarray*}
&&\left\Vert \frac{1}{T}\sum_{t=1}^{T}\widehat{w}_{k,t}^{H}\mathbf{A}%
_{0k}^{M}\left( \widehat{\mathbf{B}}_{k}^{M,H}-\mathbf{B}_{0k}^{M}\widehat{%
\mathbf{S}}_{-k}^{M,H}\right) ^{^{\top }}\mathbf{E}_{k,t}^{^{\top }}\mathbf{F%
}_{0k,t}^{M}\right\Vert _{F}^{2} \\
&\leq &c_{0}\frac{1}{T}\sum_{t=1}^{T}\left\Vert \mathbf{A}_{0k}^{M}\left(
\widehat{\mathbf{B}}_{k}^{M,H}-\mathbf{B}_{0k}^{M}\widehat{\mathbf{S}}%
_{-k}^{M,H}\right) ^{^{\top }}\mathbf{E}_{k,t}^{^{\top }}\mathbf{F}%
_{0k,t}^{M}\right\Vert _{F}^{2} \\
&=&c_{0}\frac{1}{T}\sum_{t=1}^{T}\mathbf{F}_{0k,t}^{2}\text{Tr}\left(
\mathbf{A}_{0k}\left( \widehat{\mathbf{B}}_{k}^{M,H}-\mathbf{B}_{0k}^{M}%
\widehat{\mathbf{S}}_{-k}^{M,H}\right) ^{^{\top }}\mathbf{E}_{k,t}^{^{\top }}%
\mathbf{E}_{k,t}\left( \widehat{\mathbf{B}}_{k}^{M,H}-\mathbf{B}_{0k}^{M}%
\widehat{\mathbf{S}}_{-k}^{M,H}\right) \mathbf{A}_{0k}^{^{\top }}\right)  \\
&\leq &c_{0}\sigma _{\max }^{2}\left( \mathbf{A}_{0k}\right) \left\Vert
\widehat{\mathbf{B}}_{k}^{M,H}-\mathbf{B}_{0k}^{M}\widehat{\mathbf{S}}%
_{-k}^{M,H}\right\Vert _{F}^{2}\frac{1}{T}\sum_{t=1}^{T}\mathbf{F}%
_{0k,t}^{2}\left\Vert \mathbf{E}_{k,t}\right\Vert _{F}^{2} \\
&=&O_{P}\left( 1\right) p^{2}\widetilde{L}^{-1};
\end{eqnarray*}%
hence it holds that%
\begin{equation}
\frac{1}{p}\left\Vert \frac{1}{T}\sum_{t=1}^{T}\widehat{w}_{k,t}^{H}\mathbf{A%
}_{0k}^{M}\left( \widehat{\mathbf{B}}_{k}^{M,H}-\mathbf{B}_{0k}^{M}\widehat{%
\mathbf{S}}_{-k}^{M,H}\right) ^{^{\top }}\mathbf{E}_{k,t}^{^{\top }}\mathbf{F%
}_{0k,t}^{M}\right\Vert _{F}=O_{P}\left( \widetilde{L}^{-1/2}\right) .
\label{iib}
\end{equation}%
Combining (\ref{iia}) and (\ref{iib}), it follows that $\left\Vert
II\right\Vert _{F}=O_{P}\left( \widetilde{L}^{-1/2}\right) $; the same can
be shown, by symmetry, for $\left\Vert III\right\Vert _{F}$. Hence it
follows that%
\begin{equation}
\left\vert \lambda _{\max }\left( II+III\right) \right\vert \leq \left\Vert
II+III\right\Vert _{F}\leq \left\Vert II\right\Vert _{F}+\left\Vert
III\right\Vert _{F}=O_{P}\left( \widetilde{L}^{-1/2}\right) .
\label{l-max-23}
\end{equation}%
Consider now%
\begin{eqnarray*}
&&\frac{1}{Tp_{-k}}\sum_{t=1}^{T}\widehat{w}_{k,t}^{H}\mathbf{E}_{k,t}%
\widehat{\mathbf{B}}_{k}^{M,H}\left( \widehat{\mathbf{B}}_{k}^{M,H}\right)
^{^{\top }}\mathbf{E}_{k,t}^{^{\top }} \\
&=&\frac{1}{Tp_{-k}}\sum_{t=1}^{T}\widehat{w}_{k,t}^{H}\mathbf{E}_{k,t}%
\mathbf{B}_{0k}^{M}\widehat{\mathbf{S}}_{-k}^{M,H}\left( \mathbf{B}_{0k}^{M}%
\widehat{\mathbf{S}}_{-k}^{M,H}\right) ^{^{\top }}\mathbf{E}_{k,t}^{^{\top }}
\\
&&+\frac{1}{Tp_{-k}}\sum_{t=1}^{T}\widehat{w}_{k,t}^{H}\mathbf{E}%
_{k,t}\left( \widehat{\mathbf{B}}_{k}^{M,H}-\mathbf{B}_{0k}^{M}\widehat{%
\mathbf{S}}_{-k}^{M,H}\right) \left( \mathbf{B}_{0k}^{M}\widehat{\mathbf{S}}%
_{-k}^{M,H}\right) ^{^{\top }}\mathbf{E}_{k,t}^{^{\top }} \\
&&+\frac{1}{Tp_{-k}}\sum_{t=1}^{T}\widehat{w}_{k,t}^{H}\mathbf{E}_{k,t}%
\mathbf{B}_{0k}^{M}\widehat{\mathbf{S}}_{-k}^{M,H}\left( \widehat{\mathbf{B}}%
_{k}^{M,H}-\mathbf{B}_{0k}^{M}\widehat{\mathbf{S}}_{-k}^{M,H}\right)
^{^{\top }}\mathbf{E}_{k,t}^{^{\top }} \\
&&+\frac{1}{Tp_{-k}}\sum_{t=1}^{T}\widehat{w}_{k,t}^{H}\mathbf{E}%
_{k,t}\left( \widehat{\mathbf{B}}_{k}^{M,H}-\mathbf{B}_{0k}^{M}\widehat{%
\mathbf{S}}_{-k}^{M,H}\right) \left( \widehat{\mathbf{B}}_{k}^{M,H}-\mathbf{B%
}_{0k}^{M}\widehat{\mathbf{S}}_{-k}^{M,H}\right) ^{^{\top }}\mathbf{E}%
_{k,t}^{^{\top }} \\
&=&IV_{a}+IV_{b}+IV_{c}+IV_{d}.
\end{eqnarray*}%
It holds that%
\begin{eqnarray*}
\lambda _{\max }\left( IV_{a}\right)  &=&\lambda _{\max }\left( \frac{1}{%
pTp_{-k}}\sum_{t=1}^{T}\widehat{w}_{k,t}^{H}\mathbf{E}_{k,t}\mathbf{B}%
_{0k}^{M}\left( \mathbf{B}_{0k}^{M}\right) ^{^{\top }}\mathbf{E}%
_{k,t}^{^{\top }}\right)  \\
&\leq &c_{0}\frac{1}{pTp_{-k}}\sum_{t=1}^{T}\lambda _{\max }\left( \mathbf{E}%
_{k,t}\mathbf{B}_{0k}\mathbf{B}_{0k}^{^{\top }}\mathbf{E}_{k,t}^{^{\top
}}\right) =c_{0}\frac{1}{pTp_{-k}}\sum_{t=1}^{T}\left( \mathbf{B}%
_{0k}^{^{\top }}\mathbf{E}_{k,t}^{^{\top }}\mathbf{E}_{k,t}\mathbf{B}%
_{0k}\right) ,
\end{eqnarray*}%
and since, by standard algebra%
\begin{equation*}
E\left( \mathbf{B}_{k}^{^{\top }}\mathbf{E}_{k,t}^{^{\top }}\mathbf{E}_{k,t}%
\mathbf{B}_{k}\right) =O\left( p\right) ,
\end{equation*}%
it follows that $\lambda _{\max }\left( IV_{a}\right) =O_{P}\left(
p_{-k}^{-1}\right) $. Similarly%
\begin{equation*}
\lambda _{\max }\left( IV_{b}\right) \leq c_{0}\frac{1}{pTp_{-k}}%
\sum_{t=1}^{T}\left( \left( \mathbf{B}_{0k}^{M}\right) ^{^{\top }}\mathbf{E}%
_{k,t}^{^{\top }}\mathbf{E}_{k,t}\left( \widehat{\mathbf{B}}_{k}^{M,H}-%
\mathbf{B}_{0k}^{M}\widehat{\mathbf{S}}_{-k}^{M,H}\right) \right) ,
\end{equation*}%
and since%
\begin{eqnarray*}
\left( \mathbf{B}_{0k}^{M}\right) ^{^{\top }}\mathbf{E}_{k,t}^{^{\top }}%
\mathbf{E}_{k,t}\left( \widehat{\mathbf{B}}_{k}^{M,H}-\mathbf{B}_{0k}^{M}%
\widehat{\mathbf{S}}_{-k}^{M,H}\right)  &=&\text{Tr}\left( \mathbf{E}%
_{k,t}^{^{\top }}\mathbf{E}_{k,t}\left( \widehat{\mathbf{B}}_{k}^{M,H}-%
\mathbf{B}_{0k}^{M}\widehat{\mathbf{S}}_{-k}^{M,H}\right) \left( \mathbf{B}%
_{0k}^{M}\right) ^{^{\top }}\right)  \\
&\leq &\left\Vert \mathbf{E}_{k,t}\right\Vert _{F}\left\Vert \widehat{%
\mathbf{B}}_{k}^{M,H}-\mathbf{B}_{0k}^{M}\widehat{\mathbf{S}}%
_{-k}^{M,H}\right\Vert _{F}\left\Vert \mathbf{B}_{0k}^{M}\right\Vert _{F},
\end{eqnarray*}%
it follows that $\lambda _{\max }\left( IV_{b}\right) =O_{P}\left(
\widetilde{L}^{-1/2}\right) $; the same applies for $\lambda _{\max }\left(
IV_{c}\right) $, and similarly it is easy to see that $\lambda _{\max
}\left( IV_{d}\right) =O_{P}\left( \widetilde{L}^{-1}\right) $. Hence%
\begin{equation}
\lambda _{\max }\left( IV\right) =O_{P}\left( p_{-k}^{-1}\right)
+O_{P}\left( \widetilde{L}^{-1/2}\right) =O_{P}\left( \widetilde{L}%
^{-1/2}\right) \text{.}  \label{iv}
\end{equation}%
Recallig (\ref{l-max-23}), it holds that%
\begin{equation}
\left\vert \lambda _{\max }\left( II+III+IV\right) \right\vert \leq
\left\vert \lambda _{\max }\left( II+III\right) \right\vert +\left\vert
\lambda _{\max }\left( IV\right) \right\vert =O_{P}\left( \widetilde{L}%
^{-1/2}\right) .  \label{remainder-h}
\end{equation}%
The desired result now follows from a routine application of Weyl's
inequality, using (\ref{i-1}), (\ref{i-2}) and (\ref{remainder-h}).
\end{proof}
\end{lemma}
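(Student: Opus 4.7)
The plan is to imitate the strategy already used for Lemma~\ref{er-ls}, but with the additional care required by the heavy-tailed setting and the presence of data-dependent weights $\widehat w_{k,t}^H$. I would work with an overdimensioned version of the Huber estimator, fixing some $m_k\ge r_k$, embedding $\mathbf{A}_{0k}$ into a zero-padded $\mathbf{A}_{0k}^M$ (and similarly for $\mathcal F_{0t}$ and $\mathbf{B}_{0k}$), and then first establish the rate
\begin{equation*}
\tfrac{1}{p_k}\bigl\|\widehat{\mathbf{A}}_k^{M,H}-\mathbf{A}_{0k}^M\widehat{\mathbf{S}}_k^{M,H}\bigr\|_F^2=O_P(\widetilde L^{-1}),\qquad
\tfrac{1}{p_{-k}}\bigl\|\widehat{\mathbf{B}}_k^{M,H}-\mathbf{B}_{0k}^M\widehat{\mathbf{S}}_{-k}^{M,H}\bigr\|_F^2=O_P(\widetilde L^{-1}),
\end{equation*}
where $\widetilde L=L^{\ast}$ or $L^{\ast\ast}$ depending on whether $\epsilon\ge 2$. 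This would follow by redoing the argument in Theorem~\ref{huber-theorem-1} with $m_k$ in place of $r_k$, using the fact that the extra columns are spurious.

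Next I would plug $\mathbf X_{k,t}=\mathbf A_{0k}^M\mathbf F_{0k,t}^M(\mathbf B_{0k}^M)^\top+\mathbf E_{k,t}$ into the definition of $\widehat{\mathbf{M}}_k^H$ in (\ref{m_hat_kH}) and get a four-term split $I+II+III+IV$: a pure signal term, two signal$\times$noise cross terms, and a pure noise term. The signal term $I$ is at most rank $r_k$ by construction, so automatically $\lambda_j(I)=0$ for $j>r_k$. For $j\le r_k$, I would further decompose $I=I_a+I_b+I_c+I_d$ according to whether $\widehat{\mathbf B}_k^{M,H}$ is replaced by $\mathbf B_{0k}^M\widehat{\mathbf S}_{-k}^{M,H}$ or by the residual $\widehat{\mathbf B}_k^{M,H}-\mathbf B_{0k}^M\widehat{\mathbf S}_{-k}^{M,H}$. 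Using the normalisation of $\mathbf A_{0k}$ and $\mathbf B_{0k}$, $I_b,I_c,I_d$ all carry factors of $\widetilde L^{-1/2}$ or $\widetilde L^{-1}$, while $I_a$ reduces, via the multiplicative Weyl inequality, to
\begin{equation*}
\lambda_j(I_a)\ge \lambda_j\!\bigl(\mathbf A_{0k}^\top\mathbf A_{0k}/p_k\bigr)\,\lambda_{\min}\!\Bigl(\tfrac{1}{T}\sum_{t=1}^T\widehat w_{k,t}^H\mathbf F_{0k,t}\mathbf F_{0k,t}^\top\Bigr).
\end{equation*}
The delicate point, and what I expect to be the \emph{main obstacle}, is to show that this last minimum eigenvalue stays bounded away from zero. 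The weights $\widehat w_{k,t}^H$ are themselves random and depend on the estimators, so one cannot appeal directly to a LLN. The plan is to bound $\widehat w_{k,t}^H$ from below by the truncated weight $\tfrac12\,\mathbf 1\{\|\mathbf E_{k,t}-\widehat{\mathbf A}_k^{M,H}(\widehat{\mathbf F}_{k,t}^{M,H}-\widehat{\mathbf S}_k^{M,H}\mathbf F_{0k,t}^M\widehat{\mathbf S}_{-k}^{M,H})(\widehat{\mathbf B}_k^{M,H})^\top\|_F/\sqrt p\le\tau\}$, then replace this indicator by $\mathbf 1\{\|\mathbf E_{k,t}\|_F/\sqrt p\le\tau\}$ up to an $o_P(1)$ error; both replacements produce errors controlled through the boundedness of the density of $\|\mathbf E_{k,t}\|_F/\sqrt p$ combined with the consistency rates above. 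Finally, by i.i.d.\ LLN and the hypothesis $P(\|\mathbf E_{k,0}\|_F/\sqrt p\le\tau)>0$, one gets $T^{-1}\sum_t \widehat w_{k,t}^H\mathbf F_{0k,t}\mathbf F_{0k,t}^\top \to c\,\boldsymbol\Sigma_{0k}$ for a positive constant $c$, and $\boldsymbol\Sigma_{0k}$ is positive definite by Assumption~\ref{as-1}.

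Finally, I would bound the remaining terms. For the cross-terms $II$ and $III$, the same split $\widehat{\mathbf B}_k^{M,H}=\mathbf B_{0k}^M\widehat{\mathbf S}_{-k}^{M,H}+(\widehat{\mathbf B}_k^{M,H}-\mathbf B_{0k}^M\widehat{\mathbf S}_{-k}^{M,H})$ yields: (i) a pure noise$\times$signal piece whose $L^2$-moment is controlled using Assumption~\ref{as-4}(iii), giving $\|II\|_F=O_P(p^{-1/2})+O_P(\widetilde L^{-1/2})$; (ii) a residual piece controlled by Cauchy--Schwartz together with the loading-consistency rate, giving $O_P(\widetilde L^{-1/2})$. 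The noise term $IV$ splits similarly into four pieces; the leading contribution is $\frac{1}{Tpp_{-k}}\sum_t \mathbf E_{k,t}\mathbf B_{0k}^M(\mathbf B_{0k}^M)^\top \mathbf E_{k,t}^\top$, whose largest eigenvalue is $O_P(p_{-k}^{-1})$ by standard moment calculations, while the remaining three pieces are $O_P(\widetilde L^{-1/2})$ or smaller. Putting everything together and applying Weyl's inequality yields $\lambda_{\max}(II+III+IV)=O_P(\widetilde L^{-1/2})$, which combined with the lower bound for $\lambda_j(I)$ on $\{j\le r_k\}$ and $\lambda_j(I)=0$ on $\{j>r_k\}$ produces exactly the two rates $(L^\ast)^{-1/2}$ and $(L^{\ast\ast})^{-1/2}$ stated in the lemma.
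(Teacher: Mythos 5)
Your proposal follows essentially the same route as the paper's proof: the overdimensioned Huber estimates with rates $O_P(\widetilde L^{-1})$, the four-term split $I+II+III+IV$ of $\widehat{\mathbf M}_k^H$, the further decomposition $I=I_a+I_b+I_c+I_d$ with the multiplicative Weyl inequality, the lower bound on $\lambda_{\min}\bigl(T^{-1}\sum_t\widehat w_{k,t}^H\mathbf F_{0k,t}\mathbf F_{0k,t}^{\top}\bigr)$ via indicator-function sandwiching, the bounded-density/Mean Value Theorem argument, and the LLN with $P(\|\mathbf E_{k,0}\|_F/\sqrt p\le\tau)>0$. You have correctly identified the main obstacle and resolved it exactly as the paper does, so no further comment is needed.
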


\newpage

\section{Proofs\label{proofs}}

\begin{proof}[Proof of Theorem \protect\ref{ls-theorem}]
First, we divide the parameter space $\Theta $ into $S_{j}=\{\theta \in
\Theta $ : $\left. 2^{j-1}<\sqrt{L}\cdot d\left( \theta ,\theta _{0}\right)
\leq 2^{j}\right\} $, with $j\geq 1$. If $\sqrt{L}d\left( \widehat{\theta }%
,\theta _{0}\right) >2^{V}$ for a certain $V$, then $\widehat{\theta }$ is
in one of the shells $S_{j},j>V$, where the infimum of $\mathbb{M}^{\ast
}(\theta )=\mathbb{M}(\theta )-\mathbb{M}\left( \theta _{0}\right) $ is
nonpositive over this shell since $\mathbb{M}(\widehat{\theta })\leq \mathbb{%
M}\left( \theta _{0}\right) $. Therefore, for every $\eta >0$, we have%
\begin{eqnarray*}
&&\mathbb{P}\left( \sqrt{L}d\left( \widehat{\theta },\theta _{0}\right)
>2^{V}\right) \\
&=&\mathbb{P}\left( 2^{V}\leq \sqrt{L}d\left( \widehat{\theta },\theta
_{0}\right) \leq \sqrt{L}\eta \right) +\mathbb{P}\left( d\left( \widehat{%
\theta },\theta _{0}\right) >\eta \right) \\
&\leq &\sum_{j>V,2^{j-1}\leq \sqrt{L}\eta }\mathbb{P}\left( \inf_{\theta \in
S_{j}}\mathbb{M}^{\ast }\left( \theta \right) \leq 0\right) +\mathbb{P}%
\left( d\left( \widehat{\theta },\theta _{0}\right) >\eta \right) .
\end{eqnarray*}%
According to Lemma \ref{semi-1}, for all $\eta >0$, $\mathbb{P}\left[
d\left( \widehat{\theta },\theta _{0}\right) >\eta \right] $ converges to 0
as $\min \left\{ T,p_{1},\cdots ,p_{K}\right\} \rightarrow \infty $.
Further, for each $\theta $ in $S_{j}$ it holds that
\begin{equation*}
-\overline{\mathbb{M}}^{\ast }(\theta )=-d^{2}\left( \theta ,\theta
_{0}\right) \leq -\frac{2^{2j-2}}{L}.
\end{equation*}%
Then from $\inf_{\theta \in S_{j}}\mathbb{M}^{\ast }(\theta )\leq 0$, we
have
\begin{equation*}
\inf_{\theta \in S_{j}}\mathbb{W}(\theta )=\inf_{\theta \in S_{j}}\left(
\mathbb{M}^{\ast }\left( \theta \right) -\overline{\mathbb{M}}^{\ast }\left(
\theta \right) \right) \leq -\frac{2^{2j-2}}{L}
\end{equation*}%
So
\begin{eqnarray*}
&&\sum_{j>V,2^{j-1}\leq \eta \sqrt{L}}\mathbb{P}\left( \inf_{\theta \in
S_{j}}\mathbb{M}^{\ast }(\theta )\leq 0\right) \\
&\leq &\sum_{j>V,2^{j-1}\leq \eta \sqrt{L}}\mathbb{P}\left( \inf_{\theta \in
S_{j}}\mathbb{W}(\theta )\leq \frac{2^{2j-2}}{L}\right) \\
&\leq &\sum_{j>V,2^{j-1}\leq \eta \sqrt{L}}\mathbb{P}\left( \sup_{\theta \in
S_{j}}|\mathbb{W}(\theta )|\geq \frac{2^{2j-2}}{L}\right) .
\end{eqnarray*}%
Using Lemma \ref{exp-sup} and Markov's inequality, it follows that
\begin{equation*}
\mathbb{P}\left( \sup_{\theta \in S_{j}}|\mathbb{W}(\theta )|\geq \frac{%
2^{2j-2}}{L}\right) \leq c_{0}\frac{L}{2^{2j}}\cdot \mathbb{E}\left(
\sup_{\theta \in S_{j}}|\mathbb{W}(\theta )|\right) \leq c_{1}\frac{L}{2^{2j}%
}\cdot \frac{2^{j}}{L}=2^{-j}.
\end{equation*}%
Hence
\begin{equation*}
\sum_{j>V,2^{j-1}\leq \eta \sqrt{L}}\mathbb{P}\left( \inf_{\theta \in S_{j}}%
\mathbb{M}^{\ast }(\theta )\leq 0\right) \leq c_{0}\sum_{j>V}2^{-j}\leq
c_{1}2^{-V}.
\end{equation*}%
This entails that there exists a $c_{2}<\infty $ such that%
\begin{equation*}
\mathbb{P}\left( \sqrt{L}d\left( \widehat{\theta },\theta _{0}\right)
>2^{V}\right) \leq c_{2}2^{-V},
\end{equation*}%
which eadily entails that $\sqrt{L}\cdot d\left( \widehat{\theta },\theta
_{0}\right) =O_{p}(1)$. The desired result now follows from Lemma \ref%
{lemma2}.
\end{proof}

\begin{proof}[Proof of Theorem \protect\ref{ls-cc}]
Note that $$
\begin{aligned}
\widehat\cS_t-\cS_{0t}=&\widehat\cF_t\times_{k=1}^K\widehat\Ab_k-\cF_{0t}\times_{k=1}^K\Ab_{0k}\\
=&\widehat\cF_t\times_{k=1}^K\widehat\Ab_k-\cF_{0t}\times_{k=1}^K\widehat\Sbb_k\times_{k=1}^K\Ab_{0k}\widehat\Sbb_k\\
=&\sum_{k=1}^K\widehat\cF_t\times_{j=1}^{k-1}\Ab_{0j}\widehat\Sbb_j\times_k\l(\widehat\Ab_k-\Ab_{0k}\widehat\Sbb_k\r)\times_{j=k+1}^K\widehat\Ab_j+\l(\widehat\cF_t-\cF_{0t}\times_{k=1}^K\widehat\Sbb_k\r)\times_{k=1}^K\Ab_{0k}\widehat\Sbb_k.
\end{aligned}$$
Then we have that
$$
\begin{aligned}
	\frac{1}{Tp}\sum_{t=1}^T\lVert\widehat\cS_t-\cS_{0t}\rVert_F^2=&\frac{1}{T}\sum_{t=1}^T\lVert\widehat\cF_t\times_{k=1}^K\widehat\Ab_k-\cF_{0t}\times_{k=1}^K\Ab_{0k}\rVert_F^2\\
	\leq&\frac{1}{Tp}\sum_{t=1}^T\l(\sum_{k=1}^K\lVert\widehat\cF_t\rVert_F^2\prod_{j=1}^{k-1}\lVert\Ab_{0j}\rVert_F^2\lVert\widehat\Ab_k-\Ab_{0k}\widehat\Sbb_k\rVert_F^2+\lVert\widehat\cF_t-\cF_{0t}\times_{k=1}^K\Sbb_k\rVert_F^2\prod_{k=1}^K\lVert\Ab_{0k}\widehat\Sbb_k\rVert_F^2 \r)\\
	\lesssim&\frac{1}{Tp}\sum_{t=1}^T\l(\sum_{k=1}^Kp_{-k}\lVert\widehat\Ab_k-\Ab_{0k}\widehat\Sbb_k\rVert_F^2+p\lVert\widehat\cF_t-\cF_{0t}\times_{k=1}^K\widehat\Sbb_k\rVert_F^2\r)\\
	=&O_p(1/L),
\end{aligned}$$
by Theorem \protect\ref{ls-theorem}, with $L=\min\{p,Tp_{-1},\cdots,Tp_{-K}\}$.

\end{proof}

\begin{proof}[Proof of Theorem \protect\ref{huber-theorem-1}]
Replacing $\mathbb{W}(\theta )$ by $I(\theta )$, $L$ by $L^{\ast }$ and $%
S_{j}$ by $S_{j}^{\ast }=\left\{ \theta \in \Theta :2^{j-1}<\sqrt{L^{\ast }}%
\cdot d\left( \theta ,\theta _{0}\right) \leq 2^{j}\right\} $ in proof of
Theorem \ref{ls-theorem}, we now have%
\begin{eqnarray*}
&&\mathbb{P}\left( \sqrt{L^{\ast }}d\left( \widehat{\theta }^{H},\theta
_{0}\right) >2^{V}\right) \\
&=&\mathbb{P}\left( 2^{V}<\sqrt{L^{\ast }}d\left( \widehat{\theta }%
^{H},\theta _{0}\right) \leq \sqrt{L^{\ast }}\eta \right) +\mathbb{P}\left(
d\left( \widehat{\theta }^{H},\theta _{0}\right) >\eta \right) \\
&\leq &\sum_{j>V,2^{j-1}\leq \sqrt{L^{\ast }}\eta }\mathbb{P}\left(
\inf_{\theta \in S_{j}}L_{3}\left( \theta \right) \leq 0\right) +\mathbb{P}%
\left( d\left( \widehat{\theta }^{H},\theta _{0}\right) >\eta \right) \\
&\leq &\sum_{j>V,2^{j-1}\leq \sqrt{L^{\ast }}\eta }\mathbb{P}\left(
\sup_{\theta \in S_{j}}\left\vert I\left( \theta \right) \right\vert \geq
\frac{2^{2j-2}}{L^{\ast }}\right) +\mathbb{P}\left( d\left( \widehat{\theta }%
^{H},\theta _{0}\right) >\eta \right) \\
&\leq &\sum_{j>V,2^{j-1}\leq \sqrt{L^{\ast }}\eta }\left( \mathbb{P}\left(
\sup_{\theta \in S_{j}}\left\vert I_{1}\left( \theta \right) \right\vert
\geq \frac{2^{2j-2}}{3L^{\ast }}\right) +\mathbb{P}\left( \sup_{\theta \in
S_{j}}\sqrt{\left\vert I_{2,1}\left( \theta \right) \right\vert }\geq \sqrt{%
\frac{2^{2j-2}}{3L^{\ast }}}\right) +\mathbb{P}\left( \sup_{\theta \in
S_{j}}\left\vert I_{2,2}\left( \theta \right) \right\vert \geq \frac{2^{2j-2}%
}{3L^{\ast }}\right) \right) \\
&&+\mathbb{P}\left( d\left( \widehat{\theta }^{H},\theta _{0}\right) >\eta
\right) \\
&\leq &\sum_{j>V,2^{j-1}\leq \sqrt{L^{\ast }}\eta }\frac{3L^{\ast }}{2^{2j-2}%
}\mathbb{E}\sup_{\theta \in S_{j}}\left\vert I_{1}\left( \theta \right)
\right\vert +\sum_{j>V,2^{j-1}\leq \sqrt{L^{\ast }}\eta }\sqrt{\frac{%
3L^{\ast }}{2^{2j-2}}}\mathbb{E}\sup_{\theta \in S_{j}}\sqrt{\left\vert
I_{2,1}\left( \theta \right) \right\vert } \\
&&+\sum_{j>V,2^{j-1}\leq \sqrt{L^{\ast }}\eta }\frac{3L^{\ast }}{2^{2j-2}}%
\mathbb{E}\sup_{\theta \in S_{j}}\left\vert I_{2,2}\left( \theta \right)
\right\vert +\mathbb{P}\left( d\left( \widehat{\theta }^{H},\theta
_{0}\right) >\eta \right) \\
&\leq &c_{0}\left( \sum_{j>V,2^{j-1}\leq \sqrt{L^{\ast }}\eta
}2^{-j/2}\right) +\mathbb{P}\left( d\left( \widehat{\theta }^{H},\theta
_{0}\right) >\eta \right) .
\end{eqnarray*}%
We know by Lemma \ref{distance-huber} that, for arbitrarily small $\eta >0$,
$\mathbb{P}[d\left( \widehat{\theta }^{H},\theta _{0}\right) >\eta ]$
converges to $0$ as $\min \left\{ T,p_{1},...,p_{K}\right\} \rightarrow
\infty $. Also, as $V\rightarrow \infty $, it is easy to show that $%
\sum_{j>V,2^{j-1}\leq \eta \sqrt{L^{\ast }}}2^{-j/2}$ converges to $0$,
which implies $d\left( \widehat{\theta }^{H},\theta _{0}\right) =O_{p}(1/%
\sqrt{L^{\ast }})$. Hence, by making appeal to Lemma \ref{exp-sup}, Theorem %
\ref{huber-theorem-1} follows.
\end{proof}

\begin{proof}[Proof of Theorem \protect\ref{huber-cc-1}]
 The proof is similar to the proof of Theorem \ref{ls-cc} and it is therefore omitted.
\end{proof}

\begin{proof}[Proof of Theorem \protect\ref{huber-theorem-2}]
Consider (\ref{i-dec}); we can write%
\begin{eqnarray*}
I_{1}\left( \theta \right) &=&\frac{1}{Tp}\sum_{t=1}^{T}%
\sum_{i_{1}=1}^{p_{1}}\cdots \sum_{i_{K}=1}^{p_{K}}\zeta _{t}^{0}\left(
\mathcal{F}_{0t}\times _{k=1}^{K}\mathbf{a}_{0k,i_{k}}^{\top }-\mathcal{F}%
_{t}\times _{k=1}^{K}\mathbf{a}_{k,i_{k}}^{\top }\right) e_{t,i_{1},\cdots
,i_{K}} \\
&=&I^{\ast }\left( \theta \right) \sqrt{\frac{1}{Tp}\sum_{t=1}^{T}%
\sum_{i_{1}=1}^{p_{1}}\cdots \sum_{i_{K}=1}^{p_{K}}\left( \zeta
_{t}^{0}\right) ^{2}\left( \mathcal{F}_{0t}\times _{k=1}^{K}\mathbf{a}%
_{0k,i_{k}}^{\top }-\mathcal{F}_{t}\times _{k=1}^{K}\mathbf{a}%
_{k,i_{k}}^{\top }\right) ^{2}e_{t,i_{1},\cdots ,i_{K}}^{2}} \\
&\leq &c_{0}I_{1}^{\ast }\left( \theta \right) d\left( \theta ,\theta
_{0}\right) ,
\end{eqnarray*}%
where we have exploited the fact that $\left\vert \zeta _{t}^{0}\right\vert
\leq c_{0}$\ a.s. and define%
\begin{equation}
I_{1}^{\ast }\left( \theta \right) =\frac{\frac{1}{Tp}\sum_{t=1}^{T}%
\sum_{i_{1}=1}^{p_{1}}\cdots \sum_{i_{K}=1}^{p_{K}}\zeta _{t}^{0}\left(
\mathcal{F}_{0t}\times _{k=1}^{K}\mathbf{a}_{0k,i_{k}}^{\top }-\mathcal{F}%
_{t}\times _{k=1}^{K}\mathbf{a}_{k,i_{k}}^{\top }\right) e_{t,i_{1},\cdots
,i_{K}}}{\sqrt{\frac{1}{Tp}\sum_{t=1}^{T}\sum_{i_{1}=1}^{p_{1}}\cdots
\sum_{i_{K}=1}^{p_{K}}\left( \zeta _{t}^{0}\right) ^{2}\left( \mathcal{F}%
_{0t}\times _{k=1}^{K}\mathbf{a}_{0k,i_{k}}^{\top }-\mathcal{F}_{t}\times
_{k=1}^{K}\mathbf{a}_{k,i_{k}}^{\top }\right) ^{2}e_{t,i_{1},\cdots
,i_{K}}^{2}}}.  \label{i-star}
\end{equation}%
We note that $I_{1}^{\ast }\left( \theta \right) $ is a self-normalised sum.
In order to study it, similarly to \citet{shao1997self}, we will repeatedly
use the following elementary fact%
\begin{equation}
zy=\inf_{b>0}\left( \frac{z^{2}}{b}+y^{2}b\right) ,  \label{elementary}
\end{equation}%
and define the short-hand notation%
\begin{equation*}
\xi _{t,i_{1},\cdots ,i_{K}}=\left( \mathcal{F}_{0t}\times _{k=1}^{K}\mathbf{%
b}_{0k,i_{k}}^{\top }-\mathcal{F}_{t}\times _{k=1}^{K}\mathbf{b}%
_{k,i_{k}}^{\top }\right) e_{t,i_{1},\cdots ,i_{K}}.
\end{equation*}%
Define now an $A>2$; similarly to equation (2.6) in \citet{shao1997self}, it
holds that%
\begin{eqnarray}
\mathbb{P}\left( I\left( \theta ^{\ast }\right) >x\right)
&=&\mathbb{P}%
\left( \sum_{t=1}^{T}\sum_{i_{1}=1}^{p_{1}}\cdots \sum_{i_{K}=1}^{p_{K}}\xi
_{t,i_{1},\cdots ,i_{K}}>x\sqrt{Tp}\sqrt{\sum_{t=1}^{T}%
\sum_{i_{1}=1}^{p_{1}}\cdots \sum_{i_{K}=1}^{p_{K}}\xi _{t,i_{1},\cdots
,i_{K}}^{2}}\right)  \label{shao} \\
&=&\mathbb{P}\left( \sum_{t=1}^{T}\sum_{i_{1}=1}^{p_{1}}\cdots
\sum_{i_{K}=1}^{p_{K}}\xi _{t,i_{1},\cdots ,i_{K}}>x\inf_{b>0}\frac{1}{2b}%
\left( \sum_{t=1}^{T}\sum_{i_{1}=1}^{p_{1}}\cdots \sum_{i_{K}=1}^{p_{K}}\xi
_{t,i_{1},\cdots ,i_{K}}^{2}+b^{2}Tp\right) \right)  \notag \\
&=&\mathbb{P}\left( \sup_{b\geq 0}\sum_{t=1}^{T}\sum_{i_{1}=1}^{p_{1}}\cdots
\sum_{i_{K}=1}^{p_{K}}\left( b\xi _{t,i_{1},\cdots ,i_{K}}-x\left( \xi
_{t,i_{1},\cdots ,i_{K}}^{2}+b^{2}\right) /2\right) >0\right)  \notag \\
&=&\mathbb{P}\left( \sup_{b>4A}\sum_{t=1}^{T}\sum_{i_{1}=1}^{p_{1}}\cdots
\sum_{i_{K}=1}^{p_{K}}\left( b\xi _{t,i_{1},\cdots ,i_{K}}-x\left( \xi
_{t,i_{1},\cdots ,i_{K}}^{2}+b^{2}\right) /2\right) >0\right)  \notag \\
&&+\mathbb{P}\left( \sup_{0\leq b\leq
4A}\sum_{t=1}^{T}\sum_{i_{1}=1}^{p_{1}}\cdots \sum_{i_{K}=1}^{p_{K}}\left(
b\xi _{t,i_{1},\cdots ,i_{K}}-x\left( \xi _{t,i_{1},\cdots
,i_{K}}^{2}+b^{2}\right) /2\right) >0\right)  \notag \\
&=&P_{1}+P_{2}.  \notag
\end{eqnarray}%
We begin by studying $P_{1}$, using a similar approach as in equation (2.7)
in \citet{shao1997self}. It holds that%
\begin{eqnarray*}
P_{1} &=&\mathbb{P}\left(
\sup_{b>4A}\sum_{t=1}^{T}\sum_{i_{1}=1}^{p_{1}}\cdots
\sum_{i_{K}=1}^{p_{K}}\left( b\xi _{t,i_{1},\cdots ,i_{K}}-x\left( \xi
_{t,i_{1},\cdots ,i_{K}}^{2}+b^{2}\right) /2\right) >0\right) \\
&=&\mathbb{P}\left( \sup_{b>4A}\sum_{t=1}^{T}\sum_{i_{1}=1}^{p_{1}}\cdots
\sum_{i_{K}=1}^{p_{K}}\left( b\xi _{t,i_{1},\cdots ,i_{K}}\left( I\left(
\left\vert \xi _{t,i_{1},\cdots ,i_{K}}\right\vert \leq xA\right) +I\left(
\left\vert \xi _{t,i_{1},\cdots ,i_{K}}\right\vert >xA\right) \right) -x%
\frac{\xi _{t,i_{1},\cdots ,i_{K}}^{2}+b^{2}}{2}\right) >0\right) \\
&\leq &\mathbb{P}\left(
\sup_{b>4A}\sum_{t=1}^{T}\sum_{i_{1}=1}^{p_{1}}\cdots
\sum_{i_{K}=1}^{p_{K}}\left( bxA+b\xi _{t,i_{1},\cdots ,i_{K}}I\left(
\left\vert \xi _{t,i_{1},\cdots ,i_{K}}\right\vert >xA\right) -x\frac{\xi
_{t,i_{1},\cdots ,i_{K}}^{2}+b^{2}}{2}\right) >0\right) \\
&\leq &\mathbb{P}\left(
\sup_{b>4A}\sum_{t=1}^{T}\sum_{i_{1}=1}^{p_{1}}\cdots
\sum_{i_{K}=1}^{p_{K}}\left( b\xi _{t,i_{1},\cdots ,i_{K}}I\left( \left\vert
\xi _{t,i_{1},\cdots ,i_{K}}\right\vert >xA\right) -x\frac{\xi
_{t,i_{1},\cdots ,i_{K}}^{2}+b^{2}/2}{2}\right) >0\right) \\
&=&\mathbb{P}\left( \sum_{t=1}^{T}\sum_{i_{1}=1}^{p_{1}}\cdots
\sum_{i_{K}=1}^{p_{K}}\xi _{t,i_{1},\cdots ,i_{K}}I\left( \left\vert \xi
_{t,i_{1},\cdots ,i_{K}}\right\vert >xA\right) \geq \frac{x}{2}%
\inf_{b>4A}\left( \frac{\sum_{t=1}^{T}\sum_{i_{1}=1}^{p_{1}}\cdots
\sum_{i_{K}=1}^{p_{K}}\xi _{t,i_{1},\cdots ,i_{K}}^{2}}{b}+\frac{bTp}{2}%
\right) \right) \\
&\leq &\mathbb{P}\left( \sum_{t=1}^{T}\sum_{i_{1}=1}^{p_{1}}\cdots
\sum_{i_{K}=1}^{p_{K}}\xi _{t,i_{1},\cdots ,i_{K}}I\left( \left\vert \xi
_{t,i_{1},\cdots ,i_{K}}\right\vert >xA\right) \geq \frac{x}{\sqrt{2}}\left(
Tp\sum_{t=1}^{T}\sum_{i_{1}=1}^{p_{1}}\cdots \sum_{i_{K}=1}^{p_{K}}\xi
_{t,i_{1},\cdots ,i_{K}}^{2}\right) ^{1/2}\right) \\
&\leq &\mathbb{P}\left[ \left\vert
\sum_{t=1}^{T}\sum_{i_{1}=1}^{p_{1}}\cdots \sum_{i_{K}=1}^{p_{K}}\xi
_{t,i_{1},\cdots ,i_{K}}^{2}\right\vert ^{1/2}\left\vert
\sum_{t=1}^{T}\sum_{i_{1}=1}^{p_{1}}\cdots \sum_{i_{K}=1}^{p_{K}}I\left(
\left\vert \xi _{t,i_{1},\cdots ,i_{K}}\right\vert >xA\right) \right\vert
^{1/2}\right. \\
&&\left. \geq \frac{x}{\sqrt{2}}\left(
Tp\sum_{t=1}^{T}\sum_{i_{1}=1}^{p_{1}}\cdots \sum_{i_{K}=1}^{p_{K}}\xi
_{t,i_{1},\cdots ,i_{K}}^{2}\right) ^{1/2}\right] \\
&=&\mathbb{P}\left( \sum_{t=1}^{T}\sum_{i_{1}=1}^{p_{1}}\cdots
\sum_{i_{K}=1}^{p_{K}}I\left( \left\vert \xi _{t,i_{1},\cdots
,i_{K}}\right\vert >xA\right) \geq \frac{x^{2}Tp}{2}\right) ,
\end{eqnarray*}%
where we have used the Cauchy-Schwartz inequality in the seventh line. Then,
using the Markov's inequality and recalling that $\exp \left( x\right)
=1+\sum_{j=1}^{\infty }\frac{x^{j}}{j!}$, it follows that%
\begin{eqnarray*}
P_{1} &\leq &\exp \left( -\frac{x^{2}Tp}{2}\right) \mathbb{E}\exp \left(
\sum_{t=1}^{T}\sum_{i_{1}=1}^{p_{1}}\cdots \sum_{i_{K}=1}^{p_{K}}I\left(
\left\vert \xi _{t,i_{1},\cdots ,i_{K}}\right\vert >xA\right) \right) \\
&\leq &\exp \left( -\frac{x^{2}Tp}{2}\right)
\prod\limits_{t=1}^{T}\prod\limits_{i_{1}=1}^{p_{1}}...\prod%
\limits_{i_{K}=1}^{p_{K}}\mathbb{E}\exp \left( I\left( \left\vert \xi
_{t,i_{1},\cdots ,i_{K}}\right\vert >xA\right) \right) \\
&=&\exp \left( -\frac{x^{2}Tp}{2}\right)
\prod\limits_{t=1}^{T}\prod\limits_{i_{1}=1}^{p_{1}}...\prod%
\limits_{i_{K}=1}^{p_{K}}\left( 1+\left( e-1\right) \delta _{t,i_{1},\cdots
,i_{K}}\right) ,
\end{eqnarray*}%
where
\begin{equation*}
\delta _{t,i_{1},\cdots ,i_{K}}=\mathbb{P}\left\{ \left\vert \xi
_{t,i_{1},\cdots ,i_{K}}\right\vert >xA\right\} \leq \delta ,
\end{equation*}%
for sufficiently large $A$ due to the compactness of the parameter space.
Therefore
\begin{equation*}
P_{1}\leq \exp \left( -\frac{x^{2}Tp}{2}\right) \exp \left( Tp\log
(1+(e-1)\delta )\right) \leq \exp \left( -\frac{\left( x^{2}-4\delta \right)
Tp}{2}\right) ,
\end{equation*}%
for small enough $0<\delta <x^{2}/4$. Considering now $P_{2}$ in (\ref{shao}%
) similar to the proof in \cite{shao1997self}, let $0<\alpha <1$, and $%
\Delta =\alpha ^{2}/\left[ (10+60x)A^{4}\right] $, and let $Y$ be a standard
normal random variable independent of $\left( \mathcal{E}_{t},\mathcal{F}%
_{t}\right) $. Then we have%
\begin{eqnarray*}
P_{2} &\leq &\mathbb{P}\left( \max_{1\leq j\leq 1+4A/\Delta }\sup_{\left(
j-1\right) \Delta \leq b\leq j\Delta
}\sum_{t=1}^{T}\sum_{i_{1}=1}^{p_{1}}\cdots \sum_{i_{K}=1}^{p_{K}}\left(
b\xi _{t,i_{1},\cdots ,i_{K}}-x\left( \xi _{t,i_{1},\cdots
,i_{K}}^{2}+b^{2}\right) /2\right) >0\right) \\
&\leq &\mathbb{P}\left( \max_{1\leq j\leq 1+4A/\Delta
}\sum_{t=1}^{T}\sum_{i_{1}=1}^{p_{1}}\cdots \sum_{i_{K}=1}^{p_{K}}\left(
j\Delta \xi _{t,i_{1},\cdots ,i_{K}}-x\left( \xi _{t,i_{1},\cdots
,i_{K}}^{2}+\left( \left( j-1\right) \Delta \right) ^{2}\right) /2\right)
>0\right) \\
&\leq &\sum_{j=1}^{1+4A/\Delta }\mathbb{P}\left(
\sum_{t=1}^{T}\sum_{i_{1}=1}^{p_{1}}\cdots \sum_{i_{K}=1}^{p_{K}}\left(
j\Delta \xi _{t,i_{1},\cdots ,i_{K}}-x\left( \xi _{t,i_{1},\cdots
,i_{K}}^{2}+\left( \left( j-1\right) \Delta \right) ^{2}\right) /2\right)
>0\right) .
\end{eqnarray*}%
By Markov's inequality and (\ref{elementary}), it holds that%
\begin{eqnarray*}
P_{2} &\leq &\sum_{j=1}^{1+4A/\Delta
}\prod\limits_{t=1}^{T}\prod\limits_{i_{1}=1}^{p_{1}}...\prod%
\limits_{i_{K}=1}^{p_{K}}\inf_{\beta \geq 0}\mathbb{E}\exp \left( \beta
\left( j\Delta \xi _{t,i_{1},\cdots ,i_{K}}-x\left( \xi _{t,i_{1},\cdots
,i_{K}}^{2}+\left( \left( j-1\right) \Delta \right) ^{2}\right) /2\right)
\right) \\
&\leq &\sum_{j=1}^{1+4A/\Delta
}\prod\limits_{t=1}^{T}\prod\limits_{i_{1}=1}^{p_{1}}...\prod%
\limits_{i_{K}=1}^{p_{K}}\inf_{\beta \geq 0}\exp \left( \frac{\alpha
^{2}\beta ^{2}}{2}\right) \mathbb{E}\exp \left( \beta \left( j\Delta \xi
_{t,i_{1},\cdots ,i_{K}}-x\left( \xi _{t,i_{1},\cdots ,i_{K}}^{2}+\left(
\left( j-1\right) \Delta \right) ^{2}\right) /2\right) \right) \\
&\leq &\sum_{j=1}^{1+4A/\Delta
}\prod\limits_{t=1}^{T}\prod\limits_{i_{1}=1}^{p_{1}}...\prod%
\limits_{i_{K}=1}^{p_{K}}\inf_{\beta \geq 0}\mathbb{E}\exp \left( \beta
\left( j\Delta \xi _{t,i_{1},\cdots ,i_{K}}+\alpha Y-x\left( \xi
_{t,i_{1},\cdots ,i_{K}}^{2}+\left( \left( j-1\right) \Delta \right)
^{2}\right) /2\right) \right) .
\end{eqnarray*}%
Let $\zeta _{t,i_{1},\cdots ,i_{K},j}=j\Delta \xi _{t,i_{1},\cdots
,i_{K}}+\alpha Y-x\left( \xi _{t,i_{1},\cdots ,i_{K}}^{2}+(j\Delta
)^{2}\right) /2$. By (2.12) in \cite{shao1997self},
\begin{equation*}
\mathbb{E}\exp \left( \beta _{t,i_{1},\cdots ,i_{K},j}\zeta _{t,i_{1},\cdots
,i_{K},j}\right) =\inf_{\beta \geq 0}\mathbb{E}\exp \left( \beta \zeta
_{t,i_{1},\cdots ,i_{K},j}\right) ,
\end{equation*}%
for some bounded $\beta _{t,i_{1},\cdots ,i_{K},j}$ satisfying $0<\beta
_{t,i_{1},\cdots ,i_{K},j}\leq (10+60x)A^{2}/\alpha ^{2}$. Therefore%
\begin{eqnarray*}
P_{2} &\leq &\sum_{j=1}^{1+4A/\Delta
}\prod\limits_{t=1}^{T}\prod\limits_{i_{1}=1}^{p_{1}}...\prod%
\limits_{i_{K}=1}^{p_{K}}\mathbb{E}\exp \left( \beta _{t,i_{1},\cdots
,i_{K},j}\left( j\Delta \xi _{t,i_{1},\cdots ,i_{K}}+\alpha Y-x\left( \xi
_{t,i_{1},\cdots ,i_{K}}^{2}+\left( \left( j-1\right) \Delta \right)
^{2}\right) /2\right) \right) \\
&=&\sum_{j=1}^{1+4A/\Delta
}\prod\limits_{t=1}^{T}\prod\limits_{i_{1}=1}^{p_{1}}...\prod%
\limits_{i_{K}=1}^{p_{K}}\exp \left( \beta _{t,i_{1},\cdots ,i_{K},j}x\left(
j^{2}-\left( j-1\right) ^{2}\right) \Delta ^{2}/2\right) \mathbb{E}\exp
\left( \beta _{t,i_{1},\cdots ,i_{K},j}\zeta _{t,i_{1},\cdots
,i_{K},j}\right) \\
&\leq &\sum_{j=1}^{1+4A/\Delta
}\prod\limits_{t=1}^{T}\prod\limits_{i_{1}=1}^{p_{1}}...\prod%
\limits_{i_{K}=1}^{p_{K}}\exp \left( \beta _{t,i_{1},\cdots ,i_{K},j}j\Delta
^{2}\right) \mathbb{E}\exp \left( \beta _{t,i_{1},\cdots ,i_{K},j}\zeta
_{t,i_{1},\cdots ,i_{K},j}\right) \\
&\leq &\left( 1+4A/\Delta \right) \exp \left( Tp\frac{\Delta \left(
1+4A\right) (10+60x)A^{2}}{\alpha ^{2}}\right)
\prod\limits_{t=1}^{T}\prod\limits_{i_{1}=1}^{p_{1}}...\prod%
\limits_{i_{K}=1}^{p_{K}}\inf_{\beta \geq 0}\mathbb{E}\exp \left( \beta
\zeta _{t,i_{1},\cdots ,i_{K},j}\right) \\
&\leq &\left( 1+4A/\Delta \right) \exp \left( \frac{5}{A}Tp\right)
\prod\limits_{t=1}^{T}\prod\limits_{i_{1}=1}^{p_{1}}...\prod%
\limits_{i_{K}=1}^{p_{K}}\inf_{\beta \geq 0}\mathbb{E}\exp \left( \beta
\zeta _{t,i_{1},\cdots ,i_{K},j}\right) \\
&\leq &\left( 1+4A/\Delta \right) \exp \left( \frac{5}{A}Tp\right)
\prod\limits_{t=1}^{T}\prod\limits_{i_{1}=1}^{p_{1}}...\prod%
\limits_{i_{K}=1}^{p_{K}}\sup_{b\geq 0}\inf_{\beta \geq 0}\mathbb{E}\exp
\left( \beta \left( b\xi _{t,i_{1},\cdots ,i_{K}}+\alpha Y-x\left( \xi
_{t,i_{1},\cdots ,i_{K}}^{2}+b^{2}\right) /2\right) \right) \\
&\leq &\left( 1+4A/\Delta \right) \exp \left( \frac{5}{A}Tp\right)
\prod\limits_{t=1}^{T}\prod\limits_{i_{1}=1}^{p_{1}}...\prod%
\limits_{i_{K}=1}^{p_{K}}\sup_{b\geq 0}\inf_{\beta \geq 0}\exp \left( \frac{%
\alpha ^{2}\beta ^{2}}{2}\right) \mathbb{E}\exp \left( \beta \left( b\xi
_{t,i_{1},\cdots ,i_{K}}-x\left( \xi _{t,i_{1},\cdots
,i_{K}}^{2}+b^{2}\right) /2\right) \right)
\end{eqnarray*}%
As $\alpha \rightarrow 0$, it follows from Lemma $2.1$ in \cite{shao1997self}
that
\begin{equation*}
P_{2t}\leq (1+4A/\Delta )e^{5p/A}\prod_{i_{1}=1}^{p_{1}}\cdots
\prod_{i_{K}=1}^{p_{K}}\sup_{b\geq 0}\inf_{\beta \geq 0}\mathbb{E}\exp
\left\{ \beta \left( b\xi _{t,i_{1},\cdots ,i_{K}}-x\left( \xi
_{t,i_{1},\cdots ,i_{K}}^{2}+b^{2}\right) /2\right) \right\} .
\end{equation*}%
Assumption \ref{as-4} and Proposition $2.1$ in \cite{jing2008towards} show
that, there exist $\beta _{t,i_{1},\cdots ,i_{K}}\left( x^{2}\right) >$ 0
and $b_{t,i_{1},\cdots ,i_{K}}\left( x^{2}\right) >0$ satisfying $\beta
_{t,i_{1},\cdots ,i_{K}}\left( x^{2}\right) \rightarrow \beta _{0}>0$ and $%
b_{t,i_{1},\cdots ,i_{K}}\left( x^{2}\right) \rightarrow 0$ for $%
x\rightarrow 0$,
\begin{equation*}
\sup_{b}\inf_{\beta }\mathbb{E}\exp \left\{ \beta \left( b\xi
_{t,i_{1},\cdots ,i_{K}}-x\left( \xi _{t,i_{1},\cdots
,i_{K}}^{2}+b^{2}\right) /2\right) \right\} =\exp \left\{ -g\left( \beta
_{t,i_{1},\cdots ,i_{K}}\left( x^{2}\right) ,b_{t,i_{1},\cdots ,i_{K}}\left(
x^{2}\right) ;x^{2}\right) \right\} ,
\end{equation*}%
where%
\begin{eqnarray*}
&&g\left( \beta _{t,i_{1},\cdots ,i_{K}}\left( x^{2}\right)
,b_{t,i_{1},\cdots ,i_{K}}\left( x^{2}\right) ;x^{2}\right) \\
&=&\beta _{t,i_{1},\cdots ,i_{K}}\left( x^{2}\right) x^{2}-\log \left(
\mathbb{E}\exp \left[ \beta _{t,i_{1},\cdots ,i_{K}}(x^{2})\left(
2b_{t,i_{1},\cdots ,i_{K}}\left( x^{2}\right) \xi _{t,i_{1},\cdots
,i_{K}}-b_{t,i_{1},\cdots ,i_{K}}^{2}\left( x^{2}\right) \xi
_{t,i_{1},\cdots ,i_{K}}^{2}\right) \right] \right) .
\end{eqnarray*}%
The variable $\xi _{t,i_{1},\cdots ,i_{K}}$ has zero mean, it is
(conditionally) independent across $t$ and $i_{1},\cdots ,i_{K}$,\ and
finally it admits moments up to order $2+\epsilon $; hence, it is easy to
verify that it belongs in the centered Feller class with limit $0$ - see
(1.5) in \cite{jing2008towards}. Hence, by Assumption \ref{as-4}\textit{(i)}%
, we can use Lemma 3.1 in \cite{jing2008towards}, whence
\begin{equation*}
\mathbb{E}\exp \left\{ \beta _{t,i_{1},\cdots ,i_{K}}(x^{2})\left(
2b_{t,i_{1},\cdots ,i_{K}}(x^{2})\xi _{t,i_{1},\cdots
,i_{K}}-b_{t,i_{1},\cdots ,i_{K}}^{2}(x^{2})\xi _{t,i_{1},\cdots
,i_{K}}^{2}\right) \right\} =1+o(x)
\end{equation*}%
as $x\rightarrow 0^{+}$, whence
\begin{eqnarray}
&&\exp \left\{ -g\left( \beta _{t,i_{1},\cdots ,i_{K}}\left( x^{2}\right)
,b_{t,i_{1},\cdots ,i_{K}}\left( x^{2}\right) ;x^{2}\right) \right\}
\label{jing} \\
&=&\exp \left\{ -\beta _{t,i_{1},\cdots ,i_{K}}\left( x^{2}\right)
x^{2}+o\left( x^{2}\right) \right\} =\exp \left\{ -\beta _{0}x^{2}+o\left(
x^{2}\right) \right\} .  \notag
\end{eqnarray}%
Therefore,
\begin{equation*}
P_{2}\leq (1+4A/\Delta )\exp \left\{ -\left( \beta _{0}x^{2}-5/A\right)
Tp+o\left( x^{2}\right) Tp\right\} .
\end{equation*}%
Let $x$ be small enough, and $A$ large enough. Then, as $Tp\rightarrow $ $%
\infty $, using the same arguments as in the proof of Lemma \ref{semi-1} we
have
\begin{equation*}
I_{1}^{\ast }(\theta )\leq \exp \{-C_{13}Tp\}.
\end{equation*}%
Hence, similarly arguments as in the proof of Lemma \ref{semi-1} yield
\begin{equation*}
\Vert \sqrt{Tp}|I_{1}(\theta ^{\ast })|\Vert _{\psi _{2}}\lesssim d(\theta
^{\ast },\theta _{0}),
\end{equation*}%
and then%
\begin{eqnarray}
\mathbb{E}\sup_{\theta \in \Theta }\left\vert I_{1}(\theta ^{\ast
})\right\vert &=&O\left( L^{-1/2}\right) ,  \label{equ:2I1} \\
\mathbb{E}\sup_{\theta \in \Theta \left( \delta \right) }\left\vert
I_{1}(\theta ^{\ast })\right\vert &=&O\left( \delta L^{-1/2}\right) .
\label{equ:2I1b}
\end{eqnarray}%
Considering $I_{2,1}(\theta )$ in (\ref{i-dec}), it is easy to see that this
can be studied in the same way as under Assumption \ref{as-3}. Similarly, as
far as $I_{2,2}(\theta )$ is concerned, assuming $\mathbb{E}%
e_{t,i_{1},...,i_{K}}^{4}<\infty $, we still have%
\begin{eqnarray*}
\mathbb{E}\sup_{\theta \in \Theta }\left\vert I_{2,2}(\theta ^{\ast
})\right\vert &=&O\left( p^{-1/2}\right) , \\
\mathbb{E}\sup_{\theta \in \Theta \left( \delta \right) }\left\vert
I_{2,2}(\theta )\right\vert &\leq &\mathbb{E}\sup_{\theta \in \Theta \left(
\delta \right) }\left\vert D_{t,2}\right\vert =O\left( \delta
p^{-1/2}\right) ;
\end{eqnarray*}%
hence, under $\mathbb{E}e_{t,i_{1},...,i_{K}}^{4}<\infty $, similar
arguments as in the proof under Assumptions \ref{as-1}, \ref{as-2}, and \ref%
{as-3} entail
\begin{equation*}
d(\widehat{\theta }^{H},\theta _{0})=O_{p}(1/\sqrt{L^{\ast }}).
\end{equation*}%
Finally, consider the case $\mathbb{E}\left\vert
e_{t,i_{1},...,i_{K}}\right\vert ^{2+\epsilon }<\infty $. In this case, we
note that $|D_{t,2}|$ is bounded and it can be shown, similarly to the
above, that \newline
\begin{equation*}
\mathbb{E}\sup_{\theta \in \Theta (\delta )}\left\vert \frac{1}{p}%
\sum_{i_{1}=1}^{p_{1}}\cdots \sum_{i_{K}=1}^{p_{K}}\left( \mathcal{F}%
_{0t}\times _{k=1}^{K}\mathbf{a}_{0k,i_{k}}^{\top }-\mathcal{F}_{t}\times
_{k=1}^{K}\mathbf{a}_{k,i_{k}}^{\top }\right) e_{t,i_{1},\cdots
,i_{K}}\right\vert =O\left( \delta /\sqrt{\min \{p_{-1},p_{-2},\cdots
,p_{-K}\}}\right) .
\end{equation*}%
Hence we have%
\begin{eqnarray*}
&&\mathbb{E}\sup_{\theta \in \Theta }\left\vert I_{2,2}(\theta )\right\vert
\\
&\leq &\mathbb{E}\frac{1}{T}\sum_{t=1}^{T}\sup_{\theta \in \Theta
}\left\vert \frac{1}{p}\sum_{i_{1}=1}^{p_{1}}\cdots
\sum_{i_{K}=1}^{p_{K}}\left( \mathcal{F}_{0t}\times _{k=1}^{K}\mathbf{a}%
_{0k,i_{k}}^{\top }-\mathcal{F}_{t}\times _{k=1}^{K}\mathbf{a}%
_{k,i_{k}}^{\top }\right) e_{t,i_{1},\cdots ,i_{K}}\right\vert \\
&=&O\left( 1/\sqrt{\min \{p_{-1},p_{-2},\cdots ,p_{-K}\}}\right) ,
\end{eqnarray*}%
and
\begin{equation*}
\mathbb{E}\sup_{\theta \in \Theta (\delta )}\left\vert I_{2,2}(\theta
)\right\vert =O\left( \delta /\sqrt{\min \{p_{-1},p_{-2},\cdots ,p_{-K}\}}%
\right) .
\end{equation*}%
Following the same logic as in the above, and replacing $L^{\ast }$ as $%
L^{\ast \ast }$, the desired result finally obtains.
\end{proof}

\begin{proof}[Proof of Theorem \protect\ref{huber-cc-2}]
The proof is similar to the proof of Theorem \ref{ls-cc} and it is therefore
omitted.
\end{proof}

\begin{proof}[Proof of Theorem \protect\ref{ls-fn}]
Consider first the case $j<r_{k}$; in this case%
\begin{equation*}
\frac{\lambda _{j}\left( \widehat{\mathbf{M}}_{k}\right) }{\lambda
_{j+1}\left( \widehat{\mathbf{M}}_{k}\right) +c\omega _{k}^{-1/2}}\leq \frac{%
\lambda _{j}\left( \widehat{\mathbf{M}}_{k}\right) }{\lambda _{j+1}\left(
\widehat{\mathbf{M}}_{k}\right) }=O_{P}\left( 1\right) ,
\end{equation*}%
as a consequence of equation (\ref{er-ls-2}) in Lemma \ref{er-ls}.
Similarly, whenever $j>r_{k}$%
\begin{equation*}
\frac{\lambda _{j}\left( \widehat{\mathbf{M}}_{k}\right) }{\lambda
_{j+1}\left( \widehat{\mathbf{M}}_{k}\right) +c\omega _{k}^{-1/2}}\leq
\left( c\omega _{k}^{-1/2}\right) ^{-1}\lambda _{j}\left( \widehat{\mathbf{M}%
}_{k}\right) =O_{P}\left( 1\right) ,
\end{equation*}%
again by equation (\ref{er-ls-2}) in Lemma \ref{er-ls}. Finally, when $%
j=r_{k}$, it holds that there exists a $0<c_{0}<\infty $ such that%
\begin{equation*}
\frac{\lambda _{j}\left( \widehat{\mathbf{M}}_{k}\right) }{\lambda
_{j+1}\left( \widehat{\mathbf{M}}_{k}\right) +c\omega _{k}^{-1/2}}%
=c_{0}\omega _{k}^{1/2}\lambda _{j}\left( \widehat{\mathbf{M}}_{k}\right)
+o_{P}\left( 1\right) \overset{P}{\rightarrow }\infty ,
\end{equation*}%
on account of equation (\ref{er-ls-1}). This completes the proof.
\end{proof}

\begin{proof}[Proof of Theorem \protect\ref{huber-fn}]
The proof is similar to the proof of Theorem \ref{ls-fn}, using Lemma \ref%
{er-h} and it is therefore omitted.
\end{proof}

\end{document}